\documentclass{article}
\let\originalvec\vec
\usepackage{woo}
\let\vec\originalvec

\usepackage{times}

\theoremstyle{plain}
\newtheorem{proposition}[theorem]{Proposition}
\crefname{proposition}{proposition}{propositions}
\Crefname{proposition}{Proposition}{Propositions}

\theoremstyle{remark}

\usepackage{titlesec}
\crefname{assumption}{assumption}{assumptions}
\Crefname{assumption}{Assumption}{Assumptions}
\crefname{cor}{corollary}{corollaries}
\Crefname{cor}{Corollary}{Corollaries}
\titleformat{\part}{\normalfont\Large\bfseries}{\thepart}{1em}{}

\newcommand{\bsg}{{\boldsymbol{g}}}

\newcommand{\sgn}{\mathsf{sgn}}
\newcommand{\fdec}{\vec{f}_\mathrm{reconstruct}}

\newcommand{\PM}{\mathop{\mathrm{PM}}}

\newcommand{\dd}{\,\mathrm{d}}
\newcommand{\BKM}{\mathrm{BKM}}
\renewcommand{\Re}{\operatorname{Re}}

\newcommand{\Cov}{\operatorname{Cov}}

\NewDocumentCommand{\NewSavedEnvironment}{m m}{
    \NewDocumentEnvironment{#1}{m +b}{\begin{restatable}{donothing}{##1}
        \begin{#2}
        ##2
        \end{#2}
        \end{restatable}}{\unskip\ignorespacesafterend}
    }

\NewSavedEnvironment{savedequation}{equation}
\NewSavedEnvironment{savedalign}{align}
\newcommand{\bigsymdiff}{\mathop{\scalerel*{\bigtriangleup}{\bigoplus}}}

\newcommand{\calI}{\mc I_4}
\newif\ifreviewnotes
\reviewnotestrue

\newcounter{reviewcomment}
\crefname{reviewcomment}{review comment}{review comments}
\Crefname{reviewcomment}{Review comment}{Review comments}

\title{Learning SYK Hamiltonians}
\author{}

\date{September 2026}

\AtBeginBibliography{\sloppy}

\author{Anurag Anshu\thanks{Harvard. \texttt{anuraganshu@fas.harvard.edu}} \and Srinivasan Arunachalam\thanks{IBM Research, \texttt{srinivasan.arunachalam@ibm.com}.} \and
Sitan Chen\thanks{Harvard. \texttt{sitan@seas.harvard.edu}}
\and 
Yeongwoo Hwang\thanks{Harvard. \texttt{yeongwoohwang@g.harvard.edu}}
}

\date{}

\begin{document}

\maketitle
\begin{abstract}
	We study the problem of learning the dense Sachdev--Ye--Kitaev (SYK) Hamiltonian from copies
	of its Gibbs state. Existing algorithms for Hamiltonian learning typically rely on geometric
	locality or bounded interaction degree and therefore do not apply to SYK, where each quartic
	interaction overlaps with \(\Theta(n^3)\) others. We show that this obstruction can be
	overcome by exploiting the random mean-field structure of the model. \emph{At any constant
	temperature}, we prove that with high probability over the SYK couplings, the entire
	Hamiltonian can be learned to inverse-polynomial accuracy using polynomially many samples.
	Furthermore, when the inverse temperature is restricted to be a sufficiently
	small constant, we construct a quasipolynomial-time learning algorithm which is
	qualitatively different from the sample-efficient algorithm.

\end{abstract}

\newpage 
\setcounter{tocdepth}{2}
{\footnotesize \tableofcontents}

\newpage

\section{Introduction}

Understanding the complexity of quantum states has been a central focus of quantum computer science, with recent years bringing the new measure of \textit{learnability} to the picture. Rooted in the practical design of quantum computers as well as philosophical questions about the scope of knowledge that quantum mechanics allows, a wide range of quantum objects have been shown to be learnable in recent years (see~\cite{AnshuArunachalam24} for a survey). A notable class is that of \emph{quantum Gibbs states}, which form an expressive family of states and have been a subject of intense study in physics for over a century.   The problem of \textit{Hamiltonian learning} from Gibbs states is as follows. Can one reconstruct the Hamiltonian governing a quantum many-body system, described~as
$$
    H(\vec g)=\sum_{\alpha=1}^{m} g_\alpha E_\alpha
$$
for some known fixed basis $\{E_\alpha\}_\alpha$, given copies of the Gibbs state:
$$
    \gamma_{\vec g}
      =\frac{e^{-\beta H(\vec g)}}
             {\operatorname{tr}_{\rm ordinary}(e^{-\beta H(\vec g)})} \quad ?
$$
The goal is to recover the couplings $\vec g$ with prescribed error using as few copies of
$\gamma_{\vec g}$ and as little running time as possible. This is the quantum analogue of learning
an undirected \emph{graphical model} from samples.
Besides being a fundamental inverse
problem, Hamiltonian learning has promising applications in the calibration, verification of quantum simulators and the extraction of effective microscopic models from
thermal data.

One compelling setting for this inverse problem is provided by strongly interacting quantum many-body systems, where the  interactions in the Hamiltonian  
encode highly nontrivial correlations even though the description of the Hamiltonian is
specified by only polynomially many coupling parameters. Among the most prominent models of strongly interacting Hamiltonians is the well-known \emph{Sachdev--Ye--Kitaev} (SYK) model~\cite{SachdevYe93,Kitaev15,MaldacenaStanford16,MaldacenaShenkerStanford16, KKZ19}, which has become central to quantum chaos and strongly interacting fermions, as well as a benchmark
for the limits of classical simulation. In this work,  we initiate  the Hamiltonian learning problem for SYK models by asking
\begin{quote}
\begin{center}
\emph{Can a dense SYK Hamiltonian be learned in polynomial time,\\ given copies of its Gibbs state at constant temperature?}
\end{center}
\end{quote}
Two conceptual reasons motivate this question. 
First, a sequence of works initiated by~\cite{anshu2020sample} showed that geometrically local Hamiltonians can be learned with polynomial sample complexity. Subsequently~\cite{haah2022optimal}  obtained essentially optimal sample and time complexity in the high-temperature regime for a broader class of Hamiltonians. More recently~\cite{BLMT2024_learning,Narayanan25} gave a polynomial-time algorithm at every constant temperature, and then~\cite{CAN25} obtained further local-learning guarantees. Central to these results is some form of locality or bounded interaction degree.
The SYK model is the canonical testbed of whether this obstruction is fundamental. Formally, the SYK Hamiltonian is defined as follows: for even $q$
$$
    H_{\mathrm{SYK}}^{(q)}(\vec g)
      =\sigma_{n,q}\sum_{\substack{S\subseteq[n]\\ |S|=q}}
          g_S\,\Gamma_S^{(q)},\qquad
    \sigma_{n,q}=\sqrt{(q-1)!/n^{q-1}},\qquad
    \Gamma_S^{(q)}=-i^{q/2}\prod_{j\in S}\psi_j ,
$$
where the $\psi_i$ are Majorana operators satisfying $\{\psi_i,\psi_j\}=2\delta_{ij}\Id$ and
$\vec g$ is a realization of $\vec \bsg=(\bsg_S)_S$ with independent $\mc N(0,1)$ coordinates.
The overall phase is chosen so that at $q=4$,
$\Gamma_S^{(4)}=\Gamma_S=\prod_{j\in S}\psi_j$ and
$\sigma_{n,4}=\sigma_n=\sqrt{6/n^3}$. Each summand is $q$-local, but the interaction pattern is all-to-all.
The model combines several features that make the learning problem nontrivial: its terms strongly fail to commute, its Gibbs states are
genuinely non-Gaussian, and its physically relevant behavior occurs at constant inverse temperature. For quartic SYK,  every interaction overlaps
with $\Theta(n^3)$ other interactions. Consequently, tools such as Lieb-Robinson bounds used in \cite{anshu2020sample, BLMT2024_learning, CAN25} do not apply. The high-temperature condition in \cite{haah2022optimal} also depends on the degree, forcing
$\beta= n^{-\Omega(1)}$ which does not cover the  case of fixed inverse temperature $\beta>0$.

Our second motivation is to test whether constructive learning results can be obtained for models that exhibit strong chaotic behavior. The dynamics generated by the SYK Hamiltonian is chaotic, and hence one may expect that it may exhibit pseudo-random behaviour. Learnability of the SYK Hamiltonian would offer a complementary viewpoint and possibly reveal new structures in the model.

\subsection{Main result}
Our first result shows that the dense interaction structure of SYK does not prevent sample-efficient Hamiltonian learning. For quartic SYK, at every fixed inverse temperature \(\beta>0\), the complete coupling vector can be recovered to inverse-polynomial accuracy from
\[
    \operatorname{poly}\!\left(n,1/\varepsilon,
        \log(1/\zeta)\right)
\]
many copies of the Gibbs state, with high probability over the random couplings and conditional success probability at least $1-\zeta$ over the measurement outcomes.
In particular, the sample complexity is polynomial in the parameters, despite the polynomial interaction degree. To our
knowledge, this is the first constant-temperature Hamiltonian-learning
guarantee for a dense, random, noncommuting mean-field~ensemble.

\paragraph{Sample-efficient learning.}
Our first theorem gives polynomial sample complexity at every fixed positive temperature. For technical convenience, throughout the paper we work with the normalized trace
\[
    \Tr(X):=\tr(X)/d_n,
    \qquad
    d_n:=2^{n/2},
\]
where \(d_n\) is the dimension of the Hilbert space of \(n\) Majorana
fermions.\footnote{Here, observe that every pair of Majorana modes forms one complex Fermionic mode, whose occupation number
is either \(0\) or \(1\), so \(n\) Majorana modes correspond to \(n/2\)
Fermionic modes and hence to a \(2^{n/2}\)-dimensional Hilbert space. Hence we assume $n$ is even throughout the paper.} Thus \(\Tr(I)=1\).

\begin{restatable}[Sample-efficient learning at every fixed positive
	temperature]{theorem}{samplealltemp}
\label{sampleall:thm:learning}
Fix a known $\beta>0$, $\varepsilon\in(0,1)$, and $\zeta\in(0,1/2)$. Then, for all sufficiently large $n$ (depending on $\beta$, $\varepsilon$, and $\zeta$), there is a procedure using
\begin{equation}\label{sampleall:eq:samplecomplexity}
 n^{O(1+\beta^2)}\varepsilon^{-2}\log(n/\zeta)
\end{equation}
copies of the Gibbs state $d_n^{-1}\rho_\beta(\vec \bsg)$ with the following guarantee: with probability
at least $1-O_\beta(1/n)$ over the random disorder $\vec \bsg$, the procedure outputs an estimate
$\vec g_\mathrm{meas}$ satisfying
\[
    \|\vec g_\mathrm{meas}-\vec \bsg\|_2\leq \varepsilon
\]
with probability $\geq 1-\zeta$ over the measurement outcomes.
\end{restatable}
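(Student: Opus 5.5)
The plan is the standard identifiability-plus-estimation scheme for Hamiltonian learning from Gibbs states, with the locality-based strong-convexity step replaced by a random-matrix argument. Write $\rho_\beta(\vec g)=e^{-\beta H(\vec g)}/\Tr e^{-\beta H(\vec g)}$ in normalized-trace convention. Let $L(\vec g)=\log\Tr e^{-\beta H(\vec g)}$ be the log-partition function, whose gradient is $\partial_S L=-\beta\sigma_n\langle\Gamma_S\rangle_{\vec g}$. The true couplings are the unique minimizer of the convex function
\[
F(\vec h)=L(\vec h)+\beta\sigma_n\sum_S h_S\langle\Gamma_S\rangle_{\vec g}.
\]
The first step is measurement. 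All $\binom n4=\Theta(n^4)$ quartic Majorana monomials are Pauli-type observables with operator norm $1$. By classical shadows restricted to fermionic Gaussian unitaries (matchgate shadows), or by grouping into $\mathrm{poly}(n)$ commuting families, we can estimate every $\langle\Gamma_S\rangle$ to additive accuracy $\delta$ using $O(\mathrm{poly}(n)\delta^{-2}\log(n/\zeta))$ copies, with a union bound over the $n^4$ terms. We then output the minimizer $\vec g_{\mathrm{meas}}$ of the empirical $\hat F$ over a ball $\|\vec h\|_2\le C n^2$, which contains $\vec\bsg$ with high probability.

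The second step is error propagation. If $\nabla^2L\succeq\mu I$ on the segment between $\vec\bsg$ and $\vec g_{\mathrm{meas}}$, the usual argument gives
\[
\|\vec g_{\mathrm{meas}}-\vec\bsg\|_2\le \frac{2\beta\sigma_n\|\hat{\vec e}-\vec e\|_2}{\mu}\le \frac{2\beta\sigma_n n^2\delta}{\mu}.
\]
Choosing $\delta=\varepsilon\mu/(2\beta\sigma_n n^2)$ gives the claimed sample count once $\mu^{-1}\le n^{O(1+\beta^2)}$. Since $\nabla^2 L$ equals $\beta^2\sigma_n^2$ times the Kubo–Mori (BKM) covariance matrix of the $\Gamma_S$, it suffices to show that for every unit $\vec v$ the observable $A=\sum_S v_S\Gamma_S$ satisfies
\[
\Var^{\BKM}_{\rho}(A)\ge n^{-O(1+\beta^2)}\|A\|_{2}^2.
\]
This must hold uniformly over the needed ball, or at least near $\vec\bsg$ with a localization argument; note $\|A\|_2^2=\Tr(A^\dagger A)=\|\vec v\|^2=1$.

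The third step, the main obstacle, is this variance lower bound at constant $\beta$ despite the $\Theta(n^3)$ overlaps. First I would use the standard bound $\Var^{\BKM}(A)\ge \Tr(\rho\,\cdot)$-type relations that reduce the BKM variance to a weighted sum $\sum_{i,j}|\langle i|A|j\rangle|^2 f(\beta(E_i-E_j))$ over eigenpairs of $H$. Second, I would use a change-of-measure from the maximally mixed state: $\rho\succeq e^{-2\beta\|H\|}I$ is too weak, since $\|H\|=\Theta(\sqrt n)$. Instead I would use concentration of the SYK spectrum. With high probability over $\vec\bsg$, $\|H\|=O(\sqrt n)$ by matrix concentration / Feng–Tian–Wei bounds. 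The key is a finer fact: the Gibbs weight spread is controlled because $\Tr(e^{-\beta H})\approx e^{\beta^2\cdot O(1)}$, as the annealed free energy is $O(\beta^2)$ per normalization. Concretely, I would bound
\[
\Tr(\rho X)\ge \frac{\Tr(X e^{-\beta H})}{\Tr e^{-\beta H}}
\]
and use Hölder/Cauchy–Schwarz, $\Tr(X)^2\le\Tr(Xe^{-\beta H})\Tr(Xe^{\beta H})$, together with moment bounds $\Tr(e^{\pm2\beta H})\le e^{O(\beta^2)}$ (from Gaussian moment computations with high probability, via Markov over $\vec\bsg$, costing the $1/n$ failure probability). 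This transfers lower bounds at infinite temperature, where $\Var(A)=1$, to $\rho$, losing only $e^{O(\beta^2)}$-type factors. The difficulty is the off-diagonal energy-difference weight $f(\beta\Delta E)$, which decays like $e^{-\beta|\Delta E|}$. To handle it, I would show that $[H,A]$ has norm $O(n^{O(1)})$ in the $2$-norm. Since each $\Gamma_S$ commutes or anticommutes with terms and $\sigma_n$ normalization gives $\|[H,\Gamma_S]\|\le O(\sqrt n)$, most of $A$'s weight sits at $|\Delta E|\le O(\sqrt{n}\log n)$-ish scales; the truncation costs $e^{-O(\beta\cdot)}$ factors that produce the $n^{O(\beta^2)}$ after optimizing a threshold. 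The uniformity over $\vec h$ near $\vec\bsg$ follows since these estimates are stable under perturbations of size $\varepsilon$ in coupling. Assembling the steps yields \cref{sampleall:thm:learning}.
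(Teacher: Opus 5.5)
\textbf{There is a genuine gap in your third step, the variance lower bound.} Your scalings are off by a factor of $n$.
\begin{itemize}
\item With $\sigma_n=\sqrt{6/n^3}$ one has $\Tr[H^2]=\sigma_n^2m\approx n/4$.
\item $\|H\|_{\mathrm{op}}$ is of order $n$, not $\sqrt n$. Matrix Khintchine gives $\sqrt{\log d_n}\cdot\sqrt{n/4}$, and the SYK ground-state energy is extensive.
\item The free energy is extensive too: $\log\Tr e^{\pm2\beta H}=\Theta(n)$, roughly $\beta^2n/2$ at small $\beta$. ``$O(\beta^2)$ per fermion'' means $O(\beta^2 n)$ in total.
\end{itemize}
Consequently your Cauchy--Schwarz change of measure only gives
$\Var_\rho(A)\ge\Tr(A_0^2)^2/\bigl(Z(\beta)\Tr(A_0^2e^{\beta H})\bigr)$.
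The denominator is of order $Z(\beta)Z(-\beta)\langle A_0^2\rangle_{-\beta}$, which is exponential in $n$. This loss is intrinsic. $\rho_\beta$ is at relative entropy $\Theta(n)$ from the maximally mixed state, so any comparison that treats $A_0^2$ as a generic positive operator loses $e^{-\Theta(n)}$. You would get an exponentially small Hessian bound and exponential sample complexity.

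The energy-difference truncation is a red herring. The logarithmic mean satisfies $\int_0^1q_a^sq_b^{1-s}\,ds\ge q_a/(1+\beta|E_a-E_b|)$, which is polynomial decay, not $e^{-\beta|\Delta E|}$. Hence $\Var^{\BKM}\ge\Var/(1+\beta\Delta)$ with $\Delta=O(n)$ costs only a polynomial factor. The entire difficulty is lower-bounding the \emph{ordinary} variance.

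\textbf{The missing idea is to compare $\rho$ only with local conjugates of itself, never with $\Id$.}
\begin{itemize}
\item The Majorana filter $\frac1{16}\sum_{A\subseteq S}(-1)^{|A|}U_AV_0U_A^\dagger=u_S\Gamma_S$ extracts each coefficient exactly. This gives $u_S^2\le\frac1{16}\sum_{A\subseteq S}\Tr(U_A\rho U_A^\dagger V_0^2)$.
\item Schatten--H\"older and Araki--Lieb--Thirring bound each term by $R_{p,A}^{1/p}\bigl(\Tr(\rho V_0^{2q})\bigr)^{1/q}$. Here $R_{p,A}=\Tr[(U_A\rho U_A^\dagger)^p\rho^{1-p}]=Z_A(\beta,s,\vec\bsg)/Z(\beta,\vec\bsg)$ with $p=1+s/\beta$.
\item The local-quench ratio is $O(1)$ after annealing: $\mathbb{E}Z_A/\mathbb{E}Z\le 8/3$ for $|s|\le1/(8e(1+\beta))$. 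This uses $e^x\le2\cosh x$ and Wick counting of the iterated-commutator moments, $M_{2k}\le[8k(1+\beta)]^{2k}$. The powers of $n$ cancel against $\sigma_n^2$.
\item Gaussian concentration of $\log Z$ (Lipschitz constant $\beta/2$) plus Markov turns this into a quenched bound at polynomial cost.
\item The resulting exponent $(4p+5)/(p-1)=4+9\beta/s$ is the source of $O(1+\beta^2)$.
\end{itemize}

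\textbf{Your localization claim is also incorrect.} The Hessian is Lipschitz with constant $3\beta^3/4$. So an $n^{-r}$ lower bound at $\vec\bsg$ survives only on a ball of radius $\lambda/(2L_\beta)$, which is polynomially small, not of radius $\varepsilon$. You need a convexity/boundary argument to force the minimizer into that ball, and it requires moment accuracy proportional to $\lambda^2$. Uniform curvature on $\|\vec h\|_2\le Cn^2$ is not available, because the random-matrix estimates hold only at the typical $\vec\bsg$.
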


\begin{remark}[Even interaction order]
\label{rem:sample_even_q}
The conclusion of \Cref{sampleall:thm:learning} extends to $H_{\mathrm{SYK}}^{(q)}$
for every fixed even $q\geq2$ and known $\beta>0$. For sufficiently large even
$n\geq q$ depending on $\beta,q$, the sample bound becomes
\[
 \begin{aligned}
 r_{\beta,q}&=2q+2eq(2q+1)\beta(1+\beta),\\
 N&\leq C_{\beta,q}\left(n^{4r_{\beta,q}+q+1}
       +n^{2r_{\beta,q}+q+1}\varepsilon^{-2}\right)\log(n/\zeta),
 \end{aligned}
\]
with disorder success probability $1-O_{\beta,q}(n^{-1})$ and the same conditional
measurement guarantee and ranges of $\varepsilon,\zeta$.

Only the following changes to the proof are needed. Set $m=\binom nq$ and use
$\sigma_{n,q}^2=(q-1)!/n^{q-1}$. The coefficient filter in
\eqref{sampleall:eq:filter} averages over the $2^q$ subsets of a $q$-element support.
Using $(q-1)$-element sets in the anticommutation argument of
\Cref{alltemp:lem:elementary} gives
$\sigma_{n,q}\|\sum_Iu_I\Gamma_I^{(q)}\|_{\rm op}\leq\|\vec u\|_2/\sqrt q$;
thus the Hessian Lipschitz constant becomes $6\beta^3/q^{3/2}$.
In \Cref{alltemp:lem:moments}, a nonzero commutator increases the support by at most
$q-2$, so at most $qk\binom{n-1}{q-1}$ labels are available at each of the first $k$
steps. Since $\sigma_{n,q}^2\binom{n-1}{q-1}\leq1$, this yields
\[
 M_{2k}\leq[2qk(1+\beta)]^{2k},\qquad
 s_{\beta,q}=\frac1{2eq(1+\beta)}.
\]
Taking $p=1+s_{\beta,q}/\beta$, the same concentration, fractional-moment, and BKM
arguments give $\nabla^2\log Z\succeq c_{\beta,q}n^{-r_{\beta,q}}I$.
Finally, $m^2\sigma_{n,q}^2=O_q(n^{q+1})$ gives the stated sample bound.
The constants $C_{\beta,q},c_{\beta,q}>0$ depend only on $\beta,q$; at $q=4$ this
recovers the theorem above.
\end{remark}

\paragraph{Quasi-polynomial time learning.}
Our second result gives a quasipolynomial-time learning algorithm at
sufficiently high constant temperature and achieves small $\ell_2$ error in the couplings. 
For the running-time statement, we use the standard qubit representation of the Majoranas and  that $\beta$ is specified to
polynomially many bits of precision to the algorithm.

\begin{restatable}[Quasi-polynomial time learning]{theorem}{timelearning}
\label{thm:time_learning}
Fix a known  constant $0 < \beta \leq \beta_\mathrm{algorithm}$, a constant $0 < c < 1$, and
$\zeta \in (0,1)$. For all sufficiently large $n$, there is an algorithm which uses
\[
	N = \O_\beta\!\Paren{n^{13}\log(n/\zeta)}
\]
copies of the Gibbs state $d_n^{-1} \rho_\beta(\vec \bsg)$ with the following guarantee: with probability
at least $1-\O_\beta(n^{-2(1-c)})$ over the random disorder $\vec \bsg$, the algorithm outputs
$\vec g_\mathrm{meas}$ satisfying
\[
	\frac{1}{\sqrt{\binom{n}{4}}} \|\vec g_\mathrm{meas} - \vec \bsg\|_2 \leq n^{-c},
\]
with probability $\geq 1-\zeta$ over the measurements. The runtime is
\[
	n^{O(\log n)} + \widetilde O_\beta\!\left(n^{14}\log(n/\zeta)\right)\,.
\]
\end{restatable}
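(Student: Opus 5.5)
We need to plan a proof of the quasi-polynomial time algorithm at small constant $\beta$. The natural approach is a high-temperature expansion of the Gibbs state, followed by linear inversion of the measured low-degree Majorana moments.

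\textbf{Step 1: estimate the moments.} Measure all quartic correlators $\Tr(\Gamma_S\rho_\beta)$ for $|S|=4$. There are $\binom n4=O(n^4)$ commuting-group-partitionable Majorana monomials. By classical shadows / Pauli grouping with $N=O(n^{13}\log(n/\zeta))$ copies, each is estimated to additive accuracy $\eta\approx n^{-9/2}$ with a union bound. This gives the $\widetilde O(n^{14}\log(n/\zeta))$ post-processing term.

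\textbf{Step 2: relate the moments to the couplings.} Expand $\langle\Gamma_S\rangle_\beta=-\beta\sigma_n g_S+R_S(\vec g)$. The remainder $R_S$ is a sum over higher-order cumulant/cluster terms in $\beta\sigma_n$, i.e.\ mixed traces $\Tr(\Gamma_S H^k)$ and their normalizations.

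\textbf{Step 3: control the remainder.} Rather than bounding $R_S$ for each $S$ individually (the dense overlap defeats that), bound it in aggregate, $\sum_S|R_S|^2$. Treat $\vec g$ as Gaussian and compute $\mathbb E\|\vec R\|_2^2$ by Wick/moment counting. The idea is that for $k\ge2$ the contributions to $\Tr(\Gamma_S H^k)$ that survive Gaussian averaging are suppressed by powers of $\sigma_n^2 n^3=O(1)$ times combinatorial factors $(Ck\beta)^{k}$. For small $\beta$ these are summable. Only the genuinely degree-one "tree" term is order $\beta\sigma_n$.

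The remaining concern is that the quadratic correction $\beta^2\sigma_n^2\sum g_{T}g_{T'}\Tr(\Gamma_S\Gamma_T\Gamma_{T'})$ is not suppressed relative to the signal. Handle this not by dropping it but by including the terms up to order $K=O(\log n)$ explicitly:

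\begin{enumerate}
\item Define the truncated map $F_K(\vec g)=\sum_{k\le K}\beta^k c_k(\vec g)$ using the cluster expansion of $\log Z$ differentiated in $g_S$.
\item Bound the tail $\|\nabla\log Z-F_K\|_2\le (C\beta)^K\mathrm{poly}(n)\le n^{-10}$ once $\beta\le\beta_{\rm algorithm}$ and $K=C'\log n$. This holds with high probability using operator-norm bounds: $\sigma_n\|\sum u_I\Gamma_I\|\le\|\vec u\|/2$ (the analogue of the elementary anticommutation lemma used for Theorem~\ref{sampleall:thm:learning}).
\end{enumerate}

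Each $c_k$ is a polynomial of degree $k$ in $\vec g$ with $n^{O(k)}$ terms, so evaluating $F_K$ costs $n^{O(\log n)}$. This is the source of the quasipolynomial runtime.

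\textbf{Step 4: invert.} Solve $F_K(\vec g)\approx\hat{\vec m}$ by fixed-point iteration
\[
\vec g\leftarrow -(\beta\sigma_n)^{-1}\bigl(\hat{\vec m}-(F_K(\vec g)+\beta\sigma_n\vec g)\bigr),
\]
starting from $\vec g_0=-(\beta\sigma_n)^{-1}\hat{\vec m}$. Contractivity should follow from strong convexity of $\log Z$ in the high-temperature regime: the Hessian is close to $\beta^2\sigma_n^2 I$, with the nonlinear part having Lipschitz constant $O(\beta)$ relative to the identity after scaling. This convexity is established via the same BKM/Hessian bounds as in the first theorem, now with $\beta$ small so that the lower bound is $\Omega(\beta^2\sigma_n^2)$ with no polynomial loss.

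\textbf{Error accounting.} The measurement error $\eta$ per coordinate becomes $\eta/(\beta\sigma_n)\approx\eta n^{3/2}$ per coupling. The normalized $\ell_2$ error is therefore $\approx\eta n^{3/2}$, which yields $n^{-c}$ with room to spare.

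\textbf{Main obstacle.} The hard step is Step 3: showing the higher-order terms in the random dense model are bounded in $\ell_2$ by $(C\beta)^k$ times the signal scale with the stated failure probability $n^{-2(1-c)}$. The failure probability suggests a second-moment/Chebyshev bound on the random tail (for instance, $\sum_S|R_S|^2$ has variance controlled by Wick pairings), rather than subgaussian concentration. I would first attempt this bound via Gaussian moment counting of Majorana words, as in the moment lemma for Theorem~\ref{sampleall:thm:learning}, and use Markov/Chebyshev to convert it into the stated disorder probability.
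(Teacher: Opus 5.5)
\textbf{The tail bound in Step 3 fails.} Operator-norm bounds only give $|\operatorname{Tr}[\Gamma_S H^k]|\le\|H(\vec g)\|_{\mathrm{op}}^k$, with $\|H\|_{\mathrm{op}}=\Theta(n)$ for typical disorder. The anticommutation lemma you cite gives only $\|H(\vec g)\|_{\mathrm{op}}\le\|\vec g\|_2/2=\Theta(n^2)$. So the $\beta$-Taylor tail $\sum_{k>K}(\beta\|H\|_{\mathrm{op}})^k/k!$ is small only once $K\gtrsim\beta n$; with $K=O(\log n)$ this forces $\beta=\widetilde O(1/n)$. This is the dense-interaction obstruction (per-site strength $\Theta(n^{3/2})$) that also prevents a per-instance cluster expansion of $\log Z$ from converging at constant $\beta$.

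A per-instance truncation at constant $\beta$ would need two things on a complex disk of constant radius: $Z(\beta,\vec g)$ bounded away from zero, and the numerator $\operatorname{Tr}[\Gamma_S e^{-\beta H}]$ comparable to $Z_{\mathrm{ann}}$. Nothing in your plan supplies either. The paper avoids per-instance control altogether, in three steps:
\begin{enumerate}
\item It replaces the random denominator by the proxy $2-Z/Z_{\mathrm{ann}}$, whose error is exactly $\Delta_\beta^2U_{\beta,I}$.
\item It proves, via Wick--Isserlis and Koteck\'y--Preiss, the disorder-averaged bounds $\|\Delta_\beta\|_{2,s}\le C_sn^{-1}$ and $\|U_{\mathrm{annealed},I}\|_{2,s}\le C_s$, uniformly on a complex $\beta$-disk.
\item It applies the $L^q$-valued Cauchy estimate to obtain an $L^4$ truncation error of $O(n^{-3})$.
\end{enumerate}
Your aggregate $\mathbb E\|\vec R\|_2^2$ idea points this way. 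However, it ignores the random denominator, and it only yields averaged statements, which Step 4 cannot use.

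\textbf{Step 4 has an independent, more basic gap.} Up to truncation, your iteration is $\vec g\mapsto\vec g-\vec U_\beta(\vec g)+\widehat{\vec U}$. By \Cref{lem:hessian-bkm}, the Jacobian of $\vec U_\beta$ is the BKM covariance matrix $\bigl(\Cov^{\BKM}_{\rho_\beta(\vec g)}(\Gamma_I,\Gamma_J)\bigr)_{I,J}$. Contraction therefore requires $\|\mathrm{Id}-\Cov^{\BKM}_{\rho_\beta(\vec g)}\|_{\mathrm{op}}<1$ at every iterate. The first theorem does not provide this, for three reasons:
\begin{enumerate}
\item It proves only $\nabla^2\Phi(\vec\bsg)\succeq c_\beta n^{-r_\beta}I$ with $r_\beta\ge8$. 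This is weaker than $\beta^2\sigma_n^2$ by at least $n^5$, and the loss does not disappear as $\beta\to0$.
\item It gives no upper bound of order one on the covariance matrix.
\item It holds only at the random point $\vec\bsg$. Your iterates are data-dependent, non-Gaussian parameters where none of the disorder-averaged estimates apply; this is exactly the certification barrier the paper points out.
\end{enumerate}

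The missing idea is that no per-instance inversion is needed. Sign and permutation covariance force the Hermite degree-$\le2$ part of $\vec V_\beta$ into $\mathrm{span}\{\vec\Phi_1,\vec\Phi_2\}$. One then inverts an instance-independent $2\times2$ calibration matrix, precomputed in $n^{O(\log n)}$ time; this precomputation is the quasi-polynomial cost. The output is the explicit quadratic estimator $c_1\widehat{\vec U}+c_2\vec\Phi_2(\widehat{\vec U})$.

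The complex-shift, Cauchy and parity-counting argument bounds the surviving degree-$d\ge3$ parts by $O(n^{-(d-1)/2})$. Together with the proxy error, this gives normalized mean-square error $O(n^{-2})$. Markov's inequality applied to this error is what produces accuracy $n^{-c}$ with disorder failure probability $O(n^{-2(1-c)})$. Your error accounting predicts normalized error of order $n^{-3}$ from measurement noise alone and never locates this dominant systematic term. That mismatch signals that your inversion step is assumed to do work it cannot do.
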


\paragraph{Temperature dependence.}
The sample-efficient guarantee holds for every fixed $\beta>0$. For the time-efficient
algorithm, fix an absolute constant $C_\star\geq1$ sufficiently large to dominate the
absolute majorant constants in the background and observable-diagram estimates of
\Cref{sec:degree_proof}, and set
\[
 \beta_\mathrm{algorithm}=\min\left\{
   \frac{1}{768MC_\star},
	\frac{\beta_\mathrm{thermal}}{8\sqrt{15}}\right\}\,,\qquad
 M=\max\{1,c^{(\mathrm{deg})}_1,c^{(\mathrm{deg})}_2\}.
\]

The first restriction comes from multiplying two factors. The first factor $\frac 1 {64C_\star}$ will
give us global control over convergence of various series that will arise when counting polymers via
Wick expansion towards the proof of \Cref{lem:degree_bound}. For the second factor, $c^{(\mathrm{deg})}_1$ and
$c^{(\mathrm{deg})}_2$ relate to the degree bounds of our estimator and the bound $1/(12M)$
will yield stability of our low-degree calibration subroutine. The second restriction $\beta_{\rm
thermal}/(8\sqrt{15})$ supplies the threshold for our improved zero-free result and bounds the error
for the truncation of our proxy for $\Tr[\Gamma_I \rho_\beta(\vec g)]$.

\subsection{Proof overview of sample-efficient learning}

We first give an overview of the algorithm establishing sample-efficient learnability of SYK
Hamiltonians. The overall strategy follows the maximum-entropy framework of~\cite{anshu2020sample}: learning the Hamiltonian from Gibbs states reduces to proving that the log-partition function is \emph{strongly convex}. The main obstacle is that the SYK Hamiltonian has all-to-all connectivity. Every quartic term intersects $\Theta(n^3)$ other terms, so arguments that lose polynomially in the interaction degree do not yield a nontrivial statement at constant temperature. Instead, we exploit  two  features
of SYK: the random mean-field structure of its couplings and the algebraic structure of the Majorana operators. 

\subsubsection{Maximum entropy and strong convexity}
\label{sec:maximumentropy}
Let $\calI=\binom{[n]}4$, $m=|\calI|$, and recall that
\[
 H(\vec \bsg)=\sigma_n\sum_{I\in\calI}\bsg_I\Gamma_I,
 \qquad
 \sigma_n=\sqrt{6/n^3},
 \qquad
 \bsg_I\stackrel{\mathrm{i.i.d.}}{\sim}\mc N(0,1),
 \qquad
 \rho_\beta(\vec \bsg)=\frac{e^{-\beta H(\vec \bsg)}}{Z(\beta,\vec \bsg)}.
\]
We begin by recalling the maximum-entropy reduction of \cite{anshu2020sample}. First, one can observe that the quartic expectation values
\[
    \mu_{\beta,I}(\vec \bsg)
    :=
    \Tr\!\left[\rho_\beta(\vec \bsg)\Gamma_I\right],
    \qquad I\in\mathcal I_4,
\]
form a \emph{sufficient statistic} for the Gibbs family, i.e., the collection
$\{\mu_{\beta,I}(\vec \bsg)\}_{I\in\mathcal I_4}$ \emph{uniquely} determines
the  coupling vector $\vec \bsg$.  Consequently, if the exact expectation vector
$\vec \mu_\beta=(\mu_{\beta,I}(\vec \bsg))_{I\in\calI}$ were known exactly, then $\vec h = \vec \bsg$ would be the unique minimizer of the optimization program
\[
 \mc L(\vec h)
 =
 \Phi(\vec h)+\beta\sigma_n\,\vec h\cdot\vec \mu_\beta,
 \qquad
 \Phi(\vec h)=\log Z(\beta,\vec h).
\]
Indeed, $\nabla\Phi(\vec \bsg) = -\beta\sigma_n \vec \mu_\beta$, and hence $\nabla \mc
L(\vec \bsg)=0$. Of course, we only have empirical estimates
$\vec \mu_{\beta,\mathrm{meas}}$. We therefore minimize
\[
 \mc L_\mathrm{meas}(\vec h)
 =
 \Phi(\vec h)+\beta\sigma_n\,\vec h\cdot\vec\mu_{\beta,\mathrm{meas}}
\]
over a known compact convex set containing $\vec \bsg$.  Thus the optimization problem above reduces to a stability question: how far
can the minimizer move when the linear term in the objective is 
perturbed? Strong convexity of the function $\Phi$ around the true parameter
$\vec \bsg$ gives precisely the required control to convert an
$\ell_2$ error in the measured expectations into an $\ell_2$ error in
the recovered couplings. Suppose
\begin{align}
\label{eq:strongconvexity}
 \nabla^2\Phi(\vec \bsg)\succeq \lambda_n I
\end{align}
for some $\lambda_n>0$. Then $\mc L_{\mathrm{meas}}$ is stable under perturbations of the measured
expectation values, with a bound depending on $\lambda_n$. Ideally, one would like to prove a uniform lower bound on
$\nabla^2\Phi(\vec g)$ throughout the feasible region. In our setting,
however, we are only able to establish such a bound at the random target
$\vec \bsg$ itself. We show that this is nevertheless sufficient. In
particular, \Cref{alltemp:lem:hessian-lipschitz} shows that the Hessian is
Lipschitz:
\[
	\|\nabla^2\Phi(\vec h)-\nabla^2\Phi(\vec h')\|_{\rm op} \leq \poly(\beta)\cdot \|\vec h-\vec h'\|_2,
\]
Consequently, if Eq.~\eqref{eq:strongconvexity} holds, then  $\Phi$ remains $(\lambda_n/2)$-strongly convex in a ball of radius
$\lambda_n/\poly(\beta)$ around $\vec \bsg$. A standard boundary argument then can be used to show that if
\[
 \|\vec\mu_{\beta,\mathrm{meas}}-\vec \mu_\beta\|_2
 \leq
 \min\Brac{\frac{\lambda_n\varepsilon}{8\beta\sigma_n}, \frac{\lambda_n^2}{16\beta\sigma_n \poly(\beta)}}\,,
\]
then every minimizer $\vec g_\mathrm{meas}$ of $\mc L_\mathrm{meas}$ satisfies
$ \|\vec g_\mathrm{meas}-\vec \bsg\|_2\leq\varepsilon.$
Since each $\Gamma_I$ is a $\{\pm1\}$-valued observable, inverse-polynomial accuracy in the moments requires only polynomially many samples. Thus the problem reduces to proving, with high probability over the SYK disorder,
\[
 \nabla^2\Phi(\vec \bsg)\succeq n^{-O_\beta(1)}I.
\]

\subsubsection{From the Hessian to an ordinary variance}
We next reduce strong convexity of the log-partition function to a lower bound on  
thermal variance.  Fix a direction $\vec u\in\R^m$ and associate to it the quartic observable
$    A_{\vec u}:=\sum_{I\in\calI}u_I\Gamma_I.$ 
The Hessian of the log-partition function in the direction $\vec u$ has a particularly clean
interpretation.  Differentiating
$\Phi(\vec g)=\log Z(\beta,\vec g)$ twice gives
\begin{align}
\label{eq:variancebound}
    \vec u^\top\nabla^2\Phi(\vec g)\vec u
    =
    \beta^2\sigma_n^2
    \Var^{\BKM}_{\rho_\beta(\vec g)}(A_{\vec u}),
\end{align}
where the Bogoliubov--Kubo--Mori variance is
\begin{align}
\label{eq:bkmvariance}
    \Var^{\BKM}_{\rho}(A)
    :=
    \int_0^1
    \Tr\!\left[
        \rho^s A_0 \rho^{1-s}A_0
    \right]ds,
    \qquad
    A_0:=A-\Tr[\rho A]I.
\end{align}
This is the non-commutative analogue of the fact that the Hessian of a classical log-partition function is a covariance matrix.
The only difference is
that, because $A_{\vec u}$ need not commute with $\rho_\beta$, the usual variance is replaced by
the BKM variance. It is useful, however, to work with the ordinary variance
\begin{align}
\label{eq:usualvariance}
    \Var_{\rho}(A)
    =
    \Tr[\rho A_0^2] 
\end{align}
and below we relate \Cref{eq:bkmvariance} and \Cref{eq:usualvariance}. Diagonalizing
$H(\vec g)$, and writing $q_a=e^{-\beta\lambda_a}/(d_nZ(\beta,\vec g))$ for the Gibbs weights, one can express these terms as
\[
    \Var^{\BKM}_{\rho_\beta}(A)
    =
    \sum_{a,b}\,|(A_0)_{ab}|^2\cdot \int_0^1 q_a^s q_b^{1-s}\,ds,
    \qquad
    \Var_{\rho_\beta}(A)
    =
    \sum_{a,b}q_a\,|(A_0)_{ab}|^2.
\]
Now, letting
$\Delta_{\vec g}=\lambda_{\max}(H(\vec g))-\lambda_{\min}(H(\vec g))$
denote the spectral diameter, observe that 
$$
 \int_0^1 q_a^s q_b^{1-s}\,ds\geq \frac{q_a}{1+\beta\Delta_{\vec g}},
 $$
hence
\begin{align}
\label{eq:BKMtoclassical}
    \Var^{\BKM}_{\rho_\beta(\vec g)}(A)
    \geq
    \frac{1}{1+\beta\Delta_{\vec g}}
    \Var_{\rho_\beta(\vec g)}(A) \stackrel{\eqref{eq:variancebound}}{\implies}
    \vec u^\top\nabla^2\Phi(\vec g)\vec u
    \geq
    \frac{\beta^2\sigma_n^2}
         {1+\beta\Delta_{\vec g}}
    \Var_{\rho_\beta(\vec g)}(A_{\vec u}).
\end{align}
For a typical SYK instance, $\|H(\vec g)\|_{\op}=O(n)$, and hence
$\Delta_{\vec g}=O(n)$.  Thus the passage from the Hessian to ordinary variance loses only a
polynomial factor.  In particular, it is sufficient to establish
\[
    \Var_{\rho_\beta(\vec g)}(A_{\vec u})
    \geq
    n^{-O_\beta(1)}\|\vec u\|_2^2
\]
 for every direction $\vec u$.
The remaining task is therefore purely a variance lower bound.

\subsubsection{Exact coefficient extraction and local perturbations}

It remains to prove an inverse-polynomial lower bound on 
$\Var_{\rho_\beta}(A_{\vec u})$.  Normalize $\|\vec u\|_2=1$ and~let
\[
    V=A_{\vec u}=\sum_{I\in\calI}u_I\Gamma_I,
    \qquad
    V_0=V-\Tr(\rho_\beta V)\Id.
\]
Our first observation is that, from an arbitrary global quartic observable $V$, we can
isolate any one of its coefficients \emph{exactly}, without paying any factor depending on the
interaction degree. Let $S \in \calI$. Define the Majorana twirl,
\[
	\mc P_S(X) = \frac 1 {16} \sum_{A \subseteq S} (-1)^{|A|} \Gamma_A X \Gamma_A^\dagger\,,
\]
which isolates the coefficient $u_S$. Using $\Gamma_A\Gamma_I\Gamma_A^\dagger = (-1)^{|A\cap I|}\Gamma_I,$ we have that
\begin{align}
\label{eq:termextraction}
	\mc P_S(V_0) = \frac 1 {16} \sum_{A\subseteq S} (-1)^{|A|} \Gamma_A V_0 \Gamma_A^\dagger =
	\frac 1 {16} \sum_{I \in \calI} u_I \Paren{ \sum_{A\subseteq S} (-1)^{|A|+|A\cap I|}
	}\Gamma_I = u_S\Gamma_S\,
\end{align}
where we used that the inner sum vanishes unless $I=S$. Thus one coefficient of an arbitrary quartic observable can be extracted exactly, with no dependence
on the interaction degree. After algebraic manipulation, this isolation procedure allows us to write
\[
	u_S^2 \leq \frac 1 {16} \sum_{A \subseteq S} \Tr[\rho_A V_0^2]\,,\qquad \rho_A = U_A \rho
	U_A^\dagger\,
\]
where $U_A=\Gamma_A$. 
Thus, we have reduced the problem to relating the variance of the locally perturbed Gibbs state
$\rho_A$ to that of the true state $\rho$. The comparison is controlled by a Petz R\'enyi power
\cite{petz1986quasi} between the two states,
\[
	R_{p,A} = \frac{\Tr[e^{-(\beta+s) H} e^{s U_A H U_A^\dagger}]}{Z(\beta, \vec g)} =
	\frac{Z_A(\beta, s, \vec g)}{Z(\beta, \vec g)}\,.
\]
This argument is made in \Cref{sampleall:lem:variance}. Next, we show in \Cref{alltemp:thm:quench} that the corresponding annealed quantity satisfies
\begin{equation}
	\label{eq:annealed_ratio}
	\E_{\vec \bsg} Z_A(\beta, s, \vec \bsg) / \E_{\vec \bsg} Z(\beta, \vec \bsg) \leq \frac 8
	3\,.
\end{equation}
Assuming this bound, we restrict to the event
$Z(\beta,\vec \bsg)\geq n^{-1}\E_{\vec \bsg}Z(\beta,\vec \bsg)$, which has high probability.
On this event, the denominator can be replaced by its annealed value at a polynomial cost;
Markov's inequality then bounds the average of the ratios $R_{p,A}$ with high probability.
Finally, this yields the chain of inequalities,
\begin{equation}
	\label{eq:variance_lower}
	\Var_{\rho_\beta}(V) \geq \Paren{ \binom n 3 m^{p-1} \frac 1 {16} \sum_{|S| = 4} \sum_{A
	\subseteq S} R_{p, A}}^{-1/(p-1)} \|\vec u\|_2^2 \geq \Paren{n^{4p+5}}^{-1/(p-1)} \|\vec u\|_2^2 \geq
	n^{-4 - 9\beta / s} \|\vec u\|_2^2\,,
\end{equation}
where we have taken $p = 1 + s / \beta$. 

\paragraph{Proving \Cref{eq:annealed_ratio}.} We want to control the degree to which a local
operation $\Gamma_A$ can affect the Gibbs state, as measured by the Petz R\'enyi power,
\[
	\Tr[(U_A \rho U_A^\dagger)^p \rho^{1 - p}]\,.
\]
This bound is reminiscent of the ``leakage'' controlled in \cite{arad2016connecting}, and indeed,
bounding such a quantity directly yields a sample-efficient algorithm for bounded-degree graphs
(see \cite{anshu2020sample} and a simplified proof in \Cref{app:bounded_degree}). In an earlier
version of this work, we controlled this quantity by Taylor expanding exponentials $e^{-\beta H}$
with respect to $\beta$ and applying Koteck\'y--Preiss, as in our time-efficient result. However,
generative AI observed that we can instead bound the annealed local-quench ratio by a series of
iterated commutator moments. Writing $H=H(\vec \bsg)$ and $C_k=C_k(\vec \bsg)$ inside expectations,
the terms are, for $k\geq0$,
\[
	\Paren{\frac {s^{2k}}{(2k)!}} M_{2k} = \Paren{\frac {s^{2k}}{(2k)!}}\frac{\E_{\vec \bsg}
	\Tr[e^{-\beta H} C_k C_k^\dagger]}{\E_{\vec \bsg} Z(\beta, \vec \bsg)}\,, \qquad C_k(\vec g) =
	\operatorname{ad}_{H(\vec g)}^k (U_A)\,.
\]
Expanding the commutators and counting the terms which survive when we average over the
SYK couplings yields the bound $M_{2k} \leq [8k ( 1 + \beta)]^{2k}$ in \Cref{alltemp:lem:moments}.
Choosing $s=O((1+\beta)^{-1})$ makes the resulting series summable and yields
\Cref{eq:annealed_ratio}.

\subsubsection{Completing the analysis}
Putting together \Cref{eq:variancebound}, \Cref{eq:BKMtoclassical}, \Cref{eq:variance_lower}, we get
that
\[
	\nabla^2\Phi(\vec \bsg)\succeq\lambda_n\Id, \qquad \lambda_n = \frac{\beta^2 \sigma_n^2}{1 +
	\beta\Delta_{\vec \bsg}} n^{-O_\beta(1)} = n^{-O_\beta(1)}.
\]
Since each $\Gamma_I$ is a
$\{\pm1\}$-valued observable, its expectation can be estimated from repeated measurements
using Hoeffding's inequality.  Estimating every $\mu_{\beta,I}$ to accuracy
$\delta/\sqrt m$ and taking a union bound over all $I$ gives
$\|\vec \mu_{\beta, \mathrm{meas}}-\vec \mu_\beta\|_2\leq\delta$ using
$\textsf{poly}(n,1/\delta)\log(1/\zeta)$ copies.   As discussed in Section~\ref{sec:maximumentropy} this implies
\[
\|\vec g_{\mathrm{meas}}-\vec \bsg\|_2\leq\varepsilon.
\]
with probability at least $1-\zeta$ over the measurements.    Thus, despite the all-to-all interaction structure of SYK, inverse-polynomial accuracy in the couplings can be achieved with only polynomially many copies of the Gibbs state, proving our first result. 

\subsection{Time-efficient learning}
\label{sec:intro-time-efficient}

In the previous section, our main result was an \emph{information-theoretic} result that SYK models are learnable given copies of the Gibbs state. However, the main drawback was that the empirical maximum-entropy principle need not be  solvable in polynomial time. Recall that the sample-efficient algorithm above is based on solving the following optimization problem
\[
    \mathcal L(\vec h)
    =
    \log Z(\beta,\vec h)
    +
    \beta\sigma_n\,\vec h\cdot\vec\mu_\beta,
\]
whose unique minimizer is the \emph{true coupling} vector $\vec g$.  While this
gives an information-theoretically efficient reconstruction procedure, it
does not immediately yield an efficient algorithm for learning the unknown SYK~Hamiltonian.

\subsubsection{First-order inversion}
Unlike the sample-efficient procedure, the time-efficient algorithm does not
solve the maximum-entropy optimization problem.  
Instead, it works with the same measured quartic expectations and seeks to directly invert the
map from the SYK couplings to these observables.
 Recall that
$    H(\vec g)
    =
    \sigma_n\sum_{I\in\mathcal I_4} g_I\Gamma_I,$ 
where $   \sigma_n=\sqrt{{6}/{n^3}}
$. 
Define
\[
    \mu_{\beta,I}(\vec g)
    :=
    \operatorname{Tr}[\rho_\beta(\vec g)\Gamma_I].
\]
At high temperature, these expectations are approximately proportional to
the corresponding couplings.  Indeed, a first-order Taylor expansion of the
Gibbs state around $\beta=0$ gives
\[
    \mu_{\beta,I}(\vec g)
    =
    -\beta\,\operatorname{Tr}[\Gamma_I H(\vec g)]
    +O(\beta^2)
    =
    -\beta\sigma_n g_I+O(\beta^2),
\]
where in the last equality we used the orthogonality relations of the
Majorana operators,
$\operatorname{Tr}[\Gamma_I\Gamma_J]=\delta_{I,J}$.
To recover $g_I$ at first order, we normalize the thermal expectations by $-\beta\sigma_n$, defining
\[
    U_{\beta,I}(\vec g) := -\frac{\mu_{\beta,I}(\vec g)}{\beta\sigma_n} = - \frac{\Tr[\Gamma_I
    \rho_\beta(\vec g)]} {\beta\sigma_n}.
\]
With this normalization, the thermal expectation map is close to the
identity.  More precisely, expanding $\mu_{\beta,I}$ one order further gives
\begin{equation}
    \label{eq:intro-time-expansion}
    U_{\beta,I}(\vec g)
    =
    g_I-\frac{\beta\sigma_n}{2}
    \sum_{J_1\triangle J_2=I}
    \chi(I;J_1,J_2)\,g_{J_1}g_{J_2}
    +O(\beta^2),
\end{equation}
where
\[
    \chi(I;J_1,J_2)
    :=
    \operatorname{sign}
    \Tr[\Gamma_I\Gamma_{J_1}\Gamma_{J_2}].
\]
Thus, after the natural rescaling by $\beta\sigma_n$, the measured expectations agree with the
disorder vector $\vec g$ at linear order, and the remaining task is to correct the nonlinear terms
beginning with the quadratic contribution in \Cref{eq:intro-time-expansion}. This observation almost
gives an algorithm by itself, but not quite.  For a fixed $I$, the quadratic term in
\Cref{eq:intro-time-expansion} contains $N_2=\Theta(n^2)$ essentially orthogonal Gaussian monomials.
Since $\sigma_n=\Theta(n^{-3/2})$ and there are $N_2=\Theta(n^2)$ orthogonal quadratic monomials,
the Gaussian $L^2$ norm of this quadratic correction is
\[
    \Theta_\beta(\sigma_n\sqrt{N_2})
    =
    \Theta_\beta(n^{-1/2}).
\]
Thus, even if one were given the thermal expectations exactly, the naive inversion map 
leaves an $n^{-1/2}$-scale quadratic bias in each coordinate.  The main
challenge in the time-efficient argument which we deal with below is to identify and
cancel this leading nonlinear contribution.

\subsubsection{A polynomial approximation to $U_\beta$}
\label{sec:poly_approximation}
The exact map $\vec U_\beta=(U_{\beta,I}(\vec g))_I$ is still difficult to analyze directly because
it contains the random partition function in the denominator, i.e., 
\[
    U_{\beta,I}(\vec g)
    =
    -
    \frac{
        \operatorname{Tr}[
            \Gamma_Ie^{-\beta H(\vec g)}
        ]
    }{
        \beta\sigma_n Z(\beta,\vec g)
    }.
\]
So, we first approximate this map by an explicit low-degree polynomial in the Gaussian
disorder variables.~Write
\[
    Z(\beta,\vec g)
    =
    Z_{\rm ann}(\beta)
    \bigl(1+\Delta_\beta(\vec g)\bigr),
    \qquad
    Z_{\rm ann}(\beta)
    :=
    \E_{\vec \bsg}[Z(\beta,\vec \bsg)].
\]
Applying the Taylor expansion
$    \frac{1}{1+\Delta_\beta}
    =
    1-\Delta_\beta
    +
    O(\Delta_\beta^2)$ 
suggests replacing the factor $(1+\Delta_\beta)^{-1}$ by its first-order approximation. 
This leads to the untruncated proxy
\[
    V_{\beta,I}^{\mathrm{full}}(\vec g)
    :=
    -
    \frac{
        \operatorname{Tr}[
            \Gamma_Ie^{-\beta H(\vec g)}
        ]
    }{
        \beta\sigma_n Z_{\rm ann}(\beta)
    }
    \left(
        2-
        \frac{Z(\beta,\vec g)}{Z_{\rm ann}(\beta)}
    \right).
\]
We then truncate its Taylor series in $\beta$ at degree
$K=O(\log n)$. Since the coefficient of $\beta^k$ is a polynomial of degree $O(k)$ in
$\vec g$, the resulting proxy $V_{\beta,I}$ is a Gaussian polynomial of
degree $O(\log n)$. There are a couple of subtleties in implementing this idea.

\begin{enumerate}
    \item First, replacing
$(1+\Delta_\beta)^{-1}$ by $1-\Delta_\beta$ produces an error quadratic
in $\Delta_\beta$.  The previously known second-moment estimate~\cite{zlokapa2026rigorous},
\[
    \E_{\vec \bsg}|\Delta_\beta(\vec \bsg)|^2
    =
    O(n^{-1}),
\]
is sufficient to control this quadratic error in $L_1$, but our final
reconstruction uses quadratic combinations of the approximations
$V_{\beta,I}$ to the rescaled expectations $U_{\beta,I}$ and measures the
resulting error in $L^2$.  Concretely, we must control terms of the form
\[
    \bigl\|
        (U_{\beta,J}-V_{\beta,J})\,U_{\beta,K}
    \bigr\|_{L^2}
    \le
    \|U_{\beta,J}-V_{\beta,J}\|_{L^4}
    \|U_{\beta,K}\|_{L^4},
\]
by H\"older's inequality.  Thus we need $L^4$ control of
$U_{\beta,I}-V_{\beta,I}$, rather than only control in expectation.
Moreover, the error incurred by replacing $(1+\Delta_\beta)^{-1}$ by
$1-\Delta_\beta$ contributes a term of the form
$\Delta_\beta^2 U_{\beta,I}$, so controlling the proxy error in $L^4$
requires fixed higher moments of $\Delta_\beta$.  We therefore prove the
stronger weighted estimate
\begin{equation}
    \|\Delta_\beta\|_{2,s}
    \le
    C_s n^{-1},
    \label{eq:intro-strong-thermal}
\end{equation}
for every fixed $s\ge1$.  For the algorithm it suffices to take
$s=\sqrt{15}$, since Gaussian hypercontractivity then controls all
$L^p$ norms up to $p=16$.  This strengthened estimate is one of
the main structural results specific to the time-efficient algorithm, and its
proof requires a more refined analysis of the Wick expansion
and associated polymer expansion than is needed for the
sample-efficient result.

\item Second, we truncate the $\beta$-expansion defining $V_{\beta,I}^{\mathrm{full}}$.
Define
\[
    V_{\beta,I}^{\mathrm{full}}(\vec g)
    =
    \sum_{k\ge 0} v_{I,k}(\vec g)\,\beta^k,
\text{ and }    V_{\beta,I}(\vec g)
    :=
    \sum_{k=0}^{K} v_{I,k}(\vec g)\,\beta^k.
\]
The nontrivial technical step is to obtain a uniform $L^4$ bound on $V_{z,I}^{\rm full}$ throughout a
constant complex neighborhood of the origin.  This follows from the strengthened estimates above,
together with zero-freeness of $Z_{\rm ann}(z)$.  Once this is established, Cauchy's estimate
bounding the derivatives of analytic functions gives
\[
    \|v_{I,k}\|_{L^4} \leq C\,[\beta_{\rm thermal}(\sqrt{15})]^{-k}\,.
\]
In the temperature regime of the algorithm, $|\beta|/\beta_{\rm thermal}(\sqrt{15})\le 1/8$, and hence
\[
    \|V_{\beta,I}^{\mathrm{full}} - V_{\beta,I}\|_{L^4} \leq C\sum_{k > K}8^{-k}.
\]
Taking $K = \Theta(\log n)$ makes this tail inverse-polynomially small; with the choice used in the
proof it is $O(n^{-3})$.  Combining this truncation error with the error from replacing $(1 +
\Delta_\beta)^{-1}$ by $1 - \Delta_\beta$ yields
\begin{equation}
    \label{eq:intro-proxy-error}
    \max_{I\in\mathcal I_4} \|U_{\beta,I}-V_{\beta,I}\|_{L^4} \leq O_\beta(n^{-3/2}).
\end{equation}
\end{enumerate}

\subsubsection{Approximating the inverse map}
Now that we have an explicit polynomial approximation
$\vec V_\beta(\vec g)$ to the rescaled thermal expectations, the remaining
task is to approximately invert the map
$    \vec g \longmapsto \vec V_\beta(\vec g).$  A priori, even correcting only the quadratic part of this map seems
cumbersome: a general quadratic map
$\mathbb R^m\to\mathbb R^m$ is specified by $\Theta(m^3)$ coefficients,
so directly learning or inverting its quadratic correction would require
handling a very large parameter space. 
The key simplification is that the SYK model has a large symmetry group,
which severely restricts the possible low-degree corrections.

The polynomial map $\vec V_\beta$ is covariant under sign changes of the
underlying Majorana modes and under permutations of the $n$ physical
indices.  To see why this is restrictive, consider a fixed output
coordinate $I$.  Under a sign change
$\vec x=(x_1,\ldots,x_n)\in\{\pm1\}^n$, the coupling $g_J$ transforms as
\[
    g_J \longmapsto x_J g_J,
    \qquad
    x_J:=\prod_{j\in J}x_j,
\]
while the $I$-th output must transform as
\[
    V_{\beta,I}(\vec x\cdot\vec g)
    =
    x_I V_{\beta,I}(\vec g).
\]
This already rules out almost all degree-$\le2$ monomials.  A linear
monomial $g_J$ has the required sign only when $J=I$, while a quadratic
monomial $g_{J_1}g_{J_2}$ has the required sign only when
\[
    x_{J_1}x_{J_2}=x_I
    \qquad\text{for every }\vec x,
\]
which is equivalent to
$    J_1\triangle J_2=I.$ 
Thus, for a fixed output coordinate $I$, the only possible degree-one term
is $g_I$, and the only possible degree-two terms are precisely those
indexed by pairs $(J_1,J_2)$ satisfying
$J_1\triangle J_2=I$.

Permutation covariance then forces all such surviving terms to appear with
the same coefficient, up to the fermionic sign coming from the Majorana
ordering.  Consequently, the entire covariant Hermite degree-$\le2$ space
is spanned by only two vector-valued polynomials,
\begin{equation}
	\label{eq:two_hermite_span}
	\vec\Phi_1(\vec g) = \vec g \text{ and } \Phi_{2,I}(\vec g) := \frac{1}{2\sqrt{N_2}}
	\sum_{J_1\triangle J_2=I} \chi(I;J_1,J_2)\, g_{J_1}g_{J_2}\,,
\end{equation}
where $\chi(I; J_1, J_2) = \mathsf{sign}(\Tr[\Gamma_I \Gamma_{J_1} \Gamma_{J_2}])$. Hence,
although the ambient quadratic space is enormous, the symmetries reduce the low-degree correction
problem to a two-dimensional one.
In particular, the quadratic statistic $\Phi_2$ is precisely the structure
of the leading correction appearing in
\Cref{eq:intro-time-expansion}.  This symmetry reduction is what makes it
possible to cancel the nonlinear bias without learning a large collection
of additional parameters. We therefore consider a reconstruction algorithm of the form
\begin{align}
\label{eq:calibcoeff}
    \vec f_{\rm reconstruct}(\vec x)
    =
    c_1\vec\Phi_1(\vec x)
    +
    c_2\vec\Phi_2(\vec x).
\end{align}
Rather than requiring
$\vec f_{\rm reconstruct}(\vec V_\beta(\vec g))=\vec g$ exactly, we only ask
that this identity hold in Hermite degrees at most two:
\begin{equation}
    \Pi_{\le2}
    \vec f_{\rm reconstruct}(\vec V_\beta(\vec g))
    =
    \vec g.
    \label{eq:intro-calibration-goal}
\end{equation}
By the covariance properties above, composition with $\vec V_\beta$ and
projection onto degree at most two defines a map from
$\operatorname{span}\{\Phi_1,\Phi_2\}$
to itself.
Hence \eqref{eq:intro-calibration-goal} reduces to inverting the
$2\times2$ matrix
\[
	M_\beta=M_\mathrm{cal}(\beta)=\begin{bmatrix} m_{1,1} & m_{1,2} \\ m_{2,1} & m_{2,2} \end{bmatrix}\quad \text{where}
	\quad m_{i,j}=\E_{\vec \bsg}[\Phi_{i,I}(\vec \bsg)\Phi_{j,I}(\vec{V}_\beta(\vec \bsg))]
	\quad\text{for any fixed }I\in\mc I_4,
\]
where we have appealed to the coordinatewise orthonormality of $\vec{\Phi}_i$. 
Therefore, choosing $c_1,c_2$ so that the degree-$\le2$ part of the
composition equals $\vec\Phi_1(\vec g)=\vec g$ amounts to solving the
$2\times2$ linear system 
\[
	\begin{bmatrix} c_1 \\ c_2\end{bmatrix} = M_\beta^{-1} \begin{bmatrix} 1 \\
	0\end{bmatrix}\,.
\]
At infinite temperature, $\vec V_0(\vec g)=\vec g$, so $M_0=I$. Since it is analytic in $\beta$, that implies that, for sufficiently small constant $\beta$, $M_\beta$ remains close to the identity and is therefore
 invertible. Thus the coefficients $c_1,c_2$ can be chosen to
recover the linear term exactly and cancel the entire quadratic quantity.

\subsubsection{Why cancelling degree two is enough}
Recall that the two-dimensional reduction applies only to the Hermite degree-$\le 2$ part of the
map.  The full polynomial $\vec V_\beta(\vec g)$ still contains higher-degree terms:
\[
    \vec V_\beta(\vec g) =
    \underbrace{\Pi_{\le 2}\vec V_\beta(\vec g)}_{\text{two-dimensional by symmetry}} +
    \underbrace{\sum_{d\ge 3}\Pi_d\vec V_\beta(\vec g)}_{\text{higher-degree terms}}.
\]
The symmetry argument shows that the first part lies in $\operatorname{span}\{\Phi_1,\Phi_2\}$. By our choice of $c_1,c_2$ in \Cref{eq:calibcoeff},
$\vec f_{\rm reconstruct}$ recovers the linear term exactly and cancels the entire quadratic contribution.

The remaining issue is to show that these higher-degree terms are small.
A second technical ingredient is therefore a quantitative bound on the
higher-degree terms of $\vec V_\beta$. To this end,  we prove
\[
    \|\Pi_d\Phi_{1,I}(\vec V_\beta)\|_{L^2}
    \le
    c_d\,n^{-(d-1)/2},
    \qquad
    \|\Pi_d\Phi_{2,I}(\vec V_\beta)\|_{L^2}
    \le
    c_d\,n^{-(d-2)/2},
\]
where $c_d$ grows only exponentially in $d\log d$.  Since the quadratic
coefficient satisfies $|c_2|=O(n^{-1/2})$, the two linear and quadratic terms contribute on the
same scale, and the degree-$d$ part of the reconstructed error is
essentially
\[
    n^{-(d-1)/2}.
\]
Hence the degree-two contribution, which would be of order $n^{-1/2}$, is
precisely the obstruction that must be removed.  Once it is cancelled, the
first surviving term has degree three and is already of order $n^{-1}$.
Furthermore, since $\vec V_\beta$ has degree only $K=O(\log n)$, the remaining
Hermite tail is a finite rapidly decreasing sum, dominated by its
degree-three contribution.

Finally, the algorithm is given the true rescaled expectations
$\vec U_\beta$, rather than the polynomial approximation $\vec V_\beta$.
Since $\vec f_{\rm reconstruct}$ is quadratic, a direct application of
H\"older's inequality gives a Lipschitz-type bound in terms of
coordinatewise $L^4$ norms.  Combining this stability with
\Cref{eq:intro-proxy-error} transfers the error bound from
$\vec f_{\rm reconstruct}(\vec V_\beta)$ to
$\vec f_{\rm reconstruct}(\vec U_\beta)$ to get
\[
    \frac{1}{\sqrt m}
    \left\|
        \vec \bsg
        -
        \vec f_{\rm reconstruct}(\vec U_\beta)
    \right\|_{L^2}
    \le
    O_\beta(n^{-1}).
\]
Equivalently, for the normalized Hamiltonian parameters $    \vec\theta
    =
    \beta\sigma_n\vec \bsg,$
the estimator
\[
    \vec\theta_{\rm ideal}
    =
    \beta\sigma_n
    \vec f_{\rm reconstruct}(\vec U_\beta)
\]
satisfies
\[
    \frac{1}{\sqrt n}
    \|\vec\theta_{\rm ideal}-\vec\theta\|_{L^2}
    \le
    O_\beta(n^{-1}).
\]
\subsubsection{Overall running time}
It remains to account for the computational cost of the reconstruction. The coefficients $c_1,c_2$
depend only on $n$ and $\beta$, and not on the particular SYK instance, so they may be computed in a
preprocessing step from the truncated polynomial $V_\beta$.  Since the truncation degree is
$K=O(\log n)$, this preprocessing takes time
\[
    n^{O(K)} = 2^{\operatorname{polylog}(n)}\,.
\]
Once $c_1,c_2$ are fixed, evaluating
\[
    \vec g_{\rm meas} = \vec f_{\rm reconstruct}(\vec U_{\beta,\rm meas})
\]
is polynomial-time: for each $I \in \calI$, the quadratic term $\Phi_{2,I}$ contains only $N_2 =
\Theta(n^2)$ summands, and there are $m = \Theta(n^4)$ output coordinates.  Thus the reconstruction
itself can be implemented using $\operatorname{poly}(n)$ arithmetic operations. 

\section{Discussion}

Our work continues a line of papers which develop rigorous algorithms for the SYK model, including
\cite{hastings2022optimizing, basso2024optimizing} and, most recently,
\cite{zlokapa2026rigorous}. The first gives polynomial-time classical certificates and quantum
constant-factor approximations to the optimal energy of quartic SYK, while the second gives
dissipative algorithms with guarantees on the disorder-averaged energy for random local
Hamiltonians, including SYK. The work of \cite{zlokapa2026rigorous} instead estimates thermal
expectations at high temperature in quasi-polynomial classical time. These works develop rigorous
alternatives to heuristics such as the replica method, using different aspects of the random
Hamiltonian. In particular, the Wick-pair expansion of \cite{zlokapa2026rigorous}, which we extend
in our quasi-polynomial time algorithm, exploits Gaussian disorder to control higher-degree
contributions after disorder averaging. We expect that these techniques will lead to increasingly
effective algorithms for simulating, learning, and preparing low-energy states of the SYK model.

Classically, a closely related line of work
~\cite{anari2024universality,gaitonde2024unified,chandrasekaran2025learning} has established
increasingly powerful guarantees for learning the \emph{Sherrington-Kirkpatrick (SK) model}, a
widely studied model for disordered magnets. In this model, the Hamiltonian is classical and
2-local, given by $H(x)\triangleq \frac{1}{\sqrt{n}}\sum_{i<j} g_{ij}x_ix_j$ for $x\in\{-1,1\}^n$
and $g_{ij}$ sampled i.i.d. from $\mathcal{N}(0,1)$. This model exhibits a replica symmetry breaking
phase transition at inverse temperature $\beta = 1$. The most recent work in this
line~\cite{chandrasekaran2025learning} established that one can learn the couplings $(g_{ij})$ given
samples from the Gibbs measure $\propto e^{-\beta H}$ even at $\beta = O(\sqrt{\log n})$, well
beyond this phase transition. Our result is qualitatively different in part because it is
conjectured that the SYK model does not exhibit such a phase transition. Our algorithm does however
bear resemblance to prior structural observations about the SK model below the phase transition:
in~\cite{el2024bounds,brennecke2024two}, bounds on the two-point correlation matrix for the SK model
below the phase transition were proven by relating that matrix to the disorder matrix itself. This
is the same premise behind our time-efficient algorithm, although we need to take higher-order local
observables than merely the sufficient statistics of the model.

\paragraph{The context of our results.} Our sample-efficient result begins by following the proof strategy developed by
\cite{anshu2020sample}; to prove sample-efficient learning, it suffices to show that the log
partition function is strongly convex. In the classical setting, the Hessian is directly
proportional to the covariance matrix of the original Hamiltonian terms; related classical
learning problems are studied in, e.g., \cite{montanari2015, vuffray2016interaction}.
For noncommuting quantum Hamiltonians, the Hessian is no longer the ordinary covariance, and the
two difficulties in \cite{anshu2020sample} are reducing strong convexity to the variance of a
``reasonable'' operator, and relating this variance to the coefficients of the original Hamiltonian. The
authors address these issues by allowing \emph{quasi-local} operators, followed by a local twirl
which effectively restores locality. The effect of this twirl is analyzed using a result of
\cite{arad2016connecting} which controls the ``leakage'' caused by a local operator on eigenspaces
of a Hamiltonian. In our work, we make a simplification to this argument and adapt it to the
setting of SYK. We use the standard identity relating the Hessian to the
Bogoliubov--Kubo--Mori covariance \cite{petz1993bogoliubov} and compare it directly to the ordinary
variance, with a loss controlled by the spectral diameter. This simplifies several aspects of
\cite{anshu2020sample}; see \Cref{app:bounded_degree} for details. The remaining challenge is
adapting the ideas of \cite{arad2016connecting} to SYK. Their bounds depend on the sum of interaction
norms incident to a site, which is typically $\Theta(n^{3/2})$ for quartic SYK even after the
$\sigma_n$ normalization. Consequently, a direct application does not give the constant-temperature
control needed here. Instead, we relate the variance to the local-quench ratio,
\[
	R_{p,A}=\frac{Z_A(\beta,s,\vec g)}{Z(\beta,\vec g)},\qquad p=1+s/\beta\,.
\]
This is a Petz R\'enyi power \cite{petz1986quasi} between the Gibbs state and its conjugate by
$U_A=\Gamma_A$, and plays an analogous role to the energy-leakage bounds of
\cite{arad2016connecting}. Nested-commutator expansions already appear in that work; our analysis
averages over disorder before bounding the resulting terms. Counting the non-zero Gaussian
pairings and balancing this with the SYK normalization $\sigma_n$ bounds our thermal leakage
analogue. To pass from annealed to quenched estimates, we use an elementary, weaker version of the
commutation-index and Gaussian-concentration argument of \cite{anschuetz2025strongly}; their sharper
commutation-index estimates are unnecessary for our bounds.

Our quasi-polynomial time result structurally bears resemblance to the work of
\cite{haah2022optimal}. They begin with forming a series expansion of the log partition function (we
expand the bare partition function instead) and use the assumption of bounded interaction degree to
truncate the expansion at degree logarithmic in the inverse target accuracy (hence $O(\log n)$
for inverse-polynomial accuracy at fixed temperature). At this point,
the authors use Newton's method to solve a system of polynomial equations. Our approach differs in
several crucial ways. First, as mentioned in the introduction, the interaction degree of SYK is
$\Theta(n^3)$, counting quartic terms that overlap a given term, and a direct application of
\cite{haah2022optimal} would not cover constant inverse temperature. As such, after expanding the
series $Z(\beta, \vec g)$, we instead apply Wick--Isserlis \cite{isserlis1918formula} and carefully
count the resulting non-zero contributions, balancing this with the
contribution of $\sigma_n$ from the SYK normalization. Secondly, the result of
\cite{haah2022optimal} implies strong convexity at all points traversed by Newton's method.
Unfortunately, we are only able to show this result with high probability and it is not clear that
gradient descent would succeed in this setting. Rather, we instead appeal to the strong symmetry
exhibited by SYK and show that at Hermite degree $\leq 2$ in the disorder, we need to invert
a \emph{linear} $2\times 2$ system. This Hermite degree is distinct from the Taylor truncation
degree above. The remaining technical challenge is showing that higher Hermite degrees $d\geq3$
do not contribute much; proving this requires generalizing and expanding on the techniques of
\cite{zlokapa2026rigorous}.

More broadly, our partition-function analysis belongs to the zero-free and truncated-expansion
approach to quantum thermal algorithms developed in \cite{harrow2020classical}. For the
disorder-averaged estimates we use the Wick-pair polymer construction of
\cite{zlokapa2026rigorous} and the abstract polymer criterion of Koteck\'y and Preiss
\cite{kotecky1986cluster}, in the form given by \cite[Theorem~1]{ueltschi2004cluster}.

\paragraph{Barriers to time efficiency.} We briefly remark on a barrier to making our
sample-efficient algorithm time-efficient. Recent work of Zlokapa and Kiani~\cite{zlokapa2026syk} gives
a quasi-polynomial algorithm for thermal expectations under a zero-free-region assumption, supported
for SYK at all constant temperatures by large-$n$ arguments. Zlokapa~\cite{zlokapa2026rigorous}
rigorously establishes high-temperature SYK estimates, which we use in our analysis. This might suggest
running gradient descent over the space of disorder parameters and using such algorithms
to estimate the mismatch between the observables under a candidate set of parameters
versus under the true parameters. Unfortunately, it is unclear how to \emph{certify} that the
parameters encountered over the course of gradient descent correspond to Hamiltonians which are
``SYK''-like in the sense that these algorithms would succeed. This is reminiscent of the gap
between the complexity of finding ground states of the classical SK Hamiltonian versus
\emph{certifying} bounds on the ground state energy.
Indeed, Montanari~\cite{montanari2025optimization} showed that the former is achievable in
polynomial time, whereas Ghosh et al.~\cite{ghosh2020sum} gave evidence that certifying any bound
beyond the naive spectral bound is computationally hard by showing a degree-$n^\delta$
sum-of-squares lower bound.

\paragraph{Concurrent work.} During the preparation of this manuscript, we were made aware of a recent work of Swingle~\cite{swingle2026single} which independently studied the relationship between local observables of the SYK model and the disorder parameters. While their work develops the connection using perturbative and heuristic arguments, our focus is on providing rigorous control of the contributions from higher-order terms in the expansion.

\paragraph{Acknowledgments.} SC was supported in part by NSF Award CCF-2430375. Part of this work was done while SC was visiting the Simons Institute
for the Theory of Computing. AA and YH acknowledge
support through NSF Award No 2238836. AA acknowledges support as a QCIS-FF: Quantum Computing \&
Information Science Faculty Fellow at Harvard University (NSF 2013303). YH is supported by an IBM PhD Fellowship.

\paragraph{AI Statement.} Generative AI built on authors' sketch of the sample complexity argument, and came up with the  versions included here, remarkably simplifying the analysis in \cite{anshu2020sample}.  
The central ideas behind the quasi-polynomial time
algorithms were human generated. Generative AI was used extensively for identifying the relevant
formulation of Koteck\'y-Preis and carrying out the combinatorial estimates required to complete the
proof. Prior to submission, a generative AI model suggested replacing our Koteck\'y-Preiss analysis
of the conjugated partition function $\E_{\vec \bsg} Z(\beta, s, \vec \bsg)$ with a simpler iterated
commutator to which Wick-Isserlis is applied. This allowed us to strengthen our result to apply at
all $\beta \geq 0$. We take full responsibility for correctness, exposition, and attribution of our
paper.

\section{Preliminaries}

\subsection{Definitions}
Throughout this work $n\geq 6$ is even, $\mc I_4=\binom{[n]}{4}$ is the set of $4$-tuples with elements in $[n]=\{1,\ldots,n\}$,
$m=|\mc I_4|$, and
\[
    d_n=2^{n/2}
\]
is the dimension of the Hilbert space carrying the irreducible representation of the
$n$ Majorana operators.  We always use the normalized trace $\Tr(X)=d_n^{-1}\operatorname{tr}_{\rm ordinary}(X)$, so that $\Tr(\Id)=1$. We define an instance of the SYK
model with disorder $\vec g$ as,
\[
  H_\mathrm{SYK}(\vec g)=H(\vec g) =\sigma_n\sum_{I\in\mc I_4}g_I\Gamma_I,\qquad \sigma_n =
  \sqrt{6/n^3}\,.
\]
Here, $\mc I_4 = \binom{[n]}4$ and $\Gamma_I$ are quartic Majorana terms, see \Cref{def:majorana}.

We use $\vec g=(g_I)_{I\in\mc I_4}$ for a fixed disorder vector and boldface for random
disorder: $\vec \bsg=(\bsg_I)_{I\in\mc I_4}$. Thus a disorder expectation is written
$\E_{\vec \bsg}f(\vec \bsg)$, where each coordinate $\bsg_I$ is drawn
independently from $\mc N(0,1)$. The scalar $\mu_\beta=\Tr[\rho_\beta H]$ is the mean energy; the vector
$\vec\mu_\beta=(\Tr[\rho_\beta\Gamma_I])_I$ consists of quartic expectations.
Define
\[
  Z(\beta,\vec g)=\Tr[e^{-\beta H(\vec g)}],\qquad
  \rho_\beta(\vec g)=\frac{e^{-\beta H(\vec g)}}{Z(\beta,\vec g)}.
\]
We write the log-partition function as $\Phi(\vec g) = \log Z(\beta, \vec g)$ when $\beta$ is fixed, and $F_{\vec g}(\beta) = \log Z(\beta, \vec g)$ when we are treating $\beta$ as the variable.

We work with the normalized trace so that $Z(0,\vec g)=1$ and $F_{\vec g}(0) = \log Z(0, \vec g) = 0$.
Because we use the normalized trace, $\rho_\beta$ satisfies
$\Tr(\rho_\beta)=1$, but its ordinary matrix trace is $d_n$.
The corresponding physical density matrix is therefore $\rho_\beta/d_n$. We write $Z_\mathrm{ann}(\beta)=\E_{\vec \bsg}Z(\beta,\vec \bsg)$ for the annealed
partition function. The subscript $\mathrm{meas}$ denotes measured moments or
the parameters recovered from them, for example $\vec g_\mathrm{meas}=(g_{\mathrm{meas},I})_{I\in\mc
I_4}$ and $\vec\mu_{\beta,\mathrm{meas}}=(\mu_{\beta,\mathrm{meas},I})_{I\in\mc I_4}$.

\begin{definition}[Multivariate Hermite polynomials]
	Multivariate Hermite polynomials are given by a multi-index $\vec d \in \N_0^m$, where the total degree is $|\vec d| =
	\sum_i d_i$. We use the normalized probabilists' polynomials
$\mathsf{Herm}_k=\mathrm{He}_k/\sqrt{k!}$, so
$\mathsf{Herm}_0=1$, $\mathsf{Herm}_1(x)=x$, and
$\mathsf{Herm}_2(x)=(x^2-1)/\sqrt2$. The corresponding multivariate Hermite polynomial is,
	\[
		\mathsf{Herm}_{\vec d}(\vec g) = \prod_{I \in \mc I_4} \mathsf{Herm}_{d_I}(g_I)\,.
	\]
\end{definition}

\begin{definition}[Majorana fermions]
	\label{def:majorana}
  Let $\{\psi_i\}_{i\in[n]}$ be Hermitian operators on the full $d_n$-dimensional
  Clifford space, satisfying $\{\psi_i,\psi_j\}=2\delta_{ij}\Id$. For
  $A=\{a_1<\cdots<a_k\}\subseteq[n]$, write
  \[
    \Gamma_A=\psi_{a_1}\cdots\psi_{a_k},\qquad \Gamma_\varnothing=\Id.
  \]
  These operators are unitary and
  $\Gamma_A^\dagger=(-1)^{k(k-1)/2}\Gamma_A$. In particular, a quartic term
  $\Gamma_I$, $I\in\mc I_4=\binom{[n]}4$, is Hermitian, squares to the identity,
  and satisfies $\Tr[\Gamma_I\Gamma_J]=\1_{I=J}$ and $\Tr[\Gamma_I]=0$.
  Conjugations by a general $\Gamma_A$ always include the adjoint.
\end{definition}

\begin{definition}[Gaussian norm]
	\label{def:gaussian_norm}
	Let $\vec{F}:\R^{d_1}\to\C^{d_2}$. We define the Gaussian norm,
	\[
		\|\vec{F}\|_{L^p} = \Paren{\E_{\vec \bsg \sim \mc N(\vec 0, \Id_{d_1 \times d_1})}
		\|\vec{F}(\vec \bsg)\|_2^p}^{1/p}\,.
	\]
	where $\|\vec{F}(\vec \bsg)\|_2$ is the Euclidean norm on $\C^{d_2}$.
\end{definition}

For scalar functions (and scalar--vector multiplication), this norm satisfies
H\"older's inequality: for $p,r,s\geq1$ with $\tfrac 1 p = \tfrac 1 r + \tfrac 1 s$,
\begin{equation}
	\label{eq:holder}
	\|FG\|_{L^p} \leq \|F\|_{L^r} \|G\|_{L^s}\,.
\end{equation}

Additionally, we'll need a weighted Gaussian norm, which penalizes higher-degree terms.

\begin{definition}[Weighted Gaussian norm]
	\label{def:weighted_norm}
	For a function $\vec{F}(\vec g)$ with Hermite expansion $\sum_{d \geq 0} \Pi_d \vec{F}$,
	we write,
	\[
		\|\vec{F}\|_{2,s} = \sqrt{\sum_{d \geq 0} |s|^{2d} \|\Pi_d \vec{F}\|^2_{L^2}}\,.
	\]
	The value may be $+\infty$. Note by orthonormality of Hermite polynomials, $\|\vec{F}\|_{2,1} = \|\vec{F}\|_{L^2}$. 
\end{definition}
\noindent This norm can be related to the Gaussian norm via the following lemma. The proof is via
hypercontractivity.
\begin{lemma}
	\label{lem:norm_conversion}
	Let $\vec{F} : \R^{d_1}\rightarrow\C^{d_2}$ have finite weighted norm, let $p\geq2$, and set $s=\sqrt{p-1}$. Then,
	\[
		\|\vec{F}\|_{L^p} \leq \|\vec{F}\|_{2, s}\,.
	\]
\end{lemma}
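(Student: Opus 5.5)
We would prove \Cref{lem:norm_conversion} by applying Gaussian hypercontractivity in the form of Nelson's theorem for the Ornstein--Uhlenbeck semigroup. Let $T_\rho$ act on scalar functions by $T_\rho \mathsf{Herm}_{\vec d}=\rho^{|\vec d|}\mathsf{Herm}_{\vec d}$, equivalently $T_\rho f(\vec g)=\E_{\vec{\boldsymbol h}} f(\rho\vec g+\sqrt{1-\rho^2}\,\vec{\boldsymbol h})$. Nelson's theorem states that $\|T_\rho f\|_{L^p}\le\|f\|_{L^2}$ for $\rho=1/\sqrt{p-1}$ and $p\ge2$. We extend $T_\rho$ to $\vec F$ coordinatewise. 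Since $T_\rho$ is given by averaging against a probability kernel, it commutes with the linear coordinate structure.

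First, we reduce to the vector-valued case. Define $\vec G=\sum_d s^d\Pi_d\vec F$ with $s=\sqrt{p-1}$. This is well defined in $L^2$ because $\|\vec G\|_{L^2}^2=\sum_d s^{2d}\|\Pi_d\vec F\|_{L^2}^2=\|\vec F\|_{2,s}^2<\infty$, where orthogonality of distinct Hermite degrees is applied coordinatewise. Then $\vec F=T_{1/s}\vec G$, so the claim is exactly $\|T_\rho\vec G\|_{L^p}\le\|\vec G\|_{L^2}$ for $\C^{d_2}$-valued $\vec G$.

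Second, we handle the vector-valued, complex setting. The scalar Nelson inequality does not immediately give the Euclidean-norm version, so we use the standard trick of encoding the $\C^{d_2}$ norm with an auxiliary Gaussian vector. Let $\boldsymbol\xi\sim\mc N(0,\Id)$ be an independent standard Gaussian in $\R^{2d_2}$, and identify $\C^{d_2}\cong\R^{2d_2}$. Define the real scalar function $f(\vec g,\xi)=\langle \xi,\vec G(\vec g)\rangle$, which is linear in $\xi$. For fixed $\vec g$, we have $\E_\xi|\langle\xi,v\rangle|^p=c_p\|v\|_2^p$ with $c_p=\E|\mc N(0,1)|^p$.

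Consider the OU operator $T'_\rho$ acting on $(\vec g,\xi)$ jointly. Since $f$ has degree one in $\xi$, $T'_\rho f=\rho\langle\xi,T_\rho\vec G\rangle$. Nelson's theorem gives $\|T'_\rho f\|_{L^p}\le\|f\|_{L^2}$, hence $\rho\,c_p^{1/p}\|T_\rho\vec G\|_{L^p}\le c_2^{1/2}\|\vec G\|_{L^2}$. This loses a factor of $\rho\, c_p^{1/p}$, which is not exact. To avoid the loss, we would instead apply the operator only in the $\vec g$ variables, that is, use $T_\rho\otimes\Id_\xi$.

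Hypercontractivity of $T_\rho\otimes\Id$ from $L^2(L^2_\xi)$ to $L^p(L^p_\xi)$ does not follow directly either. The cleanest route is Minkowski's integral inequality combined with the pointwise identity $\|v\|_2=(\E_\xi\langle\xi,v\rangle^2)^{1/2}$. We compute
\[
\|T_\rho\vec G\|_{L^p}=\Bigl\|\bigl(\E_\xi (T_\rho f_\xi)^2\bigr)^{1/2}\Bigr\|_{L^p_{\vec g}}\le\bigl(\E_\xi\|T_\rho f_\xi\|_{L^p_{\vec g}}^2\bigr)^{1/2},
\]
where $f_\xi=\langle\xi,\vec G\rangle$ and the inequality is Minkowski's, valid since $p/2\ge1$. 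Applying scalar Nelson to each $f_\xi$ bounds the right-hand side by $(\E_\xi\|f_\xi\|_{L^2}^2)^{1/2}=\|\vec G\|_{L^2}$. This completes the argument, and the real-versus-complex issue disappears under the $\R^{2d_2}$ identification.

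The main obstacle is this vector-valued step: ensuring the Euclidean norm passes through hypercontractivity without constant loss. The Minkowski argument above is the planned fix. A minor remaining detail is justifying the $L^2$ convergence and exchange of limits when $\vec F$ is not a polynomial, which we handle by truncating to degree $\le D$, applying the bound, and letting $D\to\infty$ using Fatou's lemma.
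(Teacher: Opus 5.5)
Your argument is correct and is essentially the paper's proof: scalar Nelson hypercontractivity for the Ornstein--Uhlenbeck semigroup applied to real scalar functions, with Minkowski's inequality in $L^{p/2}$ passing the squared Euclidean norm through without any constant loss. The only difference is cosmetic. You write $\|v\|_2^2=\E_{\boldsymbol\xi}\langle\boldsymbol\xi,v\rangle^2$ with an auxiliary Gaussian, whereas the paper sums over the $2d_2$ real and imaginary coordinates, which does the same job. You can also delete the abandoned joint-semigroup detour.
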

\begin{proof}
For a scalar function, this is Gaussian hypercontractivity, applied to the
Ornstein--Uhlenbeck operator. For a vector-valued function, Minkowski's inequality in
$L^{p/2}$ gives $\|\vec{F}\|_{L^p}^2\leq\sum_j\|F_j\|_{L^p}^2$.
Apply the scalar bound to each real and imaginary component and sum.
\end{proof}

\subsection{Standard results}

\begin{lemma}[Estimating quartic expectations]
\label{lem:measurement_complexity}
Let $n\geq8$ be even, $m=\binom n4$, $\beta\in\R$, and $\vec g\in\R^m$. For any $\delta, \zeta \in
(0,1)$, there is an algorithm using
\begin{equation}
	\label{eq:measurement_copies}
	N = m \left\lceil \frac{2}{\delta^2} \log\frac{2m}{\zeta} \right\rceil = \O
	\Paren{\frac{n^4}{\delta^2} \log \frac{n}{\zeta}}
\end{equation}
copies of the Gibbs state $d_n^{-1}\rho_\beta(\vec g)$ which returns estimates
$\vec\mu_{\beta,\mathrm{meas}}\in[-1,1]^m$ satisfying
\begin{equation}
\label{eq:measurement_accuracy}
 \max_{I\in\mc I_4}\left|\mu_{\beta,\mathrm{meas},I}
       -\Tr[\Gamma_I\rho_\beta(\vec g)]\right|\leq\delta
\end{equation}
with probability at least $1-\zeta$. In particular, $\|\vec\mu_{\beta,\mathrm{meas}} - \vec
\mu_\beta\|_2 \leq \sqrt m\, \delta$. The algorithm runs in time $\O(N \cdot n)$. 
\end{lemma}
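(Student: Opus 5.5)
The plan is to measure each quartic observable separately and control the errors with Hoeffding's inequality and a union bound. For each $I\in\mc I_4$ we will dedicate a batch of $t:=\lceil 2\delta^{-2}\log(2m/\zeta)\rceil$ fresh copies of $d_n^{-1}\rho_\beta(\vec g)$, so the total is $N=mt$, which matches \eqref{eq:measurement_copies}. By \Cref{def:majorana}, $\Gamma_I$ is Hermitian and $\Gamma_I^2=\Id$, so its spectrum lies in $\{\pm1\}$. It therefore has a two-outcome projective measurement with POVM $\{(\Id\pm\Gamma_I)/2\}$. On the physical state $d_n^{-1}\rho_\beta$, this measurement gives a $\pm1$ random variable $X$ with
\[
\E X=\operatorname{tr}_{\rm ordinary}(d_n^{-1}\rho_\beta\Gamma_I)=\Tr[\Gamma_I\rho_\beta(\vec g)]=\mu_{\beta,I}.
\]
The estimator $\mu_{\beta,\mathrm{meas},I}$ is the empirical mean of the $t$ outcomes. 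It lies in $[-1,1]$ automatically.

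For a fixed $I$ the outcomes are i.i.d. and take values in $[-1,1]$, an interval of length $2$. Hoeffding's inequality gives
\[
\Pr\big[|\mu_{\beta,\mathrm{meas},I}-\mu_{\beta,I}|>\delta\big]\le 2\exp(-2t^2\delta^2/(4t))=2\exp(-t\delta^2/2)\le \zeta/m,
\]
where the last step uses the choice of $t$. A union bound over the $m$ indices then yields \eqref{eq:measurement_accuracy} with probability at least $1-\zeta$. The $\ell_2$ consequence follows from $\|\vec v\|_2\le\sqrt m\|\vec v\|_\infty$. The asymptotic form of $N$ follows from $m=\binom n4=\Theta(n^4)$ and $\log(2m/\zeta)=O(\log(n/\zeta))$.

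For the runtime we use the standard qubit (Jordan--Wigner) representation. There $\Gamma_I$ is, up to a phase, a Pauli string, so measuring it amounts to measuring each qubit in the appropriate Pauli basis and taking the parity of the outcomes. Equivalently, the Clifford rotation to the computational basis has $O(n)$ gates. Either way, each copy costs $O(n)$ classical work, and averaging the outcomes is linear in $N$, for a total of $O(Nn)$.

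No step here is a real obstacle. The only point needing care is bookkeeping with the normalized trace: the physical state is $d_n^{-1}\rho_\beta$, and its ordinary-trace expectation of $\Gamma_I$ equals the normalized-trace quantity $\Tr[\Gamma_I\rho_\beta]$.
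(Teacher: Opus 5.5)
Your proposal is correct and matches the paper's proof essentially step for step: a batch of $\lceil 2\delta^{-2}\log(2m/\zeta)\rceil$ copies per quartet, Hoeffding for the $\pm1$ outcomes giving failure probability $2e^{-t\delta^2/2}\le\zeta/m$, a union bound over the $m$ labels, and the qubit (Jordan--Wigner) representation of $\Gamma_I$ as a product of single-qubit Pauli operators for the $O(Nn)$ runtime. One cosmetic point: since $\Gamma_I$ and Pauli strings are both Hermitian, the ``phase'' you mention is actually a real sign $\pm1$, which you simply multiply into the parity of the single-qubit outcomes.
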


\begin{proof}
Each quartic observable $\Gamma_I$ is Hermitian and satisfies $\Gamma_I^2=\Id$, so its measurement
outcome is in $\{\pm1\}$. Let the expected outcome be $\mu_{\beta,I} = \Tr[\Gamma_I \rho_\beta]$.
For each $I \in \calI$, we measure $\Gamma_I$ on $N_0=\lceil2\delta^{-2}\log(2m/\zeta)\rceil$ copies
and write the empirical mean as $\mu_{\beta,\mathrm{meas},I}$. Hoeffding's inequality and a union
bound give
\[
	\Pr\Brac{\max_I|\mu_{\beta,\mathrm{meas},I} - \mu_{\beta,I}| > \delta} \leq 2m \exp(-N_0
	\delta^2/2) \leq \zeta\,.
\]
The time complexity bound follows since we can write $\Gamma_I$ as a tensor product of $n/2$
single-qubit operators and measure $\Gamma_I$ by measuring each single-qubit operator and
multiplying the outcomes.
\end{proof}

\begin{lemma}[Derivative identities]
  \label{lem:first_derivative_identity}
  Let $H(\vec g)=\sigma_n\sum_I g_I\Gamma_I$ and
  $\Phi_\beta(\vec g)=\log Z(\beta,\vec g)$. For real $\beta$,
  \[
    \partial_{g_I}Z(\beta,\vec g)
       =-\beta\sigma_n\Tr[\Gamma_I e^{-\beta H(\vec g)}],\qquad
    \partial_{g_I}\Phi_\beta(\vec g)
       =-\beta\sigma_n\Tr[\rho_\beta(\vec g)\Gamma_I].
  \]
  When $\beta$ is fixed, we abbreviate $\Phi_\beta$ by $\Phi$.
\end{lemma}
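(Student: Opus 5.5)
The plan is to differentiate the exponential with Duhamel's formula and then use cyclicity of the trace. Since $H(\vec g)$ is linear in $\vec g$, we have $\partial_{g_I}H(\vec g)=\sigma_n\Gamma_I$. For a smooth one-parameter family $X(t)$ of Hermitian matrices, Duhamel's formula states
\[
  \frac{d}{dt}e^{X(t)}=\int_0^1 e^{uX(t)}\,\dot X(t)\,e^{(1-u)X(t)}\,du .
\]
We apply this with $X(\vec g)=-\beta H(\vec g)$ and $\dot X=-\beta\sigma_n\Gamma_I$. Because $\beta$ is real and each $\Gamma_I$ is Hermitian, everything is a finite-dimensional matrix function that is entire in $\vec g$. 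Hence differentiating under the (finite) normalized trace $\Tr$ is legitimate.

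Taking the normalized trace and using cyclicity, each integrand becomes
\[
  \Tr\!\left[e^{-u\beta H}\,(-\beta\sigma_n\Gamma_I)\,e^{-(1-u)\beta H}\right]
  =-\beta\sigma_n\Tr\!\left[\Gamma_I e^{-\beta H}\right].
\]
This value does not depend on $u$, so integrating over $u\in[0,1]$ gives exactly $\partial_{g_I}Z(\beta,\vec g)=-\beta\sigma_n\Tr[\Gamma_I e^{-\beta H(\vec g)}]$. If one prefers to avoid Duhamel's formula, the same identity follows by expanding $e^{-\beta H}=\sum_k(-\beta)^kH^k/k!$ and differentiating term by term. The derivative of $H^k$ is $\sum_{j}H^{j}(\sigma_n\Gamma_I)H^{k-1-j}$, and under the trace each of these $k$ terms equals $\sigma_n\Tr[\Gamma_I H^{k-1}]$. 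The resulting series reassembles to $-\beta\sigma_n\Tr[\Gamma_I e^{-\beta H}]$, with absolute convergence justifying the term-by-term differentiation.

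For the second identity we use $Z(\beta,\vec g)>0$: it is the normalized trace of the positive definite operator $e^{-\beta H}$, for real $\beta$ and Hermitian $H$. The chain rule then gives
\[
  \partial_{g_I}\Phi_\beta=\frac{\partial_{g_I}Z}{Z}=-\beta\sigma_n\Tr\!\left[\Gamma_I e^{-\beta H}/Z\right]=-\beta\sigma_n\Tr[\rho_\beta(\vec g)\Gamma_I],
\]
where the last step uses the definition of $\rho_\beta$.

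The argument is routine. The only point needing care is noncommutativity: $\Gamma_I$ need not commute with $H$, so $e^{-\beta H}$ cannot be differentiated naively as a scalar exponential. Cyclicity of the trace resolves this, which is why the statement concerns only traced quantities.
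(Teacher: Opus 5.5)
Your proof is correct and matches the paper's argument: the paper derives this identity in the deferred proof of the Hessian--BKM lemma by differentiating $e^{-\beta H}$ with Duhamel's formula, using cyclicity of the trace to remove the dependence on the interpolation parameter, and dividing by $Z>0$ to obtain the log-derivative. Your term-by-term power-series alternative is also valid.
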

\noindent We will also make use of the Duhamel formula for the derivative of an exponential of a noncommuting
matrix.
\begin{lemma}[Duhamel formula]
	\label{lem:duhamel}
	Let $K(t)$ be a differentiable family of matrices. Then,
	\[
		\frac{\dd}{\dd t} e^{K(t)} = \int_0^1 e^{(1-s) K(t)} K'(t) e^{s K(t)} \dd s\,.
	\]
\end{lemma}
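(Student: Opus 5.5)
The Duhamel formula is standard, and I would prove it by a discretization / telescoping argument. The alternative is to differentiate a one-parameter family and integrate.

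The cleanest route is the second. Fix $t$ and write $K=K(t)$, $K'=K'(t)$. For a perturbation parameter $h$, set
\[
F(s)=e^{(1-s)K}\,e^{sK(t+h)},\qquad s\in[0,1].
\]
Then $F(0)=e^{K}$ and $F(1)=e^{K(t+h)}$. Both factors are matrix exponentials with constant exponent in $s$, so
\[
F'(s)=e^{(1-s)K}\bigl(K(t+h)-K\bigr)e^{sK(t+h)}.
\]
The key point is that $\frac{d}{ds}e^{sX}=Xe^{sX}=e^{sX}X$, so the two derivative terms $-Ke^{(1-s)K}$ and $K(t+h)e^{sK(t+h)}$ can both be placed in the middle. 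The fundamental theorem of calculus then gives the exact identity
\[
e^{K(t+h)}-e^{K}=\int_0^1 e^{(1-s)K}\bigl(K(t+h)-K\bigr)e^{sK(t+h)}\dd s.
\]

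Next, divide by $h$ and let $h\to0$. By differentiability, $(K(t+h)-K)/h\to K'$. Also $e^{sK(t+h)}\to e^{sK}$ uniformly in $s\in[0,1]$, because the exponential is continuous and the bound $\|e^{sX}-e^{sY}\|\leq\|X-Y\|e^{\max(\|X\|,\|Y\|)}$ holds for $s\in[0,1]$; this bound itself follows from the same integral identity. Every factor in the integrand is bounded uniformly in $s$, so the integrand converges uniformly in $s$ and the limit passes inside the integral. This yields the stated formula.

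The only mildly delicate step is justifying the interchange of limit and integral. It is handled by the uniform operator-norm bounds above, since all matrices are finite-dimensional and $s$ ranges over a compact interval. There is no genuine obstacle: the one idea needed is the interpolating family $F(s)$. As an alternative check, one can expand $e^{K}=\sum_k K^k/k!$ and differentiate term by term. This gives $\sum_k \frac1{k!}\sum_{j=0}^{k-1}K^{j}K'K^{k-1-j}$, which matches the integral via the Beta integral $\int_0^1(1-s)^js^{k-1-j}\dd s=\frac{j!(k-1-j)!}{k!}$.
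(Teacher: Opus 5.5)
Your proof is correct. The interpolation $F(s)=e^{(1-s)K}e^{sK(t+h)}$ gives the exact identity $e^{K(t+h)}-e^{K(t)}=\int_0^1 e^{(1-s)K(t)}\bigl(K(t+h)-K(t)\bigr)e^{sK(t+h)}\,\mathrm{d}s$. The bound $\|e^{sX}-e^{sY}\|\le\|X-Y\|e^{\max(\|X\|,\|Y\|)}$ follows from that same identity applied to the pair $(sX,sY)$, so it is not circular. Together with local boundedness of $K(t+h)$ and of $(K(t+h)-K(t))/h$, this gives uniform convergence of the integrand, which justifies passing $h\to0$ inside the integral.

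The paper gives no proof of this lemma. It records the Duhamel formula as a standard tool, used in the Hessian--BKM identity and, iterated, in the derivative bounds for unnormalized Gibbs traces, so there is no competing argument to compare against.

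Your route yields the finite-difference identity for arbitrary pairs of matrices, which is slightly stronger than the derivative statement and gives Lipschitz bounds on the exponential for free. The power-series computation with the Beta integral, which you offer as a check, is also a complete proof on its own. It only needs term-by-term differentiation, justified by locally uniform convergence of the differentiated series.
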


\begin{lemma}[Spectral norm and infinite-temperature second moment]
  \label{lem:spectral_bounds}
  There is a universal $C_\mathrm{spec}$ such that, for all sufficiently large $n$, except with
  probability $e^{-\Omega(n)}+e^{-\Omega(m)}$ over $\vec \bsg$,
  \begin{equation}
    \label{eq:spectral}
    \|H(\vec \bsg)\|_\op\leq C_\mathrm{spec}n,\qquad
    \frac n8\leq\Tr[H(\vec \bsg)^2]
      =\sigma_n^2\sum_{I\in\mc I_4}\bsg_I^2\leq\frac n3.
  \end{equation}
\end{lemma}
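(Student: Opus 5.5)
The statement has two parts. The trace identity is deterministic. The spectral-norm bound holds except on a small-probability event over $\vec\bsg$. We prove each, then union-bound over the failure events.

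\textbf{Trace identity and second moment.} By the orthonormality in \Cref{def:majorana}, $\Tr[\Gamma_I\Gamma_J]=\1_{I=J}$. Expanding the square therefore gives exactly
\[
  \Tr[H(\vec g)^2]=\sigma_n^2\sum_I g_I^2 .
\]
Here $\sum_I\bsg_I^2$ is a $\chi^2_m$ variable with mean $m=\binom n4$. By Laurent--Massart (or any sub-exponential Bernstein bound), it lies in $[m/2,\,3m/2]$ except with probability $e^{-\Omega(m)}$. Since $\sigma_n^2 m=\tfrac{6}{n^3}\binom n4=\tfrac{(n-1)(n-2)(n-3)}{4n^2}$, which tends to $n/4$, for large $n$ we get
\[
  \frac{n}{8}\leq\Tr[H(\vec\bsg)^2]\leq\frac{3n}{8}\cdot(1+o(1))\leq\frac n3 .
\]
The last step needs the upper window to be tighter than $3/2$. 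We use $[m/2,\,4m/3\cdot(1-o(1))]$ instead, which still has failure probability $e^{-\Omega(m)}$ since deviations are of constant relative size.

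\textbf{Spectral norm.} We use an $\varepsilon$-net plus Gaussian concentration.
\begin{itemize}
\item The map $\vec g\mapsto\|H(\vec g)\|_\op$ is Lipschitz in $\ell_2$ with constant $L=\sup_{\|\vec u\|_2=1}\sigma_n\|\sum_I u_I\Gamma_I\|_\op$.
\item By the anticommutation argument the paper attributes to \Cref{alltemp:lem:elementary}, $L\leq 1/2$. Even the crude bound $L\le\sigma_n\sqrt m=O(\sqrt n)$ is enough.
\item Gaussian concentration then gives $\|H\|_\op\leq\E\|H\|_\op+t$ except with probability $e^{-t^2/(2L^2)}$. Taking $t=n$ gives failure probability $e^{-\Omega(n)}$ in either case.
\end{itemize}
It remains to show $\E\|H\|_\op=O(n)$. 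Use the moment method:
\[
  \E\|H\|_\op\leq\bigl(d_n\,\E\Tr[H^{2k}]\bigr)^{1/(2k)}, \qquad k=n .
\]
Then $d_n^{1/(2k)}=2^{1/4}$ is $O(1)$.

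To bound $\E\Tr[H^{2k}]$, apply Wick's theorem. Each pairing contributes at most $\sigma_n^{2k}m^{k}\le (n/4)^k$ in absolute value, since each $\Gamma_I$ is unitary and the normalized trace of a product of unitaries is at most $1$. There are $(2k-1)!!\le(2k)^k$ pairings. Hence
\[
  \E\Tr[H^{2k}]\leq (2k\cdot n/4)^{k},
\]
and so $(\E\Tr H^{2k})^{1/(2k)}\le\sqrt{kn/2}=O(n)$ at $k=n$. This gives $\|H\|_\op\le C_{\rm spec}n$ on the stated event.

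\textbf{Main obstacle.} No step is delicate. The only care needed is choosing the moment order $k\asymp n$ so that the dimension factor $d_n$ is absorbed while the Wick count still yields $O(n)$. The crude pairing count suffices because only $O(n)$ is claimed; the true scale is $O(\sqrt n)$ up to constants. The final failure probability is the union of the $\chi^2$ and norm events, $e^{-\Omega(n)}+e^{-\Omega(m)}$.
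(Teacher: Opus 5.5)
Your proof is correct once you make the small fix below. For the operator-norm bound you take a different route from the paper; the trace identity and the $\chi^2$ step are the same.

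Both arguments rest on the same Wick estimate $\mathbb{E}\operatorname{Tr}[H^{2k}]\le(2k-1)!!\,(m\sigma_n^2)^k$, which comes from $|\operatorname{Tr}[\Gamma_{I_1}\cdots\Gamma_{I_{2k}}]|\le 1$. The paper sums all of these moments into $\mathbb{E}\operatorname{Tr}[e^{tH}]\le e^{t^2m\sigma_n^2/2}$. It then applies exponential Markov to $e^{t\lambda_{\max}(H)}\le d_n\operatorname{Tr}[e^{tH}]$ and its analogue for $-H$. This yields the explicit tail $\Pr[\|H\|_{\mathrm{op}}\ge u]\le 2d_n e^{-u^2/(2m\sigma_n^2)}$ in a single step, with no Lipschitz input, and \Cref{sampleall:cor:hessian} later reuses that formula at $u=n$.

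You instead use a single moment, $k=n$, only to bound the mean, $\mathbb{E}\|H\|_{\mathrm{op}}\le 2^{-1/4}n$. You then get the tail from Gaussian Lipschitz concentration. This costs you the Lipschitz constant, which \Cref{alltemp:lem:elementary} supplies as $1/2$. In return it gives a stronger conclusion: $\|H\|_{\mathrm{op}}$ fluctuates by only $O(1)$ around its mean, with tail $e^{-2t^2}$. Since $2^{-1/4}n<n$, you also recover the $u=n$ bound that the paper uses later.

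\textbf{The fix.} The lower window $\sum_I\bsg_I^2\ge m/2$ does not give $\operatorname{Tr}[H^2]\ge n/8$. The exact value $\sigma_n^2 m=\frac n4(1-\frac1n)(1-\frac2n)(1-\frac3n)$ is strictly below $n/4$, so that window yields only $\frac n8(1-O(1/n))$. Use any lower fraction strictly above $1/2$, for example the paper's $3m/4$; the failure probability is still $e^{-\Omega(m)}$. The same exact formula shows that the upper window $4m/3$ needs no $(1-o(1))$ factor.

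\textbf{A side remark.} Your closing claim that the true scale of $\|H\|_{\mathrm{op}}$ is $O(\sqrt n)$ is incorrect. The scale $\sqrt n$ is that of a typical eigenvalue, since $\operatorname{Tr}[H^2]\approx n/4$. The SYK ground-state energy, however, is extensive, so $\|H\|_{\mathrm{op}}=\Theta(n)$ with high probability, and the $O(n)$ bound is sharp up to constants. This remark plays no role in your argument.
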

\begin{proof}
Write $H=H(\vec \bsg)$ throughout this proof. Orthogonality of the quartic monomials gives the identity. Moreover,
\[
  m\sigma_n^2=\frac n4\left(1-\frac1n\right)
       \left(1-\frac2n\right)\left(1-\frac3n\right).
\]
The usual chi-square Chernoff bound puts $\sum_I\bsg_I^2$ in
$[3m/4,4m/3]$ except with probability $e^{-\Omega(m)}$, which gives the displayed
variance bounds for sufficiently large $n$.
For completeness, Wick's formula and $|\Tr[\Gamma_{I_1}\cdots\Gamma_{I_{2k}}]|\leq1$
give $\E_{\vec \bsg}\Tr[H^{2k}]\leq(2k-1)!!(m\sigma_n^2)^k$; odd averaged moments vanish.
Consequently $\E_{\vec \bsg}\Tr[e^{tH}]\leq e^{t^2m\sigma_n^2/2}$ for real $t$.
Since $e^{t\lambda_{\max}(H)}\leq d_n\Tr[e^{tH}]$ for $t>0$, applying the
exponential Markov bound to both extreme eigenvalues and optimizing $t$ gives
\[
  \Pr_{\vec \bsg}[\|H\|_\op\geq u]\leq2d_n\exp\!\left(-\frac{u^2}{2m\sigma_n^2}\right).
\]
Taking $u=C_\mathrm{spec}n$ with a sufficiently large universal $C_\mathrm{spec}$ proves the claim.
\end{proof}

Since all errors in this work are on the order $1 / \cpoly(n)$, the above error is negligible; as
such, beyond pointing out where this assumption is necessary, we will not factor this error into the
calculations.

\begin{lemma}[Elementary quartic bounds]\label{alltemp:lem:elementary}
Put $B_n=\binom n3$. For every positive $\omega$ with $\operatorname{Tr}(\omega)=1$ and every real $\vec u\in\mathbb R^m$,
\begin{align}
 \sum_{|I|=4}\operatorname{Tr}(\omega\Gamma_I)^2&\le B_n/4,\label{alltemp:eq:elementarymoments}\\
 \left\|\sum_Iu_I\Gamma_I\right\|_{\rm op}&\le\frac{\sqrt{B_n}}2\|\vec u\|_2.\label{alltemp:eq:elementaryop}
\end{align}
In particular, $\sigma_n^2 B_n\le1$.
\end{lemma}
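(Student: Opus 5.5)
The plan is to prove the operator-norm bound \eqref{alltemp:eq:elementaryop} first and then obtain \eqref{alltemp:eq:elementarymoments} from it by duality. For the operator bound, write $A=\sum_I u_I\Gamma_I$ and decompose it along triples: every $4$-set $I$ contains exactly four $3$-subsets $T$, so
\[
A=\frac14\sum_{|T|=3}\ \sum_{j\notin T} u_{T\cup\{j\}}\Gamma_{T\cup\{j\}}.
\]
For fixed $T$ we have $\Gamma_{T\cup\{j\}}=\pm\Gamma_T\psi_j$ up to a phase from reordering. Hence the inner sum equals $\Gamma_T\sum_{j\notin T}\pm u_{T\cup\{j\}}\psi_j$, with signs and unit-modulus phases.

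The key step is that the operators $\Gamma_{T\cup\{j\}}$, $j\notin T$, pairwise anticommute. Indeed, two $4$-sets sharing exactly three elements anticommute, since $\Gamma_I\Gamma_J=(-1)^{|I||J|-|I\cap J|}\Gamma_J\Gamma_I=(-1)^{16-3}\Gamma_J\Gamma_I$. Each of these operators is Hermitian and squares to $\Id$. For real coefficients $c_j$, a Hermitian combination $X=\sum_j c_j X_j$ of pairwise anticommuting Hermitian involutions satisfies $X^2=(\sum_j c_j^2)\Id$, so $\|X\|_{\rm op}=\|c\|_2$.

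Apply the triangle inequality over $T$ and then Cauchy--Schwarz over the $B_n$ triples:
\[
\|A\|_{\rm op}\le\frac14\sum_T\Big(\sum_{j\notin T}u_{T\cup\{j\}}^2\Big)^{1/2}\le\frac14\sqrt{B_n}\Big(\sum_T\sum_{j}u_{T\cup j}^2\Big)^{1/2}=\frac14\sqrt{B_n}\sqrt{4\|\vec u\|_2^2}=\frac{\sqrt{B_n}}2\|\vec u\|_2 .
\]
The last equality uses that each $I$ is counted four times.

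For \eqref{alltemp:eq:elementarymoments}, set $u_I=\Tr(\omega\Gamma_I)$, which is real because $\Gamma_I$ is Hermitian and $\omega\ge 0$. Then
\[
\sum_I u_I^2=\Tr\big(\omega \textstyle\sum_I u_I\Gamma_I\big)\le \|\omega\|_{S_1}\,\big\|\textstyle\sum_I u_I\Gamma_I\big\|_{\rm op},
\]
where $\|\omega\|_{S_1}$ is the normalized trace norm, equal to $\Tr(\omega)=1$. The operator-norm bound gives $\|\vec u\|_2^2\le\frac{\sqrt{B_n}}2\|\vec u\|_2$, so $\|\vec u\|_2^2\le B_n/4$.

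Finally, $\sigma_n^2B_n=\frac6{n^3}\cdot\frac{n(n-1)(n-2)}6\le1$ is immediate.

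The main obstacle is the bookkeeping of the phases when writing $\Gamma_{T\cup\{j\}}$ as $\Gamma_T\psi_j$. I will avoid it entirely by working with the anticommuting family $\{\Gamma_{T\cup\{j\}}\}_j$ directly. This needs only the commutation sign rule and real coefficients, so no phases enter.
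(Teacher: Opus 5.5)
Your proof is correct and uses the same key fact as the paper's: for a fixed triple $T$, the Hermitian involutions $\Gamma_{T\cup\{j\}}$, $j\notin T$, pairwise anticommute, and each quartet contains exactly four triples. The only difference is the order of the two steps. The paper first shows $\sum_{j\notin T}\operatorname{Tr}(\omega\Gamma_{T\cup\{j\}})^2\le 1$ for each triple via $\operatorname{Tr}(\omega Q)\le\|Q\|_{\rm op}$, sums over triples to get the moment bound, and then derives the operator-norm bound by Cauchy--Schwarz and a supremum over states; you instead prove the operator-norm bound first (triangle inequality plus Cauchy--Schwarz over the $B_n$ triples) and recover the moment bound by duality.
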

\begin{proof}
For each fixed triple $T$, the Hermitian unitaries $\Gamma_{T\cup\{j\}}$, $j\notin T$, pairwise anticommute: two different supports have intersection of size three. Put $x_j=\operatorname{Tr}(\omega\Gamma_{T\cup\{j\}})$ and $Q=\sum_{j\notin T}x_j\Gamma_{T\cup\{j\}}$. Anticommutation gives $Q^2=(\sum_jx_j^2)I$. Thus
\[
 \sum_jx_j^2=\operatorname{Tr}(\omega Q)\le\|Q\|_{\rm op}=\left(\sum_jx_j^2\right)^{1/2},
\]
so $\sum_jx_j^2\le1$. Sum over all triples $T$. Every quartet occurs exactly four times, giving \eqref{alltemp:eq:elementarymoments}. Taking the supremum over states in
\[
 \left|\operatorname{Tr}\left(\omega\sum_Iu_I\Gamma_I\right)\right|
 \le\|\vec u\|_2\left(\sum_I\operatorname{Tr}(\omega\Gamma_I)^2\right)^{1/2}
\]
proves \eqref{alltemp:eq:elementaryop}. Finally, $\sigma_n^2 B_n=(n-1)(n-2)/n^2\le1$.
\end{proof}

The following comparison uses the commutation-index and Gaussian-concentration argument of
\cite[Appendix~C.2]{anschuetz2025strongly}, with the elementary bound above in place of their
sharper commutation-index estimate.
\begin{lemma}[Partition function concentrates about its mean]\label{alltemp:lem:concentration}
For every $\beta>0$ and $\log n>\beta^2/8$,
\begin{equation}\label{alltemp:eq:concentration}
 \Pr_{\vec \bsg}\left[Z(\beta,\vec \bsg)<\frac{\E_{\vec \bsg}Z(\beta,\vec \bsg)}n\right]
 \le\exp\left[-\frac{2(\log n-\beta^2/8)^2}{\beta^2}\right].
\end{equation}
For $\log n\ge2\beta^2$, the right-hand side is at most $1/n$.
\end{lemma}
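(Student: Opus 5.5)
The plan is Gaussian concentration for the Lipschitz function $\vec g\mapsto\log Z(\beta,\vec g)$, followed by a comparison between $\E\log Z$ and $\log\E Z$. Both steps use one global Lipschitz constant, and that constant comes from \Cref{alltemp:lem:elementary}.

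\emph{Step 1 (Lipschitz bound).} By \Cref{lem:first_derivative_identity},
\[
\nabla_{\vec g}\log Z(\beta,\vec g)=-\beta\sigma_n\bigl(\Tr[\rho_\beta(\vec g)\Gamma_I]\bigr)_{I\in\mc I_4}.
\]
The state $\rho_\beta(\vec g)$ is positive with $\Tr(\rho_\beta)=1$, so \eqref{alltemp:eq:elementarymoments} applies with $\omega=\rho_\beta(\vec g)$. Together with $\sigma_n^2B_n\le1$ this gives
\[
\|\nabla\log Z(\beta,\vec g)\|_2^2\le\beta^2\sigma_n^2B_n/4\le\beta^2/4.
\]
This holds for every $\vec g$, not only typical ones. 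Hence $\log Z$ is $L$-Lipschitz on all of $\R^m$ with $L=\beta/2$.

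\emph{Step 2 (annealed versus quenched).} Gaussian concentration for $L$-Lipschitz functions gives the lower tail
\[
\Pr\bigl[\log Z<\E\log Z-t\bigr]\le e^{-t^2/(2L^2)}=e^{-2t^2/\beta^2}.
\]
It also gives the sub-Gaussian moment generating function bound $\E e^{\lambda(f-\E f)}\le e^{\lambda^2L^2/2}$. Taking $\lambda=1$ and $f=\log Z$ yields
\[
\E Z\le\exp\bigl(\E\log Z+\beta^2/8\bigr).
\]

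\emph{Step 3 (combine).} On the event $Z<\E Z/n$,
\[
\log Z<\log\E Z-\log n\le\E\log Z-(\log n-\beta^2/8).
\]
Apply the lower tail with $t=\log n-\beta^2/8$, which is positive by hypothesis. This gives \eqref{alltemp:eq:concentration}.

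For the final claim, suppose $\log n\ge2\beta^2$. Then $\beta^2/8\le\log n/16$, so $t\ge\tfrac{15}{16}\log n$. Also $\beta^2\le\tfrac12\log n$. Therefore
\[
\frac{2t^2}{\beta^2}\ge\frac{2\cdot\frac{225}{256}(\log n)^2}{\frac12\log n}=\frac{225}{64}\log n\ge\log n,
\]
so the right-hand side of \eqref{alltemp:eq:concentration} is at most $1/n$.

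The main point needing care is Step 1, which must hold uniformly in $\vec g$. A degree-based bound would give $\sum_I\Tr(\rho\Gamma_I)^2\le m$ and a Lipschitz constant of order $\beta\sqrt n$, which is too weak. The anticommutation bound $B_n/4$ is what keeps $L$ of order $\beta$. Beyond that, one only needs the standard Gaussian concentration inequality in its sub-Gaussian MGF form, with the constant $L^2/2$.
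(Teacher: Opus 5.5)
Your proposal is correct and is essentially the paper's own proof. The paper likewise uses the uniform gradient bound $\|\nabla\log Z\|_2\le\beta/2$ from \Cref{alltemp:lem:elementary}, the sub-Gaussian moment-generating-function bound at $t=1$ to get $\log Z_\mathrm{ann}(\beta)\le\E\log Z+\beta^2/8$, and the Gaussian lower tail with $u=\log n-\beta^2/8$. Your explicit check that $\log n\ge 2\beta^2$ gives a right-hand side of at most $n^{-225/64}\le 1/n$, which the paper leaves unverified, is also correct.
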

\begin{proof}
For $\Phi(\vec g)=\log Z(\beta,\vec g)$ and $\rho=\rho_\beta(\vec g)$,
Lemma~\ref{lem:first_derivative_identity} gives $\partial_{g_I}\Phi=-\beta\sigma_n\operatorname{Tr}(\rho\Gamma_I)$. Lemma~\ref{alltemp:lem:elementary} therefore implies
\[
 \|\nabla\Phi(\vec g)\|_2^2\le\beta^2\sigma_n^2 B_n/4\le\beta^2/4.
\]
Apply Lemma~\ref{alltemp:lem:gaussian-concentration} with $L=\beta/2$. Jensen's inequality and the moment-generating-function bound yield
\[
 0\le\log Z_\mathrm{ann}(\beta)-\E_{\vec \bsg}\Phi(\vec \bsg)\le\beta^2/8.
\]
Hence $Z(\beta,\vec \bsg)<Z_\mathrm{ann}(\beta)/n$ implies
$\Phi(\vec \bsg)-\E_{\vec \bsg}\Phi(\vec \bsg)<-(\log n-\beta^2/8)$, and the lower-tail bound
in \eqref{alltemp:eq:gaussian-concentration-tool} proves \eqref{alltemp:eq:concentration}.
\end{proof}

\subsection{Bogoliubov--Kubo--Mori covariance}
The Hessian of the log-partition function has a natural interpretation as a thermal covariance: in
the noncommutative setting, the relevant quantity is the Bogoliubov--Kubo--Mori (BKM) covariance
\cite{petz1993bogoliubov}. We first record this standard identity and then show that the BKM variance is lower bounded, up to a polynomial
factor, by the ordinary thermal variance. This reduces the strong-convexity problem to the ordinary
variance lower bound proved in the next section. To this end, for a full-rank state $\rho$ define
the BKM linear map,
\[
	\Omega_\rho(B) := \int_0^1 \rho^s B \rho^{1-s} \dd s\,.
\]
This leads to the BKM inner product on observables $\langle A, B\rangle_{\rho, \BKM} :=
\Tr[A^\dagger \Omega_\rho(B)] = \int_0^1 \Tr[\rho^s A^\dagger \rho^{1-s} B ] \dd s$. For Hermitian
$A, B$, let $A_0 = A - \Tr[\rho A] I$, $B_0 = B - \Tr[\rho B] I$ be the centered observables.
Then, the BKM variance and covariance are,
\begin{align}
	&\mathrm{Cov}^\BKM_\rho(A,B) := \langle A_0, B_0 \rangle_{\rho, \BKM} = \int_0^1\Tr[\rho^s A\rho^{1-s}B]\dd s
       -\Tr[\rho A]\Tr[\rho B]\,,\\
       &\Var_\rho^\BKM(A) := \mathrm{Cov}^\BKM_\rho(A,A)\,.
\end{align}
Note if $A, B$ commute with $\rho$, then $\mathrm{Cov}^\BKM_\rho(A, B)  =\Tr(\rho AB) -
\Tr(\rho A)\Tr(\rho B)$, so it reduces to the ordinary covariance in the classical or commuting
case.

\begin{restatable}[Hessian equals BKM covariance]{lemma}{hessianbkm}
\label{lem:hessian-bkm}
Fix a real $\beta$. Let $A_{\vec u}=\sum_I u_I\Gamma_I$ and recall $\Phi(\vec g)=\log Z(\beta,\vec g)$ with
$H(\vec g)=\sigma_n\sum_I g_I\Gamma_I$. For all $I,J$,
\begin{equation}
\partial_{g_I}\partial_{g_J}\Phi(\vec g)
=\beta^2\sigma_n^2\operatorname{Cov}^{\BKM}_{\rho_\beta(\vec g)}(\Gamma_I,\Gamma_J).
\label{eq:hessian-bkm}
\end{equation}
where $\rho_\beta(\vec g)=e^{-\beta H(\vec g)}/\Tr[e^{-\beta H(\vec g)}]$. Equivalently,
\begin{equation}
  \vec{u}^\top\nabla^2\Phi(\vec g)\vec{u}
  =\beta^2\sigma_n^2\Var_{\rho_\beta(\vec g)}^{\BKM}(A_{\vec u}),
  \label{eq:hessian-bkm-direction}
\end{equation}
\end{restatable}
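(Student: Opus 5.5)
The plan is to differentiate $\Phi(\vec g)=\log Z(\beta,\vec g)$ twice, starting from \Cref{lem:first_derivative_identity}. That lemma already gives
\[
\partial_{g_J}\Phi=-\beta\sigma_n\Tr[\Gamma_J e^{-\beta H(\vec g)}]/Z(\beta,\vec g).
\]
Applying the quotient rule to this expression gives
\[
\partial_{g_I}\partial_{g_J}\Phi=-\frac{\beta\sigma_n}{Z}\,\partial_{g_I}\Tr[\Gamma_J e^{-\beta H}]+\frac{\beta\sigma_n\Tr[\Gamma_J e^{-\beta H}]}{Z^2}\,\partial_{g_I}Z .
\]
The second term is handled by the first derivative identity. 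It equals
\[
-\beta^2\sigma_n^2\Tr[\rho_\beta\Gamma_I]\Tr[\rho_\beta\Gamma_J],
\]
which is exactly the subtracted mean-product in the definition of $\mathrm{Cov}^\BKM$.

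For the first term, I apply the Duhamel formula (\Cref{lem:duhamel}) with $K(t)=-\beta H(\vec g+t\vec e_I)$, so that $K'(t)=-\beta\sigma_n\Gamma_I$. This gives
\[
\partial_{g_I}e^{-\beta H}=-\beta\sigma_n\int_0^1 e^{-(1-s)\beta H}\Gamma_I e^{-s\beta H}\dd s .
\]
Multiplying by $\Gamma_J$ and taking the normalized trace, then dividing by $Z$, turns the first term into
\[
\beta^2\sigma_n^2\int_0^1\Tr[\Gamma_J\rho^{1-s}\Gamma_I\rho^{s}]\dd s ,
\]
where $\rho=\rho_\beta(\vec g)$. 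This works because $e^{-a\beta H}=Z^{a}\rho^{a}$, and the powers of $Z$ combine to exactly one factor of $Z$, which cancels the prefactor $1/Z$.

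Next I use cyclicity of the trace to rewrite the integrand as $\Tr[\rho^{s}\Gamma_J\rho^{1-s}\Gamma_I]$. To match the stated ordering $\Tr[\rho^s\Gamma_I\rho^{1-s}\Gamma_J]$, I substitute $s\mapsto 1-s$ and apply cyclicity once more. The integral is therefore symmetric in $I$ and $J$, so the order of the two indices does not matter. Adding the two terms yields \eqref{eq:hessian-bkm}.

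The directional form \eqref{eq:hessian-bkm-direction} then follows by bilinearity. The expression $\sum_{I,J}u_Iu_J\,\mathrm{Cov}^\BKM(\Gamma_I,\Gamma_J)$ equals $\mathrm{Cov}^\BKM(A_{\vec u},A_{\vec u})$, because both the integral term and the mean-product term are bilinear in the observables.

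The only real care needed is the bookkeeping around the normalized trace and the powers of $Z$ inside $\rho^s$; the argument is otherwise routine. I also need the step $\Tr[A^\dagger\Omega_\rho(B)]=\int_0^1\Tr[\rho^sA\rho^{1-s}B]\dd s$ for Hermitian $A,B$ to hold. It does, by cyclicity. This confirms that centering reproduces the subtracted product of means, since $\int_0^1\Tr[\rho^s\rho^{1-s}]\dd s=\Tr\rho=1$.
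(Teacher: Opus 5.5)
Your proposal is correct and follows the paper's proof essentially step for step. Both apply the quotient rule to the first-derivative identity, use Duhamel (with the $Z^{1-s}Z^{s}=Z$ bookkeeping) for the derivative of the numerator trace, then use $s\mapsto 1-s$ and cyclicity to reach the ordering in $\operatorname{Cov}^{\BKM}$, and get the directional form by bilinearity; the only cosmetic difference is that the paper re-derives the first derivative via Duhamel inside the proof rather than quoting \Cref{lem:first_derivative_identity}.
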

See \Cref{sec:deferred_proofs} for a proof. This lemma reduces the Hessian lower bound to a
lower bound on the BKM variance of the observable \(A_{\vec u}\). The estimates developed for SYK,
however, control the ordinary thermal variance \[ \Var_{\rho_\beta(\vec g)}(A_{\vec u}) =
\Tr\!\left[\rho_\beta(\vec g) \bigl(A_{\vec u}-\Tr[\rho_\beta(\vec g)A_{\vec u}]I\bigr)^2\right]. \]
We therefore next compare these two notions of variance. The following lemma shows that passing from
the BKM variance to the ordinary variance costs only a factor controlled by the spectral diameter of
\(H(\vec g)\).
\begin{lemma}[BKM bounds ordinary variance]
\label{lem:bkm-variance}
For any real $\beta$ and Hermitian $A$, if $\Delta_{\vec g}$ is the spectral diameter of $H(\vec g)$, then
\begin{equation}
  \Var_{\rho_\beta(\vec g)}^{\BKM}(A)
  \geq\frac{1}{1+|\beta|\Delta_{\vec g}}\Var_{\rho_\beta(\vec g)}(A).
  \label{eq:bkm-comparison}
\end{equation}
\end{lemma}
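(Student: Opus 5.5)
The plan is to diagonalize $H(\vec g)$ and reduce \eqref{eq:bkm-comparison} to a scalar inequality about logarithmic means. Write $H(\vec g)=\sum_a\lambda_a\ket{a}\bra{a}$ and $q_a=e^{-\beta\lambda_a}/(d_nZ(\beta,\vec g))$, so that $\rho_\beta=d_n\sum_a q_a\ket a\bra a$; the factor $d_n$ cancels against the normalized trace. Let $A_0=A-\Tr[\rho_\beta A]I$, which is Hermitian. Both variances are then evaluated in this eigenbasis:
\[
\Var^{\BKM}_{\rho_\beta}(A)=\sum_{a,b}|(A_0)_{ab}|^2\int_0^1 q_a^sq_b^{1-s}\dd s,\qquad
\Var_{\rho_\beta}(A)=\sum_{a,b}q_a|(A_0)_{ab}|^2 .
\]
These come from expanding $\Tr[\rho^sA_0\rho^{1-s}A_0]$ and $\Tr[\rho A_0^2]$ entrywise, using $|(A_0)_{ab}|^2=(A_0)_{ab}(A_0)_{ba}$ by Hermiticity. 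Since every coefficient $|(A_0)_{ab}|^2$ is nonnegative, it suffices to prove the termwise bound
\[
\int_0^1q_a^sq_b^{1-s}\dd s\geq\frac{q_a}{1+|\beta|\Delta_{\vec g}}
\]
for every pair $(a,b)$. Summing this bound against the weights $|(A_0)_{ab}|^2$ gives the lemma.

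For the scalar bound, let $x=\log(q_b/q_a)=-\beta(\lambda_b-\lambda_a)$, so that $|x|\leq|\beta|\Delta_{\vec g}$. The integral equals $q_a\int_0^1e^{(1-s)x}\dd s=q_a\,(e^x-1)/x$, with the value $q_a$ at $x=0$. It therefore remains to show $(e^x-1)/x\geq 1/(1+|x|)$ for all real $x$, and then use that $1/(1+|x|)$ is decreasing in $|x|$.

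For $x\geq0$ we have $(e^x-1)/x\geq1$, so the inequality is immediate. For $x=-y<0$ we need $(1-e^{-y})/y\geq1/(1+y)$, which is equivalent to $(1+y)(1-e^{-y})\geq y$, that is, $1\geq(1+y)e^{-y}$. This last inequality holds because $e^{y}\geq1+y$.

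I do not expect a real obstacle. The only care needed is in the bookkeeping: the normalized-trace factor $d_n$, and the fact that the symmetric logarithmic-mean weight must be compared with the asymmetric weight $q_a$ that appears in the ordinary variance. That is exactly why the bound is stated against $q_a$ and why the worst case is $q_b\ll q_a$, which is controlled by the spectral diameter.
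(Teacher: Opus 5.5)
Your proposal is correct and follows essentially the same argument as the paper: diagonalize $H(\vec g)$, write both variances as sums over $|(A_0)_{ab}|^2$ with weights $\int_0^1 q_a^s q_b^{1-s}\,\dd s$ and $q_a$, and prove the termwise bound using $e^{R}\geq 1+R$. The only cosmetic difference is that the paper first uses $q_b/q_a\geq e^{-R}$ to lower bound the integral by $(1-e^{-R})/R$. You instead compute the integral exactly as $(e^x-1)/x$ and then invoke monotonicity in $|x|$.
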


\begin{proof}
Diagonalize $H(\vec g)\vec e_a=\lambda_a\vec e_a$ in an orthonormal basis and put
$q_a=e^{-\beta \lambda_a}/(d_nZ(\beta,\vec g))$, so $\sum_aq_a=1$. With the logarithmic mean
\[
  L(p,q)=\int_0^1p^sq^{1-s}\dd s,
\]
write $\widetilde A=A-\Tr[\rho A]\Id$. The two variances are
\[
  \Var_\rho^{\BKM}(A)=\sum_{a,b}L(q_a,q_b)|\widetilde A_{ab}|^2,\qquad
  \Var_\rho(A)=\sum_{a,b}q_a|\widetilde A_{ab}|^2.
\]
Set $R=|\beta|\Delta_{\vec g}$. Since $q_b/q_a\geq e^{-R}$,
\[
  L(q_a,q_b)=q_a\int_0^1(q_b/q_a)^{1-s}\dd s
  \geq q_a\frac{1-e^{-R}}{R}\geq\frac{q_a}{1+R}.
\]
The value at $R=0$ is defined by continuity, and the last inequality is equivalent to
$e^R\geq1+R$.
Applying this term by term in the energy eigenbasis and summing over
\(a,b\) yields
\[
\Var^{\mathrm{BKM}}_{\rho_\beta(\vec g)}(A)
\geq
\frac{1}{1+|\beta|\Delta_{\vec g}}\,
\Var_{\rho_\beta(\vec g)}(A),
\]
which proves the lemma.
\end{proof}

\section{Wick expansion and Koteck\'y--Preiss}
\label{sec:wick_expansion_exposition}

The analysis involving the sample and time complexity will involve understanding how the random SYK partition functions fluctuate with the disorder.  Concretely, this requires controlling quantities such as
\[
    \E_{\vec \bsg_1,\vec \bsg_2}
    \big[
      Z(\beta_1,\vec \bsg_1)Z(\beta_2,\vec \bsg_2)
    \big],
\]
as well as closely related partition-function ratios.  Since the randomness enters only through the Gaussian couplings in the Hamiltonian, the natural first step is to expand the partition functions in powers of these couplings.  Wick's theorem then converts the resulting Gaussian moments into sums over pairings.  This produces a combinatorial expansion that can be controlled using Koteck\'y--Preiss. We discuss both of these below in more detail. More concretely, the basic objects that we are concerned with are
moments of the~form
\[
    M(\beta_1,\beta_2)
    :=
    \E_{\vec \bsg_1,\vec \bsg_2}
    \Big[
      Z(\beta_1,\vec \bsg_1)Z(\beta_2,\vec \bsg_2)
    \Big],
\]
where
$ Z(\beta,\vec g)=\Tr\!\left[e^{-\beta H(\vec g)}\right].$ 
We refer to the two copies of $Z$ as \emph{replicas}. By choosing
$\vec \bsg_1,\vec \bsg_2$ to be either independent or identical, the same
formalism captures both
\[
    \left|\E_{\vec \bsg} Z(\beta,\vec \bsg)\right|^2
    \qquad\text{and}\qquad
    \E_{\vec \bsg}\left|Z(\beta,\vec \bsg)\right|^2,
\]
with conjugate inverse temperatures when needed. We begin by Taylor expanding a single replica:
\begin{equation}
    \label{eq:z_taylor_expansion}
    Z(\beta,\vec g)
    =
    \sum_{r\geq0}
    \frac{(-\beta)^r\sigma_n^r}{r!}
    \sum_{I_1,\ldots,I_r\in\mc I_4}
    \left(\prod_{i=1}^r g_{I_i}\right)
    \Tr\!\left[\prod_{i=1}^r \Gamma_{I_i}\right].
\end{equation}
Expanding both replicas therefore reduces $M(\beta_1,\beta_2)$ to Gaussian
moments of the form
\[
    \E_{\vec \bsg_1,\vec \bsg_2}\!\left[
      \prod_{i=1}^{r_1}(\vec \bsg_1)_{I_i}
      \prod_{j=1}^{r_2}(\vec \bsg_2)_{J_j}
    \right].
\]
At this stage, the Wick--Isserlis theorem allows us to write the inner expectation as a sum over \emph{perfect matchings} over the ``slots''
$$
\mathsf{Slots} = (\{1\} \times [r_1]) \cup (\{2\} \times [r_2]).
$$ Denote the set of all perfect matchings as $\PM(\mathsf{Slots})$ and a single matching as $\pi$. Moreover, for each edge in $\pi$ (which corresponds to two Gaussian variables), we get a contribution of $\Cov((\vec \bsg_c)_I,(\vec \bsg_{c\prime})_J)$. Recalling that any pair $(\vec \bsg_i)_I, (\vec \bsg_i)_J$ are independent if $I \neq J$, this tells us that only non-zero contributions to the sum are those where both endpoints of a matching edge correspond to the \emph{same} Majorana term. Thus, we can rewrite the above sum as over labelings $\mc A : \pi \rightarrow \mc I_4$ from edges of a perfect matching over the slots to a single Majorana term $I \in \mc I_4$. This yields,
\begin{equation}
	\label{eq:informal_mass}
    M(\beta_1, \beta_2) = \sum_{r_1, r_2 \geq 0} C_{\beta_1, \beta_2, r_1, r_2} \sum_{\pi \in \PM(\mathsf{Slots})} \sum_{\mc A : \pi \rightarrow \mc I_4} \tau_1(\mc A) \tau_2 (\mc A) \prod_{\{(c,i),(c',j)\}\in\pi}\kappa_{c,c'},
\end{equation}
where 
$$ C_{\beta_1,\beta_2,r_1,r_2}
   =\frac{(-\beta_1)^{r_1}(-\beta_2)^{r_2}\sigma_n^{r_1+r_2}}{r_1!r_2!},$$
 $\tau_1(\mc A)$ and $\tau_2(\mc A)$ are the Majorana trace factors from 
the two replicas, while $\kappa_{c,c'}$ records the covariance between
disorder variables in replicas $c$ and $c'$. Every summand, which specifies a pair $(r_1, r_2)$, a
matching $\pi$, and a labeling $\mc A$ is called a \textbf{Wick diagram}, denoted $\mc D$.

Note that the labeling from edges to $\calI$ induces another graph. For the two-replica moment, give
an edge its quartic support in each trace color incident to it; two edges are connected when these
colored supports intersect; see \Cref{fig:wick-to-overlap} for the passage from ordered slots to
the overlap graph. Thus the ground set is $X=\{1,2\}\times[n]$. Together with the
observation that the normalized trace factors decompose over disconnected components, we can rewrite
the above as a product of \emph{connected} polymers. For two polymers $\mc P, \mc P'$, we write $\mc
P \sim \mc P'$ to indicate that they correspond to disjoint supports. A key characterization will be a
factorization of \Cref{eq:informal_mass} into polymer ``activities'' $\mathsf{Act}_{\beta_1,
\beta_2}(\mc P)$:
\[
	M(\beta_1, \beta_2) = \sum_{c\geq0}\frac1{c!} \sum_{\substack{\mc P_1, \dots, \mc P_c \\ \mc
	P_i \text{ connected}\\\mc P_i\sim\mc P_j\ (i<j)}} \prod_{i=1}^c \mathsf{Act}_{\beta_1,
	\beta_2}(\mc P_i)\,.
\]
\begin{figure}[tbp]
\centering
\begingroup
\definecolor{wkInk}{HTML}{243748}
\definecolor{wkMuted}{HTML}{657482}
\definecolor{wkRule}{HTML}{D5DDE3}
\definecolor{wkPaper}{HTML}{F7F9FB}
\definecolor{wkBlue}{HTML}{406E98}
\definecolor{wkOchre}{HTML}{BD7B32}
\definecolor{wkGreen}{HTML}{328873}
\definecolor{wkPurple}{HTML}{8765A1}
\begin{tikzpicture}[
  x=1cm,y=1cm,
  font=\sffamily\fontsize{9}{11}\selectfont,
  text=wkInk,
  wkCard/.style={draw=wkRule,fill=wkPaper,rounded corners=3pt,line width=.55pt},
  wkTitle/.style={font=\sffamily\bfseries\fontsize{10}{12}\selectfont},
  wkNote/.style={text=wkMuted,font=\sffamily\fontsize{8.5}{10.5}\selectfont,align=center},
  wkCaption/.style={text=wkMuted,font=\sffamily\fontsize{8.5}{10.5}\selectfont,
    align=left,anchor=north,text width=7.35cm,inner sep=0pt},
  wkArrow/.style={-{Latex[length=1.8mm,width=1.3mm]},draw=wkMuted,line width=.8pt},
  wkSlot/.style={circle,draw=wkInk,fill=white,line width=.6pt,minimum size=4mm,inner sep=0pt},
  wkVertex/.style={circle,draw=wkInk,fill=white,line width=.9pt,minimum size=6.5mm,inner sep=0pt},
  wkEdge/.style={draw=wkInk,line width=1pt},
  wkPair/.style={line width=1.2pt,preaction={draw=wkPaper,line width=3.5pt}},
  wkSet/.style={font=\fontsize{10}{12}\selectfont}
]
\path[wkCard] (0,0) rectangle (7.75,-5.15);
\path[wkCard] (8.3,0) rectangle (16.05,-5.15);
\path[wkCard] (0,-7.35) rectangle (7.75,-13.75);
\path[wkCard] (8.3,-7.35) rectangle (16.05,-13.75);
\node[wkTitle] at (3.875,-.42) {1\quad Taylor expansion and ordered slots};
\node[wkTitle] at (12.175,-.42) {2\quad Pair the slots};
\node[wkTitle] at (3.875,-7.77) {3\quad Assign physical quartets};
\node[wkTitle] at (12.175,-7.77) {4\quad Physical overlap graph};
\draw[wkArrow] (7.81,-3.75) -- (8.24,-3.75);
\draw[wkArrow] (8.3,-4.95) -- (8.025,-4.95)
  -- (8.025,-6.9) -- (3.875,-6.9) -- (3.875,-7.29);

\node[anchor=north,font=\fontsize{8.5}{10.5}\selectfont] at (3.875,-.76) {$\displaystyle
  \begin{aligned}
  Z_c&=\sum_{r_c\geq0}c_{\beta,r_c}
    \sum_{J_{c,1},\ldots,J_{c,r_c}\in\mathcal I_4}\\[-1pt]
  &\qquad\left(\prod_{i=1}^{r_c}\bsg_{J_{c,i}}\right)
    \Tr\!\left[\prod_{i=1}^{r_c}\Gamma_{J_{c,i}}\right].
  \end{aligned}$};
\foreach \y/\c in {-3/1,-4.5/2} {
  \node[wkNote] at (.57,\y) {$c=\c$};
  \draw[draw=wkRule,line width=.7pt] (1.65,\y) -- (6.5,\y);
  \foreach \x/\i in {1.75/1,3.3/2,4.85/3,6.4/4} {
    \node[wkSlot] at (\x,\y) {};
    \node[wkNote] at (\x,\y-.37) {$(\c,\i)$};
  }
}
\node[wkCaption] at (3.875,-5.43)
  {Each slot $(c,i)$ corresponds to a factor $\bsg_{J_{c,i}}$ in the Taylor
   expansion~\eqref{eq:z_taylor_expansion}. Here $r_1=r_2=4$.};

\node[font=\sffamily\bfseries\fontsize{9}{11}\selectfont]
  at (12.175,-1.07) {An unlabeled ``skeleton'' diagram};
\node at (12.175,-1.65) {$\mathcal D=(\vec r,\pi)$};
\node[wkNote] at (8.87,-3) {$c=1$};
\node[wkNote] at (8.87,-4.5) {$c=2$};
\foreach \xt/\xb/\i/\col in {
  10.05/13.15/1/wkBlue,11.6/10.05/2/wkOchre,
  13.15/14.7/3/wkGreen,14.7/11.6/4/wkPurple} {
  \draw[wkPair,draw=\col] (\xt,-3) -- (\xb,-4.5);
}
\foreach \xt/\xb/\i/\col in {
  10.05/13.15/1/wkBlue,11.6/10.05/2/wkOchre,
  13.15/14.7/3/wkGreen,14.7/11.6/4/wkPurple} {
  \node[wkSlot,draw=\col] at (\xt,-3) {};
  \node[wkSlot,draw=\col] at (\xb,-4.5) {};
  \node[fill=wkPaper,inner sep=1pt,text=\col]
    at ({.85*\xt+.15*\xb},-3.225) {$e_{\i}$};
}
\foreach \x/\i in {10.05/1,11.6/2,13.15/3,14.7/4} {
  \node[wkNote] at (\x,-2.63) {$(1,\i)$};
  \node[wkNote] at (\x,-4.87) {$(2,\i)$};
}
\node[wkCaption] at (12.175,-5.43)
  {After taking an expectation over SYK disorder, Wick--Isserlis (\mbox{\Cref{thm:wick}}) pairs
	  slots.
   The unlabeled skeleton fixes the slot order and pairs, but the physical labels $I \in \calI$
   remain unspecified (\mbox{\Cref{sec:shared_counts}}).};

\node at (3.875,-8.4) {$\mathcal A(e_i)=I_i\in\mathcal I_4=\binom{[n]}4$};
\foreach \y/\i/\col/\sites in {
  -9.05/1/wkBlue/{1,2,3,4},
  -9.62/2/wkOchre/{3,4,5,6},
  -10.19/3/wkGreen/{5,6,7,8},
  -10.76/4/wkPurple/{1,2,7,8}} {
  \node[wkSet] at (3.875,\y) {$\textcolor{\col}{e_{\i}}\mapsto I_{\i}=\{\sites\}$};
}
\draw[draw=wkRule] (.3,-11.46) -- (7.45,-11.46);
\node[wkTitle] at (3.875,-11.88) {Assignment $\mathcal A'(e_i)=I_i$};
\node[wkSet] at (3.875,-12.48) {$I_1=I_2=\{1,2,3,4\}$};
\node[wkSet] at (3.875,-13.08) {$I_3=I_4=\{5,6,7,8\}$};
\node[wkCaption] at (3.875,-14.03)
  {Paired slots share a single assignment $I_i \in \calI$; different pairs may share
   a label (\mbox{\Cref{sec:wick_placement}}). Each assignment specifies a term in the disorder-averaged product (\mbox{\Cref{prop:polymer_counting}}).};

\node at (12.175,-8.4) {$e_i\sim e_j\ \Longleftrightarrow\ I_i\cap I_j\ne\varnothing$};
\draw[wkArrow] (7.81,-9.95) -- (8.24,-9.95);
\coordinate (wkG1) at (11.275,-9.05);
\coordinate (wkG2) at (13.075,-9.05);
\coordinate (wkG3) at (13.075,-10.85);
\coordinate (wkG4) at (11.275,-10.85);
\draw[wkEdge] (wkG1) -- (wkG2) -- (wkG3) -- (wkG4) -- cycle;
\node[wkVertex,draw=wkBlue,fill=wkBlue!9] at (wkG1) {$e_1$};
\node[wkVertex,draw=wkOchre,fill=wkOchre!9] at (wkG2) {$e_2$};
\node[wkVertex,draw=wkGreen,fill=wkGreen!9] at (wkG3) {$e_3$};
\node[wkVertex,draw=wkPurple,fill=wkPurple!9] at (wkG4) {$e_4$};
\node[wkNote] at (12.175,-8.77) {$\{3,4\}$};
\node[wkNote] at (13.81,-9.95) {$\{5,6\}$};
\node[wkNote] at (12.175,-11.16) {$\{7,8\}$};
\node[wkNote] at (10.54,-9.95) {$\{1,2\}$};
\draw[draw=wkRule] (8.6,-11.46) -- (15.75,-11.46);
\draw[wkArrow] (7.81,-12.605) -- (8.24,-12.605);
\draw[wkEdge] (11.275,-12.05) -- (13.075,-12.05);
\draw[wkEdge] (11.275,-13.16) -- (13.075,-13.16);
\node[wkVertex,draw=wkBlue,fill=wkBlue!9] at (11.275,-12.05) {$e_1$};
\node[wkVertex,draw=wkOchre,fill=wkOchre!9] at (13.075,-12.05) {$e_2$};
\node[wkVertex,draw=wkGreen,fill=wkGreen!9] at (13.075,-13.16) {$e_3$};
\node[wkVertex,draw=wkPurple,fill=wkPurple!9] at (11.275,-13.16) {$e_4$};
\draw[draw=wkMuted!65,line width=.6pt,dash pattern=on 2.5pt off 2.5pt]
  (10.375,-12.605) -- (13.975,-12.605);
\node[wkCaption] at (12.175,-14.03)
  {An assignment generates a graph. Vertices are Wick pairs, and edges join pairs whose
	 colored assignments physically overlap 
   (\mbox{\Cref{sec:wick_placement}}). The top and bottom graphs give
   one and two polymers, respectively (\mbox{\Cref{lem:component_placement}}).};

\end{tikzpicture}
\endgroup
\caption{From Taylor slots to physical overlap graphs, for two replicas
 $Z_c=Z(\beta,\vec\bsg)$ sharing the same random disorder, with $n\geq8$ and normalized trace
 $\Tr$. Every illustrated pair meets both replicas, so colored-support overlap
 is equivalent to $I_i\cap I_j\ne\varnothing$; the upper graph labels display these
 intersections.}
\label{fig:wick-to-overlap}
\end{figure}

The Koteck\'y--Preiss theorem is the tool that turns this polymer representation
into quantitative bounds.  Roughly speaking, once the total weight of polymers
touching any fixed site is sufficiently small, the full polymer expansion is
controlled by its connected components.  This allows us to bound ratios of
partition functions by summing only over the connected polymers that distinguish
the two quantities being compared.  The precise Koteck\'y--Preiss criterion,
the corresponding expansion, and the required polymer-counting estimates
are deferred to \Cref{sec:shared_wick_kp}. 
To see why this is useful, consider the second-moment ratio
\begin{equation}
    \label{eq:gas_derivation}
    \frac{\E_{\vec \bsg}|Z(\beta,\vec \bsg)|^2}
         {|\E_{\vec \bsg}Z(\beta,\vec \bsg)|^2}
    =
    \frac{M_1(\beta,\overline\beta)}
         {M_0(\beta,\overline\beta)}.
\end{equation}
Here $M_1$ corresponds to two replicas with the same disorder, while $M_0$
corresponds to independent disorder.  The two polymer expansions are identical
except for polymers that connect the two replicas.  Passing to the logarithm
isolates connected clusters, so all contributions that are common to $M_1$ and
$M_0$ cancel.  Consequently,
\[
    \frac{M_1(\beta,\overline\beta)}
         {M_0(\beta,\overline\beta)}
    =
    \exp\!\left(
      \text{total contribution of connected clusters joining the two replicas}
    \right).
\]
Koteck\'y--Preiss bounds precisely this remaining connected contribution.
Thus the second-moment estimate reduces to verifying the Koteck\'y--Preiss
condition and bounding the total weight of polymers that connect the two replicas. 
The formal setup and the required counts to bound \Cref{eq:gas_derivation} are given in \Cref{sec:shared_wick_kp}.

\section{Thermal estimates and zero-freeness}

The first set of ingredients concerns concentration of the partition function $Z(\beta,\vec g)=\Tr[e^{-\beta H(\vec g)}]$ around the annealed value $Z_{\mathrm{ann}}(\beta) \dfn \E_{\vec \bsg}
Z(\beta,\vec \bsg)$.  In
the proof, we want to replace the random denominator
\(Z(\beta,\vec g)\) in
$$
\Tr[\Gamma_I\rho_\beta]
=
\frac{\Tr[\Gamma_Ie^{-\beta H(\vec g)}]}{Z(\beta,\vec g)}
$$
by the deterministic quantity \(Z_{\mathrm{ann}}(\beta)\). Write $\frac{Z(\beta,\vec g)}{Z_{\mathrm{ann}}(\beta)} = 1+\Delta_\beta(\vec g),$ and use the first-order approximation
$\frac{1}{1+\Delta_\beta} \approx 1-\Delta_\beta$. This yields
\begin{equation}
	\label{eq:estimated_quantity}
	\Tr[\Gamma_I\rho_\beta] \approx \frac{\Tr[\Gamma_Ie^{-\beta H(\vec g)}]}
	{Z_{\mathrm{ann}}(\beta)} \Paren{1-\Delta_\beta(\vec g)}\,.
\end{equation}

To obtain our advertised result, we need the error in this approximation to be $\O(n^{-2})$. Since
this step introduces error \(\approx |\Delta_\beta|^2\), this sets the scale for the necessary
second-moment control. Prior to this work, the best estimate was given by \cite[Theorem
4]{zlokapa2026rigorous}:
\begin{restatable}[Second moment bound, \cite{zlokapa2026rigorous}]{lemma}{zlokapamomentbound}
  \label{lem:prior_moment}
  There are constants $\beta_\mathrm{prior}>0$ and $C_2<\infty$ such that
  $Z_\mathrm{ann}(\beta)\neq0$ on a neighborhood of
  $|\beta|\leq \beta_\mathrm{prior}$ and, for every complex
  $|\beta|\leq \beta_\mathrm{prior}$ and all sufficiently large $n$,
  \[
    \Delta_\beta(\vec g)
      :=\frac{Z(\beta,\vec g)}{Z_\mathrm{ann}(\beta)}-1,
    \qquad
    \E_{\vec \bsg}[|\Delta_\beta(\vec \bsg)|^2]
      \leq \frac{C_2}{n}.
  \]
\end{restatable}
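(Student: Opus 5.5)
This lemma is quoted from \cite[Theorem~4]{zlokapa2026rigorous}, so the simplest route is to invoke that result. If we want a self-contained argument, the plan is to run it through the replica/polymer machinery of \Cref{sec:wick_expansion_exposition}.

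\textbf{Step 1: rewrite the second moment as a replica ratio.} Since $\E_{\vec \bsg}\Delta_\beta(\vec \bsg)=0$, we have
\[
\E_{\vec \bsg}|\Delta_\beta|^2=\frac{\E_{\vec \bsg}|Z(\beta,\vec \bsg)|^2}{|Z_\mathrm{ann}(\beta)|^2}-1=\frac{M_1(\beta,\overline\beta)}{M_0(\beta,\overline\beta)}-1,
\]
exactly as in \eqref{eq:gas_derivation}. Here $M_1$ uses shared disorder across the two replicas and $M_0$ uses independent disorder. Note that $M_0=|Z_\mathrm{ann}(\beta)|^2$ factorizes.

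\textbf{Step 2: Wick expansion and the Koteck\'y--Preiss criterion.} Expand both replicas via \eqref{eq:z_taylor_expansion} and apply Wick--Isserlis. This writes $M_0$ and $M_1$ as polymer gases on the ground set $\{1,2\}\times[n]$, with connected polymers built from Wick pairs labelled by quartets. I will verify the Koteck\'y--Preiss condition in the form of \cite[Theorem~1]{ueltschi2004cluster}: for each site $x$,
\[
\sum_{\mc P\ni x}|\mathsf{Act}(\mc P)|e^{a(\mc P)}\le a(x),\qquad a(\mc P)\propto|\mc P|.
\]
A connected polymer with $k$ Wick pairs carries $\sigma_n^{2k}|\beta|^{2k}$ and the $1/r!$ factors. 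Each new pair must overlap the existing support, so it has at most $O(k)\cdot n^{3}$ label choices. The pairing combinatorics cancel against the factorials up to $C^k$. Since $\sigma_n^2n^3=6$, the sum over polymers containing a fixed site is geometric in $C|\beta|$. The criterion therefore holds for $|\beta|\le\beta_\mathrm{prior}$ small. The same bound, applied to the single-replica gas, shows that $\log Z_\mathrm{ann}$ is analytic there, which gives $Z_\mathrm{ann}\neq0$ on the disc.

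\textbf{Step 3: isolate the replica-connecting clusters.} Taking logarithms, the cluster expansions of $M_1$ and $M_0$ agree except for clusters containing at least one polymer that joins the two replicas. Such a polymer needs a Wick pair with one slot in replica $1$ and one in replica $2$. Hence
\[
\log\frac{M_1}{M_0}=\sum_{\text{clusters joining the replicas}}\Phi^T(\cdot)\prod\mathsf{Act},
\]
and Koteck\'y--Preiss bounds the absolute value by $\sum_{x}$ of the weight of connecting polymers through $x$.

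\textbf{Step 4 (main obstacle): show the connecting weight is $O(1/n)$ in total, not $O(1)$ per site.} Summing over the $2n$ sites naively gives $O(n)$, so the key is a more careful count. A cross pair alone has trace factors $\Tr[\Gamma_I]=0$, so it must combine with other pairs into nonvanishing traces in each replica. The cheapest valid configuration has two cross pairs with the same label $I$, or a cross pair attached to within-replica loops. Either way, forming a nonvanishing trace in each replica closes a loop and loses a factor $n$ relative to the free choice of labels. Concretely, for two cross pairs the weight is
\[
\sigma_n^4\cdot\#\{(I,J)\text{ overlapping}\}\approx n^{-6}\cdot n^{4}\cdot n^{3}=O(n),
\]
which the trace vanishing reduces further. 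I would follow \cite{zlokapa2026rigorous} here: show that every connected polymer touching both replicas has "loop deficit" at least one, meaning its label count is at most $n^{3k-1}$ times the support-anchored count. Then sum over the anchor. The result is
\[
\left|\log\frac{M_1}{M_0}\right|\le\frac{C}{n},
\]
and hence $\E_{\vec \bsg}|\Delta_\beta|^2\le e^{C/n}-1\le C_2/n$. Making this deficit argument uniform over all polymer shapes, while keeping the per-site Koteck\'y--Preiss sums finite, is the delicate combinatorial step.
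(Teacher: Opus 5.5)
Invoking \cite[Theorem~4]{zlokapa2026rigorous} is exactly what the paper does. It gives no proof of this lemma, so as a proof of the statement that route is fine.

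Your self-contained plan in Steps 1--3 is also the route the paper uses for its stronger \Cref{lem:ordinary_replica}: rewrite as a replica ratio, expand as a two-replica Wick polymer gas, apply Koteck\'y--Preiss, and compare Mayer series that differ only on mixed polymers. That lemma gives $\mathbb E|\Delta_\beta|^2\le C^2n^{-2}$ on $|\beta|\le16\beta_{\rm thermal}$, and hence implies the statement. However, Step 4 is the only step that produces any decay in $n$, and you defer it rather than prove it. The heuristic you offer for it is off by powers of $n$:
\begin{itemize}
\item Without the trace constraint, a connected labeling with $h$ pairs has about $C^hh!\,n^{3h+1}$ choices. The global mixed mass is therefore of order $\sum_h(Cb^2)^h\,n$.
\item Reaching $C/n$ requires saving $n^2$. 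A single factor $n$ lost to \emph{closing a loop} only brings you to $O(1)$. The same happens if your $n^{3k-1}$ is an anchored count, since summing over the $2n$ anchors returns $O(1)$.
\item Your alternative, \emph{a cross pair attached to within-replica loops}, never contributes: with a single cross pair the replica's word has symmetric difference equal to that pair's label, so the trace vanishes.
\end{itemize}

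The missing idea is a parity constraint on the cross pairs. In each replica's word, a same-replica pair contributes its label twice and a cross pair contributes it once. So both traces are nonzero only if the set $B$ of cross pairs satisfies $\triangle_{e\in B}\mathcal A(e)=\varnothing$. This forces $|B|\ge2$, and the $4|B|$ site incidences of $B$ then use at most $2|B|$ distinct sites. Exposing the other $h-|B|$ pairs outward from $B$ along the connected overlap graph adds at most three new sites each. Hence a mixed polymer uses at most $3h-|B|\le3h-2$ distinct sites, compared with $3h+1$ free entries in a tree exposure. Recording the at least three forced coincidences gives the uniform count $C^hh^{h+4}n^{3h-2}$ of \Cref{eq:shared_xor_count}. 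Its prefactor satisfies $h^{h+4}\le e^hh!\,h^4$, which is absorbed by the $1/(2^hh!)$ skeleton factor of \Cref{prop:polymer_counting}. This yields \Cref{eq:global_mixed_mass}: the global mixed mass is at most $n^{-2}\sum_{h\ge2}(Cb^2)^hh^4$. With \Cref{cor:kp_comparison} you get $\log(M_1/M_0)=O(n^{-2})$, a full power of $n$ better than you need. The uniformity over polymer shapes that you flag as delicate is handled by this single count.
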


Our first contribution is the following quadratically stronger estimate, which holds on a slightly
smaller constant disk of inverse temperatures.
\begin{restatable}[Improved second moment bound]{lemma}{momentbound}
  \label{lem:informal_moment_bound}
  Set the constant radius
  $\beta_\mathrm{improved}=\min\{\beta_\mathrm{prior}/2,\beta_\mathrm{thermal}\}$. Then, for every complex $|\beta|\leq \beta_\mathrm{improved}$,
  \[
    \E_{\vec \bsg}|\Delta_\beta(\vec \bsg)|^2
      \leq O(n^{-2}).
  \]
\end{restatable}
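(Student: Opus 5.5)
The plan is to write the second moment as a ratio of replica moments and bound the cross-replica part of a polymer expansion. Since $\E_{\vec\bsg}\Delta_\beta(\vec\bsg)=0$, we have
\[
 \E_{\vec \bsg}|\Delta_\beta(\vec \bsg)|^2=\frac{\E_{\vec\bsg}|Z(\beta,\vec\bsg)|^2}{|Z_\mathrm{ann}(\beta)|^2}-1=\frac{M_1(\beta,\overline\beta)}{M_0(\beta,\overline\beta)}-1,
\]
as in \eqref{eq:gas_derivation}. Here $M_0$ is nonzero for $|\beta|\le\beta_\mathrm{prior}$ by \Cref{lem:prior_moment}. Following \Cref{sec:wick_expansion_exposition}, both $M_1$ and $M_0$ are polymer gases over the ground set $X=\{1,2\}\times[n]$. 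They have identical activities on polymers that live in a single replica, and $M_0$ has no polymer containing a Wick pair with one slot in each replica (a \emph{crossing pair}). Assuming the Koteck\'y--Preiss criterion holds on the disk $|\beta|\le\beta_\mathrm{thermal}$, with the counting estimates of \Cref{sec:shared_wick_kp}, the cluster expansion gives $\log(M_1/M_0)=\Sigma_\times$. Here $\Sigma_\times$ is the sum over clusters containing at least one crossing polymer. It then suffices to show $|\Sigma_\times|\le Cn^{-2}$, because $|e^{x}-1|\le 2|x|$ for small $|x|$. I restrict to $|\beta|\le\beta_\mathrm{prior}/2$ to keep a constant margin inside the region where the prior bounds, and hence the KP weights, are available. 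That margin can be used to absorb geometric factors in cluster size.

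The key structural step is a parity observation that improves the $n^{-1}$ of \Cref{lem:prior_moment} to $n^{-2}$. Within one replica, a polymer's normalized trace factor $\Tr[\prod\Gamma_{I_i}]$ vanishes unless every site of $[n]$ appears an even number of times among its labels. Wick pairs internal to the replica contribute each label twice. Hence the crossing pairs attached to a given replica must have labels whose symmetric difference is empty, and that is impossible with exactly one crossing pair. So every crossing polymer with nonzero activity has at least two crossing pairs. The prototypical example is two crossing pairs with the same label $I$, whose activity is about $|\beta|^4\sigma_n^4$. Summing over the $m=\Theta(n^4)$ labels gives $\Theta(n^{-2})$. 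By contrast, a naive count of one crossing pair anchored at a site yields $\sigma_n^2 n^3=O(1)$ per site and $O(1/n)$ in total only after a cancellation. This is presumably the source of the weaker prior estimate.

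I would then redo the polymer count with this constraint. I would assign each crossing polymer a KP weight $a(\mc P)=\kappa|\mathrm{supp}\,\mc P|$, as in \Cref{sec:shared_wick_kp}, and bound
\[
 \sum_{\mc P\ \mathrm{crossing}}|\mathsf{Act}(\mc P)|e^{a(\mc P)}\le C\,n^{-2}.
\]
To do this, I would build a polymer by first choosing the two crossing pairs, which carry the factor $\sigma_n^4$, and then growing a spanning tree of overlaps. Each added pair, internal or crossing, costs a factor $\sigma_n^2$ times the number of labels meeting the current support, and that is $O(|\beta|^2\sigma_n^2 n^3\cdot k)=O(|\beta|^2 k)$. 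This is summable for small $\beta$ after the $1/r!$ Taylor factors are accounted for. For the two crossing pairs themselves, I would split on whether their labels coincide, which gives $m\sigma_n^4=O(n^{-2})$, or differ. If the labels differ, the symmetric-difference constraint forces their supports to be tied to other labels, and each such forced coincidence loses a factor of $n$ relative to a free choice. The total over clusters then follows from the standard KP bound on sums of clusters containing a given polymer type.

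The main obstacle is this last counting step. One must check that in every crossing polymer the parity constraint genuinely costs $n^{-2}$ overall, rather than just $n^{-1}$ per site, uniformly over polymer size, and in particular when the two crossing labels differ and are balanced by long chains of internal pairs. My first attempt would be a charging argument along the spanning tree. Each vertex introduces at most $3$ new sites, and the parity constraint forces at least $4$ sites to be revisited; these revisits are what produce the two extra factors of $1/n$. A constant such as $\beta_\mathrm{thermal}$ then absorbs the tree-counting entropy.
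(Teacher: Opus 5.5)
Your route is the paper's. The proof of \Cref{lem:ordinary_replica} writes $\E|\Delta_\beta|^2=\Xi_1/\Xi_0-1$ for the shared- and independent-disorder two-replica gases. It verifies Koteck\'y--Preiss for both gases, which also gives zero-freeness, so \Cref{lem:prior_moment} is not needed. It then bounds $|\mathsf{Mayer}_1-\mathsf{Mayer}_0|$ by the weighted mass of mixed polymers via \Cref{cor:kp_comparison}, and uses exactly your parity observation on the cross-color labels. What you leave open is the uniform label count. That count is the entire content of the improvement, and your sketch of it needs two corrections.

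First, internal pairs never enter the parity constraint, since each puts its label twice into a single trace. So differing crossing labels can only be balanced by further crossing pairs; internal chains matter only for connectivity.

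Second, do not enumerate the crossing labels by a separate parity count and then grow a tree from them. For $k$ crossing pairs this costs a prefactor of order $(2k)^{4k}$ (compare the first bound of \Cref{lem:parity_exposure}), which is not summable over unbounded pair counts $h$.

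The paper's count (\Cref{lem:shared_counts}, pinned edges) is joint instead. The $k\geq2$ crossing labels use at most $2k$ distinct sites, since every site occurs an even number of times among them. Growing the rest of the polymer outward from them adds at most $3$ new sites per pair. So a polymer with $h$ pairs has at most $3h-k\leq 3h-2$ distinct sites, regardless of where the crossing pairs sit in the overlap graph.

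A single-root tree exposure has $3h+1$ free site entries, so at least three of them repeat earlier ones. Recording only those three, at cost $(3h+1)^6$, bounds the connected labelings by $C^hh^{h+4}n^{3h-2}$. With $\sigma_n^{2h}n^{3h}=6^h$ and the $1/(2^hh!)$ skeleton factor, the mixed mass with $h$ pairs is at most $n^{-2}(Cb^2)^hh^4$, which is summable uniformly in $h$.

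Note also that the saving is three powers of $n$ relative to the unpinned exposure $n^{3h+1}$. This is because what enters $\log(\Xi_1/\Xi_0)$ is the global mixed mass, not the site-pinned one.
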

Its proof requires the two-replica Wick expansion together with the weighted Hermite estimates
developed for the time-efficient algorithm. See \Cref{fig:delta-proof} for a high-level view of the
proof. The details are deferred to \Cref{sec:thermal_kp_proof,sec:weighted_thermal_proof}.

The final ingredient is an analogous estimate for partition functions with an inserted
Majorana observable, which will hold at a smaller constant radius $\beta_{\rm thermal}$, set by
\Cref{cor:annealed_overview}.
\begin{cor}[Annealed expectation value]
  \label{cor:annealed_overview}
  For fixed real $|\beta|\leq \beta_\mathrm{thermal}$ and all sufficiently
  large $n$,
  \[
    \frac{\E_{\vec \bsg}
      \big|\Tr[\Gamma_Ie^{-\beta H(\vec \bsg)}]\big|}
         {Z_\mathrm{ann}(\beta)}
    \leq C\sigma_n|\beta|,
    \text{ for all } I\in\mc I_4.
  \]
\end{cor}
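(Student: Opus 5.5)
The plan is to avoid a new diagram expansion and derive the bound from the weighted thermal estimate \eqref{eq:intro-strong-thermal}, $\|\Delta_\beta\|_{2,s}\le C_s n^{-1}$. I take this estimate as holding uniformly on the complex disk $|\beta|\le\beta_\mathrm{thermal}$ for some fixed $s>1$, which is how I read the definition of $\beta_\mathrm{thermal}$. That uniformity on a complex disk is the key assumption; I would confirm it first.

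\emph{Step 1: reduce to a derivative.} By \Cref{lem:first_derivative_identity},
\[
\Tr[\Gamma_I e^{-\beta H(\vec g)}]=-\frac{1}{\beta\sigma_n}\partial_{g_I}Z(\beta,\vec g)=-\frac{Z_\mathrm{ann}(\beta)}{\beta\sigma_n}\partial_{g_I}\Delta_\beta(\vec g).
\]
Cauchy--Schwarz then gives
\[
\E_{\vec\bsg}\big|\Tr[\Gamma_I e^{-\beta H}]\big|/Z_\mathrm{ann}(\beta)\le (|\beta|\sigma_n)^{-1}\|\partial_{g_I}\Delta_\beta\|_{L^2}.
\]
For real $\beta$, $Z_\mathrm{ann}(\beta)>0$, so this division is harmless. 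It therefore suffices to prove $\|\partial_{g_I}\Delta_\beta\|_{L^2}\le C\beta^2\sigma_n^2$.

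\emph{Step 2: Hermite bookkeeping and symmetry.} In the normalized Hermite basis, $\|\partial_{g_I}F\|_{L^2}^2=\sum_{\vec d} d_I|\hat F_{\vec d}|^2$. Summing over $I$ gives
\[
\sum_I\|\partial_{g_I}\Delta_\beta\|_{L^2}^2=\sum_d d\,\|\Pi_d\Delta_\beta\|_{L^2}^2.
\]
Majorana permutation covariance and sign-flip covariance make the law of $\vec\bsg\mapsto Z(\beta,\vec\bsg)$ invariant under relabelling quartets. Hence $\|\partial_{g_I}\Delta_\beta\|_{L^2}$ does not depend on $I$, and
\[
\|\partial_{g_I}\Delta_\beta\|_{L^2}^2=\frac1m\sum_d d\,\|\Pi_d\Delta_\beta\|^2_{L^2}\le\frac{C_s'}{m}\|\Delta_\beta\|_{2,s}^2,
\]
where $C_s'=\sup_d d\,s^{-2d}<\infty$ since $s>1$. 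With $\|\Delta_\beta\|_{2,s}\le C/n$ and $m=\Theta(n^4)$, this is $O(n^{-6})=O(\sigma_n^4)$. Taking square roots gives $\|\partial_{g_I}\Delta_\beta\|_{L^2}=O(\sigma_n^2)$, which already has the correct $n$-scaling.

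\emph{Step 3: the $\beta$-dependence.} Step 2 alone only gives the ratio $O(\sigma_n/|\beta|)$, so I still need a factor $\beta^2$. For this I use analyticity in $\beta$. Fix $I$ and consider the $\vec g$-independent scalar $f(\beta):=\E_{\vec\bsg}[\bsg_I\,\Delta_\beta(\vec\bsg)]$, and more generally the Hermite coefficients $\langle \mathsf{Herm}_{\vec d},\Delta_\beta\rangle$.
\begin{itemize}
\item Each coefficient is analytic on the disk, since $Z_\mathrm{ann}$ is zero-free there.
\item Each coefficient vanishes at $\beta=0$, because $\Delta_0\equiv0$.
\item The $\beta^1$ Taylor coefficient of $\Delta_\beta$ is $-\sigma_n\sum_J g_J\Tr[\Gamma_J]=0$, so each coefficient in fact vanishes to second order.
\end{itemize}
The vector-valued map $\beta\mapsto\partial_{g_I}\Delta_\beta\in L^2$ is therefore analytic, vanishes to order $2$ at $0$, and by Step 2 is bounded by $C\sigma_n^2$ on $|\beta|\le\beta_\mathrm{thermal}$. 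The Schwarz lemma (maximum principle applied to $\beta^{-2}\partial_{g_I}\Delta_\beta$) then gives, for $|\beta|\le\beta_\mathrm{thermal}$,
\[
\|\partial_{g_I}\Delta_\beta\|_{L^2}\le C\sigma_n^2|\beta|^2/\beta_\mathrm{thermal}^2.
\]
Plugging this into Step 1 gives the claimed bound $C\sigma_n|\beta|$.

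\emph{Main obstacle.} The main difficulty is Step 3's input: I need $\|\Delta_\beta\|_{2,s}=O(n^{-1})$ uniformly on a \emph{complex} disk, with $s>1$, so that both the derivative weight $d$ and the maximum principle are available. If the weighted estimate is only proved for real $\beta$, I would fall back on a direct computation. That would expand $\E|\Tr[\Gamma_I e^{-\beta H}]|^2$ in the two-replica Wick/polymer formalism of \Cref{sec:wick_expansion_exposition}, with $\Gamma_I$ inserted in each replica. The single inserted $\Gamma_I$ forces at least one Wick pair labelled $I$ to connect to it; this costs $\beta\sigma_n$ per replica, and Koteck\'y--Preiss would control the rest.
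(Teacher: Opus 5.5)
Your argument is correct and is essentially the paper's proof of \Cref{lem:thermal_estimates}, of which the corollary is the $s=1$ case combined with $\E|X|\le\|X\|_{L^2}$: you use the identity $D_I\Delta_\beta=\beta^2\sigma_n^2U_{\mathrm{annealed},I}$, the Hermite gradient identity plus permutation symmetry to turn $\|\Delta_\beta\|=O(n^{-1})$ into $\|D_I\Delta_\beta\|=O(n^{-3})$, and Schwarz's lemma for the order-two zero at $\beta=0$. The input you flagged is available, but from the proof of that lemma rather than its statement, since the statement gives the weight-$s$ bound only on radius $\beta_\mathrm{thermal}/s$; \Cref{eq:weighted_upgrade} upgrades the complex-disk $L^2$ bound of \Cref{lem:ordinary_replica} on radius $16\beta_\mathrm{thermal}$ to a weight-$2$ bound on $|\beta|\le 4\beta_\mathrm{thermal}$, which is enough for your argument.
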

Both of the above estimates are stated formally in \Cref{lem:thermal_estimates} and putting them
together in \Cref{eq:estimated_quantity} will control the error of the algorithm.

\begin{figure}[htbp]
\centering
\begingroup
\definecolor{mbInk}{HTML}{243748}
\definecolor{mbMuted}{HTML}{657482}
\definecolor{mbRule}{HTML}{D5DDE3}
\definecolor{mbPaper}{HTML}{F7F9FB}
\begin{tikzpicture}[
  font=\sffamily\fontsize{9}{11}\selectfont,
  text=mbInk,
  box/.style={
    draw=mbRule, fill=mbPaper, rounded corners=3pt, line width=.55pt,
    text width=.4\linewidth, minimum height=40mm,
    inner sep=6pt, align=center
  },
  title/.style={
    font=\sffamily\bfseries\fontsize{10}{12}\selectfont,
    text width=.4\linewidth, align=center,
    text height=9pt, text depth=3pt,
    anchor=north, inner sep=0pt, yshift=-6pt
  },
  body/.style={
    text width=.4\linewidth, align=center,
    anchor=center, inner sep=0pt, yshift=-4pt
  },
  arrow/.style={-{Latex[length=1.8mm,width=1.3mm]}, draw=mbMuted, line width=.8pt},
  annotation/.style={
    font=\sffamily\fontsize{8.5}{10.5}\selectfont, text=mbMuted, align=center
  },
  note/.style={
    annotation,
    text width=.13\linewidth, fill=white, inner sep=2pt
  },
  node distance=18mm and .14\linewidth
]

\node[box] (ratio) {};
\node[title] at (ratio.north) {1\quad Second-moment ratio};
\node[body] at (ratio.center) {
  $\displaystyle
    R_\beta:=
    \frac{\E_{\vec\bsg}|Z(\beta,\vec\bsg)|^2}
         {|Z_\mathrm{ann}(\beta)|^2}$\\[5pt]
  $\displaystyle R_\beta=1+\|\Delta_\beta\|_{L^2}^2$
};

\node[box, right=of ratio] (mayer) {};
\node[title] at (mayer.north) {2\quad Two replica gases};
\node[body] at (mayer.center) {
  $\displaystyle
    \E_{\vec\bsg}|Z(\beta,\vec\bsg)|^2
      =\Xi_1=e^{\mathsf{Mayer}_1}$\\[5pt]
  $\displaystyle
    |Z_\mathrm{ann}(\beta)|^2
    =\Xi_0=e^{\mathsf{Mayer}_0}$\\[5pt]
  {\color{mbMuted}\sffamily\fontsize{8.5}{10.5}\selectfont
  The two activity families agree\\
  on all non-mixed polymers.}
};

\node[box, below=of mayer] (difference) {};
\node[title] at (difference.north) {3\quad Difference of Mayer series};
\node[body] at (difference.center) {
  $\displaystyle
    R_\beta=e^{\mathsf{Mayer}_1-\mathsf{Mayer}_0}$
};

\node[box, left=of difference] (mixed) {};
\node[title] at (mixed.north) {4\quad Count mixed diagrams};
\node[body] at (mixed.center) {
  \Cref{lem:shared_counts}; \Cref{eq:global_mixed_mass}\\[5pt]
  $\displaystyle
    \begin{aligned}
	    |\mathsf{Mayer}_1-\mathsf{Mayer}_0| &\leq \sum_{\mc P\,\mathrm{mixed}}
	    |\mathsf{Act}_1(\mc P)|e^{|\supp(\mc P)|}\\
	    &\leq Cn^{-2}.
    \end{aligned}$\\[5pt]
  {\color{mbMuted}\sffamily\fontsize{8.5}{10.5}\selectfont
  Non-zero trace restricts the labels of\\
  cross-color edges, saving a factor of $n^{-2}$.}
};

\node[box, below=of mixed] (ordinary) {};
\node[title] at (ordinary.north) {5\quad Ordinary $L^2$ bound};
\node[body] at (ordinary.center) {
  \Cref{lem:ordinary_replica}\\[5pt]
  $\displaystyle
    \begin{aligned}
      \|\Delta_\beta\|_{L^2}^2
        =R_\beta-1 &\leq e^{Cn^{-2}}-1\\
        &\leq C'n^{-2}
    \end{aligned}$\\[4pt]
  {\color{mbMuted}\sffamily\fontsize{8.5}{10.5}\selectfont
  uniformly for $|\beta|\leq b$.}
};

\node[box, right=of ordinary] (weighted) {};
\node[title] at (weighted.north) {6\quad Weighted bound};
\node[body] at (weighted.center) {
  \Cref{lem:thermal_estimates}; \eqref{eq:weighted_upgrade}\\[5pt]
  $\displaystyle
    \|\Delta_\beta\|_{2,s}
      \leq Cn^{-1}\sum_{k\geq0}
        \left(\frac{s|\beta|}{b}\right)^k
      \leq 2Cn^{-1}$\\[4pt]
  {\color{mbMuted}\sffamily\fontsize{8.5}{10.5}\selectfont
  for $|\beta|\leq\beta_\mathrm{thermal}(s)=b/(16s)$.}
};

\draw[arrow] (ratio.east) --
  node[note, above] {
    \Cref{lem:component_placement}\\
    \Cref{thm:kp}
  } (mayer.west);

\draw[arrow] (mayer.south) --
  node[annotation, right] {Take the quotient}
  (difference.north);

\draw[arrow] (difference.west) --
  node[note, above] {\Cref{cor:kp_comparison}}
  (mixed.east);

\draw[arrow] (mixed.south) --
  node[annotation, right] {
    Exponentiate and subtract $1$\\
    bounds second-moment ratio
  } (ordinary.north);

\draw[arrow] (ordinary.east) --
  node[note, above] {
    Cauchy bound\\
    \Cref{eq:cauchy_lp_coefficients}
  } (weighted.west);

\end{tikzpicture}
\endgroup

\caption{Proof of the $\Delta_\beta$ bound in
\Cref{lem:thermal_estimates}. Here $s\geq1$,
$b=16\beta_\mathrm{thermal}$. The subscripts
$1$ and $0$ correspond to shared and independent disorder parameters. The Mayer series
$\mathsf{Mayer}_1$ differs only from $\mathsf{Mayer}_0$ in cross-color Wick pairs; this corresponds
to mixed polymers. The counting estimates also verify the Koteck\'y--Preiss hypotheses on the disk
$|\beta|\leq b$.}
\label{fig:delta-proof}
\end{figure}

\section{Sample efficiency at all temperatures}
\label{sec:sample_efficiency}
\label{sec:sample_all_temperatures}

In this section, we prove our sample-efficient learning result.

\samplealltemp*

Similar to~\cite{anshu2020sample}, we will use that the marginals $\mu_{\beta, I} = \Tr[\rho_\beta(\vec g)
\Gamma_I]$, for all $I \in \mc I_4$ form a \emph{sufficient statistic} for the Hamiltonian learning
problem. In their work, the authors show the following theorem that we invoke as a black box.
\begin{lemma}[Matching local marginals implies global equivalence, \cite{anshu2020sample}]
  Fix a known real $\beta\neq0$ and $\vec g,\vec h\in\R^m$. Consider two Gibbs states
  \[
    \rho_\beta(\vec g)=\frac{1}{Z(\beta,\vec g)}\cdot e^{-\beta\sigma_n\sum_I g_I\Gamma_I},
    \qquad
    \rho_\beta(\vec h)=\frac{1}{Z(\beta,\vec h)}\cdot e^{-\beta\sigma_n\sum_I h_I\Gamma_I}
  \]
  such that $\Tr[\rho_\beta(\vec g)\Gamma_I]=\Tr[\rho_\beta(\vec h)\Gamma_I]$
  for all $I\in\mc I_4$. Then $\rho_\beta(\vec g)=\rho_\beta(\vec h)$ and
  $\vec g=\vec h$.
\end{lemma}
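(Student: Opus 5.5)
The statement to prove is the lemma just above: if two Gibbs states have the same quartic expectations, then they coincide and $\vec g=\vec h$. I will prove it by the standard relative-entropy (maximum-entropy) argument.

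Write $H_{\vec g}=\sigma_n\sum_I g_I\Gamma_I$ and $H_{\vec h}=\sigma_n\sum_I h_I\Gamma_I$. Both $\rho_\beta(\vec g)$ and $\rho_\beta(\vec h)$ are full rank. Their logarithms are
\[
\log\rho_\beta(\vec g)=-\beta H_{\vec g}-\log Z(\beta,\vec g)\,\Id,
\qquad
\log\rho_\beta(\vec h)=-\beta H_{\vec h}-\log Z(\beta,\vec h)\,\Id .
\]

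\textbf{Step 1: symmetrized relative entropy.} With respect to the normalized trace, consider
\[
D(\rho_\beta(\vec g)\|\rho_\beta(\vec h))+D(\rho_\beta(\vec h)\|\rho_\beta(\vec g))
=\Tr\bigl[(\rho_\beta(\vec g)-\rho_\beta(\vec h))(\log\rho_\beta(\vec g)-\log\rho_\beta(\vec h))\bigr].
\]
The $\log Z$ terms are multiples of the identity. They drop out because $\Tr[\rho_\beta(\vec g)]=\Tr[\rho_\beta(\vec h)]=1$. What remains is
\[
-\beta\sigma_n\sum_I (g_I-h_I)\bigl(\Tr[\rho_\beta(\vec g)\Gamma_I]-\Tr[\rho_\beta(\vec h)\Gamma_I]\bigr).
\]
This vanishes term by term by the hypothesis of matching quartic marginals.

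\textbf{Step 2: the states coincide.} Each relative entropy is nonnegative, by Klein's inequality, which is valid for the normalized trace since it is a positive rescaling. Both therefore vanish, and equality in Klein's inequality forces $\rho_\beta(\vec g)=\rho_\beta(\vec h)$.

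\textbf{Step 3: the couplings coincide.} Taking logarithms of equal full-rank states gives $\beta(H_{\vec g}-H_{\vec h})=c\,\Id$ for some scalar $c$. Pair both sides with $\Gamma_J$ under the normalized trace, using $\Tr[\Gamma_I\Gamma_J]=\1_{I=J}$ and $\Tr[\Gamma_J]=0$ from \Cref{def:majorana}. This yields $\beta\sigma_n(g_J-h_J)=0$ for every $J$. Since $\beta\neq0$ and $\sigma_n>0$, we get $\vec g=\vec h$.

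\textbf{Main obstacle.} The only delicate point is Step 2: using strict positivity of the relative entropy between full-rank states, and noting that the normalized trace only rescales it. Everything else is linear algebra.

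An alternative route gives the same result and also makes the link to the earlier analysis. Strict convexity of $\Phi$ follows from \Cref{lem:hessian-bkm}, because the BKM variance of a nonzero traceless $A_{\vec u}$ is positive: the BKM weights $L(q_a,q_b)$ are positive and $A_{\vec u}$ is not a multiple of $\Id$. Strict convexity makes $\nabla\Phi$ injective. By \Cref{lem:first_derivative_identity}, $\nabla\Phi=-\beta\sigma_n\vec\mu_\beta$, so injectivity of $\nabla\Phi$ is exactly injectivity of the map from couplings to quartic marginals.
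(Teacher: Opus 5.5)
Your proof is correct. The paper gives no proof of its own and imports the lemma as a black box from \cite{anshu2020sample}, whose argument is this same symmetrized relative-entropy computation; your Step 3, using orthogonality and tracelessness of the $\Gamma_I$ under the normalized trace, supplies the extra conclusion $\vec g=\vec h$, and your alternative route (a positive-definite BKM Hessian at every point, hence injective $\nabla\Phi=-\beta\sigma_n\vec\mu_\beta$) is also valid.
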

This suggests that given the marginals $\mu_{\beta, I}$, one should solve the following maximum
entropy problem:
\begin{align*}
	\max_\rho & \quad -\Tr[\rho\log\rho]\\
	\text{s.t.}& \quad\Tr[\rho \Gamma_I] = \mu_{\beta, I}\,, \quad \text{ for all } I \in \mc I_4\\
	& \quad \rho \succeq 0\,, \quad \Tr[\rho] = 1\,.
\end{align*}
Observe that this entropy differs from the physical von Neumann entropy of $\rho/d_n$ by
the constant $-\log d_n$, which does not affect the optimizer. The optimum of this program can be shown to be the Gibbs state $\rho_\beta$ for some choice of
parameters $\vec h$. For the Hamiltonian parameter learning task, it will be more helpful to consider the dual of this program,
\[
	\vec g = \mathop{\mathrm{argmin}}_{\vec h = (h_I)_{I \in \mc I_4}} \mc L(\vec h)\,, \quad
	\mc L(\vec h) = \log Z(\beta, \vec h) +
	\beta\sigma_n \cdot \sum_{I \in \mc I_4} h_I \mu_{\beta, I}\,.
\]
In principle, we can learn the parameters of the Hamiltonian by solving this program. However, we
have access to \emph{noisy estimates} $\mu_{\beta,\mathrm{meas},I}$, rather than the true marginals above. This yields
the approximate version of the program, restricted to the known compact, convex set
$K_n:=\{\vec h\in\R^m:\|\vec h\|_2\leq2\sqrt m\}$,
\begin{equation}
  \vec g_\mathrm{meas}\in\mathop{\mathrm{argmin}}_{\vec h\in K_n}\mc L_\mathrm{meas}(\vec h),
  \qquad
  \mc L_\mathrm{meas}(\vec h)=\log Z(\beta,\vec h)
       +\beta\sigma_n\sum_{I\in\mc I_4}h_I\mu_{\beta,\mathrm{meas},I}.
  \label{eq:estimator}
\end{equation}
As discussed in the introduction, a lower bound on the Hessian of the log-partition function
implies stability of the maximum-entropy estimator, and hence an upper bound on
\(\|\vec g-\vec g_{\mathrm{meas}}\|_2\).  Thus the remaining task is to prove
a lower bound on
\[
    \nabla^2 \log Z(\beta,\vec g).
\]
 To prove strong convexity, we closely follow the structure of \cite{anshu2020sample}. We start by
showing that the Hessian of the log partition function is lower bounded by the ordinary variance.
Rather than constructing quasi-local operators as in the prior work, we show this directly
by passing through the Bogoliubov--Kubo--Mori variance; this argument is quite straightforward. Most of
the technical work is in proving the lower bound on the ordinary variance. We will use an analogue
of the \cite{arad2016connecting} result, although quite different techniques are needed to handle
the SYK model.

\subsection{Strong convexity suffices}

\begin{restatable}[Dimension-independent Hessian Lipschitz bound]{lemma}{hessianlipschitz}
\label{alltemp:lem:hessian-lipschitz}
Fix $\beta>0$ and let $\Phi(\vec h)=\log Z(\beta,\vec h)$. For all $\vec h,\vec h'\in\mathbb R^m$,
\begin{equation}\label{alltemp:eq:lipschitz}
 \|\nabla^2\Phi(\vec h)-\nabla^2\Phi(\vec h')\|_{\rm op}
 \le L_\beta\|\vec h-\vec h'\|_2,
 \qquad L_\beta=3\beta^3/4.
\end{equation}
Consequently, if $\nabla^2\Phi(\vec g)\succeq\lambda I$ for some $\lambda>0$, then
\begin{equation}\label{alltemp:eq:local-curvature}
 \nabla^2\Phi(\vec h)\succeq\frac\lambda2 I
 \quad\text{whenever}\quad
 \|\vec h-\vec g\|_2\le\frac{\lambda}{2L_\beta}.
\end{equation}
\end{restatable}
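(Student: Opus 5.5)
The plan is to bound the third derivative of $\Phi$ along arbitrary directions and then integrate along the segment from $\vec h'$ to $\vec h$. Fix unit vectors $\vec u,\vec v\in\mathbb R^m$ and set $A=A_{\vec u}=\sum_I u_I\Gamma_I$. By \Cref{alltemp:lem:elementary}, $\|\sigma_n A_{\vec v}\|_{\rm op}\le \sigma_n\sqrt{B_n}/2\le 1/2$ for every unit $\vec v$. The function $t\mapsto \vec u^\top\nabla^2\Phi(\vec h+t\vec v)\vec u$ equals $\beta^2\sigma_n^2\Var^{\BKM}_{\rho_t}(A)$ by \Cref{lem:hessian-bkm}, where $\rho_t=\rho_\beta(\vec h+t\vec v)$. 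Since the Hessian is symmetric, its operator norm is the supremum of $|\vec u^\top(\cdot)\vec u|$ over unit $\vec u$. It therefore suffices to show
\[
 \Bigl|\frac{\dd}{\dd t}\Var^{\BKM}_{\rho_t}(A)\Bigr|\le \frac{3\beta}{4\sigma_n^2}\cdot\|\sigma_n A\|_{\rm op}^2\cdot\frac{\|\sigma_nA_{\vec v}\|_{\rm op}}{(1/2)^{3}}\cdot\frac{1}{\beta^2}\cdots,
\]
or more simply that $|\partial_t \vec u^\top\nabla^2\Phi\,\vec u|\le 3\beta^3/4$. The mean value theorem along the segment then gives \eqref{alltemp:eq:lipschitz}.

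For the derivative, write $X=\beta\sigma_n A$ and $Y=\beta\sigma_n A_{\vec v}$, so $\|X\|_{\rm op},\|Y\|_{\rm op}\le\beta/2$. The quantity $\partial_t\,\vec u^\top\nabla^2\Phi\,\vec u$ is the third cumulant $\partial_a^2\partial_b \log\Tr[e^{-\beta H+aX+bY}]$ at $a=b=0$ (up to sign). Expanding the exponential with the Duhamel formula (\Cref{lem:duhamel}) expresses this cumulant as a linear combination of terms of the form $\int \Tr[\rho^{s_1}X_0\rho^{s_2}X_0\rho^{s_3}Y_0]$ over the simplex, together with lower-order products of centered means, all with $X_0=X-\Tr[\rho X]$ and similarly for $Y_0$. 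The cleanest route is to write the third cumulant as $\frac{\dd}{\dd b}\langle X_0,X_0\rangle_{\rho_b,\BKM}$. Differentiating $\rho_b^s$ via Duhamel gives a sum of three-point correlators with coefficients integrating to a bounded total. Each three-point correlator is bounded using Hölder for weighted traces, $|\Tr[\rho^{s_1}P\rho^{s_2}Q\rho^{s_3}R]|\le\|P\|_{\rm op}\|Q\|_{\rm op}\|R\|_{\rm op}$ when $s_1+s_2+s_3=1$. Centering costs at most a factor $2$ per operator, since $\|X_0\|_{\rm op}\le 2\|X\|_{\rm op}$. A sharper option is to use $\|X_0\|\le\|X\|+|\Tr\rho X|$ and to track that the centered-mean terms partially cancel.

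The main obstacle is constant bookkeeping: obtaining exactly $3/4$ rather than an $O(1)$ multiple. One term arises from $Y$ and two from the two $X$ insertions, giving a combinatorial factor of $3$. With $\|X\|,\|Y\|\le\beta/2$, a crude bound of $8\cdot(\beta/2)^3\cdot 3$ is too large. To get $3/4$, I would instead bound the third cumulant using the unit-norm choice that $\|X\|_{\rm op},\|Y\|_{\rm op}\le\beta/2$, together with the fact that the total variation of a third cumulant of observables bounded by $r$ is at most $6r^3$; this gives $6(\beta/2)^3=3\beta^3/4$. If this does not close, the conclusion holds with any $O(\beta^3)$ constant, which is all that is used downstream.

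The consequence \eqref{alltemp:eq:local-curvature} is immediate. By Weyl's inequality, $\lambda_{\min}(\nabla^2\Phi(\vec h))\ge\lambda-L_\beta\|\vec h-\vec g\|_2\ge\lambda/2$ whenever $\|\vec h-\vec g\|_2\le\lambda/(2L_\beta)$.
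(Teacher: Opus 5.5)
Your architecture matches the paper's: bound the third directional derivative of $\Phi$ uniformly, integrate along the segment, and finish with a Weyl-type shift. Reducing to diagonal forms $\vec u^\top(\cdot)\vec u$ is fine, since the Hessian difference is symmetric.

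\textbf{The gap.} The step that fixes the constant $L_\beta=3\beta^3/4$ is not proved. You assert that ``the total variation of a third cumulant of observables bounded by $r$ is at most $6r^3$.'' In this noncommutative Duhamel setting that is not a citable fact; it is exactly what must be shown. The route you actually sketch does not give it:
\begin{itemize}
\item After centering, the normalized first derivatives vanish. So $D^3\log Z[X,X,Y]$ equals the normalized third Duhamel moment of $X_0,X_0,Y_0$.
\item Bounding that moment by H\"older, as you propose, gives only $\|X_0\|_{\rm op}^2\|Y_0\|_{\rm op}\le 8(\beta/2)^3=\beta^3$. The extra factor $3$ in your crude count is avoidable, since the symmetrized simplex integral has total mass one.
\end{itemize}
As written, then, you get a Lipschitz bound with $L=\beta^3$ at best, not $3\beta^3/4$. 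The displayed inequality with the trailing $\cdots$ should be discarded.

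\textbf{The fix is not to center.} Write $\mathcal Z(\vec h)=Z(\beta,\vec h)$ and use the chain rule
\[
 D^3\Phi[\vec u,\vec v,\vec w]=\frac{\mathcal Z_{uvw}}{\mathcal Z}-\frac{\mathcal Z_{uv}\mathcal Z_w+\mathcal Z_{uw}\mathcal Z_v+\mathcal Z_{vw}\mathcal Z_u}{\mathcal Z^2}+2\frac{\mathcal Z_u\mathcal Z_v\mathcal Z_w}{\mathcal Z^3}.
\]
Its coefficients have absolute values summing to $1+3+2=6$. Each uncentered normalized derivative satisfies $|D^r\mathcal Z[C_1,\ldots,C_r]|/\mathcal Z\le\prod_j\|C_j\|_{\rm op}$, by two steps:
\begin{itemize}
\item Iterating Duhamel writes $D^r\mathcal Z$ as a sum over the $r!$ orderings $\pi$ of integrals over the simplex $\Delta_r$, which has volume $1/r!$. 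The integrand is $\Tr(e^{t_0K}C_{\pi(1)}e^{t_1K}\cdots C_{\pi(r)}e^{t_rK})$.
\item Schatten H\"older with exponents $1/t_j$ bounds each integrand by $\Tr(e^K)\prod_j\|C_j\|_{\rm op}$. This is your weighted-trace H\"older bound, applied also to the one- and two-point terms.
\end{itemize}
With insertions $C_j=-\beta\sigma_n A_{\vec u_j}$ you have $\|C_j\|_{\rm op}\le\beta/2$ for unit directions. Hence $|D^3\Phi|\le 6(\beta/2)^3=3\beta^3/4$. This is the paper's argument: \Cref{alltemp:lem:derivatives} plus the chain rule.

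You are right that any $O(\beta^3)$ constant would suffice downstream. But the lemma as stated, and \Cref{sampleall:thm:error-from-convexity}, which uses $L_\beta=3\beta^3/4$ explicitly, need this computation.
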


The proof is deferred to the appendix.

\begin{theorem}[Error bound from local strong convexity]
\label{sampleall:thm:error-from-convexity}
Fix $\beta>0$. Let $K\subseteq\mathbb R^m$ be compact and convex and let $\vec g_\star\in K$ be the true parameter. Suppose
\[
 \nabla^2\Phi(\vec g_\star)\succeq\lambda I,\qquad\lambda>0.
\]
Set $L_\beta=3\beta^3/4$, $R=\lambda/(2L_\beta)$, and let $\mu_{\beta,I}=\operatorname{Tr}(\rho_\beta(\vec g_\star)\Gamma_I)$ be the true quartic moments. For $\varepsilon>0$, suppose the estimates $\vec\mu_{\beta,\mathrm{meas}}$ satisfy
\begin{equation}\label{sampleall:eq:expectation-accuracy}
 \|\vec\mu_{\beta,\mathrm{meas}}-\vec\mu_\beta\|_2\le\delta_\varepsilon
 :=\min\Brac{\frac{\lambda\varepsilon}{8\beta\sigma_n}, \frac{\lambda^2}{16\beta\sigma_nL_\beta}}\,.
\end{equation}
Define the empirical objective
\[
 \mc L_\mathrm{meas}(\vec h)=\Phi(\vec h)+\beta\sigma_n \vec h\cdot\vec\mu_{\beta,\mathrm{meas}}.
\]
Then any $\vec g_\mathrm{meas}\in K$ satisfying
\begin{equation}\label{sampleall:eq:optimization-accuracy}
 \mc L_\mathrm{meas}(\vec g_\mathrm{meas})\le\min_{\vec h\in K}\mc L_\mathrm{meas}(\vec h)+\eta,
 \qquad 0\le\eta\le\min\{\lambda R^2/16,\lambda\varepsilon^2/16\},
\end{equation}
satisfies $\|\vec g_\mathrm{meas}-\vec g_\star\|_2\le\varepsilon$. In particular, the conclusion holds for any exact minimizer.
\end{theorem}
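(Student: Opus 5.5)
Let $\Delta=\vec\mu_{\beta,\mathrm{meas}}-\vec\mu_\beta$. The plan is a standard boundary argument built on the local curvature from \Cref{alltemp:lem:hessian-lipschitz}. The true objective $\mc L(\vec h)=\Phi(\vec h)+\beta\sigma_n\vec h\cdot\vec\mu_\beta$ has $\nabla\mc L(\vec g_\star)=0$ by \Cref{lem:first_derivative_identity}. Since $\mc L_\mathrm{meas}$ differs from $\mc L$ by the linear term $\beta\sigma_n\vec h\cdot\Delta$, it has the same Hessian. So $\mc L_\mathrm{meas}$ is $(\lambda/2)$-strongly convex on the ball $B=B(\vec g_\star,R)$, and it is convex everywhere, because $\nabla^2\Phi$ is a BKM covariance by \Cref{lem:hessian-bkm}.

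\textbf{Lower bound on the sphere.} For $\vec h\in K$ with $\|\vec h-\vec g_\star\|_2=r\le R$, Taylor's theorem with the $\lambda/2$ curvature along the segment gives
\[
\mc L_\mathrm{meas}(\vec h)-\mc L_\mathrm{meas}(\vec g_\star)\ge \beta\sigma_n(\vec h-\vec g_\star)\cdot\Delta+\tfrac{\lambda}{4}r^2\ge \tfrac\lambda4 r^2-\beta\sigma_n\|\Delta\|_2\, r .
\]
Take $r_0=\min\{\varepsilon,R\}$. The hypothesis \eqref{sampleall:eq:expectation-accuracy} gives $\beta\sigma_n\|\Delta\|_2\le \lambda r_0/8$: the first term of the minimum handles $r_0=\varepsilon$, and the second equals $\lambda R/8$, which handles $r_0=R$. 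Hence on the sphere of radius $r_0$ the gap is at least $\lambda r_0^2/8$. This exceeds $\eta$ provided $\eta<\lambda r_0^2/8$, which follows from $\eta\le\lambda\min\{R^2,\varepsilon^2\}/16$. At the boundary case of equality I would keep strict inequality by a slightly sharper constant, or note that $\lambda r_0^2/16<\lambda r_0^2/8$ strictly when $r_0>0$.

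\textbf{Exclusion of far points via convexity.} Suppose $\vec g_\mathrm{meas}\in K$ has $\|\vec g_\mathrm{meas}-\vec g_\star\|_2>r_0$. Let $\vec h$ be the point on the segment from $\vec g_\star$ to $\vec g_\mathrm{meas}$ at distance $r_0$ from $\vec g_\star$; it lies in $K$ by convexity. Convexity of $\mc L_\mathrm{meas}$ along the segment gives $\mc L_\mathrm{meas}(\vec h)\le\max\{\mc L_\mathrm{meas}(\vec g_\star),\mc L_\mathrm{meas}(\vec g_\mathrm{meas})\}$. The sphere bound shows $\mc L_\mathrm{meas}(\vec h)>\mc L_\mathrm{meas}(\vec g_\star)+\eta$, so the maximum must be $\mc L_\mathrm{meas}(\vec g_\mathrm{meas})$, giving $\mc L_\mathrm{meas}(\vec g_\mathrm{meas})\ge\mc L_\mathrm{meas}(\vec h)>\mc L_\mathrm{meas}(\vec g_\star)+\eta\ge\min_K\mc L_\mathrm{meas}+\eta$. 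This contradicts \eqref{sampleall:eq:optimization-accuracy}. Therefore $\|\vec g_\mathrm{meas}-\vec g_\star\|_2\le r_0\le\varepsilon$, and exact minimizers are the case $\eta=0$.

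\textbf{Main obstacle.} The only delicate point is that strong convexity is available only inside $B$, not globally. This is handled by comparing values only on the sphere of radius $r_0\le R$ and using plain convexity of $\Phi$ outside $B$. Beyond that, the work is just checking that the constants in $\delta_\varepsilon$ and $\eta$ line up with the factor $\lambda/4$ coming from the halved curvature.
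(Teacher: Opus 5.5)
Your proof is correct and follows the paper's argument. It uses the same three ingredients: local $(\lambda/2)$-strong convexity from the Hessian Lipschitz lemma, the Taylor lower bound $\tfrac\lambda4 r^2-\beta\sigma_n\|\vec\mu_{\beta,\mathrm{meas}}-\vec\mu_\beta\|_2\,r$, and global convexity to pull a far point back onto a sphere. The only difference is cosmetic: you run the boundary argument once at radius $r_0=\min\{\varepsilon,R\}$, using $\beta\sigma_n\delta_\varepsilon=\lambda r_0/8$, whereas the paper first rules out $\|\vec g_\mathrm{meas}-\vec g_\star\|_2\ge R$ on the sphere of radius $R$ and then applies the quadratic lower bound directly at $\vec g_\mathrm{meas}$ inside the ball to rule out distance greater than $\varepsilon$.
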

\begin{proof}
Lemma~\ref{lem:hessian-bkm} gives $\nabla^2\Phi(\vec h)\succeq0$ for every $\vec h$, so $\mc L_\mathrm{meas}$ is convex. By Lemma~\ref{alltemp:lem:hessian-lipschitz},
\[
 \nabla^2\Phi(\vec h)\succeq\frac\lambda2I
 \quad\text{whenever}\quad \|\vec h-\vec g_\star\|_2\le R.
\]
The derivative identity in Lemma~\ref{lem:first_derivative_identity} gives
\[
 \nabla \Phi(\vec g_\star)=-\beta\sigma_n\vec\mu_\beta,
 \qquad \nabla\mc L_\mathrm{meas}(\vec g_\star)=\vec\xi
 :=\beta\sigma_n(\vec\mu_{\beta,\mathrm{meas}}-\vec\mu_\beta),
 \qquad\|\vec\xi\|_2\le\beta\sigma_n\delta_\varepsilon.
\]
For $\vec v=\vec h-\vec g_\star$ with $\|\vec v\|_2\le R$, Taylor's formula with integral remainder implies
\begin{equation}\label{sampleall:eq:localobjective}
 \begin{aligned}
 \mc L_\mathrm{meas}(\vec h)-\mc L_\mathrm{meas}(\vec g_\star)
 &=\vec\xi\cdot \vec v+\int_0^1(1-t)\vec v^\top\nabla^2\Phi(\vec g_\star+t\vec v)\vec v\,dt\\
 &\ge-\beta\sigma_n\delta_\varepsilon\|\vec v\|_2+\frac\lambda4\|\vec v\|_2^2.
 \end{aligned}
\end{equation}
Since $\beta\sigma_n\delta_\varepsilon\le\lambda^2/(16L_\beta)=\lambda R/8$, every feasible point at distance $R$ has objective gap at least $\lambda R^2/8$.

Because $\vec g_\star\in K$, \eqref{sampleall:eq:optimization-accuracy} gives $\mc L_\mathrm{meas}(\vec g_\mathrm{meas})-\mc L_\mathrm{meas}(\vec g_\star)\le\eta$. Suppose $d:=\|\vec g_\mathrm{meas}-\vec g_\star\|_2\ge R$. Set $t=R/d$ and $\vec h_R=(1-t)\vec g_\star+t\vec g_\mathrm{meas}$. Then $\vec h_R\in K$ and $\|\vec h_R-\vec g_\star\|_2=R$. Convexity gives
\[
 \mc L_\mathrm{meas}(\vec h_R)-\mc L_\mathrm{meas}(\vec g_\star)
 \le t\bigl(\mc L_\mathrm{meas}(\vec g_\mathrm{meas})-\mc L_\mathrm{meas}(\vec g_\star)\bigr)
 \le t\eta\le\eta<\lambda R^2/8,
\]
contradicting the boundary gap. Therefore $d<R$. If $d>\varepsilon$, \eqref{sampleall:eq:localobjective} and $\beta\sigma_n\delta_\varepsilon\le\lambda\varepsilon/8$ give
\[
 \mc L_\mathrm{meas}(\vec g_\mathrm{meas})-\mc L_\mathrm{meas}(\vec g_\star)
 \ge\frac\lambda4d^2-\frac{\lambda\varepsilon}8d
 >\frac\lambda8d^2>\frac{\lambda\varepsilon^2}8>\eta,
\]
again a contradiction. Thus $\|\vec g_\mathrm{meas}-\vec g_\star\|_2\le\varepsilon$.
\end{proof}

\subsection{Strong convexity}
\label{sec:sample_all_convexity}

Write $B_n=\binom n3$, and let $U_A=\Gamma_A$ denote the ordered Majorana product for $A\subseteq[n]$. Each $U_A$ is unitary, and its adjoint differs from it only by a phase. All states below use the normalized trace. The following comparison turns a bound on conjugated fractional moments into a lower bound on ordinary variance.

\begin{lemma}[Fractional-moment variance comparison]\label{sampleall:lem:variance}
For any positive definite $\rho$ with $\operatorname{Tr}(\rho)=1$ and any $p>1$, define
\[
 R_{p,A}=\operatorname{Tr}\!\left((U_A\rho U_A^\dagger)^p\rho^{1-p}\right),\qquad
 \overline R_p=\frac1{16m}\sum_{|S|=4}\sum_{A\subseteq S}R_{p,A}.
\]
Then for every real coefficient vector $\vec u$ and $V=\sum_Iu_I\Gamma_I$,
\begin{equation}\label{sampleall:eq:variance}
 \operatorname{Var}_\rho(V)\ge
 \left(B_n m^p\overline R_p\right)^{-1/(p-1)}\left\lVert \vec u\right\rVert_2^2.
\end{equation}
\end{lemma}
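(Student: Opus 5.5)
The plan is to combine the exact coefficient extraction by the Majorana twirl $\mc P_S$ with a convexity step, and then use a Hölder-type inequality. That inequality should trade $\operatorname{Tr}(\rho_A V_0^2)$ for the Petz quantity $R_{p,A}$ times a power of the ordinary variance.

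\emph{Step 1 (isolating $u_S$).} Put $V_0=V-\operatorname{Tr}(\rho V)\Id$. Since $\sum_{A\subseteq S}(-1)^{|A|}=0$, the identity component is annihilated, so \eqref{eq:termextraction} gives
\[
\mc P_S(V_0)=\tfrac1{16}\textstyle\sum_{A\subseteq S}(-1)^{|A|}U_AV_0U_A^\dagger=u_S\Gamma_S .
\]
Because $\Gamma_S^2=\Id$ and $\operatorname{Tr}\rho=1$, we have $u_S^2=\operatorname{Tr}[\rho\,\mc P_S(V_0)^\dagger\mc P_S(V_0)]$. The map $X\mapsto\operatorname{Tr}[\rho X^\dagger X]$ is convex, and $\mc P_S(V_0)$ is an average of 16 terms of the form $\pm U_AV_0U_A^\dagger$. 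Hence
\[
u_S^2\le\tfrac1{16}\textstyle\sum_{A\subseteq S}\operatorname{Tr}[\rho\,U_AV_0^2U_A^\dagger]=\tfrac1{16}\sum_{A\subseteq S}\operatorname{Tr}[\rho_AV_0^2].
\]
Here $\rho_A=U_A^\dagger\rho U_A$. Since $U_A^\dagger$ differs from $U_A$ only by a phase, $\rho_A=U_A\rho U_A^\dagger$, which is the state appearing in $R_{p,A}$.

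\emph{Step 2 (Hölder step; the main obstacle).} Let $X=V_0^2\succeq0$ and let $p'=p/(p-1)$ be the conjugate exponent. I would show
\[
\operatorname{Tr}[\rho_AX]\le R_{p,A}^{1/p}\,\operatorname{Tr}[\rho X^{p'}]^{1/p'} .
\]
This inequality is the noncommutative difficulty; in the commuting case it is plain Hölder. First, write
\[
\operatorname{Tr}[\rho_AX]=\operatorname{Tr}\bigl[(\rho^{-\frac1{2p'}}\rho_A\rho^{-\frac1{2p'}})(\rho^{\frac1{2p'}}X\rho^{\frac1{2p'}})\bigr].
\]
Then apply matrix Hölder with Schatten exponents $p,p'$; this is valid for the normalized trace, since the normalization only rescales both sides consistently. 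The first factor is the sandwiched Rényi quantity, because $\tfrac1{2p'}=\tfrac{p-1}{2p}$. The Araki–Lieb–Thirring inequality bounds it by the Petz quantity $\operatorname{Tr}[\rho_A^p\rho^{1-p}]=R_{p,A}$ for $p>1$. For the second factor, apply Araki–Lieb–Thirring with $r=p'\ge1$ and $B=\rho^{1/p'}$, which gives $\operatorname{Tr}[(\rho^{\frac1{2p'}}X\rho^{\frac1{2p'}})^{p'}]\le\operatorname{Tr}[\rho X^{p'}]$.

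Next, $\operatorname{Tr}[\rho X^{p'}]\le\|X\|_{\rm op}^{p'-1}\operatorname{Tr}[\rho X]$, and $\operatorname{Tr}[\rho X]=\operatorname{Var}_\rho(V)$. By \eqref{alltemp:eq:elementaryop} we have $\|V\|_{\rm op}\le\tfrac{\sqrt{B_n}}2\|\vec u\|_2$. Since $|\operatorname{Tr}(\rho V)|\le\|V\|_{\rm op}$, this gives $\|X\|_{\rm op}\le(2\|V\|_{\rm op})^2\le B_n\|\vec u\|_2^2$. Using $(p'-1)/p'=1/p$, we obtain
\[
\operatorname{Tr}[\rho_AV_0^2]\le R_{p,A}^{1/p}\,(B_n\|\vec u\|_2^2)^{1/p}\operatorname{Var}_\rho(V)^{(p-1)/p}.
\]

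\emph{Step 3 (summing).} Sum Step 1 over all $S\in\mc I_4$. Concavity of $t\mapsto t^{1/p}$ over the $16m$ pairs $(S,A)$ gives $\sum_{S,A}R_{p,A}^{1/p}\le16m\,\overline R_p^{1/p}$, and therefore
\[
\|\vec u\|_2^2\le m\,\overline R_p^{1/p}B_n^{1/p}\|\vec u\|_2^{2/p}\operatorname{Var}_\rho(V)^{(p-1)/p}.
\]
Dividing by $\|\vec u\|_2^{2/p}$ (the case $\vec u=0$ is trivial) and raising both sides to the power $p/(p-1)$ yields
\[
\operatorname{Var}_\rho(V)\ge(B_nm^p\overline R_p)^{-1/(p-1)}\|\vec u\|_2^2,
\]
which is \eqref{sampleall:eq:variance}.
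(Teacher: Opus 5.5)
Your proposal is correct and follows the paper's proof essentially step for step. It uses the Majorana filter to isolate $u_S$, the Schatten--H\"older plus Araki--Lieb--Thirring bound $\operatorname{Tr}(\rho_A B)\le R_{p,A}^{1/p}\bigl(\operatorname{Tr}(\rho B^{q})\bigr)^{1/q}$, the operator-norm bound $\|V_0\|_{\rm op}\le\sqrt{B_n}\,\|\vec u\|_2$, and concavity of $x^{1/p}$. The one difference is cosmetic: you derive $u_S^2\le\frac1{16}\sum_A\operatorname{Tr}(\rho_AV_0^2)$ from convexity of $X\mapsto\operatorname{Tr}[\rho X^\dagger X]$, whereas the paper traces the operator identity $16\sum_jX_j^2-(\sum_jX_j)^2=\sum_{j<\ell}(X_j-X_\ell)^2\succeq0$ against $\rho$, which is the same Cauchy--Schwarz step.
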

\begin{proof}
It suffices to consider $\|\vec u\|_2=1$. Put $V_0=V-\operatorname{Tr}(\rho V)I$. For a quartet $I$, Majorana conjugation gives $U_A\Gamma_IU_A^\dagger=(-1)^{|A\cap I|}\Gamma_I$. Thus for each quartet $S$,
\[
 \frac1{16}\sum_{A\subseteq S}(-1)^{|A|+|A\cap I|}
 =\frac1{16}\prod_{a\in S}\left(1-(-1)^{\mathbf1_{\{a\in I\}}}\right)
 =\mathbf1_{\{I=S\}}.
\]
The same signed sum annihilates the identity term in $V_0$, so
\begin{equation}\label{sampleall:eq:filter}
 \frac1{16}\sum_{A\subseteq S}(-1)^{|A|}U_AV_0U_A^\dagger=u_S\Gamma_S.
\end{equation}
For any sixteen self-adjoint matrices $X_1,\ldots,X_{16}$,
\[
 16\sum_{j=1}^{16}X_j^2-\left(\sum_{j=1}^{16}X_j\right)^2
 =\sum_{j<\ell}(X_j-X_\ell)^2\succeq0.
\]
Apply this to $X_A=(-1)^{|A|}U_AV_0U_A^\dagger$, use \eqref{sampleall:eq:filter} and $\Gamma_S^2=I$, and divide by $16^2$. We obtain
\[
 u_S^2I\preceq\frac1{16}\sum_{A\subseteq S}U_AV_0^2U_A^\dagger.
\]
Taking the trace against $\rho$ yields
\[
 u_S^2\le\frac1{16}\sum_{A\subseteq S}\operatorname{Tr}(\rho_A V_0^2),
 \qquad \rho_A=U_A\rho U_A^\dagger.
\]
Here conjugation by $U_A$ and $U_A^\dagger$ is the same.

Let $q=p/(p-1)$. For any positive $B$, Lemmas~\ref{alltemp:lem:schatten-holder} and~\ref{alltemp:lem:alt} give
\begin{equation}\label{sampleall:eq:holder}
 \operatorname{Tr}(\rho_A B)\le R_{p,A}^{1/p}\bigl(\operatorname{Tr}(\rho B^q)\bigr)^{1/q}.
\end{equation}
Explicitly, apply H\"older to $X=\rho^{-1/(2q)}\rho_A\rho^{-1/(2q)}$ and $Y=\rho^{1/(2q)}B\rho^{1/(2q)}$. Applying \eqref{alltemp:eq:alt} with $(D,E,r)=(\rho_A,\rho^{-1/q},p)$ and $(B,\rho^{1/q},q)$ gives $\operatorname{Tr}(X^p)\le R_{p,A}$ and $\operatorname{Tr}(Y^q)\le\operatorname{Tr}(\rho B^q)$, respectively, using $-p/q=1-p$. All negative powers are well defined because $\rho$ has full rank.

Take $B=V_0^2$ and $v=\operatorname{Tr}(\rho B)$. By Lemma~\ref{alltemp:lem:elementary}, $\|V_0\|_{\rm op}\le2\|V\|_{\rm op}\le\sqrt{B_n}$, so $\operatorname{Tr}(\rho B^q)\le B_n^{q-1}v$. Summing the preceding bounds over $S$ and using concavity of $x^{1/p}$ gives
\[
 1\le m\overline R_p^{1/p}B_n^{1/p}v^{(p-1)/p}.
\]
Rearranging proves \eqref{sampleall:eq:variance}.
\end{proof}
This quantity is a Petz R\'enyi power \cite{petz1986quasi}; see also \cite{knill2023quantum}
for this terminology. Letting $\gamma = \rho_\beta/\tr[\rho_\beta]$ and
$\gamma_A = U_A \gamma U_A^\dagger$, it is
\[
	\mc P_p(\gamma_A \mid \gamma) \dfn \tr[\gamma_A^p \gamma^{1-p}]
	= R_{p,A}\,.
\]
We control this quantity through disorder-averaged partition functions.

The above lemma says that to get a bound on the variance, it suffices to bound $R_{p, A} =
\Tr\Brac{(U_A \rho U_A^\dagger)^p \rho ^{1-p}}$. Define
\begin{equation}\label{alltemp:eq:ZA}
	Z(\beta,\vec g) = \Tr[e^{-\beta H(\vec g)}],\qquad Z_A(\beta,s,\vec g) = \Tr\Brac{e^{-(\beta+s)H(\vec g)} e^{s
	U_A H(\vec g) U_A^\dagger}}\,.
\end{equation}

\begin{theorem}[Perturbed partition function bound]\label{alltemp:thm:quench}
For every $\beta\ge0$, every $|A|\le4$, and every real
\begin{equation}\label{alltemp:eq:srange}
 |s|\le s_\beta:=\frac1{8e(1+\beta)},
\end{equation}
one has
\begin{equation}\label{alltemp:eq:main}
 \frac{\E_{\vec \bsg} Z_A(\beta,s,\vec \bsg)}{\E_{\vec \bsg} Z(\beta,\vec \bsg)}\le\frac83.
\end{equation}
The constants and the allowed displacement are independent of $n$.
\end{theorem}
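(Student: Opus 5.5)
The plan is to recast $Z_A$ as a thermal second moment of a conjugated unitary, expand that unitary in iterated commutators, and reduce everything to the disorder-averaged commutator moments $M_{2k}$ from the overview. Set $t=s/2$ and $H=H(\vec\bsg)$, and define $X(t)=e^{-tH}U_Ae^{tH}$. Cyclicity of the trace gives
\[
 \operatorname{Tr}\!\big[e^{-\beta H}X(t)X(t)^\dagger\big]
 =\operatorname{Tr}\!\big[e^{-(\beta+2t)H}U_Ae^{2tH}U_A^\dagger\big]
 =Z_A(\beta,s,\vec\bsg).
\]
This identity holds because the $e^{-tH}$ factors recombine with $e^{-\beta H}$ and $U_A^\dagger=\pm U_A$. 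So $Z_A$ is the squared norm of $X(t)$ in the positive semidefinite form $\langle Y,W\rangle_\beta:=\operatorname{Tr}[e^{-\beta H}YW^\dagger]$. The Hadamard expansion gives
\[
 X(t)=\sum_{k\ge0}\frac{(-t)^k}{k!}C_k,\qquad C_k=\operatorname{ad}_H^k(U_A).
\]
Since $C_k$ is a finite sum of Majorana monomials with polynomial coefficients in $\vec\bsg$, the series converges absolutely for each fixed disorder.

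Next I would bring in the disorder average. Consider the Hilbert space $L^2(\vec\bsg)\otimes(\text{operators})$ with inner product $\E_{\vec\bsg}\langle\cdot,\cdot\rangle_\beta$. Minkowski's inequality in this space gives
\[
 \frac{\E_{\vec\bsg}Z_A}{\E_{\vec\bsg}Z}\le\Bigg(\sum_{k\ge0}\frac{t^k}{k!}\sqrt{M_{2k}}\Bigg)^{2},
 \qquad M_{2k}:=\frac{\E_{\vec\bsg}\operatorname{Tr}[e^{-\beta H}C_kC_k^\dagger]}{\E_{\vec\bsg}Z(\beta,\vec\bsg)}.
\]
Assuming the moment bound $M_{2k}\le[8k(1+\beta)]^{2k}$ (the content of \Cref{alltemp:lem:moments}), the $k$-th term is at most $\big(4t\cdot 2(1+\beta)\big)^k k^k/k!$. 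Using $k^k/k!\le e^k/\sqrt{2\pi k}$ and $|s|\le 1/(8e(1+\beta))$, this is at most $2^{-k}/\sqrt{2\pi k}$ for $k\ge1$. The bracket is therefore at most $1+(2\pi)^{-1/2}<1.41$, and its square is below $2<8/3$. Replacing $U_A$ by $U_A^\dagger$ or flipping the sign of $s$ changes nothing, so negative $s$ is handled by the same argument.

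The main obstacle is proving the moment bound with constants independent of $n$. I would write $C_k$ as a sum over label sequences $(I_1,\dots,I_k)$ with weight $\sigma_n^k\bsg_{I_1}\cdots\bsg_{I_k}$ times the nested commutator $[\Gamma_{I_k},[\dots,[\Gamma_{I_1},\Gamma_A]]]$. This commutator vanishes unless each $I_j$ has odd overlap with the current support, and each nonzero step enlarges the support by at most $2$. At step $j$ the support has size at most $4+2j\le 4k$ (after adjusting constants), so $I_j$ ranges over at most $4k\,B_n$ labels and the commutator has norm at most $2^k$. In $\E\operatorname{Tr}[e^{-\beta H}C_kC_k^\dagger]$ I would apply Gaussian integration by parts (Wick) to the $2k$ commutator variables. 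Each variable either pairs with another commutator variable or is contracted against $e^{-\beta H}$, and the latter produces a Duhamel insertion carrying a factor $\beta\sigma_n\Gamma_I$. A pairing costs $\sigma_n^2\cdot 4kB_n\le 4k$ by \Cref{alltemp:lem:elementary}. Contracting a variable into the exponential costs $\beta$ times the same label count. Each remaining weight is a normalized state expectation bounded by $\E Z$ after averaging, since the integration-by-parts terms reassemble into $e^{-\beta H}$ under a state. The expected count is at most $(2k)^k$ pairings, which together with the per-step costs gives $[8k(1+\beta)]^{2k}$. The delicate point is organizing the integration by parts so that contractions into $e^{-\beta H}$ do not generate uncontrolled new commutator variables. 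I would first try performing the Wick contraction only on the explicit $2k$ variables, iterating the Duhamel formula from \Cref{lem:duhamel}, and bounding the inserted operators in operator norm.
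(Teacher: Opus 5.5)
Your proof is correct. It differs from the paper in the step from the moment bound to \eqref{alltemp:eq:main}.

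The paper's route is spectral. It diagonalizes $H$ and uses $(C_k)_{ij}=(E_i-E_j)^kU_{ij}$ to write both $Z_A$ and $\operatorname{Tr}(e^{-\beta H}C_kC_k^\dagger)$ as sums with nonnegative weights $e^{-\beta E_i}|U_{ij}|^2$. It then applies $e^x\le 2\cosh x$ to discard the odd orders, giving $Z_A\le 2\sum_k \frac{s^{2k}}{(2k)!}\operatorname{Tr}(e^{-\beta H}C_kC_k^\dagger)$ pointwise in the disorder. Tonelli moves $\E_{\vec\bsg}$ inside with no further work, and $(2k)!\ge(2k/e)^{2k}$ gives $2\sum_k[4e(1+\beta)|s|]^{2k}\le 8/3$.

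You instead realize $Z_A$ as the squared Gibbs-weighted norm of $e^{-(s/2)H}U_Ae^{(s/2)H}$. You then apply the triangle inequality in the disorder-averaged Hilbert space to its Hadamard series, so the odd orders are controlled by Cauchy--Schwarz rather than dropped via $\cosh$. This avoids diagonalization and improves the constant to $(1+(2\pi)^{-1/2})^2<2$ on the same range of $s$. The one extra step is passing Minkowski through the infinite series. Apply it to partial sums and use Fatou: the partial sums converge pointwise in $\vec g$, and all quantities involved are nonnegative. The identity $U_A^\dagger=\pm U_A$ is not needed for your first display.

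Your sketch of \Cref{alltemp:lem:moments} is the paper's argument:
\begin{itemize}
\item Gaussian partial pairings are taken only on the $2k$ explicit variables (\Cref{alltemp:lem:wick}).
\item Unmatched positions become derivatives of $\operatorname{Tr}(e^{-\beta H(\vec g)}W)$, where $W$ depends on the labels but not on the disorder.
\item These derivatives are bounded pointwise by $(\beta\sigma_n)^rZ(\beta,\vec g)$ through iterated Duhamel and Schatten H\"older (\Cref{alltemp:lem:derivatives}). This is why no new commutator variables are generated.
\end{itemize}

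One correction to your count: it must be organized by the number $b$ of pairs. There are at most $\binom{2k}{2b}(2k)^b$ matchings with $b$ pairs, and each contributes at most $4^k\beta^{2k-2b}(4k)^{2k-b}$. The factor $(2k)^b$ is paid for by the saved label factor $(4k)^{-b}$, so the total is $(8k)^{2k}\sum_b\binom{2k}{2b}2^{-b}\beta^{2k-2b}\le[8k(\beta+2^{-1/2})]^{2k}$. Your global count of $(2k)^k$ matchings times the worst single-matching weight would not work. For fixed $\beta>0$ and large $k$ that worst weight is attained at $b=0$, and the product loses a factor growing like $(ck)^k$. That would destroy the summability your Minkowski step needs.
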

See the appendix for the full proof. We can get the desired bounds on the variance and show strong
convexity of the log partition function as a corollary. To relate convexity of the log partition
function to the variance, we pass through the BKM variance.

\begin{cor}[All-temperature variance lower bound]\label{sampleall:cor:hessian}
Fix $\beta>0$, set $s=s_\beta=1/[8e(1+\beta)]$ and $p=1+s/\beta$, and let $\vec \bsg$ have independent standard Gaussian coordinates. Put $H=H(\vec \bsg)$ and $\rho=\rho_\beta(\vec \bsg)$. With probability $1-O_\beta(n^{-1})$, simultaneously for every real $\vec u$,
\begin{align}
 \operatorname{Var}_\rho\!\left(\sum_Iu_I\Gamma_I\right)
 &\ge n^{-4-9\beta/s}\left\lVert \vec u\right\rVert_2^2,\label{sampleall:eq:finalvar}\\
 \nabla^2\Phi(\vec \bsg)&\succeq c_\beta n^{-r_\beta}I,
 \qquad r_\beta=8+72e\beta(1+\beta),\label{sampleall:eq:hessian}
\end{align}
where $c_\beta=6\beta^2/(1+2\beta)$.
\end{cor}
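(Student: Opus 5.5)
The plan is to reduce everything to \Cref{sampleall:lem:variance} and \Cref{alltemp:thm:quench}, using the concentration and spectral lemmas to move from annealed to quenched statements.

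\textbf{Step 1: rewrite $R_{p,A}$ as a partition-function ratio.} Fix a realization $\vec g$ and put $\rho=\rho_\beta(\vec g)=e^{-\beta H}/Z$, with $p=1+s/\beta$, so that $p\beta=\beta+s$. Then
\[
(U_A\rho U_A^\dagger)^p=U_Ae^{-(\beta+s)H}U_A^\dagger/Z^p,\qquad \rho^{1-p}=Z^{p-1}e^{sH}.
\]
Conjugating inside the trace by $U_A$, and using that conjugation by $U_A$ and $U_A^\dagger$ agree, gives
\[
R_{p,A}=\frac{\Tr\!\left[e^{-(\beta+s)H}e^{sU_AHU_A^\dagger}\right]}{Z(\beta,\vec g)}=\frac{Z_A(\beta,s,\vec g)}{Z(\beta,\vec g)}.
\]
This is exactly the ratio controlled by \Cref{alltemp:thm:quench}. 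Since $s=s_\beta$ lies in the range \eqref{alltemp:eq:srange}, that theorem gives $\E_{\vec\bsg}Z_A\le\frac83 Z_\mathrm{ann}(\beta)$ for every $A\subseteq S$ with $|S|=4$. The only delicate point here is checking the sign and ordering conventions, so that the quantity in the theorem is literally $Z_A$.

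\textbf{Step 2: pass from annealed to quenched.} Summing over the $16m$ pairs $(S,A)$ gives
\[
\E_{\vec\bsg}\sum_{S,A}Z_A\le \tfrac{8}{3}\cdot 16m\,Z_\mathrm{ann}.
\]
By Markov's inequality, with probability at least $1-1/n$ we have $\sum_{S,A}Z_A\le \tfrac83\cdot16 m n Z_\mathrm{ann}$. By \Cref{alltemp:lem:concentration}, for $n$ large enough that $\log n\ge 2\beta^2$, we also have $Z\ge Z_\mathrm{ann}/n$ with probability at least $1-1/n$. On the intersection of these events,
\[
\overline R_p\le \tfrac83 n^2.
\]
These events depend only on $\vec\bsg$, so the resulting bound holds simultaneously for all $\vec u$.

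\textbf{Step 3: plug into \Cref{sampleall:lem:variance}.} Using $B_n\le n^3/6$ and $m\le n^4/24$, we get for large $n$
\[
B_nm^p\overline R_p\le n^{3}\,n^{4p}\,n^{2}=n^{4p+5}.
\]
Raising to the power $-1/(p-1)$ gives exponent
\[
\frac{4p+5}{p-1}=4+\frac{9}{p-1}=4+\frac{9\beta}{s},
\]
which is \eqref{sampleall:eq:finalvar}.

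\textbf{Step 4: the Hessian bound.} Combine \Cref{lem:hessian-bkm} with \Cref{lem:bkm-variance}:
\[
\vec u^\top\nabla^2\Phi\,\vec u\ge \frac{\beta^2\sigma_n^2}{1+\beta\Delta_{\vec\bsg}}\Var_\rho(A_{\vec u}).
\]
By \Cref{lem:spectral_bounds}, $\Delta_{\vec\bsg}\le 2\|H\|_\op\le 2C_\mathrm{spec}n$ except with probability $e^{-\Omega(n)}$. Together with $\sigma_n^2=6/n^3$, this gives a lower bound of order
\[
\frac{6\beta^2}{(1+2\beta)n^4}\,n^{-4-9\beta/s}.
\]
The factor $C_\mathrm{spec}$ is absorbed using $1+2\beta C_\mathrm{spec}n\le C_\mathrm{spec}(1+2\beta)n$, possibly adjusting the constant or exponent slightly. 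With $9\beta/s=72e\beta(1+\beta)$, the total exponent is $8+72e\beta(1+\beta)=r_\beta$.

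A union bound over the three failure events gives overall probability $1-O_\beta(1/n)$. I expect the main obstacle to be bookkeeping rather than conceptual: making sure Step 1's identity matches the theorem's $Z_A$ exactly, and tracking constants so the stated $c_\beta$ and $r_\beta$ come out. The substantive input, \Cref{alltemp:thm:quench}, is taken as given.
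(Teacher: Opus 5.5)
Your proposal follows the paper's proof step for step. The steps are:
\begin{itemize}
\item the identity $R_{p,A}=Z_A/Z$;
\item Markov together with \Cref{alltemp:lem:concentration}, giving $\overline R_p\le\frac83n^2$ off an event of probability at most $2/n$;
\item substitution into \Cref{sampleall:lem:variance}, with exponent $(4p+5)/(p-1)=4+9\beta/s$;
\item the Hessian bound via \Cref{lem:hessian-bkm} and \Cref{lem:bkm-variance}.
\end{itemize}
In the second step you apply Markov to $\sum_{S,A}Z_A$ and then intersect with the concentration event, while the paper applies it to $\mathbf 1_{\mathcal E}\overline R_p$; the two are equivalent. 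Both need $Z_A\ge0$, which you should state: it holds because $Z_A$ is the trace of a product of two positive matrices.

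The one place your argument falls short of the statement as written is Step 4. Using \Cref{lem:spectral_bounds} as a black box only gives $\Delta\le2C_{\mathrm{spec}}n$. Your absorption $1+2\beta C_{\mathrm{spec}}n\le C_{\mathrm{spec}}(1+2\beta)n$ then produces the constant $c_\beta/C_{\mathrm{spec}}$, not the claimed $c_\beta=6\beta^2/(1+2\beta)$. The lemma's statement does not say $C_{\mathrm{spec}}\le1$, so "adjusting the constant" is a real weakening.

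The paper avoids this by using the tail estimate proved inside that lemma,
$\Pr[\|H\|_{\mathrm{op}}\ge u]\le2d_n\exp(-u^2/(2m\sigma_n^2))$, and evaluating it at $u=n$. Since $m\sigma_n^2\le n/4$, this gives $\Pr[\|H\|_{\mathrm{op}}>n]\le2\exp[-(2-\tfrac12\log2)n]$. On the complement, $\Delta\le2n$ and $1+\beta\Delta\le(1+2\beta)n$, so $\beta^2\sigma_n^2/(1+\beta\Delta)\ge c_\beta n^{-4}$. Multiplying by the variance bound gives exactly $c_\beta n^{-r_\beta}$ with $r_\beta=8+9\beta/s_\beta$. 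The extra failure probability is exponentially small, so the overall $1-O_\beta(n^{-1})$ guarantee is unchanged.
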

\begin{proof}
Cyclicity and conjugation by $U_A^\dagger$ being the same as conjugation by $U_A$ imply
\[
 R_{p,A}=\frac{\operatorname{Tr}(e^{-(\beta+s)H}e^{sU_AHU_A^\dagger})}{Z(\beta,\vec \bsg)}
 =\frac{Z_A(\beta,s,\vec \bsg)}{Z(\beta,\vec \bsg)}.
\]
Let $\mathcal E=\{Z(\beta,\vec \bsg)\ge\E_{\vec \bsg}Z(\beta,\vec \bsg)/n\}$. Theorem~\ref{alltemp:thm:quench} and positivity give
\[
 \E_{\vec \bsg}[\mathbf1_{\mathcal E}\overline R_p]\le\frac{8n}3.
\]
Markov's inequality and Lemma~\ref{alltemp:lem:concentration} imply $\overline R_p\le8n^2/3$ except with probability at most $2/n$, for $\log n\ge2\beta^2$. Using $B_n\le n^3/6$ and $m\le n^4$,
\[
 B_n m^p\overline R_p\le\frac49 n^{4p+5}\le n^{4p+5}.
\]
Lemma~\ref{sampleall:lem:variance} therefore gives
\[
 \operatorname{Var}_\rho\left(\sum_Iu_I\Gamma_I\right)
 \ge n^{-(4p+5)/(p-1)}\|\vec u\|_2^2
 =n^{-4-9\beta/s}\|\vec u\|_2^2.
\]

For $V=\sum_Iu_I\Gamma_I$ and spectral diameter $\Delta$ of $H$, Lemmas~\ref{lem:hessian-bkm} and~\ref{lem:bkm-variance} give directly
\[
 \vec u^\top\nabla^2\log Z\,\vec u
 =\beta^2\sigma_n^2\operatorname{Var}_\rho^{\rm BKM}(V)
 \ge\frac{\beta^2\sigma_n^2}{1+\beta\Delta}\operatorname{Var}_\rho(V).
\]
The tail estimate proved in Lemma~\ref{lem:spectral_bounds}, evaluated at $u=n$ and using $\sigma_n^2m\le n/4$, gives
\[
 \Pr_{\vec \bsg}[\|H\|_{\rm op}>n]\le2\exp[-(2-\tfrac12\log2)n].
\]
On the complementary event $\Delta\le2n$, and therefore
\[
 \frac{\beta^2\sigma_n^2}{1+\beta\Delta}
 \ge\frac{6\beta^2}{1+2\beta}\,n^{-4}.
\]
This proves \eqref{sampleall:eq:hessian} with $c_\beta=6\beta^2/(1+2\beta)$ and $r_\beta=8+9\beta/s_\beta$. Its failure probability is at most $2/n+2\exp[-(2-\tfrac12\log2)n]$.
\end{proof}

\subsection{Proof of main theorem}
\label{sec:sample_all_main_proof}

Finally, we prove the main theorem.

\samplealltemp*

\begin{proof}
Take the feasible set $K_n=\{\vec h\in\mathbb R^m:\|\vec h\|_2\le2\sqrt m\}$. Corollary~\ref{sampleall:cor:hessian} and the chi-squared event in the proof of Lemma~\ref{lem:spectral_bounds} imply that, with probability at least $1-2/n-e^{-\Omega(n)}-e^{-\Omega(m)}$,
\[
 \vec \bsg\in K_n,\qquad \nabla^2\Phi(\vec \bsg)\succeq\lambda I,
 \qquad \lambda=c_\beta n^{-r_\beta},\qquad
 c_\beta=\frac{6\beta^2}{1+2\beta}.
\]
Fix a realization $\vec g$ in this event. Set $L_\beta=3\beta^3/4$, $R=\lambda/(2L_\beta)$, and take the moment tolerance from Theorem~\ref{sampleall:thm:error-from-convexity}:
\[
 \delta_\varepsilon=\min\left\{\frac{\lambda\varepsilon}{8\beta\sigma_n},
                              \frac{\lambda^2}{16\beta\sigma_nL_\beta}\right\}.
\]
For sufficiently large $n$, $\delta_\varepsilon/\sqrt m\in(0,1)$. Apply Lemma~\ref{lem:measurement_complexity} with this coordinate precision and failure probability $\zeta$. Using
\[
 N=m\left\lceil\frac{2m}{\delta_\varepsilon^2}\log\frac{2m}{\zeta}\right\rceil
\]
copies, it returns empirical means $\vec\mu_{\beta,\mathrm{meas}}\in[-1,1]^m$ satisfying $\|\vec\mu_{\beta,\mathrm{meas}}-\vec\mu_\beta\|_2\le\delta_\varepsilon$ with probability at least $1-\zeta$.

Choose $0<\eta\le\min\{\lambda R^2/16,\lambda\varepsilon^2/16\}$ and compute an $\eta$-approximate minimizer over $K_n$ of
\[
 \mc L_\mathrm{meas}(\vec h)=\Phi(\vec h)+\beta\sigma_n \vec h\cdot\vec\mu_{\beta,\mathrm{meas}}.
\]
Theorem~\ref{sampleall:thm:error-from-convexity}, applied with $\vec g_\star=\vec g$, gives $\|\vec g_\mathrm{meas}-\vec g\|_2\le\varepsilon$ on the measurement-success event.

For completeness, this approximate minimization can be implemented with finite computation. Lemmas~\ref{lem:first_derivative_identity} and~\ref{alltemp:lem:elementary} give $\|\nabla \Phi(\vec h)\|_2\le\beta/2$, and $\|\vec\mu_{\beta,\mathrm{meas}}\|_2\le\sqrt m$. Thus $\mc L_\mathrm{meas}$ is globally $G_n$-Lipschitz with $G_n=\beta/2+\beta\sigma_n\sqrt m$. Evaluate $\mc L_\mathrm{meas}$ to accuracy $\eta/4$ on a finite $\eta/(4G_n)$-net of $K_n$ and return the point with the smallest computed value. Its objective value is within $3\eta/4$ of the minimum. The evaluations can be obtained from finite-dimensional matrix exponentials, with no runtime bound required. As in the usual numerical-access convention, the known $\beta$ is supplied to the precision needed for these evaluations.

It remains to bound the number of copies. Since $\lambda=c_\beta n^{-r_\beta}$, $L_\beta$ is independent of $n$, and $m^2\sigma_n^2=O(n^5)$,
\[
\begin{aligned}
 N
 &\le m+\frac{2m^2}{\delta_\varepsilon^2}\log\frac{2m}{\zeta}\\
 &\le m+C_\beta m^2\sigma_n^2
       \left(\lambda^{-4}+\lambda^{-2}\varepsilon^{-2}\right)\log\frac{2m}{\zeta}\\
 &\le C_\beta\left(n^{4r_\beta+5}+n^{2r_\beta+5}\varepsilon^{-2}\right)\log(n/\zeta).
\end{aligned}
\]
This proves \eqref{sampleall:eq:samplecomplexity}. The good-disorder event is independent of $\varepsilon$ and $\zeta$, and the measurement guarantee holds conditionally for every realization in that event.
\end{proof}

\section{Time-efficient algorithm}

The sample-efficient algorithm of \Cref{sec:sample_efficiency} reconstructs the Hamiltonian by
solving a maximum-entropy optimization problem.  While this gives polynomial
sample complexity, evaluating the objective requires computing partition
functions of candidate SYK Hamiltonians and does not directly lead to an
efficient reconstruction procedure.  In this section we instead construct
an explicit estimator directly from the measured quartic expectations.

For $I\in\mathcal I_4$, define the rescaled thermal expectation
\[
    U_{\beta,I}(\vec g)
    :=
    -\frac{\operatorname{Tr}[\Gamma_I\rho_\beta(\vec g)]}
    {\beta\sigma_n},
    \qquad
    \vec U_\beta=(U_{\beta,I})_{I\in\mathcal I_4}.
\]
The normalization is chosen so that $\vec U_\beta$ agrees with the disorder
vector $\vec g$ to first order in $\beta$.  Indeed, expanding around
$\beta=0$ gives
\begin{equation}
    U_{\beta,I}(\vec g)
    =
    g_I
    -
    \frac{\beta\sigma_n}{2}
    \sum_{J_1\triangle J_2=I}
        \chi(I;J_1,J_2)\,
        g_{J_1}g_{J_2}
    +O(\beta^2).
    \label{eq:time-U-expansion}
\end{equation}
For fixed $I$, the quadratic term contains
$N_2=\Theta(n^2)$ orthogonal Gaussian monomials.  Since
$\sigma_n=\Theta(n^{-3/2})$, its Gaussian $L^2$ norm is
\[
    \Theta_\beta(\sigma_n\sqrt{N_2})
    =
    \Theta_\beta(n^{-1/2}).
\]
Thus the naive estimator $\vec g_{\rm meas}=\vec U_\beta$ is not accurate
enough for our target error: the leading quadratic bias must be removed.

Our reconstruction proceeds in three steps.  First, we replace $\vec U_\beta$ by an explicit
polynomial proxy $\vec V_\beta(\vec g)$ of degree $K=O(\log n)$ in the Gaussian disorder variables.
Second, we exploit the sign and permutation symmetries of SYK to show that the entire Hermite
degree-$\le2$ part of the reconstruction problem is two-dimensional a spanned by linear polynomial
$\Phi_1$ and quadratic polynomial $\Phi_2$, see \Cref{eq:two_hermite_span}.  This allows us to
construct a quadratic map
\[
    \vec f_{\rm reconstruct} = c_1 \vec\Phi_1 + c_2\vec\Phi_2
\]
whose composition with $\vec V_\beta$ has exactly the desired degree-one component and no degree-two
component:
\[
    \Pi_{\leq 2} \vec f_{\rm reconstruct}(\vec V_\beta(\vec g)) = \vec g\,.
\]
Finally, we prove quantitative bounds on the higher Hermite degrees,
showing that after the quadratic bias is cancelled the first surviving
error occurs at degree three and is already of order $n^{-1}$. The rest of the section implements these three steps.  Section~\ref{eq:expproxy}
constructs the polynomial proxy and proves that it approximates
$\vec U_\beta$ in the required $L^4$ norm.  Section~\ref{eq:2dimcalibration} develops the
two-dimensional calibration and defines the reconstruction map.
Section~\ref{sec:errorfordegree3} bounds the remaining degree-$\ge3$ contribution and transfers
the resulting estimate from $\vec V_\beta$ back to the true thermal
expectations $\vec U_\beta$.

\subsection{Expectation proxy}
\label{eq:expproxy}

We formally define our expectation proxy $\vec{V}_\beta$. Rewrite $\Tr[\Gamma_I\rho_\beta]$ as
\[
  \Tr[\Gamma_I\rho_\beta(\vec g)]
    =\frac{\Tr[\Gamma_Ie^{-\beta H(\vec g)}]}{Z_\mathrm{ann}(\beta)}
       \cdot\frac{Z_\mathrm{ann}(\beta)}{Z(\beta,\vec g)}.
\]
Next, we use the Taylor approximation $1/x=1-(x-1)+O((x-1)^2)$ around $x=1$
to approximate the second term by $2-Z(\beta,\vec g)/Z_\mathrm{ann}(\beta)$.
This has the advantage of bringing the disorder vector $\vec g$ from the denominator
into the numerator, where it is more amenable to averaging.
Finally, we treat the rescaled expression as a Taylor series in $\beta$ and truncate
at order $K=O(\log n)$.

\begin{definition}[Expectation proxy $\vec{V}_\beta$]
  \label{def:expec_proxy}
  Set $K=\lceil3\log n/\log8\rceil$ and define
  \[
    \vec{V}_\beta=(V_{\beta,I})_{I\in\mc I_4},\qquad
    V_{\beta,I}(\vec g)=\left[
       -\frac{\Tr[\Gamma_Ie^{-\beta H(\vec g)}]}{\beta\sigma_n Z_\mathrm{ann}(\beta)}
        \left(2-\frac{Z(\beta,\vec g)}{Z_\mathrm{ann}(\beta)}\right)
          \right]_{\leq K,\beta}.
  \]
  The notation $[f(\beta)]_{\leq K,\beta}$ denotes truncation of the Taylor series in $\beta$ at
  order $K$. The singularity at $\beta=0$ is removable, with $V_{0,I}(\vec g)=g_I$ and thus whenever
  $\beta$ is in the zero-free region of $Z_\mathrm{ann}$, $V_{\beta, I}$ is analytic. As a
  polynomial in $\vec g$, $V_{\beta,I}$ has degree at most $K+1$; the extra one comes from division
  by $\beta$.
\end{definition}

\begin{restatable}[Thermal estimates]{lemma}{thermalestimates}
  \label{lem:thermal_estimates}
  For every fixed $s\geq1$ there are constants $\beta_\mathrm{thermal}(s)>0$ and
  $C_s<\infty$, independent of $n$, with the following properties for all sufficiently
  large $n$. The annealed partition function has no zero on a neighborhood of
  $|\beta|\leq \beta_\mathrm{thermal}(s)$. The functions
  \[
	\Delta_\beta(\vec g)=\frac{Z(\beta,\vec g)}{Z_\mathrm{ann}(\beta)}-1, \qquad
	U_{\mathrm{annealed}, I}(\beta, \vec g) = -\frac{\Tr[\Gamma_I e^{-\beta H(\vec g)}]}{\beta
	\sigma_n Z_\mathrm{ann}(\beta)}
  \]
  are analytic there in the weighted Gaussian norm, with the removable singularity
  $U_{\mathrm{annealed}, I}(0, \vec g) = g_I$, and obey
  \[
	\sup_{|\beta| \leq \beta_\mathrm{thermal}(s)}\|\Delta_\beta\|_{2,s} \leq C_s n^{-1},\qquad
	\sup_{|\beta| \leq \beta_\mathrm{thermal}(s)} \max_{I \in \mc I_4} \|U_{\mathrm{annealed},
	I}(\beta, \cdot)\|_{2,s} \leq C_s.
  \]
\end{restatable}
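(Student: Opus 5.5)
The plan follows the route sketched in \Cref{fig:delta-proof}: first prove ordinary $L^2$ bounds uniformly on a larger disk $|\beta|\le b$, then upgrade them to the weighted norm by Cauchy estimates in $\beta$. The upgrade uses the fact that the Hermite degree-$d$ component of $Z(\beta,\vec g)$ is generated only by Taylor orders $r\ge d$ in $\beta$.

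\textbf{Zero-freeness and the ordinary bound for $\Delta_\beta$.} Use the two-replica Wick expansion of \Cref{sec:wick_expansion_exposition}, with $\beta_1=\beta$ and $\beta_2=\bar\beta$. Check the Koteck\'y--Preiss criterion on $|\beta|\le b$ via the polymer counting of \Cref{sec:shared_wick_kp}. For a connected polymer with $k$ Wick pairs, each pair contributes $\sigma_n^2|\beta|^2$. The number of admissible labels attaching it to a fixed site is $O(n^3)$, and $\sigma_n^2 n^3=O(1)$, so the activity per pair is $O(|\beta|^2)$ and the criterion holds for small $b$.
\begin{enumerate}
\item For a single replica, this gives $Z_\mathrm{ann}(\beta)=\exp(\mathsf{Mayer}_0)$ with a convergent Mayer series, so $Z_\mathrm{ann}\neq0$.
\item For two replicas, \Cref{cor:kp_comparison} gives $R_\beta=e^{\mathsf{Mayer}_1-\mathsf{Mayer}_0}$. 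The difference is a sum over mixed polymers only.
\item A mixed polymer contains a cross-replica Wick pair whose label $I$ must make both normalized traces nonzero. This forces the pair's label to be tied to other labels on both sides, losing one free index per side.
\item The required estimate is therefore that the mixed mass is $O(n^{-2})$ rather than the naive $O(n^{-1})$. This yields $\|\Delta_\beta\|_{L^2}^2=R_\beta-1=O(n^{-2})$ uniformly in $|\beta|\le b$.
\end{enumerate}

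\textbf{Weighted upgrade for $\Delta_\beta$.} Write $\Delta_\beta=\sum_k\beta^k\delta_k(\vec g)$. Here $\delta_k$ is a polynomial of degree $\le k$ in $\vec g$, since the $r$-th Taylor coefficient of $Z$ has degree $r$ and $1/Z_\mathrm{ann}$ is deterministic. By Cauchy's estimate in the Hilbert space $L^2$, $\|\delta_k\|_{L^2}\le Cn^{-1}b^{-k}$. Since $\Pi_d\delta_k=0$ for $d>k$, we get $\|\delta_k\|_{2,s}\le s^k\|\delta_k\|_{L^2}$. Summing gives
\[
\|\Delta_\beta\|_{2,s}\le Cn^{-1}\sum_k (s|\beta|/b)^k\le 2Cn^{-1}
\]
for $|\beta|\le b/(2s)$. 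We then set $\beta_\mathrm{thermal}(s)$ below this radius; the figure uses $b/(16s)$. Analyticity in the weighted norm follows from the same geometric convergence.

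\textbf{The bound for $U_{\mathrm{annealed},I}$.} Run the same argument with one inserted $\Gamma_I$. Consider $\E|\Tr[\Gamma_I e^{-\beta H}]|^2/|Z_\mathrm{ann}|^2$, expanded as a two-replica gas in which each replica carries a marked insertion.
\begin{itemize}
\item The ground-state contribution vanishes unless each insertion is absorbed by a polymer touching $I$.
\item The leading absorbing configurations are a single term $g_I$ in each replica, giving $\beta^2\sigma_n^2$.
\item Koteck\'y--Preiss controls the rest, giving $\E|\Tr[\Gamma_Ie^{-\beta H}]|^2/|Z_\mathrm{ann}|^2\le C|\beta|^2\sigma_n^2$ on $|\beta|\le b$. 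Dividing by $|\beta\sigma_n|^2$ yields $\|U_{\mathrm{annealed},I}\|_{L^2}\le C$.
\end{itemize}
The singularity at $\beta=0$ is removable because the $r=0$ term vanishes. The Cauchy/degree upgrade is then identical to the one above, noting that the $\beta^k$ coefficient of $U_{\mathrm{annealed},I}$ has degree $\le k+1$. This costs an extra factor $s$, which is absorbed into $C_s$.

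\textbf{Main obstacle.} I expect the hard step to be the mixed-polymer count giving $n^{-2}$. Showing that cross-color pairs force two lost labels, uniformly over polymer size and with only $e^{O(k)}$ combinatorial growth, requires a careful tree-spanning argument for the overlap graph. I would try first to bound mixed polymers by rooting at a cross pair and charging one $n^{-1}$ to each replica's trace constraint.
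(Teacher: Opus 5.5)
Your treatment of $\Delta_\beta$ follows the paper. \Cref{lem:ordinary_replica} gives zero-freeness and $\|\Delta_\beta\|_{L^2}\le Cn^{-1}$ on $|\beta|\le b$. Your upgrade via $\|a_k\|_{2,s}\le s^k\|a_k\|_{L^2}$ and the $L^2$ Cauchy estimate is exactly \eqref{eq:weighted_upgrade}.

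One correction to your plan for the step you call the main obstacle. Both traces impose the \emph{same} constraint on the set $B$ of cross-color pairs, namely $\mathop{\triangle}_{e\in B}\mathcal A(e)=\varnothing$, since pairs internal to a trace cancel. This single constraint is worth three powers of $n$, not one per replica. It forces $|B|\ge2$ and at most $2|B|$ distinct sites among the cross labels. Hence a polymer has at most $3h-2$ distinct sites, against the $3h+1$ free entries of an unrooted tree exposure; this is the pinned-edges count \eqref{eq:shared_xor_count}. Charging $n^{-1}$ per trace against the $n^{3h+1}$ baseline would give mixed mass only $O(n^{-1})$. That reproduces the prior bound of \Cref{lem:prior_moment} and is too weak for $\|\Delta_\beta\|_{2,s}\le C_sn^{-1}$.

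For $U_{\mathrm{annealed},I}$ your route differs from the paper's. The paper never expands a gas with insertions. It uses:
\begin{itemize}
\item the identity $D_I\Delta_\beta=\beta^2\sigma_n^2U_{\mathrm{annealed},I}$ from \eqref{eq:insertion_derivative_shared};
\item the weighted gradient inequality $\sum_I\|D_IF\|_{2,s}^2\le(4s^2)^{-1}\|F\|_{2,2s}^2$ from \eqref{eq:weighted_gradient_shared};
\item permutation invariance of $\Delta_\beta$, so each $I$ receives a $1/m$ share and $\|D_I\Delta_\beta\|_{2,s}=O(n^{-1}m^{-1/2})=O(n^{-3})$;
\item the order-two zero at $\beta=0$ with a Schwarz estimate, which restores the factor $|\beta|^2$.
\end{itemize}
Dividing by $\beta^2\sigma_n^2$ then gives the bound directly in $\|\cdot\|_{2,s}$. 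This takes a few lines and no new combinatorics, but it relies on the sharp $n^{-1}$ rate; the prior $n^{-1/2}$ would give only $O(n^{1/2})$.

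Your direct two-replica estimate is more robust. It needs only $R_\beta=O(1)$ for the background, and it exhibits the leading single-cross-pair term $|\beta|^2\sigma_n^2$. But ``Koteck\'y--Preiss controls the rest'' hides real work:
\begin{itemize}
\item extract the observable-carrying polymer (nonzero traces force both insertions into one polymer);
\item bound $\Xi[\operatorname{supp}\mathcal S]/\Xi$ and $\Xi/|Z_\mathrm{ann}|^2=R_\beta$ as in \Cref{lem:observable_diagram_reduction};
\item count those polymers under $\mathop{\triangle}_{e\in B}\mathcal A(e)=I$, obtaining $n^{3h-3}$ labelings with prefactor $C^hh^{h+O(1)}$ uniformly in $h$.
\end{itemize}
Because $U_{\mathrm{annealed},I}$ is the untruncated series, the parity bound of \Cref{lem:parity_exposure} will not sum. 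Its prefactor grows like $k^{4k}$ and is absorbed in \Cref{sec:fixed_chaos_proof} only because $k\le(A+1)\log n$ there. You need the joint parity-and-connectedness argument from the pinned-edges proof in \Cref{lem:shared_counts}, adapted from symmetric difference $\varnothing$ to $I$. With that count your argument closes, and the degree-$(k+1)$ Cauchy upgrade costs only the extra factor $s$ you note.
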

The proof in \Cref{sec:weighted_thermal_proof} fixes the family as
$\beta_\mathrm{thermal}(s)=\beta_\mathrm{thermal}/s$.
Its base value is independent of the choice of $C_\star$ in the degree bounds below.

Recall that we will set $\fdec = c_1 \vec{\Phi}_1 + c_2 \vec{\Phi}_2$. In order to control both the degree $d\geq3$ contribution of $\fdec(\vec{V}_\beta)$ and show that $M_\mathrm{cal}$ is invertible, we require a bound on
$\|\Pi_d \vec{\Phi}_a (\vec{V}_\beta)\|$ for $a \in [2]$.
\begin{restatable}[Degree bounds on calibrated proxy]{lemma}{degreebound}
  \label{lem:degree_bound}
  For the absolute constant $C_\star\geq1$ fixed above, there is a constant $c'>0$,
  independent of $n$, such that, for the proxy $V_{\beta, I}$ in \Cref{def:expec_proxy}, all sufficiently large $n$,
  every $I\in\mc I_4$, every $1\leq d\leq2(K+1)$, and every complex
  $|\beta|\leq 1/(64C_\star)$,
  \begin{align*}
    \|\Pi_d\Phi_{1,I}(\vec{V}_\beta)\|_{L^2}
       &\leq c_d^{(\mathrm{deg})} n^{-(d-1)/2},\\
    \|\Pi_d\Phi_{2,I}(\vec{V}_\beta)\|_{L^2}
       &\leq c_d^{(\mathrm{deg})} n^{-(d-2)/2},
       \qquad c_d^{(\mathrm{deg})}=(c'd)^{5d/2}.
  \end{align*}
  Here $\vec{\Phi}_2$ has the normalized definition given above.
\end{restatable}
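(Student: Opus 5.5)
The plan is to expand $V_{\beta,I}$ as a double power series, in $\beta$ and in the Gaussian disorder, and to bound each Hermite component diagram by diagram. This is the same Wick bookkeeping as in \Cref{sec:wick_expansion_exposition}, but now with some legs left open rather than paired.

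\textbf{Step 1: expansion of $V_{\beta,I}$.} Write $V_{\beta,I}=\sum_{k\le K}\beta^k v_{I,k}(\vec g)$. By \Cref{def:expec_proxy}, each $v_{I,k}$ is a finite combination of products of:
\begin{itemize}
\item at most two traces $\Tr[\Gamma_I\Gamma_{J_1}\cdots\Gamma_{J_r}]$ and $\Tr[\Gamma_{J'_1}\cdots\Gamma_{J'_{r'}}]$;
\item monomials $\sigma_n^{r+r'-1}\prod g_{J_i}\prod g_{J'_j}$;
\item Taylor coefficients of $1/Z_\mathrm{ann}$ and $1/Z_\mathrm{ann}^2$.
\end{itemize}
Since $Z_\mathrm{ann}$ is zero-free on a disk of radius at least $1/(64C_\star)$ (as in \Cref{lem:prior_moment}), the coefficients of $1/Z_\mathrm{ann}$ are geometrically bounded. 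They can therefore be absorbed into $C_\star$-majorants.

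\textbf{Step 2: Hermite projection as open Wick diagrams.} Convert each monomial to the Hermite basis. A product $\prod g_{J_i}$ decomposes into Hermite polynomials by pairing some slots that carry equal labels; these internal Wick pairs contract as before. The remaining unpaired slots are the open legs, and their number is the Hermite degree $d$.

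The projection $\Pi_d v_{I,k}$ is thus a sum over diagrams with exactly $d$ open legs. Its $L^2$ norm squared equals a sum over distinct open-label multisets of the squared coefficients, since Hermite polynomials are orthonormal.

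\textbf{Step 3: counting for $\Phi_1$.} A diagram of order $r$ with $d$ open legs and $(r-d)/2$ internal pairs carries a factor $\sigma_n^{r-1}$. Each internal pair contributes a label sum which, by the trace constraint, gives at most $O(n^3)$ choices against $\sigma_n^2=O(n^{-3})$. Internal pairs are therefore $O(1)$ each, exactly as in the Koteck\'y--Preiss counts.

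For the open legs, the labels $J_1,\dots,J_d$ must satisfy a nonzero trace, which forces $I\triangle J_1\triangle\cdots\triangle J_d\triangle(\text{paired labels})=\varnothing$. Each open leg costs $\sigma_n$ in amplitude, i.e.\ $n^{-3}$ in squared norm. Meanwhile the number of admissible distinct open-label sets is about $n^{3d-1}\cdot n^{-(\ldots)}$: the first $d-1$ legs are chosen with some support overlap, and the last is essentially determined. Balancing these gives squared norm $n^{-(d-1)}$, hence $\|\Pi_d\Phi_{1,I}(\vec V_\beta)\|_{L^2}\le c_d n^{-(d-1)/2}$. The coherent sign structure does not matter because we bound absolute values via a connectedness (polymer) argument.

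The factorial factor $(c'd)^{5d/2}$ comes from the number of skeleton shapes on $d$ open legs (roughly $d!^{O(1)}$), together with the Cauchy estimate summing $k$ up to $K$. Summing over $k$ converges geometrically because $|\beta|\le 1/(64C_\star)$.

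\textbf{Step 4: $\Phi_2$.} $\Phi_{2,I}(\vec V_\beta)=\frac1{2\sqrt{N_2}}\sum_{J_1\triangle J_2=I}\chi\,V_{\beta,J_1}V_{\beta,J_2}$ is a product of two such diagram families glued through the constraint $J_1\triangle J_2=I$. The degree-$d$ part arises from $d_1+d_2\ge d$ with pairings across the two factors (product of Hermite polynomials, handled by the product formula).

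The $N_2=\Theta(n^2)$ choices of $(J_1,J_2)$, combined with the prefactor $1/\sqrt{N_2}$, give a net amplitude gain of $\sqrt{N_2}\sim n$ relative to two independent $\Phi_1$ legs. This shifts the exponent from $(d_1-1)+(d_2-1)$ to $d-2$, and a cross-pairing saves a factor exactly compensating the lost leg. I would verify this using Cauchy--Schwarz over the $(J_1,J_2)$ sum, applied to orthogonal components.

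\textbf{Main obstacle.} The main obstacle is Step 3's counting. One must show that open-leg label sets admissible under the trace constraint number at most $n^{3d-1}\cdot\text{poly}(d)^{d}$ after accounting for their interaction with internal pairs. Concretely, the open legs must be attached to a connected polymer containing $I$, so each new open leg's label overlaps existing support in at least one site. That costs $n^{3}$ rather than $n^4$, which saves exactly $n^{-1/2}$ per leg in amplitude.

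I would first try to prove this by a spanning-tree exploration of the overlap graph rooted at $I$, bounding the choices for each newly explored vertex by $4d\cdot n^3$.
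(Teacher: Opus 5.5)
Your route, expanding $V_{\beta,I}$ directly into open Wick diagrams, differs from the paper's, and as written it has two genuine gaps. The first is that the counting in your Step 3 rests on the wrong mechanism, and that step carries the lemma. A degree-$d$ Hermite coefficient has size $\sigma_n^{d-1}$: one $\sigma_n$ per open leg, times the overall $\sigma_n^{-1}$. A count of $n^3$ labels per leg from support overlap therefore gives $\sigma_n^{2(d-1)}n^{3(d-1)}=6^{d-1}$, so connectedness yields no decay at all, and your stated total $n^{3d-1}$ would even give $n^{2}$.

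To get $n^{-(d-1)}$ you need only about $n^{2(d-1)}$ admissible tuples, and that comes from parity, not overlap. Take the symmetric difference of the nonzero-trace constraints over all traces. Each Wick pair appears twice and drops out, so the open labels satisfy $B_1\triangle\cdots\triangle B_d=I$; the paper obtains this from sign covariance, $R_{\vec B}=x_{I\triangle S}R_{\vec B}$. Then \Cref{lem:parity_exposure} bounds such ordered tuples by $(2d+2)^{4d}n^{2d-2}$.

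Your claim that the open legs hang off a connected polymer containing $I$ is also false for the proxy. Since $V^{\mathrm{full}}_{\beta,I}=U_{\beta,I}(1-\Delta_\beta^2)$, the term $-U_{\beta,I}\Delta_\beta^2$ gives $V_{\beta,I}$ the Hermite coefficient $-\beta^4\sigma_n^4+O(\beta^5)$ on $g_I\,\mathsf{Herm}_2(g_J)\mathsf{Herm}_2(g_{J'})$, for $J,J'$ disjoint from $I$ and from each other. The exact $U_{\beta,I}$ contributes nothing at order $\beta^4$, because $\Gamma_I,\Gamma_J,\Gamma_{J'}$ commute and have factorizing traces. These $\sim n^8$ terms contribute at exactly the order $n^{-4}$ of the bound for $d=5$. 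A spanning-tree exploration rooted at $I$ never sees them; the parity count does.

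Second, even with the right count you need a bound $|R_{\vec B}|\lesssim d^d\sigma_n^{d-1}$ on each Hermite coefficient, uniform in $\vec B$. The justification in your Steps 1 and 3 does not give it, for two reasons:
\begin{itemize}
\item A Wick pair that forms its own component has $m$ labels against $\sigma_n^2$, so its weight is $m\sigma_n^2\approx n/4$, not $O(1)$.
\item The Taylor coefficients of $1/Z_{\mathrm{ann}}$ are not bounded uniformly in $n$, since $\log Z_{\mathrm{ann}}(\beta)=\tfrac{\beta^2}{2}m\sigma_n^2+O(\beta^4)\approx\beta^2n/8$. Zero-freeness alone cannot absorb them into $n$-independent majorants.
\end{itemize}
These two $\Theta(n)$ effects cancel only after resummation. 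The paper writes the normalized numerator as $\frac{\Xi}{\prod_cZ_{\mathrm{ann}}(\beta_c)}\sum_{\mathcal S}\mathsf{Act}(\mathcal S)\frac{\Xi[\operatorname{supp}\mathcal S]}{\Xi}$ and bounds both ratios by Koteck\'y--Preiss on a polydisk. It then uses \Cref{lem:cauchy_cutoff} to transfer that analytic bound to the truncated series, with a separate cutoff per proxy factor.

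The device that keeps this manageable is \Cref{lem:physical_shift}: bound $\E_{\vec\bsg}F(\vec\bsg+\sigma_n^{-1}\vec u)$ by $C\sigma_n^{-1}$ for all $\|\vec u\|_1\le1$. Perturbation-marked polymers, including disconnected ones like the example above, then cost $O(1)$ in total through $\sum_\ell|u_\ell|\le1$. Cauchy's estimate gives $|R_{\vec B}|\le C d^d\sigma_n^{d-1}$, and parity does the counting.

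Your Step 4 inherits both problems. In addition, Cauchy--Schwarz over $(J_1,J_2)$ ``applied to orthogonal components'' is not available, since $\Pi_d(V_{\beta,J_1}V_{\beta,J_2})$ for different pairs share Hermite monomials. The paper instead runs $n^{-1/2}\Phi_{2,I}$ through the same shift bound. There the parity of the perturbation marks fixes the observable labels, or the cut argument restricts them, so no factor $\sqrt{N_2}$ is lost.
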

The proof is given in \Cref{sec:degree_proof}.
For the reconstruction, fix
\begin{equation}
 \label{eq:algorithm_threshold}
 M=\max\{1,c^{(\mathrm{deg})}_1,c^{(\mathrm{deg})}_2\},\qquad
 \beta_\mathrm{algorithm}=\min\left\{
   \frac{1}{768MC_\star},
   \frac{\beta_\mathrm{thermal}(\sqrt{15})}{8}\right\}\,,
\end{equation}
with $c_1^{(\rm deg)}$, $c_2^{(\rm deg)}$ set by \Cref{lem:degree_bound} and will control the
entry-wise maximum of the calibration matrix. The two restrictions give calibration stability and
the Taylor-tail bound, respectively; the subsequent reconstruction lemmas use this single threshold.
The fixed weight $s=\sqrt{15}$ supplies the $L^4$, $L^8$, and $L^{16}$ bounds needed inside the
proxy proof, while the final reconstruction error is estimated in $L^2$.

\subsection{Low-degree calibration}
\label{eq:2dimcalibration}
\subsubsection{Calibration is two-dimensional}
\label{sec:calibration_two_dimensional}
In this section we show that permutation and sign covariance imply that the low-degree component
of $\fdec(\vec{V}_\beta)$ is spanned by two Hermite polynomials $\vec{\Phi}_1$ and $\vec{\Phi}_2$.

\begin{lemma}[Sign covariance of $\vec{V}_\beta$]
	\label{lem:sign_covariance}
	Let $\vec x \in \npo^n$ and for $\vec g \in \R^m$, let $(\vec x \cdot \vec g)_I = x_I g_I$, where $x_I \dfn \prod_{i \in I} x_i$. Then,
	\[
		V_{\beta, I}(\vec x \cdot \vec g) = x_I V_{\beta, I}(\vec g)\,.
	\]
\end{lemma}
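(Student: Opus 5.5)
The plan is to realize the sign flip $\vec g\mapsto\vec x\cdot\vec g$ as conjugation by a unitary on the Clifford algebra, and then push this through every ingredient of the definition of $V_{\beta,I}$.

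Fix $\vec x\in\npo^n$. The map $\psi_j\mapsto x_j\psi_j$ preserves the relations $\{\psi_i,\psi_j\}=2\delta_{ij}\Id$. It therefore extends to an automorphism $\alpha_{\vec x}$ of the algebra generated by the Majoranas, which is the full matrix algebra on the $d_n$-dimensional space. To see this concretely, let $A=\{j:x_j=-1\}$. Conjugation by $\Gamma_A$ sends $\psi_j\mapsto(-1)^{|A|-\1_{j\in A}}\psi_j$.
\begin{itemize}
\item If $|A|$ is even, this is exactly $\psi_j\mapsto x_j\psi_j$.
\item If $|A|$ is odd, it is the global negation composed with the desired map. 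Since every $\Gamma_I$ with $|I|=4$ is even, the global sign is invisible on quartic terms.
\end{itemize}
Alternatively, use $\Gamma_{[n]\setminus A}$ in the odd case. Either way we obtain a unitary $W$ with $W\Gamma_I W^\dagger=x_I\Gamma_I$ for every $I\in\mc I_4$. This is the same identity $U_A\Gamma_IU_A^\dagger=(-1)^{|A\cap I|}\Gamma_I$ already used in \Cref{sampleall:lem:variance}. Consequently
\[
H(\vec x\cdot\vec g)=\sigma_n\sum_I x_Ig_I\Gamma_I=W H(\vec g)W^\dagger .
\]

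Next, apply the functional calculus: $e^{-\beta H(\vec x\cdot\vec g)}=We^{-\beta H(\vec g)}W^\dagger$, for complex $\beta$ as well. Cyclicity of $\Tr$ then gives
\[
Z(\beta,\vec x\cdot\vec g)=Z(\beta,\vec g),
\qquad
\Tr[\Gamma_Ie^{-\beta H(\vec x\cdot\vec g)}]=\Tr[W^\dagger\Gamma_IW e^{-\beta H(\vec g)}]=x_I\Tr[\Gamma_Ie^{-\beta H(\vec g)}].
\]
The second identity uses $W^\dagger\Gamma_IW=x_I\Gamma_I$, which holds because $x_I=\pm1$. The quantity $Z_{\mathrm{ann}}(\beta)$ is deterministic and does not depend on $\vec g$. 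Hence the untruncated expression inside the brackets in \Cref{def:expec_proxy} transforms by exactly the factor $x_I$, as an identity of functions of $\beta$.

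It remains to pass through the truncation $[\cdot]_{\le K,\beta}$. Both sides are analytic in $\beta$ near $0$: the singularity at $\beta=0$ is removable and $Z_{\mathrm{ann}}(0)=1\ne0$. Taylor coefficients of equal analytic functions agree, and multiplication by the constant $x_I$ commutes with taking Taylor coefficients. So each coefficient $v_{I,k}(\vec x\cdot\vec g)=x_Iv_{I,k}(\vec g)$, and truncating at order $K$ gives the claim for every $\beta$.

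The only step requiring care is the construction of $W$ in the odd-$|A|$ case together with the removable singularity. Neither is a serious obstacle; the second is handled by working at the level of formal power series in $\beta$.
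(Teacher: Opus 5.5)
Your proposal is correct and follows the paper's proof. The paper conjugates by the Majorana product over $T_-$ when $|T_-|$ is even and over $T_+$ otherwise, which is your alternative choice $\Gamma_{[n]\setminus A}$ in the odd case. From this it derives $\Tr[\Gamma_I e^{-\beta H(\vec x\cdot\vec g)}]=x_I\Tr[\Gamma_I e^{-\beta H(\vec g)}]$ and $Z(\beta,\vec x\cdot\vec g)=Z(\beta,\vec g)$, and then passes through the $\beta$-truncation; your coefficient-by-coefficient treatment of that last step just spells out the paper's one-line remark.
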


\begin{proof}
	Fix $\vec x \in \npo^n$. Let $T_\pm = \{i : x_i = \pm1\}$ and
	\[
		T = \begin{cases}
			T_- & |T_-| \text{ even}\,,\\
			T_+ & |T_-| \text{ odd}\,.
		\end{cases}
	\]
	Define the unitary
	\[
		U_{\vec x} = \prod_{i \in T} \psi_i\,.
	\]
	Then, one can check that $U_{\vec x} \psi_i U_{\vec x}^\dagger = x_i \psi_i$. Since $U_{\vec
	x}$ is unitary, this implies that $U_{\vec x} \Gamma_I U_{\vec x}^\dagger = x_I \Gamma_I$.
	As a consequence,
	\[
		e^{-\beta H(\vec x \cdot \vec g)} = e^{-\beta U_{\vec x} H(\vec g) U_{\vec x}^\dagger}
		= U_{\vec x} e^{-\beta H(\vec g)} U_{\vec x}^\dagger\,.
	\]
	This implies that,
	\[
		\Tr[\Gamma_I e^{-\beta H(\vec x \cdot \vec g)}] = x_I \Tr[\Gamma_I e^{-\beta H(\vec g)}] \quad\text{and}\quad Z(\beta, \vec x \cdot \vec g) = Z(\beta, \vec g)\,.
	\]
	From \Cref{def:expec_proxy}, we can see that this shows the desired claim for the
	\emph{untruncated} proxy. However, since the Taylor expansion is in terms of $\beta$, this
	easily implies the same result for $\vec{V}_{\beta}$.
\end{proof}

\begin{lemma}[Permutation covariance of $\vec{V}_\beta$]
\label{lem:permutation_covariance}
	Let $\pi \in \mc S_n$ be a permutation on $[n]$ and for any subset $S \subseteq [n]$, let
	$\sgn_S(\pi)$ denote the sign needed to sort the list $(\pi(i))_{i\in S}$, with $S$ initially in increasing order. Additionally, let $(\pi
	\cdot \vec g)_{\pi(I)} = \sgn_I(\pi) g_I$. Then,
	\[
		V_{\beta, \pi(I)}(\pi \cdot \vec g) = \sgn_I(\pi) V_{\beta, I}(\vec g)\,.
	\]
\end{lemma}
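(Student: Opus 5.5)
The plan mirrors the proof of \Cref{lem:sign_covariance}: realize the permutation by a unitary on the Clifford space and push it through the untruncated proxy. The first step is to construct a unitary $W_\pi$ with
\[
	W_\pi\psi_iW_\pi^\dagger=\psi_{\pi(i)}\qquad\text{for all } i\in[n].
\]
The operators $\psi'_i:=\psi_{\pi(i)}$ satisfy the same relations $\{\psi'_i,\psi'_j\}=2\delta_{ij}\Id$. Since $n$ is even, the Clifford algebra on $n$ generators is simple and has a unique irreducible representation of dimension $d_n$. Hence $\psi_i\mapsto\psi_{\pi(i)}$ extends to an automorphism of the full matrix algebra, and any such automorphism is inner (Skolem--Noether). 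This gives $W_\pi$. Alternatively, one can build $W_\pi$ explicitly from transpositions: for $i\neq j$, the unitary $(\psi_i+\psi_j)/\sqrt2$ conjugates $\psi_i\leftrightarrow\psi_j$ and sends $\psi_k\mapsto-\psi_k$ for $k\notin\{i,j\}$. Composing with a sign-flip unitary $U_{\vec x}$ from \Cref{lem:sign_covariance} then removes those spurious signs.

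Next I compute the action on quartic terms. Write $I=\{a_1<\cdots<a_4\}$. Then
\[
	W_\pi\Gamma_IW_\pi^\dagger=\psi_{\pi(a_1)}\cdots\psi_{\pi(a_4)}.
\]
Reordering this product into increasing order of $\pi(a_j)$ uses anticommutation of distinct Majoranas, so it picks up exactly the sign of the sorting permutation. This gives
\[
	W_\pi\Gamma_IW_\pi^\dagger=\sgn_I(\pi)\Gamma_{\pi(I)}.
\]
It follows that
\[
	H(\pi\cdot\vec g)=\sigma_n\sum_I\sgn_I(\pi)g_I\Gamma_{\pi(I)}=W_\pi H(\vec g)W_\pi^\dagger,
\]
and therefore $e^{-\beta H(\pi\cdot\vec g)}=W_\pi e^{-\beta H(\vec g)}W_\pi^\dagger$.

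From here, $Z(\beta,\pi\cdot\vec g)=Z(\beta,\vec g)$ by unitary invariance of the normalized trace. Moreover,
\[
	\Tr[\Gamma_{\pi(I)}e^{-\beta H(\pi\cdot\vec g)}]=\Tr[W_\pi^\dagger\Gamma_{\pi(I)}W_\pi e^{-\beta H(\vec g)}]=\sgn_I(\pi)\Tr[\Gamma_Ie^{-\beta H(\vec g)}],
\]
using $\sgn_I(\pi)^2=1$. Substituting into \Cref{def:expec_proxy} shows that the untruncated expression transforms with the factor $\sgn_I(\pi)$. The factors $\beta\sigma_n Z_{\rm ann}(\beta)$ carry no dependence on $\vec g$. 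The map $\vec g\mapsto\pi\cdot\vec g$ does not involve $\beta$, so the identity holds coefficientwise in the $\beta$-Taylor series and survives truncation at order $K$.

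The main obstacle is the first step, namely producing $W_\pi$ cleanly in the fixed $d_n$-dimensional representation and verifying the sorting sign. For the write-up I would use the Skolem--Noether/uniqueness argument, since it needs only that $n$ is even, and do the sign bookkeeping directly from anticommutation.
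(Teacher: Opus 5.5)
Your proof is correct and is essentially the paper's argument. The paper also implements $\psi_k\mapsto\psi_{\pi(k)}$ by conjugation, building $U_\pi$ from transpositions via $U_\tau=\Psi(\psi_i-\psi_j)/\sqrt2$; there the global parity operator $\Psi\propto\psi_1\cdots\psi_n$ plays the role of your sign-flip $U_{\vec x}$. It then pushes the conjugation through the proxy exactly as you do, and your observation that the identity holds coefficientwise in $\beta$ and so survives truncation is a detail the paper leaves implicit.

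If you use Skolem--Noether instead, note that it only yields an invertible $S$ with $S\psi_iS^{-1}=\psi_{\pi(i)}$. This already suffices for all your trace identities. It can also be rescaled to a unitary, since the automorphism commutes with the adjoint and so forces $S^\dagger S$ to be a scalar.
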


\begin{proof}
	We show that the map $\Gamma_I \rightarrow \sgn_I(\pi) \Gamma_{\pi(I)}$ can be implemented
	by a unitary operator. Let $\Psi = \I^{n/2} \psi_1 \psi_2 \dots \psi_n$. Since $n$ is even,
	we have that,
	\[
		\Psi^\dagger = \Psi\,,\quad \Psi^2 = \Id\,,\quad \Psi \psi_k \Psi = - \psi_k\,.
	\]
	For any transposition $\tau = (i,j)$, define $U_\tau = \Psi \frac{\psi_i - \psi_j}{\sqrt
	2}$. One can verify that, $U_\tau \psi_k U_\tau^\dagger = \psi_{\tau(k)}$. Given a permutation
	$\pi$, write it as a composition of transpositions $\tau_1 \circ \dots \circ
	\tau_L$ and define
	\[
		U_\pi = U_{\tau_1} \dots U_{\tau_L}\,.
	\]
	Then $U_\pi\Gamma_IU_\pi^\dagger=\sgn_I(\pi)\Gamma_{\pi(I)}$. This implies
	that $e^{-\beta H(\pi \cdot \vec g)} = U_\pi e^{-\beta H(\vec g)} U_\pi^\dagger$ and using
	\Cref{def:expec_proxy} yields the desired conclusion.
\end{proof}

\begin{cor}[Covariance of calibrated $\vec{V}_\beta$]
	\label{cor:q_covariance}
	Let $\vec{Q} = c_1 \vec{\Phi}_1 + c_2 \vec{\Phi}_2$. Then, $\vec{Q}(\vec{V}_\beta)$ is also sign and permutation
	covariant.
\end{cor}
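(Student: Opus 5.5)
The plan is to check covariance of $\vec\Phi_1$ and $\vec\Phi_2$ as maps $\R^m\to\R^m$ under the two group actions, and then compose with the covariance of $\vec V_\beta$ from \Cref{lem:sign_covariance,lem:permutation_covariance}. Concretely, suppose a map $\vec F$ satisfies $F_I(\vec x\cdot\vec g)=x_IF_I(\vec g)$ and $F_{\pi(I)}(\pi\cdot\vec g)=\sgn_I(\pi)F_I(\vec g)$. Equivalently, $\vec F$ commutes with the linear actions $\vec g\mapsto\vec x\cdot\vec g$ and $\vec g\mapsto\pi\cdot\vec g$. If two maps commute with these actions, so does their composition. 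Since linear combinations of covariant maps are covariant, it then suffices to show that each $\vec\Phi_a$ commutes with both actions.

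For $\vec\Phi_1$, which is the identity map, there is nothing to prove. For signs, take a term $g_{J_1}g_{J_2}$ of $\Phi_{2,I}$ with $J_1\triangle J_2=I$. Under the sign action it picks up $x_{J_1}x_{J_2}=x_{J_1\triangle J_2}=x_I$, because $x_j^2=1$ on $J_1\cap J_2$. Hence $\Phi_{2,I}(\vec x\cdot\vec g)=x_I\Phi_{2,I}(\vec g)$.

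For permutations, we need $\Phi_{2,\pi(I)}(\pi\cdot\vec g)=\sgn_I(\pi)\Phi_{2,I}(\vec g)$. Write $(\pi\cdot\vec g)_{K}=\sgn_{\pi^{-1}(K)}(\pi)\,g_{\pi^{-1}(K)}$. Since $\pi$ preserves symmetric differences, reindexing $K_i=\pi(J_i)$ gives a bijection between pairs with $K_1\triangle K_2=\pi(I)$ and pairs with $J_1\triangle J_2=I$. After reindexing, we need
\[
\chi(\pi(I);\pi(J_1),\pi(J_2))\,\sgn_{J_1}(\pi)\sgn_{J_2}(\pi)=\sgn_I(\pi)\,\chi(I;J_1,J_2).
\]
To get this, use the unitary $U_\pi$ from \Cref{lem:permutation_covariance}, which satisfies $U_\pi\Gamma_SU_\pi^\dagger=\sgn_S(\pi)\Gamma_{\pi(S)}$ for every $S$. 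Cyclicity of the normalized trace then gives
\[
\Tr[\Gamma_{\pi(I)}\Gamma_{\pi(J_1)}\Gamma_{\pi(J_2)}]=\sgn_I(\pi)\sgn_{J_1}(\pi)\sgn_{J_2}(\pi)\,\Tr[\Gamma_I\Gamma_{J_1}\Gamma_{J_2}].
\]
Taking signs, and using that the trace is nonzero exactly on the relevant pairs, yields the identity.

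The main point needing care is the sign bookkeeping. The conjugation formula is stated for quartic $\Gamma_I$, but here we apply it with the same sorting-sign convention to the sets $J_i$ and $I$. It also holds for general ordered products, since conjugation is multiplicative on the $\psi_k$. We also must check that the normalization $1/(2\sqrt{N_2})$ is the same constant for every $I$; by symmetry, $N_2$ does not depend on $I$.

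Given covariance of each $\vec\Phi_a$, we have $\vec Q(\vec V_\beta(\vec x\cdot\vec g))=\vec Q(\vec x\cdot\vec V_\beta(\vec g))=\vec x\cdot\vec Q(\vec V_\beta(\vec g))$, and the same holds for $\pi$. This proves the corollary.
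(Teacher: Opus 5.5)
Your proposal is correct and matches the paper's proof. $\vec\Phi_1$ is trivial, sign covariance of $\vec\Phi_2$ follows from $x_{J_1}x_{J_2}=x_I$, and permutation covariance rests on the same identity for $\chi$, obtained by conjugating $\Gamma_I\Gamma_{J_1}\Gamma_{J_2}$ by $U_\pi$ and using cyclicity of the trace. The only difference is organizational: you show $\vec\Phi_2$ commutes with the two linear actions and then compose with the covariance of $\vec V_\beta$, while the paper substitutes $\vec V_\beta$ directly into $\Phi_{2,I}$.
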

\begin{proof}
Since $\vec{\Phi}_1=\Id$, this does not affect the sign or permutation covariance of $\vec{V}_\beta$.
For $\vec{\Phi}_2$, we have
\begin{align}
  \Phi_{2,I}(\vec{V}_\beta(\vec x\cdot\vec g))
    &=\frac1{2\sqrt{N_2}}\sum_{J_1\triangle J_2=I}
       \chi(I;J_1,J_2)V_{\beta,J_1}(\vec x\cdot\vec g)V_{\beta,J_2}(\vec x\cdot\vec g)\\
    &=\frac1{2\sqrt{N_2}}\sum_{J_1\triangle J_2=I}
       \chi(I;J_1,J_2)x_{J_1}x_{J_2}V_{\beta,J_1}(\vec g)V_{\beta,J_2}(\vec g)
      =x_I\Phi_{2,I}(\vec{V}_\beta(\vec g)),
\end{align}
proving sign covariance.

For permutation covariance, conjugating the three Clifford monomials by $U_\pi$
and taking the trace gives
\begin{equation}
  \chi(\pi(I);\pi(J_1),\pi(J_2))\sgn_{J_1}(\pi)\sgn_{J_2}(\pi)
      =\sgn_I(\pi)\chi(I;J_1,J_2).
  \label{eq:chi-covariance}
\end{equation}
Consequently,
\begin{align*}
  \Phi_{2,\pi(I)}(\vec{V}_\beta(\pi\cdot\vec g))
    &=\frac1{2\sqrt{N_2}}\sum_{J_1\triangle J_2=I}
       \chi(\pi(I);\pi(J_1),\pi(J_2))
       V_{\beta,\pi(J_1)}(\pi\cdot\vec g)V_{\beta,\pi(J_2)}(\pi\cdot\vec g)\\
    &=\frac1{2\sqrt{N_2}}\sum_{J_1\triangle J_2=I}
       \chi(\pi(I);\pi(J_1),\pi(J_2))\sgn_{J_1}(\pi)\sgn_{J_2}(\pi)
       V_{\beta,J_1}(\vec g)V_{\beta,J_2}(\vec g)\\
    &=\sgn_I(\pi)\Phi_{2,I}(\vec{V}_\beta(\vec g)).
\end{align*}
\end{proof}

Moreover, any function which is sign and permutation covariant has a particularly simple
decomposition over its degree $\leq 2$ component.
\begin{lemma}[Degree-$\leq 2$ characterization]
	Let $\Pi_{\leq 2}$ be the projector onto the span of degree at most $2$ Hermite polynomials.
	Let $\vec{F}:\R^{\mc I_4}\to\C^{\mc I_4}$ have finite Gaussian $L^2$ norm and be
sign and permutation covariant, in the sense of \Cref{lem:sign_covariance,lem:permutation_covariance}.
Then,
	\begin{equation}
		\label{eq:canonical_form}
		\Pi_{\leq 2} \vec{F} = c^{\tout}_1 \vec{\Phi}_1 + c^{\tout}_2 \vec{\Phi}_2\,,
	\end{equation}
	where
	\[
		\vec{\Phi}_1(\vec g) = \vec g \quad \text{and} \quad \Phi_{2,I}(\vec g) =\frac1{2\sqrt{N_2}}
		\sum_{\substack{J_1, J_2 \in \mc I_4\\ J_1 \triangle J_2 = I}} \chi(I; J_1, J_2)
		g_{J_1} g_{J_2}\,,
	\]
	and $\chi(I; J_1, J_2) = \mathsf{sign}(\Tr[\Gamma_I \Gamma_{J_1} \Gamma_{J_2}])$.
\end{lemma}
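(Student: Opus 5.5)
We expand each coordinate $F_I$ in the multivariate Hermite basis and use the symmetries to kill or tie together coefficients in degrees $0,1,2$. The first step is to check that $\Pi_{\le 2}$ commutes with the two group actions. The maps $\vec g\mapsto\vec x\cdot\vec g$ and $\vec g\mapsto\pi\cdot\vec g$ are signed coordinate permutations. They therefore preserve the standard Gaussian measure and send each $\mathsf{Herm}_{\vec d}$ to $\pm\mathsf{Herm}_{\vec d'}$ with $|\vec d'|=|\vec d|$. Hence $\Pi_{\le2}\vec F$ inherits the covariance of $\vec F$, and it suffices to classify covariant vector-valued polynomials spanned by Hermite polynomials of degree at most $2$. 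Write
\[
F_I=a_I+\sum_J b_{I,J}g_J+\sum_{\{J_1,J_2\}} e_{I,J_1J_2}\,\mathsf{Herm}(g_{J_1},g_{J_2})+\cdots,
\]
where $\mathsf{Herm}(g_{J_1},g_{J_2})$ denotes $g_{J_1}g_{J_2}$ for $J_1\ne J_2$ and $\mathsf{Herm}_2(g_J)$ on the diagonal.

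Next we apply sign covariance. Under $\vec x$, the monomial $\mathsf{Herm}_{\vec d}$ picks up the factor $\prod_J x_J^{d_J}$, while $F_I$ must pick up $x_I$. Comparing coefficients, a nonzero coefficient forces $\prod_J x_J^{d_J}=x_I$ for all $\vec x\in\{\pm1\}^n$. This is equivalent to an identity of characters of $\{\pm1\}^n$, namely that the symmetric difference of the multiset of supports equals $I$. The cases are:
\begin{itemize}
\item[(i)] Degree $0$: this would need $I=\varnothing$, which is impossible, so $a_I=0$.
\item[(ii)] Degree $1$: $b_{I,J}\neq0$ forces $J=I$.
\item[(iii)] Diagonal degree $2$: $\mathsf{Herm}_2(g_J)$ has trivial character, so these terms vanish.
\item[(iv)] Off-diagonal degree $2$: $e_{I,J_1J_2}\neq0$ forces $J_1\triangle J_2=I$.
\end{itemize}
Thus $\Pi_{\le2}F_I=b_I g_I+\sum_{J_1\triangle J_2=I}e_{I,J_1J_2}g_{J_1}g_{J_2}$.

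Then we use permutation covariance to make these coefficients independent of the indices. Comparing coefficients of $g_{\pi(I)}$ in $F_{\pi(I)}(\pi\cdot\vec g)=\sgn_I(\pi)F_I(\vec g)$ gives $b_{\pi(I)}\sgn_I(\pi)=\sgn_I(\pi)b_I$. Since $\mathcal S_n$ acts transitively on $\mc I_4$, we get $b_I\equiv c_1$. For the quadratic part, the same comparison gives
\[
e_{\pi(I),\pi(J_1)\pi(J_2)}\,\sgn_{J_1}(\pi)\sgn_{J_2}(\pi)=\sgn_I(\pi)\,e_{I,J_1J_2}.
\]
By \eqref{eq:chi-covariance}, the quantity $e_{I,J_1J_2}\chi(I;J_1,J_2)$ is therefore invariant under simultaneous relabeling of the triple $(I,J_1,J_2)$. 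Here $\chi\in\{\pm1\}$ is well defined because $\Gamma_{J_1}\Gamma_{J_2}=\pm\Gamma_{I}$ (up to a phase) when $J_1\triangle J_2=I$, so the trace is nonzero.

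The main point to verify is that $\mathcal S_n$ acts transitively on the unordered triples $\{I,\{J_1,J_2\}\}$ with $J_1\triangle J_2=I$, $|I|=|J_1|=|J_2|=4$. Such a triple is determined by the set $P=J_1\cap J_2$ with $|P|=2$, $P\cap I=\varnothing$, together with a split of $I$ into two pairs. Any two such configurations are related by a permutation, so transitivity holds. Consequently $e_{I,J_1J_2}=\tilde c\,\chi(I;J_1,J_2)$ for a single constant $\tilde c$, and absorbing the normalization gives $c_2=2\sqrt{N_2}\,\tilde c$. One more check is needed. The stabilizer of a configuration (for example the swap $J_1\leftrightarrow J_2$, or permutations inside $P$ or inside the pairs of $I$) might force $\tilde c=0$ through a sign inconsistency. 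This does not happen, because $\chi$ itself satisfies \eqref{eq:chi-covariance} and so provides a nonzero invariant solution; in any case, a forced $\tilde c=0$ would still be of the form \eqref{eq:canonical_form}. The same argument applies to the real and imaginary parts separately, since complex coefficients just give complex $c^{\tout}_1,c^{\tout}_2$. This yields \eqref{eq:canonical_form}.
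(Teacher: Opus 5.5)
Your proof is correct and follows essentially the same route as the paper's. You commute $\Pi_{\le 2}$ with the signed coordinate permutations, use sign covariance to kill every Hermite mode except $g_I$ and the products $g_{J_1}g_{J_2}$ with $J_1\triangle J_2=I$, and then use permutation covariance together with \eqref{eq:chi-covariance} and transitivity to make the coefficients constant. The only difference is that you apply transitivity once, on the set of configurations $(I,\{J_1,J_2\})$, whereas the paper splits it into two steps (the stabilizer of $I$, then across different $I$); the two are equivalent.

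There is also a trivial bookkeeping slip. With your sum over unordered pairs, and the paper's convention that $N_2$ counts unordered pairs, the constant is $c_2=\sqrt{N_2}\,\tilde c$ rather than $2\sqrt{N_2}\,\tilde c$. This does not affect the statement.
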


\begin{proof}
	We consider multivariate Hermite polynomials of degree at most $2$. We list the possible
	options below, along with their sign properties.
	\begin{align*}
		& p_0(\vec g) = 1 & &p_0(\vec x \cdot \vec g) = p_0(\vec g)\\
		& p^{(1)}_I(\vec g) = \mathsf{Herm}_1(g_I) & &p^{(1)}_I(\vec x \cdot \vec g) = x_I
		p^{(1)}_I(\vec g)\\
		& p_{I,J}(\vec g) = \mathsf{Herm}_1(g_I) \mathsf{Herm}_1(g_J) &
		&p_{I,J}(\vec x \cdot \vec g) = x_{I \triangle J} p_{I, J}(\vec g)\\
		& p^{(2)}_I(\vec g) = \mathsf{Herm}_2(g_I) & &p^{(2)}_I(\vec x \cdot \vec g) =
		p^{(2)}_I(\vec g)\\
	\end{align*}
	In general, we may write,
	\[
  \Pi_{\leq2}F_{I^\star}=c_0^{I^\star}p_0
     +\sum_Ic_I^{I^\star,1}p_I^{(1)}
     +\sum_Jc_J^{I^\star,2}p_J^{(2)}
     +\sum_{I\ne J}b_{I,J}^{I^\star}p_{I,J},
\]
where the last coefficients are chosen symmetric in $I,J$. The Gaussian law is invariant
under the signed coordinate permutations in question, so its degree projectors commute
with these actions.
	By the assumed sign covariance we must have $F_{I^\star}(\vec x \cdot \vec g) =
	x_{I^\star} F_{I^\star}(\vec g)$ for every $\vec x$; comparing with the second column
	above, we see that all components except for $p^{(1)}_{I^\star}$ and $p_{I,J}$ with $I
	\triangle J = I^\star$ are zero. Thus,
	\[
		\Pi_{\leq 2} F_{I^\star} = c_{I^\star} p^{(1)}_{I^\star} +\frac1{2\sqrt{N_2}}\sum_{I\triangle J=I^\star} c^{I^\star}_{I, J} \chi(I^\star; I, J) p_{I, J}\,.
	\]
	To get our claimed form \Cref{eq:canonical_form}, we need to show for all $I^\star, I,
	J$, that $c_{I^\star} = c^{\tout}_1$ and $c^{I^\star}_{I, J} = c^{\tout}_2$.

	By permutation covariance, we have that $F_{\pi(I^\star)}(\pi \cdot \vec g) =
	\sgn_{I^\star}(\pi) F_{I^\star}(\vec g)$. At degree one, this implies
	\[
		c_{\pi(I^\star)} p^{(1)}_{\pi(I^\star)}(\pi \cdot \vec g) = \sgn_{I^\star}(\pi)
		c_{I^\star} p^{(1)}_{I^\star}(\vec g)\,.
	\]
	But by construction $p^{(1)}_{\pi(I^\star)}(\pi \cdot \vec g) = \sgn_{I^\star}(\pi)g_{I^\star} =
	\sgn_{I^\star}(\pi) p^{(1)}_{I^\star}(\vec g)$. Cancelling like terms on both sides yields that
	$c_{\pi(I^\star)} = c_{I^\star}$.

	At degree two, we can write,
	\begin{align}
  F^{(2)}_{\pi(I^\star)}(\pi\cdot\vec g)
    &=\frac1{2\sqrt{N_2}}\sum_{I'\triangle J'=\pi(I^\star)}
       c_{I',J'}^{\pi(I^\star)}\chi(\pi(I^\star);I',J')
           (\pi\cdot\vec g)_{I'}(\pi\cdot\vec g)_{J'}\\
    &=\frac1{2\sqrt{N_2}}\sum_{I\triangle J=I^\star}
       c_{\pi(I),\pi(J)}^{\pi(I^\star)}
       \chi(\pi(I^\star);\pi(I),\pi(J))\sgn_I(\pi)\sgn_J(\pi)g_Ig_J\\
    &=\frac1{2\sqrt{N_2}}\sum_{I\triangle J=I^\star}
       c_{\pi(I),\pi(J)}^{\pi(I^\star)}
       \sgn_{I^\star}(\pi)\chi(I^\star;I,J)g_Ig_J.
\end{align}
	Comparing the above expression term by term with the corresponding equation for
	$\sgn_{I^\star}(\pi) F^{(2)}_{I^\star}(\vec g)$ yields that
	$c^{\pi(I^\star)}_{\pi(I), \pi(J)} = c^{I^\star}_{I, J}$. Now, for any pair $I \triangle J =
	I' \triangle J' = I^\star$, there exists some permutation $\pi$ satisfying $\pi(I^\star) =
	I^\star$ yet $\pi(I) = I'$ and $\pi(J) = J'$. Indeed, a permutation fixing $I^\star$ can send one ordered partition of
$I^\star$ into two pairs to the other, and a permutation of its complement can send
$I\cap J$ to $I'\cap J'$. Thus the stabilizer is transitive on these pairs. But then,
	\[
		c^{I^\star}_{I', J'} = c^{\pi(I^\star)}_{\pi(I), \pi(J)} = c^{I^\star}_{I, J}\,,
	\]
	which implies that $c^{I^\star}_{I,J} = c^{I^\star}$ for all $I, J$.

	Finally, for any $I^\star, J^\star$, there is a permutation $\pi$ such that $\pi(I^\star) =
	J^\star$. Then,
	\[
		F^{(2)}_{\pi(I^\star)}(\pi \cdot \vec g) = \sgn_{I^\star}(\pi)
		F^{(2)}_{I^\star}(\vec g) = F^{(2)}_{J^\star}(\pi \cdot \vec g)
	\]
	Expanding the second and third terms and comparing them term-wise yields that $c^{I^\star} =
	c^{J^\star}$, and thus $c^{I^\star} = c^{\tout}_2$, as desired.
\end{proof}

Given this canonical form, we describe the low-degree calibration procedure. By
\Cref{cor:q_covariance}, $\vec{Q}(\vec{V}_\beta)$ is sign and permutation covariant, so the above lemma yields,
\[
	\Pi_{\leq 2} \vec{Q}(\vec{V}_\beta) = c^{\tout}_1 \vec{\Phi}_1 + c^{\tout}_2 \vec{\Phi}_2\,.
\]
The goal is to pick $\vec{Q} = c_1 \vec{\Phi}_1 + c_2 \vec{\Phi}_2$ so that $c^{\tout}_1 = 1$ and $c^{\tout}_2 =
0$; if this were possible, this would imply that,
\[
	\Pi_{\leq 2} \vec{Q}(\vec{V}_\beta(\vec g)) = \vec g\,,
\]
giving us recovery of the model parameters $\vec g$ (supposing the high-degree component was
negligible). Notice that $\vec{Q}$ is a linear combination of $\vec{\Phi}_1, \vec{\Phi}_2$, and consequently for any
$I \in \mc I_4$, $Q_I$ is a linear combination of $\Phi_{1, I}, \Phi_{2, I}$. Fixing any $I$, we can
write this transformation as a $2\times2$ calibration matrix (all functions inside the expectation are evaluated at $\vec \bsg$),
\[
	M_\mathrm{cal}(\beta) = \begin{bmatrix} \E_{\vec \bsg}[ \Phi_{1, I} \cdot \Pi_{\leq 2} \Phi_{1,
		I}(\vec{V}_\beta)] & \E_{\vec \bsg}[ \Phi_{1, I} \cdot \Pi_{\leq 2} \Phi_{2, I}(\vec{V}_\beta)]\\
		\E_{\vec \bsg}[ \Phi_{2, I} \cdot \Pi_{\leq 2} \Phi_{1, I}(\vec{V}_\beta)] &
		\E_{\vec \bsg}[ \Phi_{2, I} \cdot \Pi_{\leq 2} \Phi_{2, I}(\vec{V}_\beta)]\\
		\end{bmatrix} =
		\begin{bmatrix} m_{1,1}(\beta) & m_{1,2}(\beta) \\ m_{2,1}(\beta) & m_{2,2}(\beta)
	\end{bmatrix}\,.
\]
We will write $m_{i,j} = m_{i,j}(\beta)$ when the context is clear. Notice that the matrix entries
$m_{i,j}$ do not depend on the instance $\vec g$ and therefore can be computed ahead of time. Then,
the ideal solution is to set
\begin{equation}
	\label{eq:ideal_solution}
	\begin{bmatrix} c_1 \\ c_2 \end{bmatrix} = M_\mathrm{cal}^{-1}
	\begin{bmatrix} 1 \\ 0\end{bmatrix} = \frac 1 {\det M_\mathrm{cal}} \begin{bmatrix}
	m_{2,2} & -m_{1,2} \\ - m_{2,1} & m_{1,1}\end{bmatrix} \begin{bmatrix} 1 \\ 0
	\end{bmatrix}\,,
\end{equation}
which requires that $M_\mathrm{cal}$ be invertible.

\subsubsection{Invertibility of the calibration matrix}

We show that $M_\mathrm{cal}$ is invertible and obtain bounds on the coefficients $c_1, c_2$.
For ease of presentation, we define a rescaled matrix,
\[
	\tilde M = \begin{bmatrix} m_{1,1} & \frac 1 {\sqrt n} m_{1,2}\\
		\sqrt n m_{2,1} & m_{2,2}
	\end{bmatrix}\,,
\]
chosen so that each entry can be bounded by a constant $c$. Notice that $\det[\tilde M] =
\det[M_\mathrm{cal}]$ so it suffices to show invertibility of $\tilde M$.

At infinite temperature, $\vec{V}_0=\vec g$ and no calibration is necessary; using coordinatewise
orthonormality of $\vec{\Phi}_1, \vec{\Phi}_2$, we see that this corresponds to $\tilde M(0)=\Id$.
For sufficiently small constant $\beta>0$, Cauchy's integral formula in
\Cref{lem:cauchy_integral_formula} shows that every entry of
$\tilde M-\tilde M(0)$ is small, yielding invertibility.

\begin{lemma}[Invertibility of $M_\mathrm{cal}$]
	\label{lem:invertibility}
 For all sufficiently large $n$, every complex
 $|\beta|\leq\beta_\mathrm{algorithm}$, and $i,j\in[2]$,
	\[
		|\tilde m_{i,j} - \tilde m_{i,j}(0)| \leq \frac 1 6 \quad
		\text{ and } \quad |\det[\tilde M]| \geq \frac 2 3\,.
	\]
\end{lemma}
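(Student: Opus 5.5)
The plan is to view each rescaled entry $\tilde m_{i,j}(\beta)$ as an analytic function of $\beta$ on a disk and bound its derivatives by Cauchy's estimate. By \Cref{def:expec_proxy}, $\vec V_\beta$ is a polynomial in $\vec g$ whose coefficients are analytic in $\beta$ on the zero-free region of $Z_\mathrm{ann}$. The entries are
\[
 m_{i,j}(\beta)=\E_{\vec\bsg}\bigl[\Phi_{i,I}\,\Pi_{\le2}\Phi_{j,I}(\vec V_\beta)\bigr],
\]
which are finite sums of Gaussian moments of such polynomials. Hence each $\tilde m_{i,j}$ is analytic in $\beta$ on the disk $|\beta|\le 1/(64C_\star)$, where \Cref{lem:degree_bound} applies. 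At $\beta=0$ we have $\vec V_0=\vec g$, so $\tilde M(0)=\Id$ by coordinatewise orthonormality of $\Phi_{1,I}$ and $\Phi_{2,I}$. (The orthogonality $\E[\Phi_{1,I}\Phi_{2,I}]=0$ holds because the two functions have different Hermite degrees.)

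First I obtain uniform bounds on this larger disk. By Cauchy--Schwarz and $\|\Phi_{i,I}\|_{L^2}=1$,
\[
 |m_{i,j}(\beta)|\le \|\Pi_{i}\Phi_{j,I}(\vec V_\beta)\|_{L^2},
\]
since $\Phi_{i,I}$ is homogeneous of Hermite degree $i$ and so only the degree-$i$ component survives. \Cref{lem:degree_bound} then gives the following bounds, where every $c_d^{(\deg)}$ appearing is at most $M$:
\begin{align*}
 |m_{1,1}|&\le c_1^{(\deg)}, & |m_{2,2}|&\le c_2^{(\deg)},\\
 |m_{1,2}|&\le c_1^{(\deg)}n^{1/2}, & |m_{2,1}|&\le c_2^{(\deg)}n^{-1/2}.
\end{align*}
After the rescaling defining $\tilde M$, every entry satisfies $|\tilde m_{i,j}(\beta)|\le M$ uniformly on $|\beta|\le R:=1/(64C_\star)$. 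This is exactly why the off-diagonal entries carry the factors $n^{\pm1/2}$.

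Next I apply Cauchy's integral formula (\Cref{lem:cauchy_integral_formula}) to $f=\tilde m_{i,j}-\tilde m_{i,j}(0)$. This function vanishes at $0$ and has $|f|\le 2M$ on $|\beta|=R$. The Schwarz-lemma form of the estimate, or summing the Cauchy coefficient bounds $|a_k|\le 2MR^{-k}$ for $k\ge1$, gives
\[
 |f(\beta)|\le 2M\sum_{k\ge1}(|\beta|/R)^k .
\]
For $|\beta|\le\beta_\mathrm{algorithm}\le R/(12M)$ we have $|\beta|/R\le 1/(12M)\le 1/12$. So $|f(\beta)|\le 2M\cdot\frac{1/(12M)}{1-1/12}\le 2/11<1/6$, up to a small adjustment in the constant. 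The requirement $1/(768MC_\star)=R/(12M)$ is exactly what the choice of $\beta_\mathrm{algorithm}$ supplies. If the constant comes out slightly over $1/6$, I would instead use the sharper Schwarz bound $|f(\beta)|\le 2M|\beta|/R\le 1/6$, which is exact here.

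Finally, for the determinant, write $\tilde M=\Id+E$ with $|E_{ij}|\le 1/6$. Then
\[
 |\det\tilde M|\ge (1-\tfrac16)^2-(\tfrac16)^2=\tfrac{25-1}{36}=\tfrac23 .
\]
The main obstacle is the analyticity and uniform-boundedness step. One must check that \Cref{lem:degree_bound} holds for complex $\beta$ on the full disk of radius $R$, that the entries are genuinely holomorphic there, and that the degree-$d$ bound is used only at $d=1,2$ so that $M$ controls every entry. The rest is bookkeeping.
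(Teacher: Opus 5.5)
Your argument is essentially the paper's proof: you bound every rescaled entry by $M$ on $|\beta|\le 1/(64C_\star)$ using \Cref{lem:degree_bound} at $d=1,2$ only, apply a Cauchy-type estimate, and conclude with $(5/6)^2-(1/6)^2=2/3$. One correction: $2/11>1/6$, so your coefficient-summation route as written does not reach $1/6$, and the Schwarz bound $|f(\beta)|\le 2M|\beta|/R\le 1/6$ that you offer as a fallback is genuinely needed (and correct); the paper avoids your factor-$2$ loss by applying \Cref{lem:cauchy_integral_formula} to $\tilde m_{i,j}$ itself with bound $M$, since that lemma already subtracts the value at $0$, giving $Mx/(1-x)\le 2Mx\le 1/6$ with $x=64C_\star|\beta|\le 1/(12M)$.
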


\begin{proof}
	To apply \Cref{lem:cauchy_integral_formula}, we bound $|\tilde m_{i,j}|$. This
	follows from \Cref{lem:degree_bound}.
	\begin{align*}
		|\tilde m_{1,1}| &= |\E_{\vec \bsg} \Phi_{1, I}(\vec \bsg) \cdot \Phi_{1,
				I}(\vec{V}_\beta(\vec \bsg))| \leq \|\bsg_I\|_{L^2} \cdot \|\Pi_1 \Phi_{1,I}(\vec{V}_\beta)\|_{L^2} \leq 1 \cdot c^{(\mathrm{deg})}_1\,.
	\end{align*}
	And,
	\begin{align*}
		|\tilde m_{2,1}| = \sqrt n |m_{2,1}| &=\sqrt n |\E_{\vec \bsg} \Phi_{2,
				I}(\vec \bsg) \cdot \Phi_{1, I}(\vec{V}_\beta(\vec \bsg))|\\
			&\leq \sqrt n \|\Phi_{2, I}\|_{L^2} \cdot \|\Pi_2
				\Phi_{1,I}(\vec{V}_\beta)\|_{L^2}\\
			&\leq \sqrt n \cdot 1 \cdot c^{(\mathrm{deg})}_2 \frac 1 {\sqrt n}\\
			&\leq c^{(\mathrm{deg})}_2\,.
	\end{align*}
	Similarly, $|\tilde m_{1,2}| = \frac 1 {\sqrt n} |m_{1,2}| \leq c^{(\mathrm{deg})}_1$ and
	$|\tilde m_{2,2}| \leq c^{(\mathrm{deg})}_2$. These bounds hold on the complex disk
	$|\beta|\leq1/(64C_\star)$. Applying \Cref{lem:cauchy_integral_formula} on this disk yields
	\begin{equation}
		\label{eq:coeff_bound}
		|\tilde m_{i,j} - \tilde m_{i,j}(0)| \leq
		M\frac{64C_\star|\beta|}{1-64C_\star|\beta|}\,.
	\end{equation}
 By \Cref{eq:algorithm_threshold}, $64C_\star|\beta|\leq1/(12M)\leq1/12$.
 Thus the right-hand side of \Cref{eq:coeff_bound} is at most
 \[
   128MC_\star|\beta|\leq\frac16,
 \]
 proving the first claim.

	For the second, write,
	\[
		|\det[\tilde M]| = |m_{1,1} m_{2,2} - m_{1,2}
		m_{2,1}| \geq (1  - 1/6)^2 - (1/6)^2 \geq \frac 2 3\,.
	\]
\end{proof}

\begin{cor}[Coefficient bounds on $\fdec$]
	Under the hypotheses of \Cref{lem:invertibility}, let
	$\fdec=c_1\vec{\Phi}_1+c_2\vec{\Phi}_2$ be as set by \Cref{eq:ideal_solution}.
	Then,
	\[
		|c_1| \leq 2\,, \qquad |c_2| \leq \frac2{\sqrt n}\,.
	\]
\end{cor}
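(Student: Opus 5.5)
The plan is to read $c_1,c_2$ off the explicit inverse in \Cref{eq:ideal_solution}, namely
\[
c_1=\frac{m_{2,2}}{\det M_\mathrm{cal}},\qquad c_2=-\frac{m_{2,1}}{\det M_\mathrm{cal}},
\]
and to bound numerators and denominator separately using \Cref{lem:invertibility}. For the denominator, the rescaling $\tilde M$ multiplies the off-diagonal entries by $n^{-1/2}$ and $n^{1/2}$, which cancel in the product. Hence $\det M_\mathrm{cal}=\det\tilde M$, and \Cref{lem:invertibility} gives $|\det M_\mathrm{cal}|\geq 2/3$.

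For the numerators, I use that $\tilde M(0)=\Id$, since $\vec V_0=\vec g$ and $\Phi_{1,I},\Phi_{2,I}$ are coordinatewise orthonormal. The first claim of \Cref{lem:invertibility} then gives
\[
|m_{2,2}|=|\tilde m_{2,2}|\leq 1+\tfrac16=\tfrac76,\qquad |\tilde m_{2,1}|\leq\tfrac16 .
\]
Since $m_{2,1}=\tilde m_{2,1}/\sqrt n$, this yields $|m_{2,1}|\leq 1/(6\sqrt n)$.

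Combining the two bounds:
\[
|c_1|\leq \frac{7/6}{2/3}=\frac74\leq 2,\qquad |c_2|\leq\frac{1/(6\sqrt n)}{2/3}=\frac{1}{4\sqrt n}\leq\frac{2}{\sqrt n}.
\]

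The only step needing care is the identification $\tilde M(0)=\Id$, in particular that $\tilde m_{2,1}(0)=0$. This entry equals $\sqrt n\,\E[\Phi_{2,I}(\vec\bsg)\,\bsg_I]$, which vanishes because $\Phi_{2,I}$ is a pure degree-two Hermite polynomial: its terms are products $g_{J_1}g_{J_2}$ with $J_1\neq J_2$, since $J_1\triangle J_2=I\neq\varnothing$. Such a polynomial is orthogonal to the degree-one polynomial $g_I$. The diagonal entries equal $1$ by the normalization of $\Phi_1$ and $\Phi_2$.

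Everything else is arithmetic. The bounds are uniform over all complex $|\beta|\leq\beta_\mathrm{algorithm}$ because \Cref{lem:invertibility} is.
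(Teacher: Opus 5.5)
Your proof is correct and is the same argument as the paper's: you read $c_1=m_{2,2}/\det M_\mathrm{cal}$ and $c_2=-m_{2,1}/\det M_\mathrm{cal}$ off \Cref{eq:ideal_solution} and apply the determinant and entry bounds of \Cref{lem:invertibility}. You also make explicit what the paper leaves implicit: the relevant entry bounds are $|\tilde m_{2,2}|\leq 7/6$ and $|\tilde m_{2,1}|\leq 1/6$, obtained from the first claim of \Cref{lem:invertibility} together with $\tilde M(0)=\Id$ (rather than from the cruder $c^{(\mathrm{deg})}_d$ bounds in its proof), and these give the sharper constants $7/4$ and $1/(4\sqrt n)$.
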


\begin{proof}
	From \Cref{eq:ideal_solution}, we have that
	\[
		c_1 = \frac 1 {\det[M_{\mathrm{cal}}]} m_{2,2}\, \qquad c_2 = -\frac
		1 {\det[M_{\mathrm{cal}}]} m_{2,1}\,.
	\]
	Then, the determinant and matrix entry bounds from \Cref{lem:invertibility} give the result.
\end{proof}

\subsection{Error estimates}
\label{sec:errorfordegree3}

In this section, we account for the error sources in the exact-expectation analysis.
The low-degree calibration produces $\vec g_\mathrm{ideal}=\fdec(\vec{U}_\beta)$ and
$\vec\theta_\mathrm{ideal}=\beta\sigma_n\vec g_\mathrm{ideal}$, whereas the analysis is done on
$\fdec(\vec{V}_\beta)$. We compute the error $\|\vec{U}_\beta-\vec{V}_\beta\|$, then use a Lipschitz
property of the map $\fdec$ to translate this error bound to one on the final estimate.
Here the subscript $\mathrm{ideal}$ denotes reconstruction from exact expectations;
finite-measurement error is treated in the next section.
All norms and expectations in this section are over $\vec \bsg$; when arguments are suppressed, each function is evaluated at $\vec \bsg$. Throughout this section (except in the lemma statements), we write
$\|\cdot\|_q:=\|\cdot\|_{L^q}$ for $q\in\{2,4,8,16\}$.

\begin{lemma}[Expectation proxy error]
  \label{lem:proxy_error}
  Set $s=\sqrt{15}$ in \Cref{lem:thermal_estimates}, and let
  $\vec V_\beta$ be defined as in \Cref{def:expec_proxy}.
  For fixed real $0<|\beta|\leq \beta_\mathrm{algorithm}$ and sufficiently large $n$,
  \begin{equation}
    \label{eq:proxy_error_bound}
    \max_{I\in\mc I_4}\|U_{\beta,I}-V_{\beta,I}\|_{L^4}
       \leq O_\beta(n^{-3/2}).
  \end{equation}
  Moreover,
  \begin{equation}
    \label{eq:individual_bound}
    \max_{I\in\mc I_4}\|U_{\beta,I}\|_{L^4}
      +\max_{I\in\mc I_4}\|V_{\beta,I}\|_{L^4}
         \leq O_\beta(1).
  \end{equation}
\end{lemma}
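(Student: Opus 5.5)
The plan is to split $U_{\beta,I}-V_{\beta,I}$ into two pieces. The first is the denominator-linearization error $U_{\beta,I}-V^{\mathrm{full}}_{\beta,I}$, where $V^{\mathrm{full}}_{\beta,I}=W_I(1-\Delta_\beta)$ and $W_I:=U_{\mathrm{annealed},I}(\beta,\cdot)$. The second is the Taylor truncation error $V^{\mathrm{full}}_{\beta,I}-V_{\beta,I}$. Both will be controlled from \Cref{lem:thermal_estimates} with $s=\sqrt{15}$. By \Cref{lem:norm_conversion} with $p=16$, and since $\|\cdot\|_{2,s}$ is nondecreasing in $s$ (so the same bound also gives $L^4$ and $L^8$), it yields
\[
\|\Delta_\beta\|_{L^{16}}\le C n^{-1},\qquad \|W_I\|_{L^{16}}\le C
\]
uniformly on the complex disk $|\beta|\le \beta_{\mathrm{thermal}}(\sqrt{15})$.

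\emph{Linearization error.} Since $U_{\beta,I}=W_I/(1+\Delta_\beta)$, the identity $\frac{1}{1+\Delta}-(1-\Delta)=\frac{\Delta^2}{1+\Delta}$ gives exactly
\[
U_{\beta,I}-V^{\mathrm{full}}_{\beta,I}=U_{\beta,I}\,\Delta_\beta^2 .
\]
The difficulty is that $U_{\beta,I}$ still contains the random denominator. Pointwise we only have the crude bound $|U_{\beta,I}|\le 1/(\beta\sigma_n)=O_\beta(n^{3/2})$, which holds because $|\Tr[\Gamma_I\rho_\beta]|\le 1$ for real $\beta$. I would split on the event $\mc E=\{|\Delta_\beta|\le 1/2\}$.

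On $\mc E$ we have $|U_{\beta,I}|\le 2|W_I|$. Hölder with exponents $1/4=1/8+1/16+1/16$ then gives
\[
\|U_{\beta,I}\Delta_\beta^2\mathbf 1_{\mc E}\|_{L^4}\le 2\|W_I\|_{L^8}\,\|\Delta_\beta\|_{L^{16}}^2=O(n^{-2}).
\]
Off $\mc E$, Markov's inequality gives $\Pr[\mc E^c]\le 2^{16}\|\Delta_\beta\|_{L^{16}}^{16}=O(n^{-16})$. Hence
\[
\|U_{\beta,I}\Delta_\beta^2\mathbf 1_{\mc E^c}\|_{L^4}\le O(n^{3/2})\,\|\Delta_\beta\|_{L^{16}}^2\,\Pr[\mc E^c]^{1/8}=O(n^{-5/2}).
\]
The same split also proves the $U$ half of \eqref{eq:individual_bound}: $\|U_{\beta,I}\|_{L^4}\le 2\|W_I\|_{L^4}+O(n^{3/2})\Pr[\mc E^c]^{1/4}=O_\beta(1)$.

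\emph{Truncation error.} Regard $z\mapsto V^{\mathrm{full}}_{z,I}$ as an analytic map into $L^4(\mathrm{Gauss})$ on $|z|\le r:=\beta_{\mathrm{thermal}}(\sqrt{15})$. Analyticity holds because $Z_{\mathrm{ann}}$ is zero-free there and the singularity at $0$ is removable, as in \Cref{lem:thermal_estimates}. By Hölder it is uniformly bounded:
\[
\|W_I(1-\Delta_z)\|_{L^4}\le \|W_I\|_{L^8}\bigl(1+\|\Delta_z\|_{L^8}\bigr)\le C.
\]
The vector-valued Cauchy estimate (integrate the contour formula in $L^4$ and apply Minkowski) then gives $\|v_{I,k}\|_{L^4}\le C r^{-k}$. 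Since $|\beta|\le\beta_{\mathrm{algorithm}}\le r/8$, we get
\[
\|V^{\mathrm{full}}_{\beta,I}-V_{\beta,I}\|_{L^4}\le C\sum_{k>K}8^{-k}=O(8^{-K})=O(n^{-3}),
\]
using $K=\lceil 3\log n/\log 8\rceil$. Combining the two pieces proves \eqref{eq:proxy_error_bound}; in fact we get $O(n^{-2})$, which is stronger than stated. For the $V$ half of \eqref{eq:individual_bound}, the triangle inequality gives $\|V_{\beta,I}\|_{L^4}\le\|V^{\mathrm{full}}_{\beta,I}\|_{L^4}+O(n^{-3})=O(1)$. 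All bounds are uniform in $I$, so the maxima over $I\in\mc I_4$ follow.

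The main obstacle is the first step: obtaining moment control of $U_{\beta,I}$ in the presence of the random denominator. I handle this by combining the deterministic $O(n^{3/2})$ bound with the very small probability of $|\Delta_\beta|>1/2$, which is available because we have the $L^{16}$ estimate of $\Delta_\beta$ and not merely the second moment. A secondary point to check is that the Banach-valued Cauchy estimate is justified. This needs continuity of $z\mapsto V^{\mathrm{full}}_{z,I}$ in $L^4$ up to the boundary circle, which I would take from the analyticity statement in the weighted norm in \Cref{lem:thermal_estimates} together with \Cref{lem:norm_conversion}.
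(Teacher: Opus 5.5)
Your proposal is correct and follows essentially the same route as the paper: the same split into the linearization error $U_{\beta,I}\Delta_\beta^2$ and the Taylor tail, the same event $\{|\Delta_\beta|\le 1/2\}$ combined with the crude $(|\beta|\sigma_n)^{-1}$ bound and Markov on $\|\Delta_\beta\|_{L^{16}}$, and the same $L^4$ Cauchy coefficient estimate on the disk of radius $\beta_\mathrm{thermal}(\sqrt{15})$. The one difference is cosmetic: you split on the event inside the product $U_{\beta,I}\Delta_\beta^2$, whereas the paper first applies H\"older and then bounds $\|U_{\beta,I}\|_{L^8}$ by the split. Also, your worry about continuity up to the boundary circle is unnecessary, since \Cref{cor:cauchy_lq_bound} only requires a uniform $L^q$ bound on the open disk.
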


\begin{proof}
Fix $I\in\mc I_4$. Let $\vec{V}_{\beta,\mathrm{full}}$ be the untruncated proxy. Then
\[
  \|U_{\beta,I}-V_{\beta,I}\|_4
    \leq\|U_{\beta,I}-V_{\beta,I}^{\mathrm{full}}\|_4
          +\|V_{\beta,I}^{\mathrm{full}}-V_{\beta,I}\|_4.
\]
We bound each term. For convenience, write
\[
  V_{\beta,I}^{\mathrm{full}}
    =\underbrace{-\frac{\Tr[\Gamma_Ie^{-\beta H(\vec g)}]}
                         {\beta\sigma_n Z_\mathrm{ann}(\beta)}}_{U_{\mathrm{annealed},I}(\beta,\vec g)}
       \left(1-\underbrace{\left(\frac{Z(\beta,\vec g)}{Z_\mathrm{ann}(\beta)}-1\right)}
                         _{\Delta_\beta(\vec g)}\right).
\]

\paragraph{Second term (truncation error).} Write
\[
  V_{\beta,I}^{\mathrm{full}}=\sum_{k\geq0}v_{I,k}\beta^k.
\]
We require a uniform $L^4$ bound on $V_{\beta,I}^{\mathrm{full}}$.
By the definition of the proxy and H\"older's inequality,
\begin{equation}
  \|V_{\beta,I}^{\mathrm{full}}\|_4
    \leq\|U_{\mathrm{annealed},I}\|_4
        +\|U_{\mathrm{annealed},I}\|_8\|\Delta_\beta\|_8.
\end{equation}
We use \Cref{lem:thermal_estimates,lem:norm_conversion} to bound the right-hand side.
The weighted norm with $s=\sqrt{15}$ also bounds the norms needed at the smaller
moment orders. Hence, uniformly on this complex disk,
\begin{equation}
  \label{eq:v_bound}
  \|V_{\beta,I}^{\mathrm{full}}\|_4\leq C+C^2n^{-1}\leq C_V.
\end{equation}
Since $|\beta| \leq \beta_\mathrm{thermal}(\sqrt{15})$, $Z_\mathrm{ann}(\beta)$ is zero-free and
$V_{\beta, I}$ is analytic. As such, we can apply \Cref{cor:cauchy_lq_bound} with $q = 4$. This
bounds $\|v_{I,k}\|_4$ by $C_V\beta_\mathrm{thermal}(\sqrt{15})^{-k}$.
For $|\beta|\leq\beta_\mathrm{algorithm}$, \Cref{eq:algorithm_threshold} gives $|\beta|/\beta_\mathrm{thermal}(\sqrt{15})\leq1/8$, so
\begin{align}
  \|V_{\beta,I}-V_{\beta,I}^{\mathrm{full}}\|_4
     &\leq\sum_{k>K}\|v_{I,k}\|_4|\beta|^k\\
     &\leq C_V\sum_{k>K}8^{-k}\leq\frac{8C_V}{7}\,8^{-K}
       \leq O(n^{-3}).
       \label{eq:proxy-truncation}
\end{align}
The last line follows from $K=\lceil3\log n/\log8\rceil$.

\paragraph{First term (proxy error).} Recall that
$V_{\beta,I}^{\mathrm{full}}=U_{\mathrm{annealed},I}(1-\Delta_\beta)$, whereas
$U_{\beta,I}=U_{\mathrm{annealed},I}/(1+\Delta_\beta)$. Therefore,
\[
  V_{\beta,I}^{\mathrm{full}}=U_{\beta,I}(1-\Delta_\beta^2),\qquad
  U_{\beta,I}-V_{\beta,I}^{\mathrm{full}}=\Delta_\beta^2U_{\beta,I}.
\]
By H\"older's inequality,
\begin{align}
  \|U_{\beta,I}-V_{\beta,I}^{\mathrm{full}}\|_4
     &\leq\|\Delta_\beta^2\|_8\|U_{\beta,I}\|_8\\
     &=\|\Delta_\beta\|_{16}^2\|U_{\beta,I}\|_8.
       \label{eq:proxy-holder}
\end{align}
The first factor is $O(n^{-2})$ by
\Cref{lem:thermal_estimates,lem:norm_conversion}. The second term is a Gibbs
observable scaled by $(\beta\sigma_n)^{-1}$; a priori, its absolute value could
be as large as $(|\beta|\sigma_n)^{-1}=O_\beta(n^{3/2})$.
Nonetheless, $U_{\beta,I}(1+\Delta_\beta)=U_{\mathrm{annealed},I}$ and we have
strong bounds on both factors on the right-hand side.

First consider the event $\mc E=\{|\Delta_\beta|\leq1/2\}$.
On this event, $|1+\Delta_\beta|\geq1/2$, and
$|U_{\beta,I}|\leq2|U_{\mathrm{annealed},I}|$. Therefore,
\begin{align*}
  \E_{\vec \bsg}|U_{\beta,I}|^8
    &\leq2^8\E_{\vec \bsg}|U_{\mathrm{annealed},I}|^8
        +( |\beta|\sigma_n)^{-8}\Pr_{\vec \bsg}[\mc E^c]\\
    &\leq C_\beta+O_\beta(n^{12})\Pr_{\vec \bsg}[\mc E^c].
\end{align*}
Markov's inequality yields
\[
  \Pr_{\vec \bsg}[\mc E^c]
     =\Pr_{\vec \bsg}[|\Delta_\beta|>1/2]
     \leq2^{16}\|\Delta_\beta\|_{16}^{16}
     \leq O(n^{-16}).
\]
This establishes
\begin{equation}
  \|U_{\beta,I}\|_8\leq O_\beta(1).
  \label{eq:true-proxy-moment}
\end{equation}
Plugging this bound into \Cref{eq:proxy-holder} gives
$\|U_{\beta,I}-V_{\beta,I}^{\mathrm{full}}\|_4\leq O_\beta(n^{-2})$.
Combining it with \Cref{eq:proxy-truncation} proves the stated (weaker)
$n^{-3/2}$ bound. The same estimates and \Cref{eq:v_bound,eq:true-proxy-moment}
prove \Cref{eq:individual_bound}.
\end{proof}

To translate the above results into error bounds on the final estimator, we need a
Lipschitz-like property of $\fdec$. We consider a general function $\vec{Q}=c_1\vec{\Phi}_1+c_2\vec{\Phi}_2$.

\begin{lemma}[Lipschitz property for quadratic reconstruction]
  \label{lem:reconstruction_lipschitz}
  Let $\vec{Q}=c_1\vec{\Phi}_1+c_2\vec{\Phi}_2$, and let $\vec{x},\vec{y}:\R^m\to\C^m$ be measurable
  functions of the same standard Gaussian vector, with finite coordinate $L^4$ norms.
  Then
  \begin{equation}
  \begin{aligned}
    \|\vec{Q}(\vec{x})-\vec{Q}(\vec{y})\|_{L^2}
      &\leq\sqrt m\max_I\|x_I-y_I\|_{L^4}\\[-2pt]
      &\quad\cdot\left[|c_1|+|c_2|\sqrt{N_2}
           \left(\max_I\|x_I\|_{L^4}+\max_I\|y_I\|_{L^4}\right)\right].
  \end{aligned}
  \label{eq:lipschitz_reconstruction}
  \end{equation}
\end{lemma}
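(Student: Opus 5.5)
The plan is to split by linearity, $\vec Q(\vec x)-\vec Q(\vec y)=c_1(\vec x-\vec y)+c_2(\vec\Phi_2(\vec x)-\vec\Phi_2(\vec y))$, and bound each piece by the triangle inequality in the Gaussian norm of \Cref{def:gaussian_norm}. Since $\|\vec F\|_{L^2}^2=\sum_I\|F_I\|_{L^2}^2\le m\max_I\|F_I\|_{L^2}^2$, it suffices to bound each coordinate in $L^2$ and pay a factor $\sqrt m$ at the end. For the linear part, $\|x_I-y_I\|_{L^2}\le\|x_I-y_I\|_{L^4}$ by monotonicity of Gaussian $L^p$ norms, which gives the $|c_1|$ term.

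For the quadratic part, fix $I$ and use the telescoping identity
\[
x_{J_1}x_{J_2}-y_{J_1}y_{J_2}=(x_{J_1}-y_{J_1})x_{J_2}+y_{J_1}(x_{J_2}-y_{J_2}).
\]
By H\"older \eqref{eq:holder} with $\tfrac12=\tfrac14+\tfrac14$, each summand has $L^2$ norm at most $\max_J\|x_J-y_J\|_{L^4}\cdot\max_J\|x_J\|_{L^4}$, respectively $\max_J\|y_J\|_{L^4}$. The sum defining $\Phi_{2,I}$ ranges over ordered pairs $(J_1,J_2)$ with $J_1\triangle J_2=I$, and $|\chi|\le1$. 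Counting these pairs as $2N_2$, where $N_2$ is the number of unordered pairs so that the normalization $1/(2\sqrt{N_2})$ makes $\Phi_{2,I}$ unit-norm, the triangle inequality gives
\[
\|\Phi_{2,I}(\vec x)-\Phi_{2,I}(\vec y)\|_{L^2}\le\frac{2N_2}{2\sqrt{N_2}}\max_J\|x_J-y_J\|_{L^4}\Bigl(\max_J\|x_J\|_{L^4}+\max_J\|y_J\|_{L^4}\Bigr),
\]
which is exactly $\sqrt{N_2}$ times the stated bracket.

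Multiplying by $|c_2|$, adding the linear part, and then applying the $\sqrt m$ aggregation yields \eqref{eq:lipschitz_reconstruction}. The only step needing care is the bookkeeping between ordered and unordered pairs in $N_2$, together with the normalization of $\Phi_2$. If $N_2$ instead counts ordered pairs, the triangle-inequality count gives $\sqrt{N_2}/2$, and the bound holds a fortiori. Either way the crude triangle inequality, with no orthogonality used, suffices, so I expect no real obstacle.
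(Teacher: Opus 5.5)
Your proposal is correct and follows the paper's proof essentially step for step: the telescoping identity with H\"older at exponents $(4,4)$, the count of $2N_2$ ordered pairs each carrying weight $(2\sqrt{N_2})^{-1}$, the coordinatewise triangle inequality, and summing coordinate second moments to obtain the factor $\sqrt m$. Your explicit use of $\|x_I-y_I\|_{L^2}\le\|x_I-y_I\|_{L^4}$ for the linear term is a step the paper leaves implicit.
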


\begin{proof}
Write $\delta=\max_I\|x_I-y_I\|_4$ and
$B_{\vec x}=\max_I\|x_I\|_4$, $B_{\vec y}=\max_I\|y_I\|_4$.
For each pair, the identity
$x_Jx_K-y_Jy_K=(x_J-y_J)x_K+y_J(x_K-y_K)$ and H\"older's inequality give
\[
  \|x_Jx_K-y_Jy_K\|_2\leq\delta(B_{\vec x}+B_{\vec y}).
\]
There are $2N_2$ ordered pairs in each coordinate of $\vec{\Phi}_2$, each with
coefficient of absolute value $(2\sqrt{N_2})^{-1}$. By the triangle inequality,
\[
  \|Q_I(\vec{x})-Q_I(\vec{y})\|_2
     \leq\delta\bigl(|c_1|+|c_2|\sqrt{N_2}(B_{\vec x}+B_{\vec y})\bigr).
\]
Finally, summing the coordinate second moments gives
\[
  \|\vec{Q}(\vec{x})-\vec{Q}(\vec{y})\|_{L^2}^2
     =\E_{\vec \bsg}\sum_I|Q_I(\vec{x})-Q_I(\vec{y})|^2
     =\sum_I\|Q_I(\vec{x})-Q_I(\vec{y})\|_{L^2}^2,
\]
which proves the claim. No independence between coordinates of $\vec{x}$ and $\vec{y}$ is required.
\end{proof}

\begin{lemma}[Final error bound]
  \label{lem:final_error}
  Use the fixed threshold $\beta_\mathrm{algorithm}$ from
  \Cref{eq:algorithm_threshold}.
  Let $0<\beta\leq \beta_\mathrm{algorithm}$ be fixed and real, let $n$ be
  sufficiently large and even, and use $K$ from \Cref{def:expec_proxy} and the exact
  calibration coefficients from \Cref{eq:ideal_solution}. For the exact rescaled
  Gibbs observables $U_{\beta,I}=-\Tr[\Gamma_I\rho_\beta]/(\beta\sigma_n)$,
  \[
    \frac1{\sqrt m}\|\vec \bsg-\fdec(\vec{U}_\beta)\|_{L^2}
          \leq C_\beta n^{-1}.
  \]
  Thus, setting $\vec{\theta}=\beta\sigma_n\vec \bsg$ and
  $\vec\theta_\mathrm{ideal}=\beta\sigma_n\fdec(\vec{U}_\beta)$,
  \[
    \frac1{\sqrt n}\|\vec\theta_\mathrm{ideal}-\vec{\theta}\|_{L^2}
       \leq O_\beta(n^{-1}).
  \]
\end{lemma}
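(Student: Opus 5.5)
The plan is to split the error with the triangle inequality into a proxy-transfer term and an intrinsic reconstruction term:
\[
\|\vec\bsg-\fdec(\vec U_\beta)\|_{L^2}\le \|\vec\bsg-\fdec(\vec V_\beta)\|_{L^2}+\|\fdec(\vec V_\beta)-\fdec(\vec U_\beta)\|_{L^2}.
\]

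\textbf{Transfer term.} We apply \Cref{lem:reconstruction_lipschitz} with $\vec x=\vec V_\beta$ and $\vec y=\vec U_\beta$. The inputs are as follows:
\begin{itemize}
\item \Cref{lem:proxy_error} gives $\max_I\|U_{\beta,I}-V_{\beta,I}\|_{L^4}=O_\beta(n^{-3/2})$ and $O_\beta(1)$ bounds on both $L^4$ norms.
\item The coefficient corollary gives $|c_1|\le2$ and $|c_2|\le 2/\sqrt n$.
\item Since $N_2=\Theta(n^2)$, we have $|c_2|\sqrt{N_2}=O(\sqrt n)$.
\end{itemize}
This bounds the transfer term by $\sqrt m\cdot n^{-3/2}\cdot O_\beta(\sqrt n)=\sqrt m\,O_\beta(n^{-1})$. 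After dividing by $\sqrt m$, this is already $O_\beta(n^{-1})$, so the $n^{-3/2}$ proxy accuracy is exactly what is needed.

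\textbf{Reconstruction term.} The calibration \Cref{eq:ideal_solution} together with the degree-$\le2$ characterization and \Cref{cor:q_covariance} give $\Pi_{\le2}\fdec(\vec V_\beta)=\vec g$. The error is therefore exactly the degree-$\ge3$ part $\sum_{d\ge3}\Pi_d\fdec(\vec V_\beta)$. Since $\vec V_\beta$ is a polynomial of degree at most $K+1$ and $\vec\Phi_2$ is quadratic, only $3\le d\le 2(K+1)$ contribute, which is exactly the range covered by \Cref{lem:degree_bound}.

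By Hermite orthogonality and summing coordinatewise,
\[
\frac1m\|\vec g-\fdec(\vec V_\beta)\|_{L^2}^2=\frac1m\sum_I\sum_{d\ge3}\|\Pi_d(c_1\Phi_{1,I}+c_2\Phi_{2,I})(\vec V_\beta)\|_{L^2}^2 .
\]
Each degree-$d$ coordinate term is at most $|c_1|c_d n^{-(d-1)/2}+|c_2|c_d n^{-(d-2)/2}\le 4c_d n^{-(d-1)/2}$.

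\textbf{Summing the tail.} I expect this to be the main obstacle: controlling $\sum_{d=3}^{2K+2}16c_d^2n^{-(d-1)}$ with $c_d=(c'd)^{5d/2}$ growing superexponentially. Write $c_d^2n^{-(d-1)}=n\cdot\big((c'd)^5/n\big)^d$. Because $d\le 2K+2=O(\log n)$, we have $(c'd)^5/n\le n^{-1/2}$ for large $n$. The $d=3$ term is then $O(n^{-2})$, and the ratio of consecutive terms is at most $O(\log^5 n/n)\to0$. Hence the tail is dominated by its first term and is $O(n^{-2})$, giving a contribution of $O(n^{-1})$ after taking square roots. The restriction $d=O(\log n)$ is essential here; without it the series would not converge.

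\textbf{Conclusion.} Combining the two terms gives $\frac1{\sqrt m}\|\vec\bsg-\fdec(\vec U_\beta)\|_{L^2}\le C_\beta n^{-1}$. For the $\vec\theta$ statement, multiply by $\beta\sigma_n$ and use $\sqrt m\,\sigma_n=O(\sqrt n)$, which converts the $1/\sqrt m$ normalization into $1/\sqrt n$.
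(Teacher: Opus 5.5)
Your proposal is correct and follows the paper's proof essentially step for step. You use the same triangle-inequality split, the same application of \Cref{lem:reconstruction_lipschitz} with \Cref{lem:proxy_error} and the coefficient bounds for the transfer term, and the same Hermite-orthogonality and ratio-of-consecutive-terms summation of \Cref{lem:degree_bound} over $3\le d\le 2(K+1)$ for the degree-$\ge3$ tail. The only difference is cosmetic: you bound each degree-$d$ coordinate with the plain triangle inequality, whereas the paper uses $|a+b|^2\le 2|a|^2+2|b|^2$.
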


\begin{proof}
First, by construction, $\Pi_{\leq2}\fdec(\vec{V}_\beta)=\vec \bsg$. Therefore,
\begin{align}
  \|\vec \bsg-\fdec(\vec{U}_\beta)\|_2
     &\leq\|\Pi_{\leq2}\fdec(\vec{V}_\beta)-\fdec(\vec{V}_\beta)\|_2
            +\|\fdec(\vec{V}_\beta)-\fdec(\vec{U}_\beta)\|_2.
\end{align}
To bound the first term, we decompose the degree-$d$ component of $\fdec(\vec{V}_\beta)$
coordinate by coordinate, then apply \Cref{lem:degree_bound}.
Recall that $\vec{V}_\beta$ has degree at most $K+1$, so $\fdec(\vec{V}_\beta)$ has degree
at most $2(K+1)$. The two degree-$d$ terms need not be orthogonal, so we use
$|a+b|^2\leq2|a|^2+2|b|^2$:
\begin{align*}
  \|\Pi_d\fdec(\vec{V}_\beta)\|_2^2
    &=\sum_I\|c_1\Pi_d\Phi_{1,I}(\vec{V}_\beta)+c_2\Pi_d\Phi_{2,I}(\vec{V}_\beta)\|_2^2\\
    &\leq2\sum_I\left(|c_1|^2\|\Pi_d\Phi_{1,I}(\vec{V}_\beta)\|_2^2
                  +|c_2|^2\|\Pi_d\Phi_{2,I}(\vec{V}_\beta)\|_2^2\right)\\
    &\leq2m(c_d^{(\mathrm{deg})})^2
          \left(|c_1|^2n^{-(d-1)}+|c_2|^2n^{-(d-2)}\right)\\
    &\leq m(c''d)^{5d}n^{-(d-1)},
\end{align*}
where the last line uses $|c_1|\leq2$ and $|c_2|\leq2/\sqrt n$.
Different Hermite degrees are orthogonal in $L^2$. Hence
\begin{align}
  \|\Pi_{>2}\fdec(\vec{V}_\beta)\|_2^2
    &=\sum_{d=3}^{2(K+1)}\|\Pi_d\fdec(\vec{V}_\beta)\|_2^2\notag\\
    &\leq m\sum_{d=3}^{2(K+1)}
          \frac{(c''d)^{5d}}{n^{d-1}}.
          \label{eq:tail-sum}
\end{align}
Let $a_d$ be the $d$-th summand without $m$. Then
\[
  \frac{a_{d+1}}{a_d}
    =\frac{(c''(d+1))^5}{n}\left(1+\frac1d\right)^{5d}
    \leq\frac{(c''e)^5(2(K+1))^5}{n}
    =O\left(\frac{(K+1)^5}{n}\right).
\]
Since $K=O(\log n)$, this ratio is at most $1/2$ for sufficiently large $n$.
Continuing from \Cref{eq:tail-sum},
\[
  \|\Pi_{>2}\fdec(\vec{V}_\beta)\|_2^2\leq2ma_3\leq Cmn^{-2}.
\]
For the second term, \Cref{lem:proxy_error} gives
\[
  \max_I\|U_{\beta,I}-V_{\beta,I}\|_4\leq C_\beta n^{-3/2},\qquad
  \max_I\|U_{\beta,I}\|_4+\max_I\|V_{\beta,I}\|_4\leq C'_\beta.
\]
Plugging these estimates into \Cref{lem:reconstruction_lipschitz},
\begin{align}
  \|\fdec(\vec{V}_\beta)-\fdec(\vec{U}_\beta)\|_2
    &\leq\sqrt m\,O_\beta(n^{-3/2})
       \left(2+\frac2{\sqrt n}\sqrt{3\binom{n-4}{2}}\,O_\beta(1)\right)\\
    &\leq\sqrt m\,O_\beta(n^{-1}).
\end{align}
Together these two bounds prove the claimed parameter error.
Finally, $\vec{\theta}=\beta\sqrt{6/n^3}\,\vec \bsg$ and $\sqrt m=\sqrt{\binom n4}$, so
\[
  \|\vec\theta_\mathrm{ideal}-\vec{\theta}\|_2
     \leq\beta\sqrt{6/n^3}\sqrt m\,O_\beta(n^{-1})
     \leq O_\beta(n^{-1/2}).
\]
The additional $1/\sqrt n$ scaling gives the desired result.
\end{proof}
\subsection{Proof of the main theorem}
\label{sec:time_main_proof}

We now combine the exact-expectation error bound with finite measurements and account for the
computation of the calibration coefficients.

\timelearning*

\begin{proof}[Proof of \Cref{thm:time_learning}]
Fix $\beta,c,\zeta$ as in the statement and take $K=\lceil3\log n/\log8\rceil$. For exact
expectations, write $\vec g_\mathrm{ideal}=\fdec(\vec U_\beta)$, with the coefficients from
\Cref{eq:ideal_solution}. We first describe how to compute these coefficients and implement the
measurements, then bound the resulting reconstruction error.

\paragraph{Precomputation.}
Set $X(\vec g)=\sum_Ig_I\Gamma_I$ and $z=\beta\sigma_n$. For $0\leq t\leq K+1$, compute the
trace polynomials
\[
 P_t(\vec g)=\Tr[X(\vec g)^t],\qquad
 Q_{I,t}(\vec g)=\Tr[\Gamma_I X(\vec g)^t].
\]
Enumerating ordered $t$-tuples of quartic labels computes these polynomials for all $I$ in
$m^{K+2}\operatorname{poly}(n,K)=n^{O(K)}$ operations. An ordered Clifford word is represented
by its symmetric-difference set and sign, so its normalized trace is computable without forming
$d_n\times d_n$ matrices.

The coefficients of the annealed partition function are computed using exact Gaussian moments:
for a Gaussian monomial $\prod_I\bsg_I^{\alpha_I}$, its expectation over $\vec \bsg$ is zero if any $\alpha_I$ is odd, and
otherwise is $\prod_I(\alpha_I-1)!!$. If its truncated series is
$\sum_{t=0}^{K+1}\bar z_tz^t$, then $\bar z_0=1$, and the reciprocal coefficients are given by
\[
 d_0=1,\qquad d_t=-\sum_{j=1}^t\bar z_jd_{t-j}.
\]
Truncated polynomial multiplication now computes $\vec V_\beta$ from \Cref{def:expec_proxy}.
Its numerator is needed through order $K+1$ because of the division by $z$.

Each of the four calibration entries
$\E_{\vec \bsg}[\Phi_{i,I}(\vec \bsg)\Phi_{j,I}(\vec V_\beta(\vec \bsg))]$ is the expectation of a
polynomial of Gaussian degree at most $2K+4$. There are at most
$\binom{m+2K+4}{2K+4}=n^{O(K)}$ monomials, so the same Gaussian-moment rule computes all four
entries in $n^{O(K)}$ operations. By \Cref{lem:invertibility}, the resulting $2\times2$
calibration system is invertible and gives $c_1,c_2$.

For the fixed rational $\beta$, all these quantities lie in the number field generated by
$\sqrt{6/n^3}$, $\sqrt{N_2}$, and $\sqrt n$, whose degree is at most eight. The factorials,
Gaussian moments, and rational coefficients involved have bit lengths polynomial in $K$ and
$\log n$; the sums, truncated products, and reciprocal recursion preserve a polynomial bit-length
bound. Exact arithmetic therefore gives a preprocessing bound of
$n^{O(K)}=n^{O(\log n)}$ bit operations as well.

\paragraph{Measurements and reconstruction.}
Apply \Cref{lem:measurement_complexity} with
\[
	\delta = \beta \sigma_n n^{-3},\qquad \widehat U_I = -\frac{\mu_{\beta, \mathrm{meas},
	I}}{\beta\sigma_n}\,.
\]
For all sufficiently large $n$, $\delta < 1$, and the number of copies is
\[
	N = m \left\lceil\frac{2n^6}{\beta^2\sigma_n^2} \log\frac{2m}{\zeta}\right\rceil = \O_\beta
	\Paren{n^{13}\log\frac{n}{\zeta}}.
\]
The time complexity is $\widetilde \O(N \cdot n) = \widetilde \O_\beta(n^{14}\log(n/\zeta))$. There
are $N_2=\O(n^2)$ unordered pairs in each coordinate of $\vec\Phi_2$, so evaluating
$\fdec(\widehat{\vec U})$ requires $O(mN_2)=O(n^6)$ operations.

\paragraph{Error bound.}
On the measurement-success event,
$\|\widehat{\vec U}-\vec U_\beta\|_\infty\leq n^{-3}$. Both the exact expectations and
empirical means lie in $[-1,1]$, so, putting $B=(\beta\sigma_n)^{-1}$,
\[
 \max_I|U_{\beta,I}|,\ \max_I|\widehat U_I|\leq B=O_\beta(n^{3/2}).
\]
The identity $x_Jx_K-y_Jy_K=(x_J-y_J)x_K+y_J(x_K-y_K)$ used in
\Cref{lem:reconstruction_lipschitz}, now applied pointwise, together with
$|c_1|\leq2$ and $|c_2|\leq2/\sqrt n$, gives
\[
 \frac1{\sqrt m}\|\fdec(\widehat{\vec U})-\fdec(\vec U_\beta)\|_2
 \leq\left(2+\frac{4\sqrt{N_2}B}{\sqrt n}\right)n^{-3}
 =O_\beta(n^{-1}).
\]
Rounding adds at most $n^{-2}$ to the normalized error. Consequently,
\[
 \frac1{\sqrt m}\|\vec g_\mathrm{meas}-\vec g_\mathrm{ideal}\|_2
 \leq C'_\beta n^{-1}
\]
with probability at least $1-\zeta$, for every disorder realization.

It remains to go from mean-square error to high-probability recovery. By \Cref{lem:final_error},
\[
 \left\|\frac1{\sqrt m}(\vec g_\mathrm{ideal}-\vec \bsg)\right\|_{L^2}
 \leq C_\beta n^{-1}.
\]
Thus, for $X(\vec g)=m^{-1/2}\|\vec g_\mathrm{ideal}(\vec g)-\vec g\|_2$, Markov's inequality
 gives
\[
 \Pr_{\vec \bsg}\!\left[X(\vec \bsg)>\tfrac12n^{-c}\right]
 \leq4n^{2c}\E_{\vec \bsg}[X(\vec \bsg)^2]
 \leq4C_\beta^2 n^{-2(1-c)}.
\]
For sufficiently large $n$, $C'_\beta n^{-1}\leq\tfrac12n^{-c}$. On the complementary disorder
event and the measurement-success event, the triangle inequality gives
$m^{-1/2}\|\vec g_\mathrm{meas}-\vec \bsg\|_2\leq n^{-c}$. This proves the claimed conditional
success probability; jointly over disorder and measurements, it is at least
$1-O_\beta(n^{-2(1-c)})-\zeta$.
\end{proof}

\printbibliography

\appendix

\section{Math tools}

\begin{lemma}[Cauchy's integral formula]
	\label{lem:cauchy_integral_formula}
	Let $S\subseteq\C$ be open and let $f:S\to\C$ be analytic. Suppose the disk $D_r=\{w\in\C:|w|\leq r\}$ is fully contained in $S$. Then for any $|x_0|<r$, we have
	\begin{equation}
		\label{eq:standard_cauchy}
		f(x_0) = \frac 1 {2\pi \I} \int_{|x| = r} \frac{f(x)}{x - x_0}\,\mathrm d x\,.
	\end{equation}
	Moreover, suppose $\sup_{x \in D_r} |f(x)| \leq c$. Then,
	\[
		|f(x_0) - f(0)| \leq c \frac {|x_0|}{r - |x_0|}\,.
	\]
\end{lemma}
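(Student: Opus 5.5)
The first identity is the Cauchy integral formula for an analytic function on a neighborhood of the closed disk $D_r$. For the second, we subtract the formula at $x_0$ and at $0$ and bound the resulting integrand using $\sup_{D_r}|f|\leq c$.

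\textbf{The integral formula.} Since $D_r\subseteq S$ is compact and $S$ is open, there is $r'>r$ with $D_{r'}\subseteq S$; this is a positive-distance argument. Fix $|x_0|<r$ and put
\[
g(x)=\frac{f(x)-f(x_0)}{x-x_0}\quad (x\neq x_0),\qquad g(x_0)=f'(x_0).
\]
Then $g$ is analytic on the disk of radius $r'$, since the singularity at $x_0$ is removable. The disk is convex, so Cauchy's theorem gives $\int_{|x|=r}g(x)\,\mathrm dx=0$. It remains to evaluate $\frac1{2\pi \I}\int_{|x|=r}\frac{\mathrm dx}{x-x_0}=1$. Expand $\frac1{x-x_0}=\sum_{k\geq0}x_0^k x^{-k-1}$, which converges uniformly on $|x|=r$ because $|x_0|<r$, and integrate term by term: only $k=0$ survives, contributing $2\pi \I$. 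This proves \eqref{eq:standard_cauchy}.

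\textbf{The difference bound.} Applying \eqref{eq:standard_cauchy} at $x_0$ and at $0$ and subtracting,
\[
f(x_0)-f(0)=\frac1{2\pi \I}\int_{|x|=r}f(x)\left(\frac1{x-x_0}-\frac1x\right)\mathrm dx=\frac1{2\pi \I}\int_{|x|=r}\frac{x_0 f(x)}{x(x-x_0)}\,\mathrm dx.
\]
On the circle we have $|x|=r$, $|x-x_0|\geq r-|x_0|$, and $|f(x)|\leq c$, and the contour has length $2\pi r$. The standard ML estimate therefore gives
\[
|f(x_0)-f(0)|\leq \frac{1}{2\pi}\cdot 2\pi r\cdot\frac{c|x_0|}{r(r-|x_0|)}=c\,\frac{|x_0|}{r-|x_0|}.
\]

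The only non-routine point is justifying Cauchy's theorem on a slightly larger disk, which is handled by the compactness step at the start. Everything else is textbook.
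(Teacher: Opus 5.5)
Your proposal is correct and matches the paper's proof: the paper cites the first identity as the standard Cauchy integral formula, then subtracts the formulas at $x_0$ and at $0$ and applies the same ML estimate with $|x-x_0|\geq r-|x_0|$. Your extra derivation of the integral formula (enlarging to a disk of radius $r'>r$, removing the singularity of $g$ at $x_0$, and applying Cauchy's theorem on a convex domain) is sound, but the paper does not write it out.
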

\begin{proof}
	\Cref{eq:standard_cauchy} is the standard Cauchy integral formula. For the second, write,
	\begin{align}
		|f(x_0) - f(0)| = \Abs{\frac 1 {2 \pi \I} \int_{|x| = r} f(x) \Paren{\frac 1 {x -
				x_0} - \frac 1 x} \, \mathrm d x} &= \Abs{\frac 1 {2 \pi \I}
					\int_{|x| = r} f(x) \frac {x_0} {x(x-
		x_0)}\,\mathrm d x}\\
		&\leq \frac 1 {2\pi} \int_{|x| = r} |f(x)| \frac {|x_0|}{|x| \cdot |x -
		x_0|}\,|\mathrm d x|\\
		&\leq \frac {c}{2\pi} \frac{|x_0|}{r(r-|x_0|)}\int_{|x|=r}|\mathrm d x|\\
		&= \frac {c}{2\pi} \frac{|x_0|}{r(r-|x_0|)}2\pi r\\
		&\leq c \frac{|x_0|}{r - |x_0|}\,.
	\end{align}
\end{proof}

\begin{cor}[{Cauchy coefficient and derivative estimates;
\cite[Chapter~III, \S7]{Lang1999ComplexAnalysis}}]
\label{cor:cauchy_estimate}
Let $f$ be complex-valued and analytic on a neighborhood of
$D_r=\{z:|z|\leq r\}$. For $|z_0|<r$ and $k\geq0$,
\[
 f^{(k)}(z_0)=\frac{k!}{2\pi\I}
   \int_{|z|=r}\frac{f(z)}{(z-z_0)^{k+1}}\,\mathrm dz.
\]
If $f(z)=\sum_{k\geq0}f_kz^k$ and
$\sup_{|z|\leq r}|f(z)|\leq M$, then
\begin{equation}
 \label{eq:taylor_coeff_bound}
 |f_k|\leq Mr^{-k}.
\end{equation}
In particular, if $f$ is bounded by $M$ on a disk of radius $R$
centered at $z_0$, then $|f^{(k)}(z_0)|\leq k!MR^{-k}$.
The coefficient and derivative bounds also hold for functions analytic 
and bounded by $M$ on the corresponding open disk.

More generally, let $h:\{\vec u\in\C^m:\|\vec u\|_1<R\}\to\C$
be analytic with $|h(\vec u)|\leq M$. For any ordered tuple
$(B_1,\ldots,B_d)$ of coordinate indices, including repeated indices,
\begin{equation}
 \label{eq:cauchy_mixed_derivatives}
 |D_{B_1}\cdots D_{B_d}h(0)|
    \leq M(d/R)^d,\qquad d\geq1.
\end{equation}
\end{cor}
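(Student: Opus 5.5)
The single-variable statements follow from the Cauchy integral formula in \Cref{lem:cauchy_integral_formula}, differentiated under the integral sign; the multivariate bound reduces to the one-variable case by restricting $h$ to a complex line.

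\textbf{One variable.} Differentiating \eqref{eq:standard_cauchy} $k$ times in $x_0$ is legitimate, since for $|z_0|<r$ the integrand is smooth in $z_0$ uniformly on the circle $|z|=r$. This gives the stated formula for $f^{(k)}(z_0)$. For the coefficient bound, take $z_0=0$ and use $f_k=f^{(k)}(0)/k!$. Then
\[
 |f_k|\le \frac1{2\pi}\cdot 2\pi r\cdot \frac{M}{r^{k+1}}=Mr^{-k}.
\]
Applying the same bound to $z\mapsto f(z_0+z)$ on the disk of radius $R$ gives $|f^{(k)}(z_0)|\le k!MR^{-k}$. For a function that is only analytic and bounded on the open disk, apply the bound on radius $r'<r$ and let $r'\uparrow r$; the bound $M$ still holds on each smaller circle.

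\textbf{Many variables.} For the mixed derivative, let $d_j$ be the multiplicity of coordinate $j$ among $B_1,\ldots,B_d$, so $\sum_j d_j=d$. The derivative $D_{B_1}\cdots D_{B_d}h(0)$ is a symmetric multilinear form in the coordinate directions. I will extract it using a direction vector $\vec v$ with $\|\vec v\|_1=1$ and $v_j=d_j/d$.

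Define $\phi(t)=h(t\vec v)$. It is analytic and bounded by $M$ for $|t|<R$, because $\|t\vec v\|_1=|t|<R$. By the one-variable result, its $d$-th Taylor coefficient $\phi_d$ satisfies $|\phi_d|\le MR^{-d}$. This coefficient can only see the total degree-$d$ part of $h$, so one cannot recover a single mixed monomial from one line.

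To isolate the monomial, I will instead use polarization over a torus of phases. Consider
\[
 \psi(\theta)=h\bigl((v_je^{i\theta_j}t)_j\bigr).
\]
This point still has $\ell_1$ norm $|t|$. Averaging over $\theta$ against $e^{-i\sum_j d_j\theta_j}$ isolates the coefficient $a_\alpha$ of the monomial $\vec u^\alpha$ with $\alpha=(d_j)_j$. Equivalently, this is the multivariate Cauchy formula on a polydisk with radii $\rho_j=\rho v_j$ and $\rho<R$, whose distinguished boundary lies in the domain. It gives
\[
 |a_\alpha|\le M\prod_j(\rho v_j)^{-d_j}.
\]
The derivative is $\alpha!\,a_\alpha$, so
\[
 |D_{B_1}\cdots D_{B_d}h(0)|\le M\,\alpha!\,\rho^{-d}\prod_j (d/d_j)^{d_j}.
\]
Letting $\rho\to R$ leaves the factor $\alpha!\,d^d\prod_j d_j^{-d_j}$. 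Since $d_j!\le d_j^{d_j}$, this factor is at most $d^d$, which gives $M(d/R)^d$.

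The main point of care is this last combinatorial step. The choice of radii $v_j\propto d_j$ is what makes the bound come out as $(d/R)^d$ rather than as a multinomial-weighted quantity.
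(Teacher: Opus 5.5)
Your proposal is correct. For the one-variable part it is the standard argument behind the paper's citation to Lang. For the mixed-derivative bound you take a different route from the paper.

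\textbf{The paper's route.} The paper gives each ordered slot its own variable, setting $H(t_1,\ldots,t_d)=h(\sum_j t_j\vec e_{B_j})$. This function is analytic and bounded by $M$ on the equal-radius polydisk $|t_j|<R/d$. The coefficient of the multilinear monomial $t_1\cdots t_d$ is exactly $D_{B_1}\cdots D_{B_d}h(0)$, even when indices repeat. Iterating the scalar coefficient bound then gives $M(d/R)^d$ with no combinatorics at all.

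\textbf{Your route.} You group repeated indices into the multiplicity vector $\alpha$ and apply the several-variable Cauchy estimate on an anisotropic polydisk with radii $\rho d_j/d$. Here $h$ is implicitly restricted to the coordinates that occur, since the remaining radii are zero. This route has three costs:
\begin{enumerate}
\item converting the Taylor coefficient $a_\alpha$ into the derivative introduces the factor $\alpha!$;
\item you then need the optimized choice $v_j\propto d_j$;
\item you also need the inequality $\alpha!\le\prod_j d_j^{d_j}$.
\end{enumerate}
The payoff is a slightly sharper bound, $M R^{-d}\,d^d\prod_j d_j!/d_j^{d_j}$, which strictly improves on $M(d/R)^d$ whenever an index repeats. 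The paper's version is shorter and needs only the one-variable estimate applied coordinatewise. Your intermediate discussion of the line $t\vec v$ is not needed for the argument; the torus average (equivalently, the polydisk Cauchy formula) is the operative step.
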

\Cref{eq:cauchy_mixed_derivatives} does not follow directly from the reference, but it is simple to
show. Apply the scalar coefficient bound successively to $H(t_1,\ldots,t_d)=h(\sum_{j=1}^dt_j\vec
e_{B_j})$ on $|t_j|<R/d$. The coefficient of $t_1\cdots t_d$ is the indicated ordered derivative,
even when indices repeat.

\begin{cor}[Cauchy $L^q$ coefficient bound]
	\label{cor:cauchy_lq_bound}
Let $1\leq q<\infty$ and
	\[
	f(z,\omega)=\sum_{k\geq0}f_k(\omega)z^k
	\]
	be analytic in $z$ on $|z|<R$ with $\sup_{|z|<R}\|f(z,\cdot)\|_{L^q}\leq M$. Then,
	\begin{equation}
		 \label{eq:cauchy_lp_coefficients}
		 \|f_k\|_{L^q}\leq MR^{-k}.
	\end{equation}
\end{cor}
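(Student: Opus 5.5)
The statement to prove is the Cauchy $L^q$ coefficient bound (\Cref{cor:cauchy_lq_bound}). I would prove it by duality, reducing to the scalar bound \eqref{eq:taylor_coeff_bound}.

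First, identify the coefficients as contour integrals valued in $L^q$. For any $0<r<R$, analyticity in $z$ gives
\[
 f_k(\omega)=\frac{1}{2\pi\I}\int_{|z|=r}\frac{f(z,\omega)}{z^{k+1}}\,\mathrm dz .
\]
I will read ``analytic'' as meaning that $z\mapsto f(z,\cdot)$ is analytic as an $L^q$-valued map; this is the sense in which \Cref{lem:thermal_estimates} supplies analyticity. Then the pointwise coefficients $f_k(\omega)$ agree almost everywhere with the $L^q$-valued Taylor coefficients. For the polynomial-in-$\vec g$ objects used in the paper this identification is immediate.

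Second, apply Minkowski's integral inequality, i.e.\ the triangle inequality for Bochner integrals in $L^q$:
\[
 \|f_k\|_{L^q}\le \frac{1}{2\pi}\int_{|z|=r}\frac{\|f(z,\cdot)\|_{L^q}}{r^{k+1}}\,|\mathrm dz|\le M r^{-k}.
\]
Alternatively, one can avoid vector-valued integration. Pair $f$ with an arbitrary $h\in L^{q'}$ satisfying $\|h\|_{L^{q'}}\le1$, where $q'$ is the conjugate exponent. The scalar function $\phi_h(z)=\E[h f(z,\cdot)]$ is analytic on $|z|<R$, with Taylor coefficients $\E[h f_k]$. By H\"older it satisfies $|\phi_h(z)|\le M$, so \eqref{eq:taylor_coeff_bound}, in its open-disk version, gives $|\E[hf_k]|\le M r^{-k}$. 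Taking the supremum over $h$ yields $\|f_k\|_{L^q}\le Mr^{-k}$, since $L^{q'}$ norms $L^q$ for $1\le q<\infty$.

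Third, let $r\uparrow R$ to conclude $\|f_k\|_{L^q}\le MR^{-k}$.

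The only delicate point is justifying that the coefficients of $\phi_h$ equal $\E[hf_k]$, i.e.\ exchanging $\E$ with the power series. I would handle this by integrating the contour formula against $h$ and applying Fubini. This is legitimate because
\[
 \int_{|z|=r}\E|h\,f(z,\cdot)|\,|\mathrm dz|\le 2\pi r M<\infty,
\]
using H\"older at each $z$, together with measurability of $f$ jointly in $(z,\omega)$, which follows from analyticity in $z$.
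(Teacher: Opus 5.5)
Your main argument, namely the pointwise Cauchy coefficient formula on $|z|=r<R$, then Minkowski's integral inequality in $L^q$, then letting $r\uparrow R$, is exactly the paper's proof, and it is correct. The duality variant you sketch (pairing with $h\in L^{q'}$ and applying the scalar bound \eqref{eq:taylor_coeff_bound}) is an equally valid repackaging of the same estimate, and reinterpreting ``analytic'' as $L^q$-valued analyticity is unnecessary, since the pointwise reading already suffices.
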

\begin{proof}
Fix $0<r<R$. For each $\omega$, scalar Cauchy's
coefficient formula applies on the circle $|z|=r$, which lies strictly
inside the analytic disk. Parametrizing the circle by $z=re^{\I t}$ gives
\[
 f_k(\omega)
   =\frac{1}{2\pi\I}\int_{|z|=r}
       \frac{f(z,\omega)}{z^{k+1}}\,\mathrm dz
   =\frac{r^{-k}}{2\pi}\int_0^{2\pi}
       f(re^{\I t},\omega)e^{-\I kt}\,\mathrm dt.
\]
We now take the $L^q$ norm with respect to $\omega$. Minkowski's integral
inequality, valid for $q\geq1$, bounds the norm of an integral by the
integral of the norms, so
\[
\begin{aligned}
 \|f_k\|_{L^q}
 &=\frac{r^{-k}}{2\pi}
      \left\|\int_0^{2\pi}f(re^{\I t},\cdot)e^{-\I kt}\,\mathrm dt\right\|_{L^q}\\
 &\stackrel{\mathrm{Minkowski}}{\leq}
      \frac{r^{-k}}{2\pi}\int_0^{2\pi}
       \|f(re^{\I t},\cdot)e^{-\I kt}\|_{L^q}\,\mathrm dt\\
 &=\frac{r^{-k}}{2\pi}\int_0^{2\pi}
       \|f(re^{\I t},\cdot)\|_{L^q}\,\mathrm dt\\
 &\leq\frac{r^{-k}}{2\pi}\int_0^{2\pi}M\,\mathrm dt
   =Mr^{-k}.
\end{aligned}
\]
The second equality uses $|e^{-\I kt}|=1$, and the assumed
uniform $L^q$ bound ensures that the integral of the norms is finite.
Since this holds for every $r<R$, letting $r \rightarrow R$ gives the desired conclusion.
\end{proof}

\begin{lemma}[Hermite coefficients from complex shifts]
\label{lem:shift_to_hermite}
Let $\mc I$ be a finite nonempty index set, let
$\vec \bsg=(\bsg_B)_{B\in\mc I}$ have independent standard
real Gaussian coordinates, and let
$F:\R^{\mc I}\to\C$ be a polynomial. For complex arguments,
use its polynomial extension. Let $\Pi_d$ denote the orthogonal
projection onto Gaussian Hermite degree $d$.

For an ordered tuple $\vec B=(B_1,\ldots,B_d)\in\mc I^d$,
define
\[
  R_{\vec B}(F)
    :=\E_{\vec \bsg}[D_{B_1}\cdots D_{B_d}F(\vec \bsg)],
  \qquad D_B=\frac{\partial}{\partial g_B},
\]
and let $\vec\alpha(\vec B)$ be its multiplicity vector:
\[
  \alpha_B(\vec B)=|\{j:B_j=B\}|,\qquad
  \vec\alpha!=\prod_{B\in\mc I}\alpha_B!.
\]
Then, for every $d\geq1$,
\begin{equation}
  R_{\vec B}(F)
    =\sqrt{\vec\alpha(\vec B)!}\,
       \E_{\vec \bsg}\left[
         F(\vec \bsg)\mathsf{Herm}_{\vec\alpha(\vec B)}(\vec \bsg)
       \right],
  \label{eq:shift_hermite_coefficient}
\end{equation}
and
\begin{equation}
  \|\Pi_dF\|_{L^2}^2
    =\frac1{d!}
       \sum_{\vec B\in\mc I^d}|R_{\vec B}(F)|^2.
  \label{eq:ordered_hermite_shared}
\end{equation}
\end{lemma}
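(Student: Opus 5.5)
The plan is to reduce both identities to the one-dimensional Gaussian integration by parts formula and orthonormality of the normalized Hermite basis. First, for a single standard Gaussian $\bsg$ and polynomial $f$, repeated integration by parts (Stein's identity $\E[f'(\bsg)h(\bsg)]=\E[f(\bsg)(\bsg h(\bsg)-h'(\bsg))]$) gives $\E[f^{(k)}(\bsg)]=\E[f(\bsg)\mathrm{He}_k(\bsg)]$. This can be checked by induction on $k$ using the recursion $\mathrm{He}_{k+1}(x)=x\mathrm{He}_k(x)-\mathrm{He}_k'(x)$, or directly from the Rodrigues formula $\mathrm{He}_k(x)e^{-x^2/2}=(-1)^k\frac{d^k}{dx^k}e^{-x^2/2}$. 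Since $F$ is a polynomial, all boundary terms vanish and all expectations are finite, so complex coefficients cause no issue.

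Second, I extend this to several coordinates. Partial derivatives commute, so $D_{B_1}\cdots D_{B_d}F=\prod_B D_B^{\alpha_B}F$ with $\vec\alpha=\vec\alpha(\vec B)$. Because the coordinates are independent, I apply the one-variable identity coordinate by coordinate (Fubini), which gives
\[
  R_{\vec B}(F)=\E_{\vec\bsg}\Big[F(\vec\bsg)\prod_B\mathrm{He}_{\alpha_B}(\bsg_B)\Big].
\]
Using $\mathrm{He}_k=\sqrt{k!}\,\mathsf{Herm}_k$, this product equals $\sqrt{\vec\alpha!}\,\mathsf{Herm}_{\vec\alpha}$, which proves \eqref{eq:shift_hermite_coefficient}.

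Third, the products $\mathsf{Herm}_{\vec\alpha}$ form an orthonormal basis of $L^2$ under the Gaussian measure, and the polynomial $F$ lies in their finite span. Hence
\[
  \|\Pi_dF\|_{L^2}^2=\sum_{|\vec\alpha|=d}\big|\E[F\,\mathsf{Herm}_{\vec\alpha}]\big|^2 .
\]
The step that needs care is counting the ordered tuples. A given multi-index $\vec\alpha$ with $|\vec\alpha|=d$ arises from exactly $d!/\vec\alpha!$ ordered tuples $\vec B\in\mc I^d$, and each such tuple has $|R_{\vec B}(F)|^2=\vec\alpha!\,|\E[F\mathsf{Herm}_{\vec\alpha}]|^2$. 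So the sum $\frac1{d!}\sum_{\vec B}|R_{\vec B}(F)|^2$ becomes
\[
  \frac1{d!}\sum_{|\vec\alpha|=d}\frac{d!}{\vec\alpha!}\,\vec\alpha!\,\big|\E[F\mathsf{Herm}_{\vec\alpha}]\big|^2,
\]
which matches the previous display and gives \eqref{eq:ordered_hermite_shared}.

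The only real obstacle is keeping the normalization consistent between $\mathrm{He}_k$ and $\mathsf{Herm}_k$, so that the factors $\vec\alpha!$ cancel correctly. Everything else is routine.
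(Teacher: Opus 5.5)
Your proposal is correct and follows essentially the same route as the paper. Gaussian integration by parts, via the Rodrigues formula $D^{\vec\alpha}\varphi=(-1)^{|\vec\alpha|}\sqrt{\vec\alpha!}\,\mathsf{Herm}_{\vec\alpha}\varphi$ (which you obtain coordinatewise through Fubini), gives \eqref{eq:shift_hermite_coefficient}; then counting the $d!/\vec\alpha!$ ordered representatives of each multi-index, together with Hermite orthonormality, gives \eqref{eq:ordered_hermite_shared}.
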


\begin{proof}
Let $\varphi$ be the standard Gaussian density on
$\R^{\mc I}$. The normalized probabilists' Hermite
polynomials satisfy
\[
  D^{\vec\alpha}\varphi(\vec g)
    =(-1)^{|\vec\alpha|}
       \sqrt{\vec\alpha!}\,
       \mathsf{Herm}_{\vec\alpha}(\vec g)\varphi(\vec g).
\]
Integration by parts in each coordinate therefore gives
\[
  \E_{\vec \bsg}[D^{\vec\alpha}F(\vec \bsg)]
    =
    \sqrt{\vec\alpha!}\,
      \E_{\vec \bsg}
        [F(\vec \bsg)\mathsf{Herm}_{\vec\alpha}(\vec \bsg)].
\]
There are no boundary terms, since $F$ and all its derivatives
are polynomials and the Gaussian density decays faster than
any polynomial. Taking $\vec\alpha=\vec\alpha(\vec B)$ proves
\Cref{eq:shift_hermite_coefficient}, including repeated indices.

A multiplicity vector $\vec\alpha$ of total degree $d$ has
exactly $d!/\vec\alpha!$ ordered representatives. Hence
\begin{align*}
  \frac1{d!}
    \sum_{\vec B\in\mc I^d}|R_{\vec B}(F)|^2
  &=
    \frac1{d!}
    \sum_{|\vec\alpha|=d}
      \frac{d!}{\vec\alpha!}\,
      \vec\alpha!\,
      \left|
        \E_{\vec \bsg}
          [F(\vec \bsg)\mathsf{Herm}_{\vec\alpha}(\vec \bsg)]
      \right|^2\\
  &=
    \sum_{|\vec\alpha|=d}
      \left|
        \E_{\vec \bsg}
          [F(\vec \bsg)\mathsf{Herm}_{\vec\alpha}(\vec \bsg)]
      \right|^2
    =\|\Pi_dF\|_{L^2}^2.
\end{align*}
The last equality is Hermite orthonormality, proving
\Cref{eq:ordered_hermite_shared}.
\end{proof}

\begin{lemma}[Schatten H\"older inequality {\cite[Theorems~6.32 and~7.2]{tropp2022matrix}}]\label{alltemp:lem:schatten-holder}
Use the normalized Schatten norms $\|X\|_p=(\operatorname{Tr}|X|^p)^{1/p}$ for $1\le p<\infty$, and $\|X\|_\infty=\|X\|_{\rm op}$. For matrices $X_1,\ldots,X_\ell$ on the same space and $p_j\in[1,\infty]$ with $\sum_jp_j^{-1}=1$,
\begin{equation}\label{alltemp:eq:schatten-holder}
 \left|\operatorname{Tr}(X_1\cdots X_\ell)\right|
 \le\prod_{j=1}^{\ell}\|X_j\|_{p_j}.
\end{equation}
\end{lemma}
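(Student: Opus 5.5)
This is the standard Schatten--H\"older inequality, proved in two steps: first for the normalized trace with two factors, then for $\ell$ factors by induction. Since $\operatorname{Tr}=d_n^{-1}\operatorname{tr}$ and the normalized Schatten norm satisfies $\|X\|_p=d_n^{-1/p}\|X\|_{S_p}$, the constant scales correctly. Indeed, $\prod_j d_n^{-1/p_j}=d_n^{-1}$ exactly when $\sum_jp_j^{-1}=1$. So it suffices to prove the inequality for the ordinary trace and the ordinary Schatten norms, and then multiply both sides by $d_n^{-1}$.

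For $\ell=2$ with conjugate exponents $p,q$, I use the von Neumann trace inequality $|\operatorname{tr}(XY)|\le\sum_i s_i(X)s_i(Y)$, where the singular values are arranged in decreasing order. The scalar H\"older inequality applied to the singular-value sequences then gives $|\operatorname{tr}(XY)|\le\|X\|_{S_p}\|Y\|_{S_q}$. The endpoint $p=\infty$ is handled by the bound $\sum_i s_i(X)s_i(Y)\le s_1(X)\sum_i s_i(Y)$.

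For general $\ell$, I induct using the generalized H\"older inequality for Schatten norms: $\|XY\|_{S_r}\le\|X\|_{S_p}\|Y\|_{S_q}$ whenever $1/r=1/p+1/q$. This in turn follows from Horn's majorization $\prod_{i\le k}s_i(XY)\le\prod_{i\le k}s_i(X)s_i(Y)$. Log-majorization implies weak majorization of the $r$-th powers, and the scalar H\"older inequality on the sequences then finishes this step.

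With this in hand, I group the factors as $X_1\cdot(X_2\cdots X_\ell)$. I set $1/r=\sum_{j\ge2}1/p_j$, so that $r$ is conjugate to $p_1$. The two-factor case gives $|\operatorname{tr}(X_1\cdots X_\ell)|\le\|X_1\|_{S_{p_1}}\|X_2\cdots X_\ell\|_{S_r}$. Iterating the product bound on the second factor then yields $\|X_2\cdots X_\ell\|_{S_r}\le\prod_{j\ge2}\|X_j\|_{S_{p_j}}$. Quasi-norm issues for $r<1$ do not arise: the majorization argument only uses the scalar H\"older inequality with exponents $p/r,q/r\ge1$, and these are valid even when $r<1$.

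The only nontrivial ingredient is Horn's inequality. I would cite it, following \cite{tropp2022matrix}, rather than reprove it, since the statement is quoted from there.
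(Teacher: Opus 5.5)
Your proposal is correct and follows the paper's argument. The paper likewise rescales to the ordinary trace (using $\prod_j d_n^{-1/p_j}=d_n^{-1}$) and iterates the two-factor Schatten--H\"older product inequality; you additionally justify the cited inequality through von Neumann's trace inequality and Horn's majorization rather than quoting \cite{tropp2022matrix}. As a side remark, every intermediate exponent satisfies $1/r\le 1-1/p_1\le 1$, so the $r<1$ caveat never actually arises.
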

The multi-factor form follows by iterating the product inequality in the cited result. The normalized-trace convention preserves the inequality because both sides acquire the same factor $d_n^{-1}$.

\begin{lemma}[Gaussian concentration {\cite[Theorem~3.25]{vanhandel2016probability}}]\label{alltemp:lem:gaussian-concentration}
Let $G\in\mathbb R^d$ be a standard Gaussian vector and let $f:\mathbb R^d\to\mathbb R$ be globally $L$-Lipschitz with respect to the Euclidean norm, where $L>0$. Then
\begin{equation}\label{alltemp:eq:gaussian-concentration-tool}
 \begin{aligned}
  \mathbb E e^{t(f(G)-\mathbb E f(G))}&\le e^{t^2L^2/2}
   &&(t\in\mathbb R),\\
  \mathbb P[f(G)-\mathbb E f(G)\le-u]&\le e^{-u^2/(2L^2)}
   &&(u\ge0).
 \end{aligned}
\end{equation}
\end{lemma}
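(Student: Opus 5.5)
The plan is the Herbst argument built on the Gaussian logarithmic Sobolev inequality, which yields the sharp constant $L^2/2$ in the moment-generating-function bound. The lower tail then follows by a Chernoff bound.

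First, I reduce to smooth $f$. Since $f$ is $L$-Lipschitz, it grows at most linearly, so $\mathbb E e^{t f(G)}<\infty$ for every real $t$. Convolving with a small Gaussian kernel, $f_\epsilon=f*\varphi_\epsilon$, gives a smooth function that is still $L$-Lipschitz, so $|\nabla f_\epsilon|\le L$ pointwise. Moreover $f_\epsilon\to f$ uniformly, since $|f_\epsilon-f|\le L\,\mathbb E|\epsilon Z|$. Uniform convergence passes both $\mathbb E f_\epsilon(G)$ and $\mathbb E e^{t(f_\epsilon-\mathbb E f_\epsilon)}$ to their limits. So it suffices to prove the MGF bound for smooth $f$ with $\|\nabla f\|_2\le L$ everywhere.

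Next, I invoke the Gaussian log-Sobolev inequality (Gross): for smooth $g$,
\[
\mathrm{Ent}(g^2)\;:=\;\mathbb E[g^2\log g^2]-\mathbb E[g^2]\log\mathbb E[g^2]\;\le\;2\,\mathbb E\|\nabla g\|_2^2 .
\]
I apply it to $g=e^{tf/2}$, which gives
\[
\mathrm{Ent}(e^{tf})\le \frac{t^2}{2}\mathbb E\bigl[\|\nabla f\|_2^2e^{tf}\bigr]\le \frac{t^2L^2}{2}\,\mathbb E e^{tf}.
\]
Write $\Lambda(t)=\log\mathbb E e^{tf}$. Then $\mathrm{Ent}(e^{tf})=\mathbb E e^{tf}\,(t\Lambda'(t)-\Lambda(t))$, so the display reads $t\Lambda'-\Lambda\le t^2L^2/2$. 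Equivalently, $\frac{d}{dt}\bigl(\Lambda(t)/t\bigr)\le L^2/2$ for $t>0$. As $t\to0^+$, $\Lambda(t)/t\to\mathbb E f$. Integrating from $0$ to $t$ gives $\Lambda(t)\le t\,\mathbb E f+t^2L^2/2$, which is the first line of \eqref{alltemp:eq:gaussian-concentration-tool} for $t>0$. The case $t<0$ follows by applying the same argument to $-f$, which is also $L$-Lipschitz.

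For the tail, Markov's inequality applied to $e^{-\lambda(f-\mathbb E f)}$ with $\lambda>0$ gives
\[
\mathbb P[f-\mathbb E f\le -u]\le e^{-\lambda u+\lambda^2L^2/2}.
\]
Choosing $\lambda=u/L^2$ yields $e^{-u^2/(2L^2)}$.

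The main obstacle is the log-Sobolev inequality itself; if I do not simply cite Gross, I would prove it via the Ornstein--Uhlenbeck semigroup $P_s$. Set $h=g^2$ and differentiate $\mathbb E[P_sh\log P_sh]$ in $s$; this uses the integration by parts $\mathbb E[\mathcal L u\cdot v]=-\mathbb E\langle\nabla u,\nabla v\rangle$. Then use the commutation $\nabla P_s h=e^{-s}P_s\nabla h$ together with Cauchy--Schwarz in the form $|P_s\nabla h|^2\le P_s(|\nabla h|^2/h)\,P_sh$. Integrating $e^{-2s}$ over $s\in[0,\infty)$ produces the constant $2$. An alternative route is tensorization of the two-point inequality plus the central limit theorem, but the semigroup proof is cleaner and dimension-free, which is what the application in Lemma~\ref{alltemp:lem:concentration} needs with $d=m$.
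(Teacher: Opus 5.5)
Your proposal is correct and is essentially the paper's argument. The paper simply cites van Handel's Gaussian concentration theorem for the sub-Gaussian moment-generating-function bound, which is the standard log-Sobolev plus Herbst result you reconstruct in full (with a valid smoothing reduction), and it obtains the lower tail by the same exponential Markov inequality applied to $-f$.
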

The cited theorem gives $L^2$-subgaussianity in the sense of \cite[Definition~3.5]{vanhandel2016probability}, which is the displayed moment-generating-function bound. Applying the exponential Markov inequality to $-f$ gives the lower tail.

\begin{lemma}[Araki--Lieb--Thirring inequality {\cite{araki1990inequality}}]\label{alltemp:lem:alt}
For positive semidefinite matrices $D,E$ on the same space and every real $r\ge1$,
\begin{equation}\label{alltemp:eq:alt}
 \operatorname{Tr}\!\left[(E^{1/2}DE^{1/2})^r\right]
 \le\operatorname{Tr}\!\left(E^{r/2}D^rE^{r/2}\right).
\end{equation}
\end{lemma}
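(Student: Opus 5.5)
The plan is the classical route: prove an operator-norm inequality first, then lift it to all eigenvalues through antisymmetric tensor powers and log-majorization. Since both sides of \eqref{alltemp:eq:alt} scale by the same factor $d_n^{-1}$ under the normalized trace, I will work with the ordinary trace. By continuity of both sides in $(D,E)$, I may also assume $D,E\succ0$, replacing them by $D+\epsilon I$, $E+\epsilon I$ and letting $\epsilon\to0$.

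\textbf{Step 1 (norm inequality).} I will show that for positive definite $A,B$ and $0<t\leq1$,
\[
\|A^tB^tA^t\|_{\rm op}\le\|ABA\|_{\rm op}^t .
\]
By homogeneity, normalize so that $\|ABA\|_{\rm op}=1$, i.e.\ $B\preceq A^{-2}$. Operator monotonicity of $x\mapsto x^t$ for $t\in(0,1]$ (L\"owner--Heinz) gives $B^t\preceq A^{-2t}$. Conjugating by $A^t$ yields $A^tB^tA^t\preceq I$, which is the claim.

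Now set $t=1/r$, $A=E^{r/2}$ and $B=D^r$. Then $A^tB^tA^t=E^{1/2}DE^{1/2}$ and $ABA=E^{r/2}D^rE^{r/2}$. Step 1 therefore gives
\[
\lambda_1\bigl((E^{1/2}DE^{1/2})^r\bigr)\le\lambda_1\bigl(E^{r/2}D^rE^{r/2}\bigr).
\]

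\textbf{Step 2 (all eigenvalues via $\wedge^k$).} Apply Step 1 to $\wedge^kD$ and $\wedge^kE$. The map $X\mapsto\wedge^kX$ is multiplicative, commutes with positive powers, and its top eigenvalue equals the product of the top $k$ eigenvalues of $X$. Hence, writing $\lambda_j(\cdot)$ for eigenvalues in decreasing order, for every $k$
\[
\prod_{j\le k}\lambda_j\bigl((E^{1/2}DE^{1/2})^r\bigr)\le\prod_{j\le k}\lambda_j\bigl(E^{r/2}D^rE^{r/2}\bigr).
\]
Equality holds at $k=\dim$, because both determinants equal $\det(E)^r\det(D)^r$. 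This is exactly log-majorization of the two spectra.

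\textbf{Step 3 (to the trace).} Log-majorization implies weak majorization, since $x\mapsto e^x$ is convex and increasing (the standard Weyl/Karamata step). Summing all eigenvalues then gives \eqref{alltemp:eq:alt}.

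The main obstacle is Step 1, specifically justifying operator monotonicity of $x^t$. If I do not cite L\"owner--Heinz, I would prove it from the integral representation
\[
x^t=\frac{\sin\pi t}{\pi}\int_0^\infty\frac{x}{x+\lambda}\,\lambda^{t-1}\,d\lambda,
\]
in which each integrand is operator monotone in $x$ because $X\mapsto-(X+\lambda)^{-1}$ is operator monotone. The remaining steps are routine bookkeeping with antisymmetric powers.
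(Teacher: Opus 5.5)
Your proof is correct. The paper does not prove this lemma; it simply cites it from Araki (1990), and your argument is essentially Araki's original proof: the norm bound $\|A^tB^tA^t\|_{\rm op}\le\|ABA\|_{\rm op}^t$ from L\"owner--Heinz, lifted through antisymmetric tensor powers to log-majorization of the spectra, then summed into the trace. The determinant equality at $k=\dim$ is not needed, since weak log-majorization already gives weak majorization after exponentiating.
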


\newcommand{\Conn}{\operatorname{Conn}}
\section{Wick expansion and Koteck\'y--Preiss}
\label{sec:shared_wick_kp}

In this section, we introduce the definitions and objects needed for our Wick expansion, extending
the Wick-pair polymer construction of \cite{zlokapa2026rigorous}. The key
technical tool used for both the thermal estimates in \Cref{lem:thermal_estimates}
and the high-degree bound \Cref{lem:degree_bound} is the Koteck\'y--Preiss theorem
\cite{kotecky1986cluster}, in the form of \cite[Theorem~1]{ueltschi2004cluster}. We use one
version of that theorem, but keep the three applications separate. The relevant difference is
whether we are changing Wick covariances, deleting mixed polymers, or excluding the support of an
observable.

Recall that $n \geq 8$ is even and $\sigma_n=\sqrt{6/n^3}$. Constants denoted by $C$ are absolute
and may increase from line to line. A subscript, as in $C_A$, indicates the allowed dependence.

\subsection{Wick diagrams and component placement}
\label{sec:wick_placement}

As a first step, expand each partition function as
\begin{equation}
  Z(\beta,\vec g)=\sum_{r\geq0}\frac{(-\beta)^r\sigma_n^r}{r!}
     \sum_{I_1,\ldots,I_r\in\mc I_4}
       \left(\prod_{j=1}^r g_{I_j}\right)
       \Tr\left[\prod_{j=1}^r\Gamma_{I_j}\right].
  \label{eq:z_taylor_appendix}
\end{equation}
Note that for a single summand above, there are $r$ ``slots'' which are labeled by a quartic
Majorana $I \in \mc I_4$. In our application, we will have as many as four replicas; we index the
replicas by an integer, referring to the index as a ``color'' $c$ and the corresponding partition
function as $Z_c(\beta_c, \vec g_c)$. Then, a summand from each replica is indexed by $\vec r =
(r_c)_c$, where $r_c$ is the number of slots for color $c$. We denote the total set of slots by 
\[
	\mathsf{Slots}_C(\vec r)=\bigsqcup_{c\in[C]}(\{c\}\times[r_c])\,.
\]
The Gaussian expectation is computed by the Wick--Isserlis identity \cite{isserlis1918formula}.

\begin{theorem}[Wick--Isserlis]
\label{thm:wick}
Let $X_1, \dots, X_r$ be centered jointly Gaussian real random variables. Then
\[
	\E\Brac{\prod_{j=1}^rX_j} =\sum_{\pi\in\PM([r])} \prod_{\{i,j\} \in \pi} \E[X_i X_j]\,.
\]
$\PM([r])$ denotes the set of perfect matchings on $[r]$, with $\PM([r]) = \emptyset$ when $r$ is
odd, and $\PM(\emptyset)$ contains the empty matching.
\end{theorem}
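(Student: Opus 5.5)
We prove the identity by induction on $r$, using Gaussian integration by parts to peel off the first variable. Take the covariance matrix $\Sigma_{ij}=\E[X_iX_j]$. Write $X=A\bsg$, where $\bsg\sim\mc N(0,\Id_k)$ and $AA^\top=\Sigma$. Such a representation exists even when $\Sigma$ is singular, for example $A=\Sigma^{1/2}$. It covers the repeated-variable case $X_i=X_j$ that occurs in our Wick expansions. The degenerate case is the only place one must be careful, and this representation is how we avoid assuming $\Sigma$ invertible.

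\textbf{Step 1: Gaussian integration by parts.} For a polynomial $f:\R^r\to\R$ we claim
\[
  \E[X_1 f(X)]=\sum_{j=1}^r\Sigma_{1j}\,\E[\partial_j f(X)].
\]
Write $X_1=\sum_\ell A_{1\ell}\bsg_\ell$. The one-dimensional Stein identity $\E[\bsg_\ell h(\bsg)]=\E[\partial_{\bsg_\ell}h(\bsg)]$ follows by integrating by parts against the density $e^{-x^2/2}/\sqrt{2\pi}$. There are no boundary terms, since $h$ is a polynomial. Apply it with $h(\bsg)=f(A\bsg)$, for which $\partial_{\bsg_\ell}h=\sum_j A_{j\ell}(\partial_jf)(A\bsg)$. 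Then
\[
  \E[X_1f(X)]=\sum_{j}\Bigl(\sum_\ell A_{1\ell}A_{j\ell}\Bigr)\E[\partial_jf(X)]=\sum_j\Sigma_{1j}\,\E[\partial_jf(X)].
\]

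\textbf{Step 2: the recursion.} Apply Step 1 with $f(x)=\prod_{k=2}^r x_k$, treating the slots as distinct variables. Then $\partial_1 f=0$, and for $j\geq2$ we have $\partial_j f=\prod_{k\neq1,j}x_k$. Hence
\[
  \E\Brac{\prod_{k=1}^rX_k}=\sum_{j=2}^r\E[X_1X_j]\,\E\Brac{\prod_{k\notin\{1,j\}}X_k}.
\]
The right-hand side of the theorem satisfies the same recursion. Every perfect matching of $[r]$ pairs $1$ with a unique $j\geq2$, and removing that pair is a bijection onto $\PM([r]\setminus\{1,j\})$. Factoring out $\E[X_1X_j]$ then gives exactly the displayed sum.

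\textbf{Step 3: induction.} The base cases are $r=0$, where both sides equal $1$ (the empty product, and the empty matching), and $r=1$, where both sides are $0$ because $X_1$ is centered and $\PM([1])=\emptyset$. Since the two sides obey the same recursion in $r$ and agree at $r=0,1$, they agree for all $r$. In particular, for odd $r$ the recursion reduces every term to the $r=1$ case, so both sides vanish, consistent with $\PM([r])=\emptyset$.

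As a sanity check one could instead extract the coefficient of $t_1\cdots t_r$ from $\E e^{t\cdot X}=e^{t^\top\Sigma t/2}$. That route needs the same care with repeated indices, so we prefer the recursion above.
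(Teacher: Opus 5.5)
Your argument is correct and complete. Stein's identity for $\Sigma^{1/2}\bsg$, the recursion obtained by differentiating $\prod_{k\geq2}x_k$ slot by slot, the matching recursion indexed by the partner of $1$, and induction from $r\in\{0,1\}$ all check out. Passing through $\Sigma^{1/2}$ correctly handles singular covariances. Strictly, $X$ and $\Sigma^{1/2}\bsg$ agree only in law, but that is enough because both sides of the identity depend only on $\Sigma$.

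The paper gives no proof of \Cref{thm:wick}. It is quoted from \cite{isserlis1918formula} and used as a black box in two places. The first is the Wick-diagram expansions, where repeated quartic labels and the $\vartheta=1$ replica coupling make the covariance degenerate; this is precisely the case your $\Sigma^{1/2}$ device covers. The second is the proof of \Cref{alltemp:lem:wick}. There the authors follow the moment-generating-function route you mention as a sanity check: they differentiate the shift identity $\E[e^{a(t)\cdot\vec\bsg}f(\vec\bsg)]=e^{Q(t)}\E[f(\vec\bsg+a(t))]$ in position-labelled auxiliary variables. They still call on \Cref{thm:wick} to evaluate $D_Se^{Q(t)}|_{t=0}$.

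Your route buys something here. Apply the same integration by parts to $\E[\bsg_{i_1}\cdots\bsg_{i_N}f(\vec\bsg)]$. The derivative coming from $\bsg_{i_1}$ lands either on some $\bsg_{i_b}$, pairing positions $1$ and $b$, or on $f$, leaving position $1$ unmatched. By induction on $N$, this proves the partial-pairing formula of \Cref{alltemp:lem:wick} directly under the same growth hypothesis. It needs only that boundary terms vanish, rather than a differentiation-under-the-integral argument. The generating-function route, in exchange, produces all moments from the single closed form $e^{Q(t)}$. It also makes the multiplicity bookkeeping for repeated indices automatic, since each slot carries its own variable $t_a$.
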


In the replicated Gaussian disorder, coordinates with different quartic labels are independent, so a covariance between two slots labeled by $I,J \in \mc I_4$ is $0$ unless $I=J$. Therefore,
after choosing a matching, each matched pair of slots is assigned the same label. A slot vector
$\vec r$, matching $\pi \in \PM(\mathsf{Slots}_C)$, and the labeling $\mc A$ of matching edges to
$\mc I_4$ define a \emph{Wick diagram}.

\begin{definition}[Wick diagram and polymer]
A Wick diagram is a tuple $\mc D = (\vec r, \pi, \mc A)$, where $\pi \in \PM(\mathsf{Slots}_C(\vec
r))$ and $\mc A: \pi \rightarrow \mc I_4$. For an edge $e=\{(c,i),(c',j)\}$, its \emph{colored}
support is
\[
	S_e=\Paren{\{c\} \times \mc A(e)} \cup \Paren{\{c'\} \times \mc A(e)}\,.
\]
The Wick diagram defines a graph, with vertices given by the edges of the matching $\pi$ and edges
between $e, e' \in \pi$ when $S_e \cap S_{e'} \neq \emptyset$. A \emph{polymer} $\mc P$ is a
nonempty connected Wick diagram. Its support is $\supp(\mc P)=\bigcup_{ e \in \pi} S_e$. We write,
\[
	\mc P_1 \sim \mc P_2
\]
to indicate that $\mc P_1, \mc P_2$ have disjoint supports; we say such polymers are
\emph{compatible}.
\end{definition}

Given a diagram $\mc D$ and a color $c$, define the ordered word
$\mathsf{Mon}_c(\mc D)=\prod_{i=1}^{r_c}\Gamma_{\mc A(e_{\mc D}(c,i))}$,
where $e_{\mc D}(c,i)$ is the matching edge incident to slot $(c,i)$. For a pair of colors $c, c'$, we write $\kappa_{c, c'}$ to
denote the covariance between any pair of matched slots. This will either be $0$ if the replicas
are independent or $1$ otherwise. Then the contribution of a diagram with slot vector $\vec r$ to the disorder average of
$\prod_{c \in [C]} Z_c(\beta_c, \vec\bsg_c)$ is given by the activity
\begin{savedequation}{diagramactivity}
	\label{eq:diagram_activity}
	\mathsf{Act}(\mc D; \beta_1, \dots, \beta_C) = \Paren{\prod_c \frac{(-\beta_c)^{r_c}}{r_c!}}
	\sigma_n^{2|\pi|} \Paren{\prod_{e=\{(c,i), (c',j)\} \in \pi} \kappa_{cc'}}
	\prod_c \Tr[\mathsf{Mon}_c(\mc D)]\,.
\end{savedequation}
Comparing this definition with \Cref{eq:z_taylor_appendix} and using \Cref{thm:wick} we have that,
\[
	\E_{(\vec\bsg_c)_{c\in[C]}} \prod_{1 \leq c \leq C} Z_c(\beta_c, \vec\bsg_c) =
	\sum_{\text{Diagrams } \mc D} \mathsf{Act}(\mc D; \beta_1, \dots, \beta_C)\,.
\]
The trace factors obey the normalization and symmetric-difference rule established in
\Cref{prop:polymer_counting} below.

Next, we will show how to decompose diagrams into connected components, each of which is a polymer.
View the expression
\[
	M(\vec \beta) = \E_{(\vec\bsg_c)_{c\in[C]}} \prod_{c \in [C]} Z_c(\beta_c, \vec\bsg_c)
\]
as a power series in formal variables $\vec \beta = (\beta_c)_c$.  The next lemma shows that
coefficient-wise, $M(\vec \beta)$ is equivalent to the \emph{hard-core gas} $\Xi(\vec \beta)$.

\begin{lemma}[Decomposition into connected polymers]
\label{lem:component_placement}
The $C$-replica partition function $M(\vec \beta)$ is, coefficientwise, the hard-core gas
\begin{equation}
	\label{eq:shared_gas}
	\Xi(\beta_1, \dots, \beta_C)= \sum_{q \geq 0}\frac1{q!} \sum_{\substack{\mc P_1, \dots, \mc
	P_q \in \mathsf P\\ \mc P_i\sim\mc P_j\ (i < j)}} \prod_{j=1}^q \mathsf{Act}(\mc P_j;
	\beta_1, \dots, \beta_C)\,.
\end{equation}
\end{lemma}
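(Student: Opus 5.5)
We prove the identity coefficientwise: for a fixed slot vector $\vec r$, we match the diagram terms of total size $\vec r$ on the left against the polymer configurations of the same total size on the right. The plan rests on two facts. First, each Wick diagram $\mc D$ determines a unique unordered family of connected components. Second, the activity in \Cref{eq:diagram_activity} factorizes over these components once the combinatorial prefactors $\prod_c 1/r_c!$ are accounted for.

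\textbf{Step 1: decompose a diagram into polymers.} Given $\mc D=(\vec r,\pi,\mc A)$, take the connected components of its overlap graph on $\pi$. Each component $\pi_j$ inherits, for each color $c$, the set of slots of color $c$ it touches, of size $r_c^{(j)}$. Relabel those slots by the order-preserving bijection onto $[r_c^{(j)}]$; this makes each component a polymer $\mc P_j$. The components have disjoint colored supports, so the $\mc P_j$ are pairwise compatible.

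Conversely, take an ordered compatible $q$-tuple $(\mc P_1,\dots,\mc P_q)$ with $\sum_j r_c^{(j)}=r_c$, together with, for every color $c$, a choice of how to interleave the slot sequences of the $\mc P_j$ into $[r_c]$. The number of such interleavings is the multinomial $r_c!/\prod_j r_c^{(j)}!$. This data yields a diagram whose components are exactly the $\mc P_j$. The connectedness and disjointness of the pieces guarantee that no two of them merge into one component. Each diagram with $q$ components arises from exactly $q!$ ordered tuples, since the components are distinct as sets of slots, so no symmetry factor beyond $q!$ appears.

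\textbf{Step 2: factorize the activity.} The factors $\sigma_n^{2|\pi|}$, the $\kappa$ products and $(-\beta_c)^{r_c}$ are multiplicative over components. The $1/r_c!$ factor combines with the interleaving count to give $\prod_j 1/r_c^{(j)}!$. The one real point is the trace:
\[
\Tr[\mathsf{Mon}_c(\mc D)]=\prod_j\Tr[\mathsf{Mon}_c(\mc P_j)]
\]
for every interleaving. The words belonging to different components are supported on disjoint Majorana sets within color $c$. Each component's word is a product of quartic (even) monomials, so words from different components commute. We can therefore reorder $\mathsf{Mon}_c(\mc D)$ into the product of component words without any sign, while the order within each component is preserved.

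It then remains to show that the normalized trace of a product of even Clifford elements on disjoint supports factorizes. This holds because $\Tr[\Gamma_S]=\1_{S=\varnothing}$, so the trace of the product is nonzero iff each factor's symmetric-difference set is empty; the overall phase is then the product of the individual phases. I expect this trace factorization, in particular checking carefully that no fermionic reordering sign survives, to be the only delicate step. Evenness of every factor is exactly what kills the sign.

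\textbf{Step 3: sum up.} Combining the two steps, the sum over diagrams equals
\[
\sum_q \frac{1}{q!}\sum_{\text{compatible }(\mc P_1,\dots,\mc P_q)}\prod_j\mathsf{Act}(\mc P_j),
\]
where each $\mathsf{Act}(\mc P_j)$ carries its own $\prod_c 1/r_c^{(j)}!$ factor. The empty diagram corresponds to $q=0$. Since each fixed $\vec r$ involves only finitely many terms, the identity holds as an equality of formal power series in $\vec\beta$, which gives \Cref{eq:shared_gas}.
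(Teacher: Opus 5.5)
Your proposal is correct and follows essentially the same argument as the paper. The connected components of the overlap graph are relabeled order-preservingly (the paper's $\mathsf{ord}_{j,c}$), and the inverse is given by interleavings (the paper's $\mathsf{Place}_{c,j}$) whose multinomial count turns $\prod_c 1/r_c!$ into the per-polymer factorials. The traces factorize because the component words in each color are even, commute, and have disjoint supports, and the $1/q!$ accounts for orderings of the components. One wording point: each diagram corresponds to exactly $q!$ pairs (ordered compatible tuple, interleaving), not $q!$ tuples, and that is what your count actually uses.
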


\begin{proof}
Fix a diagram $\mc D=(\vec r,\pi,\mc A)$ and recall the formula for its activity
\Cref{eq:diagram_activity}. Let $E_1, \dots, E_q$ be the vertex sets of the connected components of
the graph induced by $\mc D$. Recall that the vertices of this graph are the edges of the matching
$\pi$, so the sets $E_i$ partition $\pi$. Each of these connected components will induce a polymer
$\mc P_i$. Note that since the components are disconnected, $\mc P_i \sim \mc P_j$, for all distinct $i,j\in[q]$.

For each component $j$, let $r_{c,j}$ be the total number of slots of color $c$ belonging to component $j$. Let
$\mathsf{ord}_{j,c} : [r_{c,j}] \rightarrow \mathsf{Slots}_C$ list these slots in inherited increasing order, so that
$\mathsf{ord}_{j,c}(i)$ indicates the $i$-th slot in the ordering. In the context of a single
connected polymer, we will abuse notation and write $(c,i)$ to refer to the slot $\mathsf{ord}_{j,
c}(i)$.

Conversely, fix an ordered tuple of compatible polymers
$\mc P_1,\ldots,\mc P_q$. We describe how to place the slots for each individual polymer into a
global slot vector. Since $r_{c,j}$ is the number of color-$c$ slots for polymer $\mc P_j$, we
define $r_c = \sum_{j \in [q]} r_{c,j}$. A placement is a collection of strictly increasing maps
$\mathsf{Place}_{c, j} : [r_{c,j}] \rightarrow [r_c]$ with,
\[
	\bigsqcup_{j \in [q]} \mathsf{Image}(\mathsf{Place}_{c,j}) = [r_c]\,.
\]
Then, we can replace every local slot $(c,k)$ of $\mc P_j$ by $(c, \mathsf{Place}_{c,j}(k))$ which
yields a full Wick diagram.

These two operations are inverses: the maps $\mathsf{ord}_{j,c}$ partition the diagram into
connected polymers and $\mathsf{Place}$ places each polymer in the context of a global diagram.

\paragraph{Factorization of the weights.}
Next we show that the activities themselves factorize across the connected polymers. Suppress the
inverse temperatures $\beta_1, \dots, \beta_C$ in the activities for notational convenience. We
first consider all expressions not including the factorial in the diagram activity above, and write
\[
	\mathsf{Act}(\mc D) = \frac 1 {\prod_c r_c!} \widetilde{\mathsf{Act}}(\mc D)
\]
By the above argument, any diagram is a combination of a set of polymers $\mc P_1, \dots \mc P_q$ as
well as a set of placement functions $\mathsf{Place}_{c,j}$ for all $(c,j) \in [C] \times [q]$.
Moreover, observe that all terms in $\widetilde{\mathsf{Act}}(\mc D)$ depend only on the set $\mc
P_1, \dots, \mc P_q$ and not the placement. We argue these coefficients decompose into contributions
from each $\mathsf{Act}(\mc P_i)$.

Start with $\mathsf{Mon}_c$ for a fixed color $c$. We want to show
\[
	\Tr[\mathsf{Mon}_c(\mc D)] =\prod_{j=1}^q \Tr[\mathsf{Mon}_c(\mc P_j)]\,.
\]
By definition, the supports of each $\mc P_j$ are disjoint and thus each of their contributions
commute (two disjoint quartic Majoranas commute). We may therefore group $\mathsf{Mon}_c$ by each
polymer.
\[
	\mathsf{Mon}_c(\mc D)=\prod_{j=1}^q \mathsf{Mon}_c(\mc P_j)\,.
\]
To see that the traces factorize, let $Q_j$ be the symmetric difference of the quartic labels in
$\mathsf{Mon}_c(\mc P_j)$. The sets $Q_j$ are pairwise disjoint, because the component supports
are disjoint in color $c$. The global trace is nonzero only when
$\mathop{\triangle}_jQ_j=\emptyset$, which is equivalent to every $Q_j$ being empty.
In that case each component word is proportional to $\Id$, and its scalar contribution factors
out of the normalized trace.

Next, covariance contributions are always internal to a connected component so these factorize.
Similarly, the identities $|\pi| = \sum_j |E_j|$ and $(-\beta_c)^{r_c} = \prod_j
(-\beta_c)^{r_{c,j}}$  show that these scalar terms factorize.

Finally, we count the contribution from the placements, which will account for the factorial
term. The diagram associated with this set of polymers has factorial weight $\prod_c \frac 1
{r_c!}$, but each placement is a new diagram with the same activity $\widetilde{\mathsf{Act}}(\mc
D)$, of which there are $(\#\text{ possible placements})$. Therefore the ``weight'' that should be
associated with the set of polymers $\mc P_1, \dots, \mc P_q$ is,
\[
	(\#\text{ possible placements}) \cdot \prod_c \frac 1 {r_c!} = \Paren{\prod_c \binom{r_c}{r_{c,1},\ldots,r_{c,q}}} \cdot \prod_c \frac 1 {r_c!} = \prod_{c,j} \frac 1 {r_{c,j}!}\,.
\]
This is exactly equal to the ``weight'' contributed by each polymer activity $\mathsf{Act}(\mc P_j;
\beta_1, \dots, \beta_C)$. Finally, allowing for all orderings yields the remaining $\frac 1 {q!}$
term in \Cref{eq:shared_gas}.
\end{proof}

\subsection{Koteck\'y--Preiss theorem}
\label{sec:kp_shared}

We state the theorem for scalar activities. Any temperature or Taylor parameters
are fixed in its statement. A common majorant on a parameter domain will also
give normal convergence there.

\begin{theorem}[Koteck\'y--Preiss theorem]
\label{thm:kp}
Let $\mathsf P$ be a set of polymers with nonempty supports contained in a finite set $X$, and let
$\mathsf{Act}:\mathsf P \rightarrow \C$ be the polymer activities. Suppose that there exists a
majorant $w : \mathsf P \rightarrow \mathbb R$ such that for some $0 \leq \delta \leq a_\mathrm{KP}$ with $a_\mathrm{KP} >
0$, 
\begin{equation}
	\label{eq:shared_kp}
	|\mathsf{Act}(\mc P)| \leq w(\mc P),\qquad \sup_{x\in X}\sum_{\mc P:x\in\supp(\mc P)}
	     w(\mc P)e^{a_\mathrm{KP}|\supp(\mc P)|}\leq\delta\,.
\end{equation}
Let $\mc G_\mathrm{conn}$ be the set of connected graphs with vertex set $[q]$. Define the
connected-graph coefficient
\[
	\Conn^{\#}(\mc P_1, \dots, \mc P_q) = \sum_{G \in \mc G_\mathrm{conn}}
	     \prod_{\{i,j\}\in E(G)}(-\mathbf 1_{\mc P_i \not\sim \mc P_j})\,,
\]
with $1$ for $q=1$. Then the gas $\Xi$ defined in \Cref{eq:shared_gas}, and the Mayer series
\begin{equation}
	\label{eq:shared_mayer}
	\mathsf{Mayer}(\vec\beta)= \sum_{q\geq1}\frac1{q!}\sum_{\mc P_1, \dots, \mc
	P_q \in \mathsf P} \Conn^{\#}(\mc P_1,\ldots,\mc P_q) \prod_{j=1}^q\mathsf{Act}(\mc P_j;
	\beta_1, \dots, \beta_C)
\end{equation}
converge absolutely and satisfy $\Xi=e^\mathsf{Mayer} \neq 0$. The rooted estimate is
\begin{equation}
  \sum_{q\geq1}\frac1{(q-1)!}\sum_{\mc P_2,\ldots,\mc P_q\in\mathsf P}
    |\Conn^{\#}(\mc P,\mc P_2,\ldots,\mc P_q)|
      \prod_{j=2}^q w(\mc P_j)
     \leq e^{a_\mathrm{KP}|\supp(\mc P)|}.
  \label{eq:shared_pinned}
\end{equation}
For any $\mc D\subseteq\mathsf P$, the absolute Mayer subseries containing
at least one polymer from $\mc D$ is at most
$\sum_{\mc P\in\mc D}w(\mc P)e^{a_\mathrm{KP}|\supp(\mc P)|}$.
\end{theorem}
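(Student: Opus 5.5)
This statement is the Kotecký--Preiss theorem, cited as \cite[Theorem~1]{ueltschi2004cluster}. I do not plan to reprove the cluster expansion from scratch. The plan is to check that our hypotheses are exactly those of Ueltschi's theorem and then derive the extra claims (the rooted estimate and the subseries bound) from it.

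\textbf{Step 1: set up Ueltschi's abstract framework.} Take the polymer set $\mathsf P$ and define incompatibility by $\mc P\not\sim\mc P'$ iff $\supp(\mc P)\cap\supp(\mc P')\neq\emptyset$. This relation is reflexive and symmetric, since supports are nonempty. Ueltschi's criterion asks for a function $a(\mc P)\geq0$ such that, for every $\mc P$,
\[
\sum_{\mc P'\not\sim\mc P}|\mathsf{Act}(\mc P')|e^{a(\mc P')}\leq a(\mc P).
\]
Choose $a(\mc P)=a_{\mathrm{KP}}|\supp(\mc P)|$. Every $\mc P'\not\sim\mc P$ contains some $x\in\supp(\mc P)$ in its support, so a union bound over these sites gives
\[
\sum_{\mc P'\not\sim\mc P}w(\mc P')e^{a_{\mathrm{KP}}|\supp(\mc P')|}\leq|\supp(\mc P)|\,\delta\leq a_{\mathrm{KP}}|\supp(\mc P)|.
\]
Here the last inequality uses $\delta\leq a_{\mathrm{KP}}$. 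Because $|\mathsf{Act}|\leq w$, the criterion holds for both the true activities and the majorant $w$.

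\textbf{Step 2: read off the conclusions.} Apply Ueltschi's Theorem~1 with the activities $w$. It gives the pinned bound
\[
1+\sum_{q\geq2}\frac1{(q-1)!}\sum_{\mc P_2,\dots,\mc P_q}|\Conn^{\#}(\mc P,\mc P_2,\dots,\mc P_q)|\prod_{j\geq2}w(\mc P_j)\leq e^{a(\mc P)},
\]
which is \eqref{eq:shared_pinned}. Ueltschi's $\varphi$ is exactly $\Conn^{\#}$, up to sign conventions that disappear once we take absolute values. The same theorem gives absolute convergence of the Mayer series and the identity $\Xi=e^{\mathsf{Mayer}}$. This requires the polymer sets to be finite or truncated; since $X$ is finite, I will pass to the limit over finite truncations of $\mathsf P$, justified by dominated convergence with the majorant. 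Nonvanishing of $\Xi$ then follows because it is an exponential.

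\textbf{Step 3: the subseries bound.} Bound each cluster term that contains at least one polymer from $\mc D$ by rooting it at a polymer of $\mc D$. The factor $1/q!$ together with the $q$ choices of root position gives $1/(q-1)!$. Multiply the pinned estimate by $|\mathsf{Act}(\mc P)|\leq w(\mc P)$ and sum over $\mc P\in\mc D$. A cluster containing several polymers from $\mc D$ gets counted several times, but that only over-counts, so the claimed upper bound still holds.

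\textbf{Main obstacle.} The delicate step is matching conventions with Ueltschi's statement: his incompatibility includes the diagonal, and the connected-graph coefficient $\Conn^{\#}$ must agree with his $\varphi$. I will also verify that the cluster sum is uniform over parameter domains, so that normal convergence holds, by applying everything to the common majorant $w$.
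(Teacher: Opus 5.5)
Your proposal is correct and follows essentially the same route as the paper: you verify Ueltschi's criterion with $a(\mathcal P)=a_{\mathrm{KP}}|\mathrm{supp}(\mathcal P)|$ via a union bound over sites and $\delta\le a_{\mathrm{KP}}$, read off the rooted estimate (note that Ueltschi's $\varphi$ is $\mathrm{Conn}^{\#}/q!$ rather than $\mathrm{Conn}^{\#}$ itself, which is where the $1/(q-1)!$ in your displayed bound comes from), and get the subseries bound by rooting at a polymer of $\mathcal D$. The remaining differences are cosmetic: the paper applies Ueltschi's measure-space version directly with the finite measure $\sum_{\mathcal Q}w(\mathcal Q)\delta_{\mathcal Q}$ (finite because $\sum_{\mathcal P}w(\mathcal P)e^{a_{\mathrm{KP}}|\mathrm{supp}(\mathcal P)|}\le\delta|X|$), so no truncation limit is needed, and it records the one-line bound $\Xi_{\mathrm{abs}}\le\exp\bigl(\sum_{\mathcal P}w(\mathcal P)\bigr)<\infty$ for absolute convergence of the gas, which your plan leaves implicit.
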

\begin{proof}
The site condition gives
\begin{equation}
	\label{eq:site_conditions}
	\sum_{\mc Q \not\sim \mc P} w(\mc Q)e^{a_\mathrm{KP}|\supp(\mc Q)|} \leq \delta |\supp(\mc P)| \leq
	a_\mathrm{KP}|\supp(\mc P)|, \qquad \sum_{\mc P} w(\mc P) e^{a_\mathrm{KP}|\supp(\mc P)|} \leq \delta|X| < \infty\,.
\end{equation}
These verify the hypotheses in
\cite[Theorem 1]{ueltschi2004cluster}. Equation (4) therein with the substitutions
\[
	\mu = \sum_{\mc Q} w(\mc Q) \delta_{\mc Q}\,, \qquad a(\mc P)=a_\mathrm{KP}|\supp(\mc P)|\,, \qquad \phi
	= \frac 1 {q!} \Conn^\#
\]
yields \Cref{eq:shared_pinned}. To get absolute convergence let $M$ be the sum of the absolute
values of the terms of the Mayer series. Then,
\begin{align*}
	M &= \sum_{q \geq 1}\frac 1 {q!}\sum_{\mc P_1, \dots, \mc P_q \in \mathsf P} |\Conn^{\#}(\mc
	P_1,\ldots,\mc P_q)| \prod_{j=1}^q|\mathsf{Act}(\mc P_j; \beta_1, \dots, \beta_C)|\\
	&\leq \sum_{q \geq 1}\frac 1 {q!}\sum_{\mc P_1, \dots, \mc P_q \in \mathsf P} |\Conn^{\#}(\mc
	P_1,\ldots,\mc P_q)| \prod_{j=1}^q w(\mc P_j)\\
	&\leq \sum_{\mc P \in \mathsf P} w(\mc P)\sum_{q \geq 1} \frac 1 {(q-1)!}\sum_{\mc P_2,
	\dots, \mc P_q \in \mathsf P} |\Conn^{\#}(\mc P,\mc P_2,\ldots,\mc P_q)| \prod_{j=2}^q w(\mc
	P_j)\\
	&\leq \sum_{\mc P \in \mathsf P} w(\mc P) e^{a_\mathrm{KP} | \supp(\mc P)|} \leq \delta |X| <
	\infty\,.
\end{align*}
In the last line we used \Cref{eq:shared_pinned} and \Cref{eq:site_conditions}. Additionally,
we can bound the absolute gas series:
\begin{align*}
	\Xi_{\mathrm{abs}} &:= \sum_{q \geq 0} \frac 1 {q!} \sum_{\substack{\mc P_1, \dots, \mc P_q
	\in \mathsf P\\ \mc P_i \sim \mc P_j\ (i\ne j)}} \prod_{i = 1}^q |\mathsf{Act}(\mc P_i)|\\
	&\leq \sum_{q \geq 0} \frac 1 {q!} \sum_{\mc P_1, \dots, \mc P_q \in \mathsf P} \prod_{i =
	1}^q |\mathsf{Act}(\mc P_i)|\\
	&=\sum_{q \geq 0} \frac 1 {q!} \Paren{\sum_{\mc P \in \mathsf P} |\mathsf{Act}(\mc P)|}^q\\
	&= \exp\Paren{\sum_{\mc P \in \mathsf P} |\mathsf{Act}(\mc P)|}
	\leq \exp\Paren{\sum_{\mc P \in \mathsf P} w(\mc P)}<\infty\,.
\end{align*}

\cite[Theorem 1]{ueltschi2004cluster} applied to the complex activities $\mathsf{Act}$ also gives
$\Xi = e^{\mathsf{Mayer}} \neq 0$. 

Finally, consider the Mayer subseries,
\[
	\mathsf{Mayer}_{\mc D} = \sum_{q \geq 1} \frac 1 {q!} \sum_{\substack{\mc P_1, \dots, \mc
	P_q\\ \exists j \st \mc P_j \in \mc D}} \Conn^{\#}(\mc P_1, \dots, \mc P_q) \prod_{i=1}^q
	\mathsf{Act}(\mc P_i)\,.
\]

The absolute subseries is bounded as follows:
\begin{align*}
	(\mathsf{Mayer}_{\mc D})_\mathrm{abs} &= \sum_{q \geq 1} \frac 1 {q!} \sum_{\substack{\mc
	P_1, \dots, \mc P_q\\ \exists j \st \mc P_j \in \mc D}} |\Conn^{\#}(\mc P_1, \dots, \mc
	P_q)| \prod_{i=1}^q |\mathsf{Act}(\mc P_i)|\\
	&\leq \sum_{q \geq 1} \frac 1 {q!} \sum_{j = 1}^q \sum_{\mc P_1, \dots, \mc P_q} \mathbf
	1_{\mc P_j \in \mc D} |\Conn^{\#}(\mc P_1, \dots, \mc P_q)| \prod_{i=1}^q w(\mc P_i)\\
	&\leq \sum_{q \geq 1} \frac q {q!} \sum_{\mc P \in \mc D} w(\mc P)\sum_{\mc P_2, \dots, \mc
	P_q} |\Conn^{\#}(\mc P, \mc P_2, \dots, \mc P_q)| \prod_{i=2}^q w(\mc P_i)\\
	&= \sum_{\mc P \in \mc D}w(\mc P) \sum_{q \geq 1} \frac 1 {(q-1)!}  \sum_{\mc P_2, \dots,
	\mc P_q} |\Conn^{\#}(\mc P, \mc P_2, \dots, \mc P_q)| \prod_{i=2}^q w(\mc P_i)\\
	&\leq \sum_{\mc P \in \mc D} w(\mc P) e^{a_\mathrm{KP} |\supp(\mc P)|}\,.
\end{align*}
In the last line, we used that the inner series is bounded by the pinned
estimate \Cref{eq:shared_pinned}.
\end{proof}

As stated in the introduction, all of our applications of Koteck\'y--Preiss involve ratios of
two gases $\Xi_0, \Xi_1$. After verifying the KP conditions, we have that $\Xi_0 =
e^{\mathsf{Mayer}_0}$ and $\Xi_1 = e^{\mathsf{Mayer}_1}$ and thus bounding the ratio reduces to
bounding $|\mathsf{Mayer}_0 - \mathsf{Mayer}_1|$. The following lemma gives the required bounds.
\begin{cor}[Comparison and exclusion]
\label{cor:kp_comparison}
Suppose two activity families $\mathsf{Act}_0,\mathsf{Act}_1$ have the same majorant
$w$ in \Cref{eq:shared_kp}. The corresponding Mayer series $\mathsf{Mayer}_0,\mathsf{Mayer}_1$
satisfy
\begin{equation}
	\label{eq:kp_comparison}
	|\mathsf{Mayer}_1 - \mathsf{Mayer}_0| \leq \sum_{\mc P} |\mathsf{Act}_1(\mc P) -
	\mathsf{Act}_0(\mc P)| e^{a_\mathrm{KP}|\supp(\mc P)|}\,.
\end{equation}
Deleting a family $\mc D$ therefore changes the logarithm by at most $\sum_{\mc P\in\mc
D}w(\mc P)e^{a_\mathrm{KP}|\supp(\mc P)|}$. If $\Xi[S]$ is the gas restricted to polymers disjoint from
$S\subseteq X$, then
\begin{equation}
	\label{eq:kp_exclusion}
	\Abs{\frac{\Xi[S]}{\Xi}} \leq e^{\delta|S|}\,.
\end{equation}
\end{cor}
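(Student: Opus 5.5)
The plan is to work directly with the Mayer series \Cref{eq:shared_mayer} and use the rooted estimate \Cref{eq:shared_pinned} from \Cref{thm:kp}. Both families share the majorant $w$, so \Cref{thm:kp} applies to each. Hence $\mathsf{Mayer}_0$ and $\mathsf{Mayer}_1$ converge absolutely and can be subtracted term by term. The coefficient $\Conn^{\#}(\mc P_1,\ldots,\mc P_q)$ depends only on the compatibility relation among the polymers, not on the activities. So the difference is
\[
 \mathsf{Mayer}_1-\mathsf{Mayer}_0=\sum_{q\geq1}\frac1{q!}\sum_{\mc P_1,\ldots,\mc P_q}\Conn^{\#}(\mc P_1,\ldots,\mc P_q)\Paren{\prod_{j=1}^q\mathsf{Act}_1(\mc P_j)-\prod_{j=1}^q\mathsf{Act}_0(\mc P_j)}.
\]

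\textbf{Step 1 (telescoping).} Write the product difference as
\[
 \prod_{j=1}^q\mathsf{Act}_1(\mc P_j)-\prod_{j=1}^q\mathsf{Act}_0(\mc P_j)=\sum_{k=1}^q\Paren{\prod_{i<k}\mathsf{Act}_1(\mc P_i)}\bigl(\mathsf{Act}_1(\mc P_k)-\mathsf{Act}_0(\mc P_k)\bigr)\Paren{\prod_{i>k}\mathsf{Act}_0(\mc P_i)}.
\]
Bound every undifferenced factor by $w$. $\Conn^{\#}$ is symmetric in its arguments, so each of the $q$ summands can be relabelled to put the differenced polymer in position $1$. This converts $\frac{1}{q!}\cdot q$ into $\frac1{(q-1)!}$, and the absolute difference is at most
\[
 \sum_{\mc P}|\mathsf{Act}_1(\mc P)-\mathsf{Act}_0(\mc P)|\sum_{q\geq1}\frac1{(q-1)!}\sum_{\mc P_2,\ldots,\mc P_q}|\Conn^{\#}(\mc P,\mc P_2,\ldots,\mc P_q)|\prod_{j\geq2}w(\mc P_j).
\]
The inner series is at most $e^{a_\mathrm{KP}|\supp(\mc P)|}$ by \Cref{eq:shared_pinned}, which gives \Cref{eq:kp_comparison}. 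This step mirrors the computation for $\mathsf{Mayer}_{\mc D}$ at the end of the proof of \Cref{thm:kp}. The only care needed is justifying the rearrangements, and this follows from the absolute convergence just established.

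\textbf{Step 2 (deletion).} Take $\mathsf{Act}_0=\mathsf{Act}_1\cdot\mathbf 1_{\mc P\notin\mc D}$. This family is still dominated by $w$, so Step 1 applies. The difference $|\mathsf{Act}_1-\mathsf{Act}_0|$ vanishes off $\mc D$ and is at most $w$ on $\mc D$, which gives the stated deletion bound.

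\textbf{Step 3 (exclusion).} Let $\mc D_S=\{\mc P:\supp(\mc P)\cap S\neq\emptyset\}$. Then $\Xi[S]$ is the gas with activities $\mathsf{Act}\cdot\mathbf 1_{\mc P\notin\mc D_S}$. These activities satisfy \Cref{eq:shared_kp} with the same $w$, so \Cref{thm:kp} gives $\Xi[S]=e^{\mathsf{Mayer}[S]}$, and likewise $\Xi=e^{\mathsf{Mayer}}\neq0$. By Step 2,
\[
 |\mathsf{Mayer}[S]-\mathsf{Mayer}|\leq\sum_{\mc P\in\mc D_S}w(\mc P)e^{a_\mathrm{KP}|\supp(\mc P)|}\leq\sum_{x\in S}\sum_{\mc P:x\in\supp(\mc P)}w(\mc P)e^{a_\mathrm{KP}|\supp(\mc P)|}\leq\delta|S|,
\]
using a union bound over the sites of $S$ followed by the site condition in \Cref{eq:shared_kp}. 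Therefore
\[
 |\Xi[S]/\Xi|=|e^{\mathsf{Mayer}[S]-\mathsf{Mayer}}|\leq e^{\delta|S|}.
\]

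The main obstacle is the bookkeeping in Step 1: the relabelling must use the permutation symmetry of $\Conn^{\#}$ correctly, and each undifferenced factor, whether from $\mathsf{Act}_0$ or $\mathsf{Act}_1$, must be bounded by the common majorant. The remaining steps are direct specializations.
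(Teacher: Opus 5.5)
Your proposal is correct and follows the paper's proof essentially step for step. You telescope the product difference, bound the undifferenced factors by the common majorant $w$, use the symmetry of $\operatorname{Conn}^{\#}$ to turn $q/q!$ into $1/(q-1)!$, and close with the rooted estimate \eqref{eq:shared_pinned}; deletion zeroes the activities in $\mathcal D$, and exclusion deletes the polymers meeting $S$ and bounds their weighted mass by $\delta|S|$ via a union bound over sites and the site condition, with your remark that restricted activities stay dominated by $w$ (so $\Xi[S]=e^{\mathsf{Mayer}[S]}$) making explicit a detail the paper leaves implicit.
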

\begin{proof}
For a tuple of length $q$, telescope the difference of its two activity products.
Its absolute value is bounded by
\[
	\sum_{j=1}^q |\mathsf{Act}_1(\mc P_j)-\mathsf{Act}_0(\mc P_j)| \prod_{i \neq j}w(\mc P_i)\,.
\]
Multiply by $|\Conn^{\#}|/q!$ and sum. Symmetry gives the factor $q/q!=1/(q-1)!$, and
\Cref{eq:shared_pinned} proves \Cref{eq:kp_comparison}. The resulting sum is finite, since the
difference is bounded by $2w$. Deletion sets the corresponding activities to zero. For the exclusion
ratio, delete polymers meeting $S$. Their weighted mass is at most $\delta|S|$ by the site condition
in \Cref{eq:site_conditions}. Writing $\mathsf{Mayer}[S]$ for the restricted cluster logarithm gives
$\Xi[S]/\Xi=\exp(\mathsf{Mayer}[S]-\mathsf{Mayer})$, proving \Cref{eq:kp_exclusion}.
\end{proof}

\subsection{The shared label counts}
\label{sec:shared_counts}

To verify the KP conditions, we sum nonnegative majorants of polymer activities.
The common bookkeeping step is to separate the ordered slots and Wick matching from their
physical labels. We call the former data an \emph{unlabeled diagram} or \emph{skeleton}:
only the physical labels are forgotten, not the slot positions or their order. In particular,
we do not identify skeletons up to graph isomorphism. Connectedness is a condition on the
labeling, so a skeleton is not itself a polymer.

\begin{proposition}[Counting polymers by skeletons and labels]
\label{prop:polymer_counting}
Fix a finite set of Taylor blocks $\mc B$, with nonnegative weights $(b_c)_{c\in\mc B}$,
and a set of allowed slot vectors $\mc R\subseteq\mathbb Z_{\geq0}^{\mc B}$.
Write $\mathsf{Slots}(\vec r)=\bigsqcup_{c\in\mc B}(\{c\}\times[r_c])$.
For $h,j\geq0$, let $\mathsf{Skel}_{h,j}(\mc R)$ consist of the triples
$\mc D=(\vec r,M,\pi)$ with
\[
  \vec r\in\mc R,\qquad \sum_cr_c=2h+j,\qquad
  M\subseteq\mathsf{Slots}(\vec r),\quad |M|=j,\qquad
  \pi\in\PM(\mathsf{Slots}(\vec r)\setminus M).
\]
Here $M$ records the perturbation slots. The assignment of blocks to traces, and any distinguished
observable-slot positions, are fixed in advance; observable slots are not Taylor slots. A labeling assigns a quartic
label to each pair and each perturbation slot, and also specifies any observable-slot labels that are not
fixed in advance. Every trace monomial $W=\Gamma_{I_1}\cdots\Gamma_{I_q}$ in these diagrams is a
product of quartic Majorana monomials so $\Tr[W]\in\{0,1,-1\}$ and $ |\Tr[W]|=\mathbf
1_{I_1\triangle \dots \triangle I_q=\emptyset}$. Thus every nonzero trace contributes a factor of $1$ to an absolute activity.

For each skeleton choose a set $\mathsf{Lab}(\mc D)$ of admissible physical labelings and a
nonnegative label weight $v(\mc D,\lambda)$, and put
\begin{equation}
  \mathsf{LabSum}_v(\mc D)
    :=\sum_{\lambda\in\mathsf{Lab}(\mc D)}v(\mc D,\lambda).
  \label{eq:shared_label_sum}
\end{equation}
The admissibility conditions may include connectedness, a pinned site, nonzero traces, or
restrictions on observable slots. If observable-slot labels are also summed, they are part of $\lambda$ and
their weights are included in $v$.

Use the physical or colored support prescribed by the application. For $t\geq0$,
let $\mathcal M_{h,j}$ be the sum over these labeled diagrams of
\[
  \left(\prod_{c\in\mc B}\frac{b_c^{r_c}}{r_c!}\right)
  \sigma_n^{2h}v(\mc D,\lambda)e^{t|\supp(\mc D,\lambda)|}.
\]
If every such diagram has support size at most $s_{h,j}$, then
\begin{equation}
\begin{aligned}
  \mathcal M_{h,j}
    &={}\sigma_n^{2h}
      \sum_{\mc D\in\mathsf{Skel}_{h,j}(\mc R)}
        \left(\prod_c\frac{b_c^{r_c}}{r_c!}\right)
        \sum_{\lambda\in\mathsf{Lab}(\mc D)}
          v(\mc D,\lambda)e^{t|\supp(\mc D,\lambda)|}\\
    &\leq \sigma_n^{2h}e^{ts_{h,j}}
      \sum_{\mc D\in\mathsf{Skel}_{h,j}(\mc R)}
        \left(\prod_c\frac{b_c^{r_c}}{r_c!}\right)
        \mathsf{LabSum}_v(\mc D).
\end{aligned}
\label{eq:shared_skeleton_decomposition}
\end{equation}
Consequently, a uniform bound $\mathsf{LabSum}_v(\mc D)\leq L_{h,j}$ on the allowed
skeletons gives
\begin{equation}
  \mathcal M_{h,j}
    \leq
    \underbrace{\frac{(\sum_cb_c)^{2h+j}}{j!\,2^hh!}}_
      {\textnormal{Taylor weights and skeletons}}
    \quad\underbrace{\sigma_n^{2h}}_{\textnormal{Wick covariances}}
    \quad\underbrace{L_{h,j}}_{\textnormal{weighted label sum}}
    \quad\underbrace{e^{ts_{h,j}}}_{\textnormal{support weight}}.
  \label{eq:shared_counting_bound}
\end{equation}
\end{proposition}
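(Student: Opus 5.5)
We prove \Cref{prop:polymer_counting} in three parts: the trace claim, the decomposition \eqref{eq:shared_skeleton_decomposition}, and the counting bound \eqref{eq:shared_counting_bound}.

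\emph{Trace factors.} Each $\Gamma_I$ with $|I|=4$ is, up to sign, an ordered product of Majoranas. Using the anticommutation relations together with $\psi_i^2=\Id$, any word $W=\Gamma_{I_1}\cdots\Gamma_{I_q}$ reduces to $\pm\Gamma_{I_1\triangle\cdots\triangle I_q}$. Its sign is real, since each $\Gamma_I$ with $|I|=4$ is Hermitian with real ordering phases. By \Cref{def:majorana} and the normalized trace, $\Tr[\Gamma_A]=\mathbf 1_{A=\emptyset}$ for even $|A|$. Hence $\Tr[W]\in\{0,\pm1\}$, and $|\Tr[W]|$ is the indicator that the symmetric difference is empty. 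Consequently a nonzero trace contributes exactly $1$ to an absolute activity.

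\emph{Decomposition.} A labeled diagram is, by definition, a skeleton $(\vec r,M,\pi)$ together with a labeling $\lambda\in\mathsf{Lab}(\mc D)$, and this correspondence is a bijection. The sum defining $\mathcal M_{h,j}$ therefore regroups as an outer sum over skeletons and an inner sum over labelings; the Taylor weights $\prod_c b_c^{r_c}/r_c!$ and the factor $\sigma_n^{2h}$ depend only on the skeleton, so they factor out of the inner sum. This gives the equality in \eqref{eq:shared_skeleton_decomposition}. For the inequality, bound $e^{t|\supp|}\le e^{ts_{h,j}}$ using $t\ge0$ and the assumed support bound; the remaining inner sum is exactly $\mathsf{LabSum}_v(\mc D)$.

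\emph{Counting.} After replacing $\mathsf{LabSum}_v$ by the uniform bound $L_{h,j}$, it remains to show
\[
\sum_{\mc D\in\mathsf{Skel}_{h,j}(\mc R)}\prod_c\frac{b_c^{r_c}}{r_c!}\le\frac{(\sum_cb_c)^{2h+j}}{j!\,2^hh!}.
\]
Enlarge $\mc R$ to all $\vec r$ with $\sum_c r_c=N:=2h+j$; since every term is nonnegative, this only increases the left side. For a fixed $\vec r$, the number of pairs $(M,\pi)$ is $\binom{N}{j}(2h-1)!!$, and this count depends only on $N$. Note that $(2h-1)!!=(2h)!/(2^hh!)$. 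The multinomial theorem gives $\sum_{|\vec r|=N}\prod_c b_c^{r_c}/r_c!=(\sum_c b_c)^N/N!$. Multiplying,
\[
\frac{(\sum_cb_c)^N}{N!}\cdot\frac{N!}{j!\,(2h)!}\cdot\frac{(2h)!}{2^hh!}=\frac{(\sum_cb_c)^{2h+j}}{j!\,2^hh!},
\]
which is the claimed bound.

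The only subtle point is bookkeeping rather than estimation. Skeletons are ordered, not taken up to isomorphism, so the multinomial and matching counts are exact. Observable slots are not Taylor slots, so they contribute no factor $b_c/r_c!$, and any labels summed for them are already absorbed into $v$ and hence into $L_{h,j}$.
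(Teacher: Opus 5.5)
Your proposal is correct and follows the paper's proof step for step: reduce $W$ to $\pm\Gamma_{I_1\triangle\cdots\triangle I_q}$, regroup the labeled diagrams by their unique skeleton and apply the support bound, then replace each label sum by $L_{h,j}$ before enlarging to all $\vec r$ with $\sum_c r_c=2h+j$, and finish with the multinomial theorem and the $\binom{2h+j}{j}(2h-1)!!$ count of pairs $(M,\pi)$. The one thin spot is that \Cref{def:majorana} asserts $\Tr[\Gamma_A]=0$ only for quartic $A$, whereas you need it for every nonempty even $A$; add the paper's one-line justification that $\psi_i\Gamma_Q\psi_i=-\Gamma_Q$ for $i\in Q$ with $|Q|$ even, so trace invariance under this conjugation gives $\Tr[\Gamma_Q]=0$.
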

\begin{proof}
First, the Majorana relations give $W=\varepsilon\Gamma_Q$ with
$\varepsilon\in\{1,-1\}$ and $Q=I_1\triangle\cdots\triangle I_q$. If $Q=\varnothing$,
then $\Tr[W]=\varepsilon$ because $\Tr[\Id]=1$. Otherwise, $|Q|$ is positive and even.
For any $i\in Q$, conjugation by $\psi_i$ gives
$\psi_i\Gamma_Q\psi_i=(-1)^{|Q|-1}\Gamma_Q=-\Gamma_Q$. Invariance of the trace under
conjugation therefore implies $\Tr[\Gamma_Q]=0$. This proves the trace rule and its
consequence for absolute activities.

Forgetting the physical labels partitions the labeled diagrams by their unique skeleton,
which gives the equality in \Cref{eq:shared_skeleton_decomposition}. The support bound gives
its inequality. Replace each label sum by $L_{h,j}$; only then enlarge the remaining sum over
slot vectors to all nonnegative $\vec r$ with $\sum_cr_c=2h+j$. For each such vector, there are
$\binom{2h+j}{j}(2h-1)!!$ choices of $M$ and $\pi$. The slot identity
\Cref{eq:shared_slot_count}, proved below, gives \Cref{eq:shared_counting_bound}.
\end{proof}

The label sums may depend on the skeleton; only their upper bound must be uniform.
Thus the proposition leaves the connectivity and parity estimates to each application.
For a sum over compatible \emph{families} of polymers, first use the component-placement
identity to express the mass as a sum over diagrams with global Taylor factorials
$\prod_c1/r_c!$. We make this step explicit for the observable-slot configurations in
\Cref{lem:observable_diagram_reduction}. The following lemma supplies the slot identity and two label
bounds used in these applications.

\begin{lemma}[Slots and quartic labels]
\label{lem:shared_counts}
For real, nonnegative $b_c$ and integers $h,j\geq0$,
\begin{equation}
  \sum_{\substack{\vec r\geq0\\\sum_cr_c=2h+j}}
     \left(\prod_c\frac{b_c^{r_c}}{r_c!}\right)
        \binom{2h+j}{j}(2h-1)!!
    =\frac{(\sum_cb_c)^{2h+j}}{j!\,2^h h!}.
  \label{eq:shared_slot_count}
\end{equation}
Fix at most four colors, each corresponding to a trace replica. Each $e \in [h]$ corresponds to an
edge with endpoints on slots with colors $c_{e,1}, c_{e,2}$. Let $C_e = \{c_{e, 1}, c_{e,2}\}$. We
bound the number of labelings $\mc A : [h] \rightarrow \mc I_4$.
\begin{enumerate}
	\item \emph{(Pinned physical site)} Let $h \geq 1$. For a polymer with a fixed site $x_0$
		with color $c_0$, the number of connected labelings $\mc A$ containing this site is
		at most
		\begin{equation}
			\label{eq:pinned_physical_site}
			C^h h! n^{3h}\,.
		\end{equation}
	\item \emph{(Pinned edges)} Fix a nonempty edge set $\varnothing\ne B\subseteq[h]$. The number of connected labelings
		$\mc A$ which satisfy $\mathop{\triangle}_{e\in B}\mc A(e)=\varnothing$ is zero for
		$|B|=1$, and otherwise at most
		\begin{equation}
			\label{eq:shared_xor_count}
			C^h h^{h+4}n^{3h-2}\,.
		\end{equation}
\end{enumerate}
\end{lemma}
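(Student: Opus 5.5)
The slot identity comes first. Fix $\vec r$ with $\sum_c r_c=2h+j$. The multinomial theorem gives
\[
\sum_{\sum_c r_c=N}\prod_c \frac{b_c^{r_c}}{r_c!}=\frac{(\sum_c b_c)^N}{N!}.
\]
Take $N=2h+j$ and multiply by $\binom{2h+j}{j}(2h-1)!!=\frac{(2h+j)!}{j!\,(2h)!}\cdot\frac{(2h)!}{2^h h!}$. The $N!$ cancels, leaving $(\sum_c b_c)^{2h+j}/(j!\,2^h h!)$, which is \eqref{eq:shared_slot_count}.

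For the pinned physical site, we count connected labelings of the overlap graph on the $h$ edges. Connectivity means the graph has a spanning tree. We explore it in a fixed order, such as BFS from an edge whose colored support contains the pinned site $(c_0,x_0)$. The root edge $e_1$ has at most $h$ choices, and its label must contain $x_0$, which leaves at most $\binom{n-1}{3}\le n^3$ choices. Each subsequent edge $e_k$ is attached to an earlier edge $e_{p(k)}$. We specify the parent among the at most $h$ previous edges, one of the at most $4$ sites of $\mathcal A(e_{p(k)})$ (times at most $4$ colors) that $\mathcal A(e_k)$ shares, and then at most $n^3$ completions. This gives at most $h\cdot\prod_k (16h\, n^3)$. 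That is $(16)^h h^h n^{3h}\le C^h h!\,n^{3h}$ using $h^h\le e^h h!$, which proves \eqref{eq:pinned_physical_site}. Because colors range over at most four replicas, the color factor is absorbed into $C$.

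For the pinned-edges count with $|B|=1$, the condition would force $\mathcal A(e)=\varnothing$, which is impossible since labels have size $4$. For $|B|\ge 2$, we must save $n^2$ relative to the free connected count $\sim C^h h^h n^{4}\cdot n^{3(h-1)}$. Here the root is not pinned, so it carries $n^4$, which is $n^{3h+1}$ in total, and we need $n^{3h-2}$, a saving of $n^3$. The natural strategy is to construct the labeling along a spanning tree and order the edges of $B$ so that the \emph{last} edge $e^*\in B$ processed in the tree order has its label forced. Since $\mathcal A(e^*)=\mathop{\triangle}_{e\in B\setminus\{e^*\}}\mathcal A(e)$, it is determined once the others are chosen. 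Choosing the root to be an edge in $B$ and $e^*$ to be a leaf-like last edge requires care, because $e^*$ may be an internal tree vertex whose children's labels depend on it. To handle this, we process the tree, label every edge except $e^*$ freely, and account for the children of $e^*$ by attaching them through its already-determined label. This is a re-rooting: we pick a spanning tree ordering in which $e^*$ is the last vertex, which is possible for any connected graph by taking $e^*$ to be a non-cut vertex, e.g.\ a leaf of some spanning tree. The catch is that $e^*$ must lie in $B$.

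This is the main obstacle. We cannot always choose a non-cut vertex inside $B$. If no vertex of $B$ is a leaf, we instead treat the components of the tree minus $e^*$. The labels in subtrees hanging off $e^*$ are counted relative to an anchor site in $\mathcal A(e^*)$, and $\mathcal A(e^*)$ is determined by $B$-edges lying in other subtrees. This breaks the circularity as long as not all other $B$-edges are in $e^*$'s own dependent subtrees. To make this work, I would choose $e^*\in B$ of maximal depth in a BFS tree rooted at another element of $B$. Then $e^*$'s descendants contain no $B$-edges, so everything in $B\setminus\{e^*\}$ is labeled before $e^*$. The descendants of $e^*$ are then labeled afterwards, anchored to the determined $\mathcal A(e^*)$.

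With this ordering the count is as follows. The root in $B$ is free, giving $n^4$. Non-root edges other than $e^*$ give $16h\,n^3$ each. The edge $e^*$ gives a factor of $1$ instead of $16hn^3$. Recording the tree and the ordering choices costs $h^{O(1)}$ extra, which accounts for the $h^{4}$. The total is $C^h h^{h+4} n^{4+3(h-2)}=C^hh^{h+4}n^{3h-2}$, which is \eqref{eq:shared_xor_count}.
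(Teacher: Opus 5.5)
Your slot identity and pinned-site count match the paper's proof. The paper encodes a spanning tree plus a root, $h\cdot h^{h-2}$ choices, where you encode a parent pointer per edge, $h^h$; both are absorbed into $C^h h!$.

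For the pinned-edges bound your route is genuinely different. The paper never solves for a label. It uses parity to show the labels in $B$ involve at most $2|B|$ distinct sites, so the whole connected labeling involves at most $2|B|+3(h-|B|)\le 3h-2$ distinct sites. It then pays $(3h+1)^6$ to record three forced repetitions among the $3h+1$ free entries of a tree exposure. You instead use $\mathcal A(e^*)=\triangle_{e\in B\setminus\{e^*\}}\mathcal A(e)$ to make one $B$-label fully determined, which removes its three free entries outright. Your version avoids the repetition bookkeeping and gives a slightly smaller prefactor, about $16^h h^{h+1}$. The paper's distinct-site argument needs no exploration order adapted to $B$, and it can be pushed to $n^{3h-|B|}$ by recording $|B|+1$ repetitions. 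Solving for one label always saves exactly $n^3$.

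One step needs repair. Choosing $e^*\in B$ of maximal depth only guarantees that its subtree contains no other $B$-edge. It does not imply that all of $B\setminus\{e^*\}$ precedes $e^*$ in BFS order, since a $B$-edge at the same depth may come later. So ``everything in $B\setminus\{e^*\}$ is labeled before $e^*$'' does not follow as written. There are two fixes:
\begin{enumerate}
\item Process every vertex outside the subtree of $e^*$ first. Each such vertex has its parent outside the subtree, so this is a valid exploration. Then process $e^*$, then its subtree.
\item More simply, the obstacle you worried about is not real. In any exploration order the children of $e^*$ come after $e^*$, so you may take $e^*$ to be the last element of $B$ in that order. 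It is not the root because $|B|\ge 2$.
\end{enumerate}
Either way the free entries number $4+3(h-2)=3h-2$. Also, the required saving relative to $n^{3h+1}$ is $n^3$, not $n^2$ as you first wrote.
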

\begin{proof}
We first prove \Cref{eq:shared_slot_count}. Put $s=2h+j$. By the multinomial theorem,
\[
  \sum_{\substack{\vec r\geq0\\\sum_cr_c=s}}
    \prod_c\frac{b_c^{r_c}}{r_c!}
  =
  \frac{(\sum_cb_c)^s}{s!}.
\]
For each slot vector, there are $\binom{s}{j}$ choices of the
unpaired positions. The remaining $2h$ positions have
\[
  (2h-1)!!=\frac{(2h)!}{2^h h!}
\]
perfect matchings: order these positions and group consecutive
positions into pairs, then divide by the two orientations of each
pair and the $h!$ orderings of the pairs. Therefore,
\[
  \frac{(\sum_cb_c)^s}{s!}
  \binom{s}{j}\frac{(2h)!}{2^h h!}
  =
  \frac{(\sum_cb_c)^{2h+j}}{j!\,2^h h!},
\]
which proves \Cref{eq:shared_slot_count}. The convention
$(-1)!!=1$ includes $h=0$.

Finally, we count labelings. Recall that the labeling $\mc A$ induces a graph on $[h]$ where $e, e' \in [h]$ are connected when
$\paren{C_e\times\mc A(e)} \cap \paren{C_{e'} \times \mc A(e')} \neq \emptyset$. We
restrict to $\mc
A$ which generate a connected graph and thus we can associate a spanning tree $T$. Let $\mc T$ be the set of possible
spanning trees of $[h]$, of which there are at most $|\mc T| \leq h^{h-2}$. There are at most $h
\cdot h^{h-2}$ possible \emph{rooted} trees. We count labelings by starting with a tree $T
\in \mc T$ and counting possible assignments to its vertices.

\paragraph{Pinned physical site. } In this case, fix an edge $e_0 \in [h]$ whose labeling $\mc
A(e_0)$ contains $(c_0, x_0)$ as physical colored site. There are at most $h$ choices for $e_0$ and four possible positions for $x_0$
within $I_0 = \mc A(e_0)$. The remaining three sites have at most $n^3$ choices. We now traverse the spanning tree in a fixed order (e.g., by picking a connected edge with
smallest indices in $[h]$). For each subsequent edge, at least one out of the four physical sites is
fixed (by the connectivity condition), and thus there are a total of $n^{3h}$ remaining choices
across all edges. The last freedom we have is in the positions of the index shared by $\mc A(e), \mc A(e')$ for an
edge $(e, e')$. There are $16$ possible indices in $[4] \times [4]$, and the total count is thus,
\[
	4h |\mc T| 16^{h-1} n^{3h} \leq C^h h! n^{3h}\,,
\]
where we used that $(h/e)^h \leq h!$ and the constants $e$ and $16$ absorbed into $C$.

Lastly, if $h=1$, the count is zero if $c_0 \not\in C_1$, and otherwise equals $\binom{n-1}{3}\leq n^3$. This
proves the pinned estimate in all cases.

\paragraph{Pinned edges. } Here no physical site is fixed at the root. A naive tree exposure has $n^4$ choices for
the root label and at most $n^3$ choices for each subsequent label, giving $n^{3h+1}$ choices
for fixed overlap positions. The symmetric-difference condition saves three powers of $n$,
giving a label-count bound with a factor $n^{3h-2}$. Multiplication by the covariance factor $\sigma_n^{2h}$ then leaves $6^h n^{-2}$.

Fix a nonempty $B \subseteq [h]$ and $k = |B|$. Note that since $\triangle_{e \in B}
I_e = \emptyset$, each physical site appears an even number of times in $\bigcup_{e \in B}
\supp(I_e)$. There are at most $4k$ possible physical sites, so there are at most $2k$ unique
physical sites; this also implies $k > 1$, since otherwise the single label would be proportional to
the identity operator. This implies a bound on the total number of distinct sites in the labeling.
In addition to the initial set of at most $2k$ from $B$, each subsequent labeling must intersect a
prior edge and thus contributes at most $3$ new sites, leading to a total of
\[
	2k + 3(h-k) = 3h - k \leq 3h - 2\,.
\]
For a fixed spanning tree, root, traversal, and overlap positions, the tree exposure has
$3h+1$ initially free site entries. Since at most $3h-2$ distinct physical sites occur, at least
three entries repeat an earlier free entry. Record three such positions and their earlier
occurrences, using at most $(3h+1)^6$ choices. Assign arbitrary sites to the remaining
$3h-2$ entries, using at most $n^{3h-2}$ choices. This records every valid labeling, possibly
more than once, and gives the required upper bound.
The final count is thus,
\[
	|\mc T| (16)^{h-1} (3h + 1)^6 n^{3h - 2} = h^{h-2} (16)^{h-1} (3h+1)^6 n^{3h-2} \leq C^h
	h^{h+4} n^{3h-2}\,.
\]
\end{proof}

\paragraph{Separate parity and connectedness bounds.}
The next lemma separates the parity and connectedness arguments used in \Cref{lem:shared_counts},
allowing arbitrary set sizes, a fixed arbitrary symmetric difference, and fixed root labels. For $k$
quartic labels and an empty symmetric difference, its two estimates recover the power $n^{3h-k}\leq
n^{3h-2}$ for connected extensions, but with a larger prefactor. Jointly encoding the parity and
connectedness in \Cref{lem:shared_counts} is needed for its uniform $C^h h^{h+4}$ prefactor, which
permits summation over all pair counts in the thermal estimates.

\begin{lemma}[Parity and rooted exposure]
\label{lem:parity_exposure}
Let $r_1,\ldots,r_b$ be positive integers, $D=\sum_i r_i$, and $Q\subseteq[n]$
with $s=|Q|$. The number of ordered sets $A_i\in\binom{[n]}{r_i}$ satisfying
$A_1\triangle\cdots\triangle A_b=Q$ is zero unless $s\leq D$ and $D-s$ is even.
Otherwise it is at most
\begin{equation}
  \left(\frac{D+s}{2}\right)^D n^{(D-s)/2}.
  \label{eq:parity_count_shared}
\end{equation}
Consider also a skeleton $\mc D=(\vec r,M,\pi)$ as in \Cref{prop:polymer_counting},
on at most four trace colors, with $|\pi|=h$ and $|M|=j$. Adjoin $a\geq0$ observable
slots indexed by $[a]$, each assigned to a fixed trace color. A labeling assigns a quartic
label to each object in $M\sqcup\pi\sqcup[a]$. The colored support of a perturbation or
observable slot in color $c$ is $\{c\}$ times its label; the support of a pair is the union
of these supports over its two endpoint colors. Connectedness means connectedness of the
graph joining objects whose colored supports intersect.
Consider any partial labeling $\widetilde{\mc A}$ specifying the labels of every perturbation
slot and observable slot, but only $k \leq h$ of the pairs. Assume $k+j+a>0$ whenever $h>0$,
so the family of already labeled roots is nonempty. Then, there are at most
\begin{equation}
	\label{eq:rooted_exposure_shared}
	[64(h+j+a)]^{h-k}n^{3(h-k)}
\end{equation}
completed, connected labelings $\mc A$ which are consistent with $\widetilde{\mc A}$. When $h=k$,
the bound is $1$.
\end{lemma}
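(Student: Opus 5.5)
We prove the two claims separately, since they are independent.

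\emph{Parity count.} Write $A_1\triangle\cdots\triangle A_b=Q$. Every site of $Q$ must occur an odd number of times in the multiset $\bigsqcup_i A_i$, and every other site an even number of times. The total multiplicity is $D$, so $D\geq s$ and $D-s$ is even; otherwise the count is zero.

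For the bound, consider the $D$ ordered entries $(i,\ell)$, $\ell\in[r_i]$, and let $u$ be the number of distinct sites that occur. Since the sites of $Q$ occur at least once and all others at least twice, $s+2(u-s)\leq D$, so $u\leq (D+s)/2$. We encode a valid tuple by recording, for each entry, whether it is the first occurrence of its site, and if not, which earlier distinct site it repeats. Rather than track the flags separately, we assign each entry a value in $[(D+s)/2]$ giving the rank, in order of first appearance, of its site. This gives at most $((D+s)/2)^D$ pattern choices.

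The actual sites are then fixed by choosing the first-appearance sites. Sites in $Q$ are forced up to the $s$ choices of ordering, which can be absorbed by noting $Q$ is fixed. The non-$Q$ sites number at most $(D-s)/2$ and contribute $n^{(D-s)/2}$. The one point to check carefully is that the $Q$-sites are determined by the pattern. To ensure this, we let the encoding place the $Q$ sites as the first $s$ ranks in a canonical order, which at most uses a rank value already counted. This gives \eqref{eq:parity_count_shared}.

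\emph{Rooted exposure.} The already-labeled objects are the $j$ perturbation slots, the $a$ observable slots and the $k$ labeled pairs. These form a nonempty root family whenever $h>k$; if $h=k$ there is nothing to choose and the bound is $1$.

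For a completed connected labeling, the graph on all $h+j+a$ objects is connected. Contract the labeled family to a single root; the resulting graph on the root plus the $h-k$ unlabeled pairs is still connected. Choose a BFS/spanning-tree exposure in which each unlabeled pair is attached to a parent exposed earlier, either the root family or an earlier pair. We encode the labeling by three pieces of data for each unlabeled pair, processed in a fixed deterministic exposure order (for example, the smallest-index pair adjacent to the exposed set):
\begin{enumerate}
\item The parent object, with at most $h+j+a$ choices. When the parent is the root family, we name the specific labeled object it meets.
\item The shared colored site: one of the four positions in the parent's label, and one of the four positions in the new label. The colors must be compatible, but we do not need this, giving at most $16$ choices.
\item The remaining three sites of the new label, with at most $n^3$ choices.
\end{enumerate}
Each pair thus costs at most $16(h+j+a)n^3\leq 64(h+j+a)n^3$. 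The factor $64$ leaves slack for at most two endpoint colors per pair, and for the map from a pair's label to its colored support.

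Because the exposure order is a deterministic function of the labeling, each completed labeling has at least one encoding. The count is therefore at most $[64(h+j+a)]^{h-k}n^{3(h-k)}$, which is \eqref{eq:rooted_exposure_shared}.

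The main point to verify is the injectivity/overcount direction: we only need a surjection from encodings onto labelings. This holds because every connected completion admits such a tree, so no rigid canonical choice is required.
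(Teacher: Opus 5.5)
Your proposal is correct and follows essentially the same route as the paper. For the parity count, you encode the concatenated word by indices into a list of at most $(D+s)/2$ sites in which $Q$ occupies fixed positions, paying $n$ only for each of the at most $(D-s)/2$ extra sites. For the rooted count, you record per unlabeled pair its parent object, the two positions of a shared site, and the three remaining entries; dropping the paper's color factor is harmless, since the color is not needed to decode the label.

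Two points are worth tightening. Your intermediate claim that the first-appearance ordering of the $Q$-sites can be ``absorbed'' is false as stated, since it costs up to $\binom{u}{s}s!$. Your subsequent canonical-rank fix, which gives $Q$ the ranks $1,\dots,s$, removes this problem, so delete the earlier sentence. In the exposure argument, the recorded data should be indexed by pair rather than by exposure step. Then the decoder can follow the acyclic parent pointers without knowing the labeling-dependent exposure order, which is what makes your ``surjection from encodings'' well defined.
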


\begin{proof}

	First we show \Cref{eq:parity_count_shared}. We encode each possible ordered set as a
	``word'' and count the number of possible words. Start with the word
	\[
		w_0 = A_1 \circ A_2 \circ \dots \circ A_b\,,
	\]
	encoding each $A_i$ as an increasing sequence of its elements. We have $|w_0| = D$.

	Separately, create a list $\mc L$ of the distinct elements $\overline Q := \Paren{\bigcup_{i
	\leq b} A_i} \setminus Q$ and of $Q$ in increasing order; this includes all elements which
	occur in some $A_i$. Elements of $\overline Q$ occur with positive even multiplicity and thus
	$|\overline Q| \leq \tfrac 1 2 (D - s)$. This means that $|\mc L| \leq \tfrac{1}{2}(D-s) + s
	= \tfrac 1 2 (D+ s)$.

	Replace each entry of $w_0$ by its index in $\mc L$. Since $|\mc L| \leq
	\tfrac 1 2 (D+s)$, there are,
	\[
		\Paren{\frac{D+s}{2}}^D
	\]
	possible encoded words. Finally, we sum over possible lists $\mc L$. Since $Q$ is fixed, we
	don't have to sum over entries corresponding to $Q$. There are $n^{(D-s)/2}$ possible
	choices for the positions corresponding to $\overline Q$. Taking the product of these two
	counts yields \Cref{eq:parity_count_shared}.

	For the second assertion, use the originally-labeled objects as roots, which are nonempty whenever $h>0$. Each unknown pair
	chooses a neighboring parent at distance one less from the root set, together with a common
	colored site. Record the parent object, the common color, the positions of that site in the
	parent and child labels, and the other three entries of the child label. There are at most
	$h+j+a$ parent choices, four colors, $4^2$ position choices, and $n^3$ choices for the
	remaining three entries of each unknown pair label. Taking the product over the $h-k$ unknown pairs proves \Cref{eq:rooted_exposure_shared}.

\end{proof}
 \subsection{Partition function moments}
\label{sec:thermal_kp_proof}

For $\vartheta\in\{0,1\}$, let $\vec\bsg_1, \vec\bsg_2$ be jointly Gaussian with standard Gaussian marginals and
$\E_{\vec\bsg_1,\vec\bsg_2}[(\vec\bsg_1)_I (\vec\bsg_2)_J] = \vartheta \mathbf 1_{I=J}$, and put
\[
	M_\vartheta(\beta_1,\beta_2) = \E_{\vec\bsg_1,\vec\bsg_2}[Z(\beta_1,\vec\bsg_1) Z(\beta_2,\vec\bsg_2)]\,.
\]
Thus
\begin{equation}
	\label{eq:part_ratio}
	\frac{\E_{\vec\bsg}|Z(\beta, \vec\bsg)|^2}{|Z_\mathrm{ann}(\beta)|^2} = \frac{M_1(\beta,
	\overline\beta)}{M_0(\beta, \overline\beta)}.
\end{equation}
We used $Z(\overline\beta,\vec g)=\overline{Z(\beta,\vec g)}$, which follows from Hermiticity of
$H(\vec g)$.

\begin{lemma}[Self-averaging of the partition function]
\label{lem:ordinary_replica}
There are absolute constants $\beta_\mathrm{thermal}>0$ and $C<\infty$ such that
for all even $n\geq8$, $Z_\mathrm{ann}$ is zero-free on a neighborhood of
$|\beta| \leq 16\beta_\mathrm{thermal}$ and
\begin{equation}
	\label{eq:replica_bound}
	\sup_{|\beta|\leq 16\beta_\mathrm{thermal}} \Norm{\frac{Z(\beta,
	\vec\bsg)}{Z_\mathrm{ann}(\beta)} - 1}_{L^2} \leq Cn^{-1}\,.
\end{equation}
Moreover, the quotient above is $L^2$-holomorphic on $|\beta| \leq
16\beta_\mathrm{thermal}$.
\end{lemma}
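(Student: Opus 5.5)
We realize both $M_0(\beta,\overline\beta)$ and $M_1(\beta,\overline\beta)$ as two-replica hard-core gases via \Cref{lem:component_placement}, with colors $c\in\{1,2\}$, inverse temperatures $(\beta_1,\beta_2)=(\beta,\overline\beta)$, and ground set $X=\{1,2\}\times[n]$. The two activity families agree on every polymer whose Wick pairs all join slots of the same color. They differ only on \emph{mixed} polymers, those containing a cross-color pair: such polymers carry $\kappa_{12}=1$ in $M_1$ and vanish in $M_0$. The plan has three steps. First, verify the Koteck\'y--Preiss hypothesis \Cref{eq:shared_kp} with a common majorant $w(\mc P)=|\mathsf{Act}_1(\mc P)|$ computed with $|\beta_c|$ in place of $\beta_c$. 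Then \Cref{thm:kp} gives $\Xi_\vartheta=e^{\mathsf{Mayer}_\vartheta}\neq0$, and by \Cref{eq:part_ratio}
\[
\Norm{\frac{Z}{Z_\mathrm{ann}}-1}_{L^2}^2=\frac{M_1}{M_0}-1=e^{\mathsf{Mayer}_1-\mathsf{Mayer}_0}-1 .
\]
Second, bound $|\mathsf{Mayer}_1-\mathsf{Mayer}_0|$ by \Cref{cor:kp_comparison}, which gives at most $\sum_{\mc P\ \mathrm{mixed}}w(\mc P)e^{a_\mathrm{KP}|\supp\mc P|}$. Third, show this sum is $O(n^{-2})$. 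Then $e^{O(n^{-2})}-1=O(n^{-2})$ and \Cref{eq:replica_bound} follows after a square root.

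\emph{Step 1 (site condition).} Fix a colored site $x=(c_0,x_0)$ and sum over polymers with $h$ pairs containing it. By \Cref{prop:polymer_counting} with $b_1=b_2=|\beta|$, $j=0$, support size at most $8h$, and label sum bounded by the pinned count \Cref{eq:pinned_physical_site}, $C^hh!n^{3h}$, the mass is at most
\[
\frac{(2|\beta|)^{2h}}{2^hh!}\,\sigma_n^{2h}\,C^hh!\,n^{3h}\,e^{8a_\mathrm{KP}h}\le (C'|\beta|^2e^{8a_\mathrm{KP}})^h ,
\]
using $\sigma_n^2n^3=6$. The trace factors have absolute value at most one by \Cref{prop:polymer_counting}. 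Summing over $h\geq1$ gives at most $2C'|\beta|^2e^{8a_\mathrm{KP}}$. Fix $a_\mathrm{KP}=1$ and choose $\beta_\mathrm{thermal}$ so small that at $|\beta|\le16\beta_\mathrm{thermal}$ this is at most $\delta\le a_\mathrm{KP}$. This choice is uniform in $n$. Zero-freeness of $Z_\mathrm{ann}(\beta)^2=M_0(\beta,\beta)$, and hence of $Z_\mathrm{ann}$ itself, then follows from $\Xi_0\neq0$. The same majorant works on a slightly larger open disk, which gives the required neighborhood.

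\emph{Step 3 (mixed mass).} This is the main obstacle, because the naive pinned count only gives $O(1)$ total mass. The key is that a mixed polymer must have nonzero trace in each color. Take the set $B$ of cross-color pairs and one color, say color 1. The labels of the color-1 word must have empty symmetric difference. Within that word, same-color pairs contribute each label twice and cancel, so $\triangle_{e\in B}\mc A(e)=\varnothing$. Since the polymer is mixed, $B\neq\varnothing$. Then \Cref{eq:shared_xor_count} forces $|B|\ge2$ and bounds the connected labelings by $C^hh^{h+4}n^{3h-2}$. This count is unpinned and is taken over all of $X$, so there is no extra factor of $|X|=2n$. Plugging it into \Cref{eq:shared_counting_bound} gives mass
\[
\frac{(2|\beta|)^{2h}}{2^hh!}\cdot 6^hn^{-3h}\cdot C^hh^{h+4}n^{3h-2}\cdot e^{8h}\le n^{-2}h^4(C''|\beta|^2)^h ,
\]
using $h^h/h!\le e^h$. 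Summing over $h$ and over the at most $2^h$ choices of $B$ gives $O(n^{-2})$ for small $\beta$, after possibly shrinking $\beta_\mathrm{thermal}$.

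Finally, for $L^2$-holomorphy: the Taylor series of $Z(\beta,\vec\bsg)$ converges in $L^2$ normally on the disk, since its absolute second moment is dominated by the convergent majorant gas. Because $Z_\mathrm{ann}$ is holomorphic and nonvanishing there, the quotient $Z(\beta,\vec\bsg)/Z_\mathrm{ann}(\beta)$ is $L^2$-holomorphic.
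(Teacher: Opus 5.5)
Your proposal is correct and follows essentially the same route as the paper: the two-replica Wick gas with common majorant $w_b$, Koteck\'y--Preiss, \Cref{cor:kp_comparison} restricted to mixed polymers, and the symmetric-difference count \Cref{eq:shared_xor_count}, which yields the $n^{-2}$ mixed mass. The small deviations are all valid: you apply the color-agnostic pinned count to all polymers in the site condition instead of splitting off the mixed ones; you get zero-freeness from $M_0(\beta,\beta)=Z_\mathrm{ann}(\beta)^2$ rather than $M_1(\beta_1,0)$; your extra $2^h$ factor for $B$ is unnecessary but harmless, since the skeleton already determines $B$; and you use a majorant-gas argument rather than the crude bound $e^{|\beta|\sigma_n\sqrt m\|\vec g\|_2}$ for $L^2$-holomorphy.
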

The factor $16$ in the above lemma is to absorb further losses in obtaining the final thermal
estimate \Cref{lem:thermal_estimates} for the threshold $\beta_\mathrm{thermal}$.
\begin{proof}
	We bound the left-hand side by a two-replica quantity. Jensen's inequality implies that the ratio below is at least one.
\begin{align*}
	\Norm{\frac{Z(\beta, \vec\bsg)}{Z_\mathrm{ann}(\beta)} - 1}^2_{L^2} &= \frac{\E_{\vec\bsg}
	|Z(\beta, \vec\bsg)|^2}{|Z_\mathrm{ann}(\beta)|^2} - \frac{\E_{\vec\bsg} Z(\beta, \vec\bsg)}{Z_\mathrm{ann}(\beta)}\\
	&= \frac{\E_{\vec\bsg} |Z(\beta, \vec\bsg)|^2}{|Z_\mathrm{ann}(\beta)|^2} - 1\\
	&\leq \Abs{\frac{\E_{\vec\bsg} |Z(\beta, \vec\bsg)|^2} {|Z_\mathrm{ann}(\beta)|^2} - 1}\\
	\label{eq:desired_final_bound}
	&\leq \Abs{\frac{\E_{\vec\bsg} |Z(\beta, \vec\bsg)|^2}
	{|Z_\mathrm{ann}(\beta)|^2}}\Abs{\log \frac{\E_{\vec\bsg} |Z(\beta, \vec\bsg)|^2}
	{|Z_\mathrm{ann}(\beta)|^2}}\numberthis
\end{align*}
In the last line, we used $e^u-1\leq e^u u$ for $u\geq0$. By
\Cref{eq:part_ratio}, it therefore suffices to bound the logarithm of
$M_1(\beta,\overline\beta)/M_0(\beta,\overline\beta)$.
We follow the exposition in \Cref{sec:wick_placement}.
Specializing \Cref{eq:diagram_activity} to $C=2$, with same-color covariance $1$
and cross-color covariance $\vartheta$ yields the polymer activity
\[
	\mathsf{Act}_\vartheta(\mc P; \beta_1, \beta_2) = \Paren{\prod_{c=1}^2 \frac{(-\beta_c)^{r_c}}{r_c!}}
	\sigma_n^{2|\pi|} \Paren{\prod_{e=\{(c,i), (c',j)\} \in \pi} \vartheta^{\mathbf 1_{c \neq
	c'}}} \prod_c \Tr[\mathsf{Mon}_c(\mc P)]\,.
\]
Call a polymer \emph{mixed} when it contains a cross-color pair. Notice that its activity for $\vartheta=0$ is
zero since the corresponding covariance term is $0$. The activities of polymers that are not mixed agree at $\vartheta=0$ and $\vartheta=1$.

\Cref{lem:component_placement} yields coefficient-wise equality with the hard-core gas
\[
	M_\vartheta(\beta_1, \beta_2) = \Xi_\vartheta(\beta_1, \beta_2) = \sum_{q \geq 0}\frac1{q!}
	\sum_{\substack{\mc P_1, \dots, \mc P_q \in \mathsf P\\ \mc P_i\sim\mc P_j\ (i < j)}}
	\prod_{j=1}^q \mathsf{Act}_\vartheta(\mc P_j; \beta_1, \beta_2)\,.
\]
We establish the KP conditions in \Cref{thm:kp}. Write $b=\max\{|\beta_1|,|\beta_2|\}$
and use the common majorant
\[
  w_b(\mc P)
    :=\sup_{|\zeta_1|,|\zeta_2|\leq b}
         |\mathsf{Act}_1(\mc P;\zeta_1,\zeta_2)|
    =\left(\prod_{c=1}^2\frac{b^{r_c}}{r_c!}\right)
       \sigma_n^{2h}\prod_{c=1}^2|\Tr[\mathsf{Mon}_c(\mc P)]|,
       \qquad h=|\pi(\mc P)|.
\]
It majorizes both $|\mathsf{Act}_0|$ and $|\mathsf{Act}_1|$ throughout this polydisk.
We apply \Cref{prop:polymer_counting} separately to monochromatic and mixed polymers.

Start with monochromatic polymers of color $c\in\{1,2\}$, and fix a physical site $x\in[n]$ and $h
\geq 1$. For the monochromatic count, take $\mc B=\{c\}$, $\mc R=\mathbb Z_{\geq0}^{\{c\}}$,
$b_c=b$, $j=0$, and $t=1$. The single Taylor block belongs to trace $c$, and there are no observable slots.
For $\mc D=(\vec r,\varnothing,\pi)$, let $\lambda=\mc A:\pi\to\mc I_4$, and let
$\mathsf{Lab}_{c,x}(\mc D)$ consist of the labelings whose colored overlap graph is connected, whose
support contains $(c,x)$, and whose trace in color $c$ is nonzero. Take $\mathsf{Lab}(\mc
D)=\mathsf{Lab}_{c,x}(\mc D)$ and $v(\mc D,\mc A)=1$. The proposition's $\mathcal M_{h,0}$ is exactly
\[
  \mathcal M^{\mathrm{mono},c,x}_{h,0}
    :=\sum_{\substack{\mc P\text{ monochromatic in }c\\
                       (c,x)\in\supp(\mc P),\ |\pi(\mc P)|=h}}
          w_b(\mc P)e^{|\supp(\mc P)|}.
\]
The support is colored but uses only color $c$, so take $s_{h,0}=4h$.
The pinned estimate in \Cref{lem:shared_counts} gives
$\mathsf{LabSum}_v(\mc D)\leq L_{h,0}=C^hh!n^{3h}$, and is independent of the skeleton,
$c$, and $x$. The four-factor bound \Cref{eq:shared_counting_bound} is therefore
\[
\begin{aligned}
  \mathcal M^{\mathrm{mono},c,x}_{h,0}
    &\leq
      \underbrace{\frac{b^{2h}}{0!\,2^hh!}}_{\text{Taylor weights and skeletons}}
      \underbrace{\sigma_n^{2h}}_{\text{Wick covariances}}
      \underbrace{C^hh!n^{3h}}_{\text{weighted label sum}}
      \underbrace{e^{4h}}_{\text{support weight}}\\
    &\leq (Cb^2)^h,
\end{aligned}
\]
where the last step uses $\sigma_n^{2h}n^{3h}=6^h$. Hence
\begin{equation}
	\label{eq:monochromatic_bound}
  \sup_{c\in\{1,2\},\,x\in[n]}
    \sum_{\substack{\mc P\text{ not mixed}\\(c,x)\in\supp(\mc P)}}
       w_b(\mc P)e^{|\supp(\mc P)|}
    =\sup_{c,x}\sum_{h\geq1}\mathcal M^{\mathrm{mono},c,x}_{h,0}
    \leq\sum_{h\geq1}(Cb^2)^h.
\end{equation}

For the mixed count, fix $h\geq0$ and take $\mc B=\{1,2\}$, $\mc R=\mathbb Z_{\geq0}^{\{1,2\}}$,
$b_1=b_2=b$, $j=0$, and $t=1$; there are no observable slots. For $\mc D=(\vec r,\varnothing,\pi)$, the
labeling is again
$\lambda=\mc A:\pi\to\mc I_4$. Let $\mathsf{Lab}_{\mathrm{mix}}(\mc D)$ consist of
the labelings for which the colored overlap graph is connected, the matching has at least
one cross-color pair, and both traces are nonzero. Take
$\mathsf{Lab}(\mc D)=\mathsf{Lab}_{\mathrm{mix}}(\mc D)$ and $v(\mc D,\mc A)=1$.
For this application, the proposition's $\mathcal M_{h,0}$ is the global mixed mass
\[
  \mathcal M^{\mathrm{mix}}_{h,0}
    :=\sum_{\substack{\mc P\text{ mixed}\\|\pi(\mc P)|=h}}
          w_b(\mc P)e^{|\supp(\mc P)|}.
\]
Since there are two colors, each Wick pair has colored support of size at most $8$, so $s_{h,0}=8h$.

To refine the label count, we use that nonzero trace requires the symmetric difference of all
physical sites assigned to each trace to be $\emptyset$. There are two sets of edges, \emph{mixed}
edges which join the two traces, and \emph{non-mixed} edges which are internal to a trace. Non-mixed
edges contribute each physical site twice and thus contribute nothing to the symmetric difference. On
the other hand, each mixed edge only contributes each physical site once and thus we must have,
\[
	\mathop{\bigsymdiff}_{e\text{ mixed}}\mc A(e)=\emptyset\,.
\]
This supplies the pinned edges for \Cref{eq:shared_xor_count} in \Cref{lem:shared_counts} and
$\mathsf{LabSum}_v(\mc D)\leq L_{h,0}=C^hh^{h+4}n^{3h-2}$ for $h\geq2$. There are no labelings for
$h=0$ or $h=1$, so take $L_{0,0} = L_{1,0} = 0$; only $h \geq 2$ contributes. Substituting these
parameters into \Cref{eq:shared_counting_bound} gives
\[
\begin{aligned}
  \mathcal M^{\mathrm{mix}}_{h,0}
    &\leq
      \underbrace{\frac{(b+b)^{2h}}{0!\,2^hh!}}_{\text{Taylor weights and skeletons}}
      \underbrace{\sigma_n^{2h}}_{\text{Wick covariances}}
      \underbrace{C^hh^{h+4}n^{3h-2}}_{\text{weighted label sum}}
      \underbrace{e^{8h}}_{\text{support weight}}\\
    &\leq n^{-2}(Cb^2)^h h^4,
\end{aligned}
\]
using $\sigma_n^{2h}n^{3h-2}=6^hn^{-2}$ and $h!\geq(h/e)^h$. This trivially also implies a bound on
the pinned sum,
\begin{align}
  \sum_{\substack{\mc P\text{ mixed}\\y\in\supp(\mc P)}}
      |\mathsf{Act}_1(\mc P)|e^{|\supp(\mc P)|}
    &\leq\sum_{\mc P\text{ mixed}}w_b(\mc P)e^{|\supp(\mc P)|}\notag\\
  \label{eq:global_mixed_mass}
    &=\sum_{h\geq2}\mathcal M^{\mathrm{mix}}_{h,0}
      \leq n^{-2}\sum_{h\geq2}(Cb^2)^hh^4.
\end{align}
The same right-hand side bounds the global absolute activity mass and the pinned $w_b$-mass. Choose
a parameter $b_\mathrm{gas} > 0$ such that \Cref{eq:monochromatic_bound,eq:global_mixed_mass} are at most
$1/4$ for $b\leq2b_\mathrm{gas}$. To account for the margin in a future proof, set
$\beta_\mathrm{thermal}=b_\mathrm{gas}/16$. 
The conditions of \Cref{thm:kp} hold for both the full
gas and the gas with mixed polymers removed, with $a_\mathrm{KP}=1$ and $\delta=1/2$.

Taking $\beta_2=0$ shows that $Z_\mathrm{ann}(\beta_1)=M_1(\beta_1,0)$ is zero-free for
$|\beta_1|<2b_\mathrm{gas}$. Finally, since
\[
	\E_{\vec\bsg} |Z(\beta, \vec\bsg)|^2 = \E_{\vec\bsg} Z(\beta, \vec\bsg) Z(\overline \beta, \vec\bsg) = \Xi_1 = \exp(\mathsf{Mayer}_1)
\]
and
\[
	|Z_\mathrm{ann}(\beta)|^2 = \E_{\vec\bsg_1, \vec\bsg_2} Z(\beta, \vec\bsg_1) Z(\overline
	\beta, \vec\bsg_2) = \Xi_0 = \exp(\mathsf{Mayer}_0)\,,
\]
we have, 
\[
	\log \frac{\E_{\vec\bsg}|Z(\beta, \vec\bsg)|^2}{|Z_\mathrm{ann}(\beta)|^2} =
	\Re\Paren{\mathsf{Mayer}_1 - \mathsf{Mayer}_0} \stackrel{\text{\Cref{cor:kp_comparison}}} \leq \sum_{\mc P\text{ mixed}} |\mathsf{Act}_1(\mc P)|
	e^{|\supp(\mc P)|} \stackrel{\mathrm{Eq.\ \eqref{eq:global_mixed_mass}}} \leq C n^{-2}\,.
\]
Plugging into \Cref{eq:desired_final_bound} proves the norm bound. For fixed $n$, the bound
$|Z(\beta,\vec g)|\leq e^{|\beta|\sigma_n\sqrt m\|\vec g\|_2}$ gives an $L^2$ majorant on every compact temperature disk. Thus $Z$ is $L^2$-holomorphic, and division by the zero-free $Z_\mathrm{ann}$ proves the remaining assertion.
\end{proof}

\subsection{Weighted thermal estimates}
\label{sec:weighted_thermal_proof}

We prove \Cref{lem:thermal_estimates}.

\thermalestimates*

\begin{proof}
Fix $s \geq 1$, and let $b=16\beta_\mathrm{thermal}>0$ and $C$ be the radius and constant
in the $L^2$ bound of \Cref{lem:ordinary_replica}.
Write
\[
	\Delta_\beta(\vec g) = \frac{\Tr[e^{-\beta H(\vec g)}]}{Z_\mathrm{ann}(\beta)} - 1 = \sum_{k\geq0}a_k(\vec g)
	\beta^k.
\]
Since $\beta H(\vec g)$ is linear in both $\vec g$ and $\beta$, the coefficient $a_k$ has Gaussian degree
at most $k$. Moreover, for $|\beta| \leq b$, $Z_\mathrm{ann}$ is zero-free by
\Cref{lem:ordinary_replica} and thus $\Delta_\beta$ is analytic. Thus the $L^q$ Cauchy estimate
in~\Cref{cor:cauchy_lq_bound} with $q=2$ yields $\|a_k\|_{L^2} \leq Cn^{-1} b^{-k}$. For every $q\geq1$, Hermite orthogonality yields
\[
	\|a_k\|_{2,q}^2 = \sum_{d=0}^k q^{2d} \|\Pi_d a_k\|_{L^2}^2 \leq q^{2k} \|a_k\|_{L^2}^2\,,
\]
and in particular, $\|a_k\|_{2,q} \leq q^k \|a_k \|_{L^2}$. Summing the geometric series yields
\begin{equation}
	\sup_{|\beta|\leq b/(2q)}\|\Delta_\beta\|_{2,q} \leq \sum_{k \geq 0} \|a_k\|_{2,q}
	\frac{b^k}{(2q)^k}
    \leq Cn^{-1}\sum_{k\geq0}2^{-k}
    =2Cn^{-1}.
  \label{eq:weighted_upgrade}
\end{equation}
Note that $\Delta_\beta$ is analytic on a larger disk $|\beta| < b/q$ since the series still converges absolutely.

Apply this with $q=2s$, and put $b_\mathrm{gradient}(s)=b/(4s)$,
a radius used only in this proof.
For a finite Hermite expansion $F$, differentiation
$D_I=\partial/\partial g_I$ satisfies
\begin{equation}
  \sum_I\|D_IF\|_{2,s}^2
    =\sum_{d\geq1}d\,s^{2(d-1)}\|\Pi_dF\|_{L^2}^2
    \leq\frac1{4s^2}\|F\|_{2,2s}^2\,,
  \label{eq:weighted_gradient_shared}
\end{equation}
where we used the definition of the weighted Gaussian norm in \Cref{def:weighted_norm} and
$d\leq4^{d-1}$. Since $\Delta_\beta$ is invariant under permutations of the physical sites,
 all $\|D_I\Delta_\beta\|_{2,s}$ are equal.
Consequently, for $|\beta|\leq b_\mathrm{gradient}(s)$,
\begin{equation}
  \|D_I\Delta_\beta\|_{2,s}^2
    =\frac1m\sum_J\|D_J\Delta_\beta\|_{2,s}^2
    \leq\frac{C^2}{s^2mn^2}
    \leq\frac{C_1^2}{s^2n^6}, \label{eq:boundary_bound}
\end{equation}
where $m=\binom n4\geq cn^4$ for $n\geq8$ and $C_1$ is absolute.

Differentiating the Taylor coefficients and using cyclicity
of the trace gives
\begin{equation}
  D_I\Delta_\beta
    =-\beta\sigma_n
       \frac{\Tr[\Gamma_Ie^{-\beta H}]}{Z_\mathrm{ann}(\beta)}
    =\beta^2\sigma_n^2
       U_{\mathrm{annealed},I}(\beta,\cdot).
  \label{eq:insertion_derivative_shared}
\end{equation}
Termwise differentiation is justified by the bounded gradient
above, which also preserves weighted holomorphy.

Since $\Tr[H]=0$, we have
$\Delta_0=\partial_\beta\Delta_0=0$, and hence
$D_I\Delta_\beta$ has a zero of order at least two.
The Schwarz estimate for order-two zeros \cite[Proposition~1.3]{ReichShoikhet1996Generation}, together with the estimate in \Cref{eq:boundary_bound},
 gives
\[
  \|D_I\Delta_\beta\|_{2,s}
    \leq \frac{C_1}{sn^3}
       \left(\frac{|\beta|}{b_\mathrm{gradient}(s)}\right)^2\,.
\]
for $|\beta| < b_\mathrm{gradient}(s)$; we further restrict $|\beta| \leq b_\mathrm{gradient}(s) /
2$ to land inside this open disk.

For $\beta \neq 0$, divide \Cref{eq:insertion_derivative_shared}
by $\beta^2\sigma_n^2$ and use $\sigma_n^2=6n^{-3}$ to obtain
\[
  \max_I\|U_{\mathrm{annealed},I}(\beta,\cdot)\|_{2,s}
    \leq
    \frac{C_1n^{-3}|\beta|^2}
         {s\,b_\mathrm{gradient}(s)^2
          |\beta|^2\sigma_n^2}
    =\frac{C_1}{6s\,b_\mathrm{gradient}(s)^2}.
\]
Additionally, writing $U_{\mathrm{annealed}, I}(\beta, \vec g) = \frac{D_I \Delta_\beta(\vec
g)}{\beta^2 \sigma_n^2}$ shows this function is analytic; since $D_I \Delta_0 = \partial_\beta D_I
\Delta_0 = 0$, the Taylor expansion of $D_I\Delta_\beta$ starts at $\beta^2$, so the singularity at $\beta = 0$ is
removable (with value $g_I$).

We may thus take
\[
  \beta_\mathrm{thermal}(s)=\frac{\beta_\mathrm{thermal}}s=\frac{b}{16s},\qquad
  C_s=\max\left\{
       2C,\frac{C_1}{6s\,b_\mathrm{gradient}(s)^2}
       \right\}.
\]
The first supremum bound follows from
\Cref{eq:weighted_upgrade} and
$\|\Delta_\beta\|_{2,s}\leq\|\Delta_\beta\|_{2,2s}$;
the second is the preceding estimate, including zero by continuity.
Both functions are weighted-holomorphic on the larger disk
$|\beta|<b/(2s)$, and annealed zero-freeness follows from
\Cref{lem:ordinary_replica}.
\end{proof}
 \newcommand{\pertslots}{\mathsf{Slots}_\mathrm{pert}}
\newcommand{\gaussslots}{\mathsf{Slots}_\mathrm{gauss}}

\subsection{From shifts to degree bounds}
\label{sec:degree_proof}

In this section, we establish the necessary estimates to prove \Cref{lem:degree_bound}, restated
here.

\degreebound*

We apply the Hermite features $\vec\Phi_1, \vec\Phi_2$ to the truncated proxies and define
\[
	F^{\beta,K}_{1,I} = \Phi_{1,I}(\vec V^{[K]}_\beta) = V^{[K]}_{\beta, I}\,,\qquad
	F^{\beta,K}_{2,I} = n^{-1/2}\Phi_{2,I}(\vec V^{[K]}_\beta)\,.
\]
To bound $\|\Pi_d F^{\beta, K}_{a,
I}\|_{L^2}^2$, we show that for any scalar polynomial $F:\R^m\to\C$,
\[
	\|\Pi_d F\|_{L^2}^2 = \frac 1 {d!} \sum_{\vec B \in \mc I_4^d} \Abs{\E_{\vec\bsg}
	[(\partial_{B_1}\cdots\partial_{B_d}F)(\vec\bsg)]}^2\,.
\]
In turn, each expected derivative $\E_{\vec\bsg}[(\partial_{B_1}\cdots\partial_{B_d}F)(\vec\bsg)]$ is equal to
\[
	[v_1 \dots v_d] G_{\vec B}\,, \qquad G_{\vec B}(\vec v) = \E_{\vec\bsg} F\Paren{\vec\bsg + \sum_{j
	\in [d]} v_j \vec e_{B_j}}\,.
\]
A bound on $\max_{\vec v\st\|\vec v\|_1\leq1}|G_{\vec B}(\vec v)|$
therefore bounds the required coefficients by \Cref{cor:cauchy_estimate}.
In \Cref{sec:fixed_chaos_proof} we prove,

\begin{restatable}[Perturbed expectation bound]{lemma}{physicalshiftbound}
\label{lem:physical_shift}
For every fixed $A>0$, there is $C_A<\infty$ with the following property, using the absolute
constant $C_\star$ from \Cref{lem:degree_bound}. For every even $n\geq8$, every
integer $1\leq K\leq A\log n$, $I\in\mc I_4$, $a\in\{1,2\}$, and complex $|\beta|\leq
1/(64C_\star)$,
\begin{equation}
  \sup_{\substack{\vec u\in\C^m\\\|\vec u\|_1\leq1}}
    \left|\E_{\vec\bsg}F^{\beta,K}_{a,I}
           (\vec\bsg+\sigma_n^{-1}\vec u)\right|
        \leq C_A\sigma_n^{-1}.
  \label{eq:physical_shift_bound}
\end{equation}
The expectation is the polynomial extension to complex $\vec u$.
The radius is independent of $A$, $K$, $n$, and $I$.
\end{restatable}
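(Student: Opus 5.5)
The plan is to realize the shifted expectation as a ratio-free combination of multi-replica partition-function moments with a deterministic background field, and then to bound it by the same Wick/Koteck\'y--Preiss machinery used in \Cref{lem:ordinary_replica}. Shifting $\vec\bsg\mapsto\vec\bsg+\sigma_n^{-1}\vec u$ changes $H(\vec\bsg)$ into $H(\vec\bsg)+X_{\vec u}$ with $X_{\vec u}=\sum_I u_I\Gamma_I$. Since $\|\vec u\|_1\leq1$, this perturbation has operator norm at most $1$, uniformly in $n$.

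The first step expands $V^{[K]}_{\beta,I}$ through \Cref{def:expec_proxy}. Up to the deterministic prefactor $(\beta\sigma_n Z_\mathrm{ann}(\beta))^{-1}$, it is the Taylor truncation of the product of $\Tr[\Gamma_I e^{-\beta H}]$ and $(2-Z/Z_\mathrm{ann})$. Hence $F_{1,I}$ involves one or two replicas, with one observable slot. For $F_{2,I}$, the factor $\Phi_{2,I}$ is quadratic in $\vec V$, so it involves up to four replicas and two observable slots $\Gamma_{J_1},\Gamma_{J_2}$. These observable labels are summed over the $N_2$ pairs with $J_1\triangle J_2=I$, and the normalization $n^{-1/2}/(2\sqrt{N_2})$ is applied.

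In each replica the Taylor expansion in $\beta$ of $e^{-\beta(H+X_{\vec u})}$ produces two kinds of slots: Gaussian slots, which are Wick-paired with weight $\sigma_n^2$, and perturbation slots labeled by $I$ with weight $u_I$. This is exactly the $M$ in \Cref{prop:polymer_counting}. Because $\sum_I|u_I|\leq1$, the label sum over a perturbation slot costs $O(1)$ rather than a power of $n$.

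Next I would run the component placement of \Cref{lem:component_placement}, now with observable slots. The key structural claim is that the annealed denominators cancel against the background gas. Concretely, the expectation equals $\Xi_{\text{pert}}/Z_\mathrm{ann}^{k}$ times a sum over the connected cluster containing the observable slots. The ratio of the perturbed to the unperturbed background gas is controlled by \Cref{cor:kp_comparison}. In that gas, perturbation-slot polymers have activity $O(|\beta|)$ per slot without a $\sigma_n$ gain, and they are summable only because $|\beta|\leq1/(64C_\star)$ and $\|\vec u\|_1\leq 1$. The Taylor truncation at order $K$ is handled by noting that coefficients of the truncated series are dominated by those of the full series, so the majorant bounds persist.

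The remaining work is to bound the observable-rooted cluster with the rooted exposure bound of \Cref{lem:parity_exposure}. The observable labels give roots. Each further Gaussian pair costs $\sigma_n^2\cdot 64(h+j+a)n^3\,|\beta|^2/(\text{factorials})$, which is summable. The $1/(\beta\sigma_n)$ prefactor is where the advertised $\sigma_n^{-1}$ comes from. The leading term, where the observable is matched against a single perturbation slot or a single Gaussian pair, contributes $O(\sigma_n^{-1})$ and no more, because the nonzero-trace condition forces one Hamiltonian label to equal the observable.

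For $F_{2,I}$, the sum over $N_2\sim n^2$ pairs $(J_1,J_2)$ must be compensated. There are two cases. If $J_1$ and $J_2$ lie in different connected clusters, each cluster factor is at most $O(1)$ only when the cluster has a perturbation slot labelled $J_i$; summing $|u_{J_1}||u_{J_2}|$ is $\leq1$. Otherwise the two observables lie in one connected cluster, and the parity count \Cref{eq:parity_count_shared} saves the needed power of $n$ against $n^{-1/2}N_2^{-1/2}$.

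I expect this $F_{2,I}$ accounting, and in particular tracking the $\sigma_n^{-1}$ (rather than $\sigma_n^{-2}$) scaling when two $1/(\beta\sigma_n)$ prefactors appear, to be the main obstacle. I would try to resolve it by showing that each observable's leading contribution pairs its $1/\sigma_n$ with a forced $\sigma_n$ from a matched Gaussian or a $u$-weight.
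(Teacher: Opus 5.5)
Your overall architecture matches the paper's. You expand the proxies into multi-replica moments carrying observable and perturbation slots. You split each admissible family into an observable-carrying configuration times a restricted observable-free gas. You control that gas by Koteck\'y--Preiss: perturbation polymers have total mass $O(1)$ because $\|\vec u\|_1\le1$. Note, though, that the comparison gas is the monochromatic, perturbation-free one, $\prod_cZ_\mathrm{ann}(\beta_c)$, so cross-color polymers must also be deleted, at cost $O(n^{-2})$ via \eqref{eq:shared_xor_count}. Finally you count observable clusters by rooted exposure with the same case split.

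Where you differ is the Taylor cutoff: you dominate truncated coefficients by full ones instead of using \Cref{lem:cauchy_cutoff}. This can work, but only under three conditions:
(i) each color gets its own temperature variable, since $\E[V^{[K]}_{\beta,J_1}V^{[K]}_{\beta,J_2}]$ is the diagonal value of a rectangular truncation (\Cref{def:shift_truncation}), not a one-variable truncation;
(ii) the domination is applied to $\Xi[S]/\prod_cZ_\mathrm{ann}(\beta_c)=\exp(\mathsf{Mayer}[S]-\mathsf{Mayer}_\mathrm{mono})$, because $\log Z_\mathrm{ann}$ is extensive and $1/Z_\mathrm{ann}$ alone has no $n$-uniform majorant;
(iii) you keep the restriction $\vec r\in\mc R_{\vec\ell,K}$, which is legitimate since the set is down-closed, rather than passing to the full series.
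Arranged this way, your route avoids the radius halving and the $2^{|\mc C|}$ loss. The price is that it needs coefficientwise rather than merely holomorphic control of the background.

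The genuine gap is the case where both observable slots lie in one component. Xoring the trace constraints over all colors of that component only gives $\triangle_{v\in M}\mc A(v)=J_1\triangle J_2=I$. That holds for every one of the $N_2\sim n^2$ permitted pairs, so there is nothing for ``the parity count'' to act on, and rooting at both observables loses exactly the factor $N_2$.

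The paper instead cuts the component's color graph between the two observable colors and xors the per-color constraints over one side only. This gives $\triangle_{e\in B}\mc A(e)=J_1\triangle D_1$ for the $k\ge1$ crossing pairs $B$. Writing $J_1=P_1\cup X$ and applying \eqref{eq:parity_count_shared} bounds the choices of $(X,\mc A|_B)$ by $3(4k+2)^{4k+2}n^{2k+1}=n^{3k}\cdot 3(4k+2)^{4k+2}n^{-(k-1)}$. The superexponential prefactor is absorbed only because $k\le K+1\le(A+1)\log n$. This is exactly where the hypothesis $K\le A\log n$ and the $A$-dependence of $C_A$ enter; your proposal never uses it, and point (iii) is what preserves it.

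In the two-component case your claim also needs correcting. The nonzero-trace condition forces each $J_b$ to equal the xor of the perturbation labels of its component; it does not require some slot to be labelled $J_b$. This still makes the $\vec J$-sum free once the perturbation labels are summed against $\|\vec u\|_1^j\le1$.

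Finally, the fix in your last paragraph is wrong. At $\beta=0$, where $V^{[K]}_{0,J}(\vec g)=g_J$, one has $\E F^{0,K}_{1,I}(\bsg+\sigma_n^{-1}\vec e_I)=\sigma_n^{-1}$ and $\E F^{0,K}_{2,I}(\bsg+\sigma_n^{-1}\vec u)=\sigma_n^{-2}\sum_{\vec J}\omega_{2,I}(\vec J)u_{J_1}u_{J_2}$. So perturbation slots carry no factor of $\sigma_n$, and the bound is sharp. The reduction from $\sigma_n^{-2}$ to $\sigma_n^{-1}$ comes solely from the single factor $|\omega_{2,I}|\le\sigma_n$, once the two counting cases show that the $\vec J$-sum costs $O(1)$ rather than $N_2$. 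The factor $|\beta|^{-a}$ is cancelled because each observable color needs at least one Taylor slot.
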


Assuming this lemma, we obtain a proof of \Cref{lem:degree_bound}.

\begin{proof}[Proof of \Cref{lem:degree_bound}]
For the cutoff in \Cref{def:expec_proxy}, $K \leq 3\log n$ for $n \geq 8$. Let
$F=F^{\beta,K}_{a,I}$.

For an ordered tuple $\vec B=(B_1,\ldots,B_d)\in\mc I_4^d$, put
\[
  R_{\vec B}(F)=\E_{\vec\bsg}[(\partial_{B_1}\cdots\partial_{B_d}F)(\vec\bsg)],\qquad
  G_{\vec B}(\vec v)=\E_{\vec\bsg}F\left(
       \vec\bsg+\sigma_n^{-1}\sum_{j=1}^dv_j\vec e_{B_j}\right)\,.
\]
so that $[v_1\cdots v_d]G_{\vec B}=\sigma_n^{-d}R_{\vec B}(F)$. Although the labels $B_i$ may
repeat, $|v_j|\leq1/d$ still gives $\|\sum_j v_j \vec e_{B_j}\|_1 \leq 1$ and thus we can use
\Cref{cor:cauchy_estimate} (with $A = 3$) to get a bound on $R_{\vec B}(F)$ in terms of the bound on $G_{\vec B}$
from \Cref{lem:physical_shift}:
\begin{equation}
	\label{eq:ordered_derivative_bound_shared}
	|R_{\vec B}(F)|\leq \sigma_n^d \cdot \sup_{|v_j|\leq1/d\ \forall j}|G_{\vec B}(\vec v)| \cdot
	d^d \leq C_3d^d\sigma_n^{d-1}\,.
\end{equation}

From \Cref{lem:shift_to_hermite}, we have
\begin{equation}
  \|\Pi_dF\|_{L^2}^2
      =\frac1{d!}\sum_{\vec B\in\mc I_4^d}|R_{\vec B}(F)|^2.
  \label{eq:inner_ordered_hermite_shared}
\end{equation}
However, summing \Cref{eq:ordered_derivative_bound_shared} over all $\approx n^{4d}$ choices of
$\vec B$ will overpower the $\sigma^{d-1}_n$ term. Instead, we use the sign covariance of $F$ to
constrain the permissible choices of $\vec B$. For a physical-site sign vector $\vec x \in
\npo^n$, \Cref{cor:q_covariance} yields
\[
	F(\vec x \cdot \vec g) = x_I F(\vec g)\,.
\]
This implies,
\begin{align*}
	R_{\vec B} (F) = \E_{\vec\bsg} [(\partial_{\vec B}F)(\vec\bsg)]= \E_{\vec\bsg} [(\partial_{\vec B}F)(\vec x\cdot\vec\bsg)] = x_{I \triangle S} \E_{\vec\bsg}[(\partial_{\vec B}F)(\vec\bsg)]\,,
\end{align*}
where $S = B_1 \triangle \dots \triangle B_d$ and we have used Gaussian invariance under sign changes in the
second equality and sign covariance in the third. But then if $I \triangle S \neq \emptyset$, $R_{\vec B}(F) = - R_{\vec B}(F) = 0$.
Thus, we only need to sum over $\vec B$ whose symmetric difference is $I$. Applying
\Cref{eq:parity_count_shared} in \Cref{lem:parity_exposure} with total incidence $4d$ and $|I|=4$,
we find that there are at most $(2d+2)^{4d}n^{2d-2}$ such ordered tuples. Combining this with
\Cref{eq:ordered_derivative_bound_shared,eq:inner_ordered_hermite_shared} gives
\begin{align*}
	\|\Pi_d F\|_{L^2}^2 &\leq C_3^2 \frac{d^{2d} (2d+2)^{4d}}{d!} \sigma_n^{2d-2} n^{2d-2}\\
	&= C_3^2 \frac{d^{2d}(2d+2)^{4d}}{d!}\, 6^{d-1} n^{-(d-1)} \leq (c'd)^{5d}n^{-(d-1)}\,.
\end{align*}
The last inequality uses $d!\geq(d/e)^d$ and increases the absolute $c'$. For $a=1$ this is the
first required bound; for $a=2$, undo the factor $n^{-1/2}$ in $F^{\beta,K}_{2,I}$ to obtain the
second.
\end{proof}

\subsection{Reduction to observable diagrams}
\label{sec:observable_reduction}

In this section we reduce the perturbed bound \Cref{lem:physical_shift} to a weighted sum of
diagrams whose components contain observable slots. We bound this sum in
\Cref{sec:fixed_chaos_proof}. The argument starts from the shifted expectation
\[
  \E_{\vec\bsg}F^{\beta,K}_{a,I}
     (\vec\bsg+\sigma_n^{-1}\vec u),\qquad \|\vec u\|_1\leq1,
\]
where we recall that the two normalized features are
\[
	F^{\beta,K}_{1,I} = V^{[K]}_{\beta,I}\,,\qquad
	F^{\beta,K}_{2,I} = n^{-1/2}\Phi_{2,I}(\vec V^{[K]}_\beta)\,.
\]
For this subsection, allow the cutoff $K$ in \Cref{def:expec_proxy} to be
any integer $K\geq1$, and display it as a superscript:
\begin{align}
	V^{[K]}_{\beta, I}(\vec g) &= \Brac{ -\frac{\Tr[\Gamma_Ie^{-\beta H(\vec g)}]}{\beta
	\sigma_n Z_\mathrm{ann}(\beta)} \Paren{2-\frac{Z(\beta,\vec g)}{Z_\mathrm{ann}(\beta)}}
	}_{\leq K,\beta}\\
	\label{eq:auxiliary_proxy}
	&= -\frac 1 {\beta \sigma_n} \Brac{\frac{2 \Tr[\Gamma_I e^{-\beta H(\vec
	g)}]}{Z_\mathrm{ann}(\beta)} - \frac{\Tr[\Gamma_I e^{-\beta H(\vec
	g)}]Z(\beta, \vec g)}{Z_\mathrm{ann}(\beta)^2}}_{\leq K+1,\beta}\,.
\end{align}

Define the shifted Hamiltonian by
\[
 H_{\vec u}(\vec g):=H(\vec g+\sigma_n^{-1}\vec u)
 =H(\vec g)+\sum_{B\in\mc I_4}u_B\Gamma_B.
\]
Expanding a general feature $\Phi_{a,I}(\vec V_\beta^{[K]})$ yields products of traces of the form
\[
	\prod_{b=1}^a \Paren{\Tr[\Gamma_{J_b} e^{-\beta_{b, 0} H_{\vec u}(\vec\bsg)}]
	\prod_{j=1}^{\ell_b-1} \Tr[e^{-\beta_{b, j} H_{\vec u}(\vec\bsg)}]}
\]
At this stage, we can follow the exposition in \Cref{sec:wick_placement}, but with the complication
that we have additional slots which are not paired by Wick--Isserlis; these come from terms
$\Gamma_{J_b}$ as well as from the perturbation $\vec u$. We will call the former \emph{observable}
slots and the latter \emph{perturbation} slots. This adds a complication to the argument (for
instance requiring us to separately consider polymers with and without observable slots), but the
overall framework is similar.

\paragraph{Expanding the shifted feature.}
Recall that,
\[
	\Phi_{1,I}(\vec V_\beta) = V_{\beta, I}\,, \qquad \Phi_{2,I}(\vec V_\beta) = \frac 1 {2 \sqrt {N_2}}
	\sum_{J_1 \triangle J_2 = I} \chi(I; J_1, J_2) V_{\beta, J_1} V_{\beta, J_2}\,.
\]
Define $\omega_{1,I}(I)=1$ and $\omega_{2,I}(J_1,J_2)=\frac{\chi(I;J_1,J_2)}{\sqrt{nN_2}}$ as well
as $\mc J_{1,I} = \{(I)\}$ and $\mc J_{2,I} = \{(J_1, J_2) \st J_1 < J_2\,, J_1 \triangle J_2 = I\}$, where
we fix the lexicographic ordering of the quartic labels. Then, for the normalized features,
\begin{equation}
	\label{eq:general_phi_expr}
	F^{\beta,K}_{a,I}(\vec g) = \sum_{\vec J \in \mc J_{a,I}} \omega_{a, I}(\vec J) \prod_{b =
	1}^a V^{[K]}_{\beta,J_b}(\vec g)\,.
\end{equation}
When expanding $F^{\beta,K}_{a,I}$ as a Taylor expansion, the number of colors is
determined by the choice of $\Phi_1$ or $\Phi_2$, as well as the summand we take from
\Cref{eq:auxiliary_proxy}. To track these choices,
\begin{itemize}
	\item The integer $a \in [2]$ indicates the choice $F^{\beta,K}_{a,I}$ and the number of observable slots.
	\item The vector $\vec\ell\in[2]^a$ tracks for each copy of $V_\beta$, whether we are taking
		the first or second term of \Cref{eq:auxiliary_proxy}.
\end{itemize}
Therefore, for some $a$ and $\vec \ell$, we have the colors:
\[
	\mc C_{\vec \ell} = \{(b,j) \st 1 \leq b \leq a,\ 0 \leq j \leq \ell_b-1\}\,,
	\qquad |\mc C_{\vec\ell}|=\sum_{b=1}^a\ell_b\leq2a\leq4\,.
\]

To handle the Taylor truncation correctly, we need to associate an independent numerator cutoff
of $K+1$ to each $V_{\beta}$ term (i.e. each distinct $b\in[a]$), shared by the colors
$(b,0),\ldots,(b,\ell_b-1)$. To track this carefully, define separate $\beta$-proxies
$\vec\beta=(\beta_c)_{c\in\mc C_{\vec\ell}}$ for each color. We need to retain the cutoff of
each proxy before setting all these arguments equal to $\beta$. 

\begin{definition}[Separate Taylor cutoffs]
\label{def:shift_truncation}
For a finite color set $\mc C$, a finite set $\mc R\subseteq\mathbb Z_{\geq0}^{\mc
C}$, and a formal series $f(\vec\beta)=\sum_{\vec r\geq0}f_{\vec r}\vec\beta^{\,\vec r}$, define
\begin{equation}
	\label{eq:shift_truncation}
	\paren{\mathsf{Trunc}_{\mc R} f}(\vec \beta) := \sum_{\vec r \in \mc R} f_{\vec r}
	\vec \beta^{\,\vec r}\,.
\end{equation}
The coefficients may be scalars or Gaussian polynomials. For the colors
$\mc C_{\vec\ell}$ and an integer $K\geq1$, put
\[
	\mc R_{\vec\ell,K} =\Brac{\vec r\in\mathbb Z_{\geq0}^{\mc C_{\vec\ell}} \st
	\sum_{j = 0}^{\ell_b-1} r_{b, j} \leq K+1 \text{ for every } b \in [a]},\qquad
	\mathsf{Trunc}_{\vec \ell, K} := \mathsf{Trunc}_{\mc R_{\vec \ell, K}}.
\]
We write $(\mathsf{Trunc}_{\vec \ell, K} f)(\beta \vec 1)$ for the resulting polynomial evaluated
with every $\beta_c = \beta$. The bound $K+1$ is the numerator cutoff before division by $\beta$ in
each proxy, as explained after \Cref{eq:auxiliary_proxy}.
\end{definition}

In one variable $\mathsf{Trunc}$ agrees with the Taylor truncation $[\cdot]_{\leq K+1,\beta}$ when
$\mc R=\{0,\ldots,K+1\}$. The set $\mc R_{\vec\ell,K}$ is down-closed. Thus, for formal
series $f,g$ with nonnegative multidegrees,
\[
	\mathsf{Trunc}_{\vec \ell, K}(f g) =\mathsf{Trunc}_{\vec \ell, K}
	\paren{(\mathsf{Trunc}_{\vec \ell, K} f) (\mathsf{Trunc}_{\vec \ell, K} g)}\,.
\]
Indeed, if $\vec r+\vec s\in\mc R_{\vec\ell,K}$, then both $\vec r$ and $\vec s$ belong to the
same set.

\begin{lemma}[Exact expansion of the shift]
\label{lem:shift_expansion}
Let $n \geq 8$ be even, $I \in \mc I_4$, $a \in [2]$, $K \geq 1$ an integer, and $\vec u \in \C^m$.
Put $a_1=2$ and $a_2=-1$. For $\vec\ell \in [2]^a$ and $\vec J \in \mc J_{a,I}$, define
\begin{align}
	\label{eq:shift_trace_numerators}
	N^{\vec u}_{\vec \ell, \vec J}(\vec \beta) &= \E_{\vec\bsg} \prod_{b=1}^a
	\Paren{\Tr[\Gamma_{J_b} e^{-\beta_{b, 0} H_{\vec u}(\vec\bsg)}] \prod_{j=1}^{\ell_b-1}
	\Tr[e^{-\beta_{b, j} H_{\vec u}(\vec\bsg)}]}, \notag\\
	\mc N^{\vec u}_{\vec \ell, \vec J}(\vec \beta) &= \frac{N^{\vec u}_{\vec \ell, \vec
	J}(\vec \beta)} {\prod_{c \in \mc C_{\vec \ell}} Z_\mathrm{ann}(\beta_c)}\,.
\end{align}
For every $\beta \neq 0$, the exact expansion is
\begin{savedequation}{finalformexpectation}
	\label{eq:polarized_bridge_shared}
	\E_{\vec\bsg}F^{\beta,K}_{a,I}(\vec\bsg+\sigma_n^{-1}\vec u) =\frac 1 {(-\beta \sigma_n)^a}
	\sum_{\vec \ell \in [2]^a} \Paren{\prod_{b=1}^a a_{\ell_b}} \sum_{\vec J \in \mc
	J_{a,I}}\omega_{a,I} (\vec J) \paren{\mathsf{Trunc}_{\vec \ell, K} \mc N^{\vec u}_{\vec
\ell, \vec J}}(\beta \vec
	1)\,.
\end{savedequation}
\end{lemma}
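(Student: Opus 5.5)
The proof will be a bookkeeping computation: expand the normalized feature algebraically, then push the expectation and the truncation through the finite sums. I would start from \Cref{eq:general_phi_expr}, which writes $F^{\beta,K}_{a,I}(\vec g)$ as a finite weighted sum over $\vec J\in\mc J_{a,I}$ of products $\prod_{b=1}^a V^{[K]}_{\beta,J_b}(\vec g)$. Into each factor I substitute the two-term form \Cref{eq:auxiliary_proxy}:
\[
V^{[K]}_{\beta,J}(\vec g)=-\frac1{\beta\sigma_n}\Bigl[\textstyle\sum_{\ell\in[2]}a_\ell\,T_{\ell,J}(\beta,\vec g)\Bigr]_{\leq K+1,\beta},
\]
with $T_{1,J}=\Tr[\Gamma_Je^{-\beta H}]/Z_\mathrm{ann}(\beta)$ and $T_{2,J}=\Tr[\Gamma_Je^{-\beta H}]Z(\beta,\vec g)/Z_\mathrm{ann}(\beta)^2$. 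Expanding the product over $b$ then produces the sum over $\vec\ell\in[2]^a$, the coefficient $\prod_b a_{\ell_b}$, and the prefactor $(-\beta\sigma_n)^{-a}$.

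Next I would polarize the temperatures. For fixed $\vec\ell$ and $\vec J$, the product $\prod_b[\,T_{\ell_b,J_b}(\beta)\,]_{\leq K+1}$ equals the evaluation at $\beta\vec 1$ of $\mathsf{Trunc}_{\vec\ell,K}$ applied to the multivariate series
\[
\prod_b\Bigl(\Tr[\Gamma_{J_b}e^{-\beta_{b,0}H}]\prod_{j=1}^{\ell_b-1}\Tr[e^{-\beta_{b,j}H}]\Bigr)\Big/\prod_{c\in\mc C_{\vec\ell}}Z_\mathrm{ann}(\beta_c).
\]
This holds because setting all colors in block $b$ equal to $\beta$ turns the joint degree $\sum_j r_{b,j}$ into the degree in $\beta$. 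The set $\mc R_{\vec\ell,K}$ imposes the cutoff $K+1$ separately on each block, and a product of independently truncated series in disjoint variable blocks is exactly the truncation by the product-shaped set $\mc R_{\vec\ell,K}$. The division by $\beta\sigma_n$ happens outside the truncation, which matches the numerator cutoff $K+1$ noted in \Cref{def:shift_truncation}.

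Finally, I evaluate at $\vec g=\vec\bsg+\sigma_n^{-1}\vec u$, so that $H(\vec g)=H_{\vec u}(\vec\bsg)$, and take $\E_{\vec\bsg}$. Expectation is linear and commutes with the finite sums over $\vec\ell$ and $\vec J$. It also commutes with $\mathsf{Trunc}$, since the truncated object is a finite sum of coefficients $f_{\vec r}(\vec g)\vec\beta^{\vec r}$, each a polynomial in $\vec g$ with Gaussian-integrable polynomial extension to complex $\vec u$. The denominators $Z_\mathrm{ann}(\beta_c)$ are deterministic. Their Taylor coefficients can be taken as formal power series, since $Z_\mathrm{ann}(0)=1$ makes the reciprocal well defined, so expectation passes through them coefficientwise. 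This gives $\mathsf{Trunc}_{\vec\ell,K}\mc N^{\vec u}_{\vec\ell,\vec J}$ and hence \Cref{eq:polarized_bridge_shared}.

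I expect the main obstacle to be the justification for exchanging $\E_{\vec\bsg}$ with the coefficient extraction. I would handle it by working entirely with formal power series in $\vec\beta$ whose coefficients are polynomials in $\vec g$. Each coefficient of the Taylor expansion \Cref{eq:z_taylor_appendix}, and of its analogue with an inserted $\Gamma_J$, is a finite polynomial, so only finitely many terms enter each coefficient and no analytic convergence is needed. The identity is therefore an identity of polynomials in $\beta^{-1}$ and $\beta$, valid for every $\beta\neq0$.
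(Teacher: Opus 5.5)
Your proposal is correct and is the same computation as the paper's proof. You substitute the two-term form \eqref{eq:auxiliary_proxy} of each truncated proxy into \eqref{eq:general_phi_expr} and distribute the product. This gives the sum over $\vec\ell\in[2]^a$ with weights $\prod_b a_{\ell_b}$ and prefactor $(-\beta\sigma_n)^{-a}$, and you then recognize the product of blockwise cutoffs as $\mathsf{Trunc}_{\vec\ell,K}$ evaluated at $\beta\vec 1$.

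You also supply two details the paper leaves implicit: that a product of independently truncated series in disjoint variable blocks is the rectangular truncation, and that $\E_{\vec\bsg}$ can be exchanged with coefficient extraction. If $N^{\vec u}_{\vec\ell,\vec J}$ is read as the analytic function rather than as a coefficientwise formal series, the exchange needs one extra line: dominate the Taylor series by the Gaussian-integrable $e^{|\beta|\,\|H_{\vec u}(\vec\bsg)\|_{\rm op}}$.
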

\begin{proof}
	Using \Cref{eq:general_phi_expr} we can write
	\begin{alignat}{2}
		\label{eq:shift_feature_product}
		&\E_{\vec\bsg} F^{\beta,K}_{a,I} (\vec\bsg + \sigma_n^{-1} \vec u) &&= \sum_{\vec
		J \in \mc J_{a, I}} \omega_{a, I}(\vec J) \E_{\vec\bsg} \prod_{b = 1}^a
		V^{[K]}_{\beta, J_b}(\vec\bsg + \sigma_n^{-1} \vec u)\,,\\
		\label{eq:shift_single_proxy_expansion}
		&V^{[K]}_{\beta, J}(\vec\bsg + \sigma_n^{-1}\vec u) &&= \frac 1 {-\beta
		\sigma_n}\sum_{\ell = 1}^2 a_\ell \Brac{ \frac{\Tr[\Gamma_J e^{-\beta H_{\vec u}
		(\vec\bsg)}] \paren{\Tr[e^{-\beta H_{\vec u}(\vec\bsg)}]}^{\ell-1}}
		{Z_\mathrm{ann}(\beta)^{\ell}} }_{\leq K+1,\beta}\,.
	\end{alignat}
	If we substitute \Cref{eq:shift_single_proxy_expansion} into \Cref{eq:shift_feature_product}
	and distribute the product, we get a sum over $\vec \ell \in [2]^a$ as desired. Applying the
	rectangular truncation via $\mathsf{Trunc}$ yields exactly \Cref{eq:polarized_bridge_shared}.
\end{proof}

It remains to bound the truncated expressions $\mathsf{Trunc}_{\vec \ell, K} \mc N^{\vec
u}_{\vec \ell, \vec J}$.  We first expand $N^{\vec u}_{\vec\ell,\vec J}$ into Wick diagrams, so that
the KP estimates can be applied to its connected components. Fix $\vec\ell$ and $\vec J$, and
abbreviate $\mc C=\mc C_{\vec\ell}$ and $c_b=(b,0)$. Expanding each exponential in
\Cref{eq:shift_trace_numerators} gives
\[
	N^{\vec u}_{\vec\ell,\vec J}(\vec\beta) = \sum_{\vec r \geq 0}
	\Paren{\prod_{c \in \mc C} \frac{(-\beta_c)^{r_c}}{r_c!}} \E_{\vec\bsg} \prod_{c \in
	\mc C} \Tr\Brac{ \Paren{\prod_{\substack{ b \in [a]\\ c_b = c}} \Gamma_{J_b}}
	H_{\vec u}(\vec\bsg)^{r_c} }\,.
\]
There is still only one exponential per trace, as in
\Cref{lem:component_placement}. The changes are the fixed observable before
some exponentials and the perturbation terms in $H_{\vec u}$; these will yield the observable and
perturbation slots.

First we only consider slots originating from the Hamiltonian (so excluding observable slots). For
the occurrence vector $\vec r=(r_c)_{c\in\mc C}$, let
\[
	\mathsf{Slots}(\vec r) = \bigsqcup_{c \in \mc C}(\{c\} \times [r_c])\,.
\]
At each Hamiltonian slot choose either the Gaussian term from $\sigma_n\sum_B\bsg_B\Gamma_B$ or a
perturbation term from $\sum_B u_B\Gamma_B$. This partitions the slots into perturbation and
Gaussian slots as,
\[
	\mathsf{Slots}(\vec r) = \mathsf{Slots}_\mathrm{pert} \sqcup \mathsf{Slots}_\mathrm{gauss}\,.
\]
Wick--Isserlis is applied only to the Gaussian slots, and thus the domain of our labeling
function $\mc A$ is the union of a perfect matching over the Gaussian slots $\pi \in
\mathrm{PM}(\mathsf{Slots}_\mathrm{gauss})$ as well as the raw perturbation slots
$\mathsf{Slots}_\mathrm{pert}$. A pair edge $e \in \pi$ contributes $\sigma_n^2$, since every
trace uses the same Gaussian vector; thus $\kappa_{c,c'}=1$. A perturbation slot contributes
$u_{\mc A(v)}$ and has no covariance factor.

Finally, a full diagram is $\mc D=(\vec r,\pertslots ,\pi,\mc A)$. Abusing notation, for colored slot
$(c,i)$, we define $\mc A(c,i)$ as,
\[
	\mc A(c,i)= \begin{cases}
		\mc A((c,i)), &(c,i)\in \pertslots\,,\\
		\mc A(e_{\mc D}(c,i)), &(c,i) \in \gaussslots\,,
	\end{cases}
\]
where $e_{\mc D}(c,i)$ is the incident matching edge. Its word in color $c$ is
\[
	\mathsf{Mon}_c(\mc D) =\Paren{\prod_{\substack{b\in[a]\\c_b=c}}\Gamma_{J_b}}
	\prod_{i=1}^{r_c} \Gamma_{\mc A(c,i)}\,.
\]
Putting this together, we have the activity,
\begin{equation}
	\label{eq:shift_diagram_weight}
	\mathsf{Act}_{\vec J,\vec u}(\mc D;\vec\beta) =\Paren{\prod_{c\in\mc C}
	\frac{(-\beta_c)^{r_c}}{r_c!}} \sigma_n^{2|\pi|} \Paren{\prod_{v\in\pertslots}u_{\mc A(v)}}
	\prod_{c\in\mc C}\Tr[\mathsf{Mon}_c(\mc D)]\,.
\end{equation}
Wick terms are mapped bijectively to exactly one such diagram.

Next, as in \Cref{lem:component_placement}, we decompose each diagram into connected polymers. To
carry this out, include the observable slots as distinguished objects of the diagram graph. Thus, its vertex
set is given by the disjoint union $\pertslots\sqcup\pi\sqcup[a]$. We define supports for each of these
vertices with the same typing so that edges of the graph can still be defined via overlapping
supports:
\[
	\supp(v) = \begin{cases}
		\{c\} \times \mc A(v) & (v=(c,i)\in \pertslots)\,,\\
		\bigcup_{(c,i)\in e} \Paren{\{c\}\times\mc A(e)} & (v = e \in \pi)\,,\\
		\{c_b\} \times J_b & (v = b \in [a])\,.\\
	\end{cases}
\]

As usual, this decomposition will distribute slots among the polymers. A polymer $\mc P$ is defined by
$(M, \pi, S, \mc A)$, where $M \subseteq \pertslots$ and $S \subseteq [a]$; in particular,
perturbation slots and observable slots are spread among the polymers which make up a diagram. If we
want to refer to the observable slots of a polymer $\mc P$, we use the notation $S(\mc P)$. The
corresponding polymer monomial is,
\[
	\mathsf{Mon}_c(\mc P) =\Paren{\prod_{\substack{b\in S \\c_b=c}}\Gamma_{J_b}}
	\prod_{i=1}^{r_c(\mc P)}\Gamma_{\mc A(c,i)}\,.
\]
Compatibility means disjoint colored supports, including any observable slots.
These are the only changes to the diagrams in
\Cref{lem:component_placement}.

\begin{lemma}[Component placement with perturbation and observable slots]
\label{lem:observable_component_placement}
Use \Cref{eq:shift_diagram_weight} for the activity of an arbitrary
full diagram. The corresponding activity of a connected diagram is
\begin{equation}
  \mathsf{Act}_{\vec J,\vec u}(\mc P;\vec\beta)
    =
    \left(
      \prod_{c\in\mc C}
        \frac{(-\beta_c)^{r_c(\mc P)}}{r_c(\mc P)!}
    \right)
    \sigma_n^{2|\pi|}
    \left(\prod_{v\in M}u_{\mc A(v)}\right)
    \prod_{c\in\mc C}\Tr[\mathsf{Mon}_c(\mc P)].
  \label{eq:observable_activity_shared}
\end{equation}
Here $M,\pi,S,\mc A$ are the data of $\mc P$. A finite family of polymers $\mc F$ is admissible
when its members are pairwise compatible and their observable-slot sets partition $[a]$:
\[
	\bigsqcup_{\mc P \in \mc F} S(\mc P) =[a]\,.
\]
Then the numerator in \Cref{lem:shift_expansion} satisfies, coefficientwise,
\begin{align}
	\label{eq:shift_wick_expansion}
	N^{\vec u}_{\vec \ell, \vec J}(\vec\beta) &=\sum_{\text{diagrams }\mc D} \mathsf{Act}_{\vec
	J,\vec u}(\mc D; \vec \beta)\,,\\
	\label{eq:observable_placement_identity}
	&=\sum_{\mc F\text{ admissible}} \prod_{\mc P\in\mc F} \mathsf{Act}_{\vec J,\vec
	u}(\mc P;\vec\beta)\,.
\end{align}
\end{lemma}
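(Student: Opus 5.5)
The plan follows the proof of \Cref{lem:component_placement}, adapted to the two new kinds of objects. First I establish \Cref{eq:shift_wick_expansion}. Expanding each exponential $e^{-\beta_c H_{\vec u}(\vec\bsg)}$ as a Taylor series gives, for each color $c$, an ordered product of $r_c$ copies of $H_{\vec u}=\sigma_n\sum_B\bsg_B\Gamma_B+\sum_B u_B\Gamma_B$. I distribute every such factor into one slot. Each slot is either a Gaussian slot with some label, carrying the factor $\sigma_n\bsg_B$, or a perturbation slot carrying $u_B$. The perturbation factors are deterministic and pull out of the expectation. I then apply Wick--Isserlis (\Cref{thm:wick}) to the Gaussian slots. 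All traces share the same disorder vector, so a matched pair contributes $\E[\bsg_B\bsg_{B'}]=\mathbf 1_{B=B'}$; the two paired slots must therefore carry one common label, and the pair contributes $\sigma_n^2$. The data $(\vec r,\pertslots,\pi,\mc A)$ determine the term uniquely, and the coefficient obtained is exactly \Cref{eq:shift_diagram_weight}. This identity is coefficientwise in $\vec\beta$ and involves only finitely many terms at each order, so no convergence question arises.

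Next I prove \Cref{eq:observable_placement_identity} by decomposing each diagram into the connected components of its overlap graph on $\pertslots\sqcup\pi\sqcup[a]$. Each component $\mc P$ inherits the following data:
\begin{itemize}
\item its own slots, listed in inherited increasing order within each color;
\item its own observable set $S(\mc P)$;
\item its own perturbation slots $M$.
\end{itemize}
The components have pairwise disjoint colored supports, and the sets $S(\mc P)$ partition $[a]$, so the resulting family is admissible. Conversely, an admissible family together with increasing placement maps $\mathsf{Place}_{c,j}$ rebuilds a diagram. The two operations are mutually inverse.

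The activity then factors over the components.
\begin{itemize}
\item The factors $\sigma_n^{2|\pi|}$, $\prod_{v\in M}u_{\mc A(v)}$ and $(-\beta_c)^{r_c}$ factor directly.
\item For the traces, the observables $\Gamma_{J_b}$ sit in front of the $H$-powers in color $c_b$. Operators belonging to different components have disjoint supports in color $c$. Each is a product of quartic (even) monomials, so such operators commute, and I can regroup $\mathsf{Mon}_c(\mc D)=\prod_j\mathsf{Mon}_c(\mc P_j)$ with each observable moved into its own component.
\item The normalized trace of this product factorizes by the symmetric-difference argument of \Cref{lem:component_placement}. The symmetric-difference sets $Q_j$ of the components are disjoint, so the product has nonzero trace only if every $Q_j=\varnothing$. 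In that case each component word is a scalar, and the trace is the product of these scalars.
\end{itemize}

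Finally I account for the multiplicity. There are $\prod_c\binom{r_c}{r_{c,1},\dots,r_{c,q}}$ placements, and each produces a diagram with the same unnormalized activity. Combined with $\prod_c 1/r_c!$, this gives $\prod_{c,j}1/r_{c,j}!$, which is exactly the factorial appearing in \Cref{eq:observable_activity_shared}.

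The main obstacle is the symmetry factor. Admissibility indexes families as sets, with no $1/q!$, and I must check that this introduces neither overcounting nor undercounting. Polymers that contain observable slots are distinguishable by their disjoint sets $S(\mc P)$. The remaining issue is polymers without observables. Two such polymers in a family are distinct objects because their colored supports are disjoint, so summing over unordered families counts each diagram exactly once. This is consistent with the $1/q!$ over ordered tuples in \Cref{lem:component_placement}, because there all tuple members are distinct for the same reason. I will verify this bijection carefully and confirm in particular that a polymer with $S(\mc P)=\varnothing$ but containing perturbation slots is handled the same way.
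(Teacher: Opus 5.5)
Your proposal is correct and follows essentially the same route as the paper. The paper likewise reads off \eqref{eq:shift_wick_expansion} from the Taylor--Wick expansion. It then applies the compression/placement bijection of \Cref{lem:component_placement} to the object graph on perturbation slots, pairs and observable slots, factors the traces using disjoint colored supports, and converts the factorials with the multinomial placement count.

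Your check that unordered admissible families need no $1/q!$ is left implicit in the paper, which sums, for each admissible $\mathcal F$, over the diagrams whose compressed components form $\mathcal F$. The reason you give is the right one: compatible polymers have disjoint nonempty supports, so they are distinct.
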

\begin{proof}
The Taylor--Wick expansion above has one term for each full diagram, with activity
\Cref{eq:shift_diagram_weight}. This gives \Cref{eq:shift_wick_expansion}. Apply the compression and
placement argument of \Cref{lem:component_placement} to the object graph above. The maps
$\mathsf{ord}_{j,c}$ list the Taylor slots of component $j$ in inherited increasing order, and the
strictly increasing maps $\mathsf{Place}_{c,j}$ place them back in their global color slots.
Perturbation slots remain perturbation slots, and all matching incidences and quartic labels are retained.
Compatibility prevents new overlap between different components, so the same reconstruction gives
the inverse maps. The additional observable-slot condition is captured by the restriction to admissible
families.

For the weights, distinct component words have disjoint physical supports
at each color, including their observable-slot labels. The normalized-trace
factorization in \Cref{lem:component_placement} therefore still applies.
Every pair belongs to one component, so the covariance factors split as
before. The new perturbation factor splits as well:
\[
  \prod_{v\in M}u_{\mc A(v)}
    =\prod_j\prod_{v\in M_j}u_{\mc A(v)}.
\]
Observable slots contribute to the words but not to the Taylor counts. For
$r_{c,j}=r_c(\mc P_j)$ and $r_c=\sum_jr_{c,j}$, the same placement count
therefore converts the full-diagram factorials into the component factorials:
\[
  \left(\prod_c\frac1{r_c!}\right)
  \left(\prod_c\frac{r_c!}{\prod_jr_{c,j}!}\right)
    =\prod_j\prod_c\frac1{r_{c,j}!}.
\]
Consequently, for every admissible family $\mc F$,
\begin{equation}
  \sum_{\substack{\text{diagrams }\mc D\\
                   \text{compressed components form }\mc F}}
      \mathsf{Act}_{\vec J,\vec u}(\mc D;\vec\beta)
    =\prod_{\mc P\in\mc F}
       \mathsf{Act}_{\vec J,\vec u}(\mc P;\vec\beta).
  \label{eq:observable_placement_weights}
\end{equation}
Summing \Cref{eq:shift_wick_expansion} over these diagram classes gives
\Cref{eq:observable_placement_identity}.
\end{proof}

\paragraph{Separating the observable-carrying and observable-free components.}
The point of \Cref{eq:observable_placement_identity} is that its admissible families can now be split
according to which components contain the observables. Call a component observable-free when $S(\mc
P)=\emptyset$, and observable-carrying otherwise. Let $\Xi=\Xi(\vec\beta)$ be the hard-core gas of
observable-free components, with activity $\mathsf{Act}_{\vec u}$, with observable-slot label vector $\vec J$ removed
as a subscript, to emphasize that it is independent of $\vec J$. As before, zero-trace
components are omitted from the gas.

Let $\mc S$ be a compatible family of observable-carrying components whose observable-slot sets partition
$[a]$. We call this an \emph{observable-slot configuration}. Write $\supp(\mc S)$ for the union of its colored supports, and set
\[
	\mathsf{Act}_{\vec J, \vec u}(\mc S; \vec\beta) =\prod_{\mc P \in \mc S}
	\mathsf{Act}_{\vec J, \vec u}(\mc P; \vec\beta)\,.
\]
Every admissible family decomposes uniquely into an observable-slot configuration $\mc S$ and a
compatible observable-free family disjoint from $\supp(\mc S)$. For fixed $\mc S$, we can sum
over these compatible observable-free families via the restricted gas $\Xi[\supp(\mc S)]$; recall this
gas only sums polymers disjoint from $\supp(\mc S)$. Thus
\[
	N^{\vec u}_{\vec\ell,\vec J} = \sum_{\mc S} \mathsf{Act}_{\vec J,\vec u}(\mc S;\vec\beta)\,
	\Xi[\supp(\mc S)]\,.
\]
Since $\mc N^{\vec u}_{\vec \ell, \vec J}$ is simply the normalized form of $N^{\vec u}_{\vec \ell,
\vec J}$, we get that 
\begin{savedequation}{numeratorform}
	\mc N^{\vec u}_{\vec \ell, \vec J} = \frac{\Xi}{\prod_cZ_\mathrm{ann}(\beta_c)}
	\sum_{\mc S} \mathsf{Act}_{\vec J,\vec u}(\mc S;\vec\beta)\,\frac{\Xi[\supp(\mc S)]}{\Xi}\,.
	\label{eq:observable_extraction_shared}
\end{savedequation}
Note we have also divided by $\Xi$ in the sum, and multiplied by $\Xi$ outside.
Before verifying the Koteck\'y--Preiss conditions, this equality is coefficientwise in the formal
power series in $\vec \beta = (\beta_c)_{c \in \mc C}$.

For $a=2$, an observable-slot configuration either has one component carrying both
observable slots or two components carrying observable-slot sets $\{1\}$ and $\{2\}$.
These identities distinguish the two components, so no additional factor
$2!$ occurs. The coefficient $\omega_{a,I}(\vec J)$ stays outside this
decomposition and appears once for the complete observable-slot configuration.

We have reduced the original shifted expectation to the projected expression
\Cref{eq:polarized_bridge_shared}, with its normalized numerator written as
\Cref{eq:observable_extraction_shared}. We will bound the two observable-free ratios by KP and the remaining
observable-slot configurations by their label count. Finally, we use \Cref{lem:cauchy_cutoff} to
apply the ratio bounds to their Taylor truncations. We record the cutoff estimate next.

\paragraph{Passing a bound through the cutoff.}
Next, we show that a bound on $|f(\vec \beta)|$ for all $\vec \beta$ with $|\beta_c| < \rho$ implies
a bound on its Taylor truncation.
\begin{lemma}[Cauchy estimate for a finite Taylor cutoff]
\label{lem:cauchy_cutoff}
Let $\mc C$ be a finite color set, $\rho>0$, and let $f$ be a scalar
holomorphic function on $|\beta_c|<\rho$, $c\in\mc C$, with
$|f(\vec\beta)|\leq M$. For every finite set
$\mc R\subseteq\mathbb Z_{\geq0}^{\mc C}$, every
$\vec\nu\in\mathbb Z_{\geq0}^{\mc C}$, and $|\beta|\leq\rho/2$,
\begin{equation}
  \left|\bigl(\mathsf{Trunc}_{\mc R}
                     (\vec\beta^{\,\vec\nu}f)\bigr)(\beta\vec1)\right|
       \leq 2^{|\mc C|}M|\beta|^{|\vec\nu|_1}.
  \label{eq:cauchy_cutoff_shared}
\end{equation}
Here $\mathsf{Trunc}_{\mc R}$ is the operator in
\Cref{def:shift_truncation}, and $\beta\vec1$ sets every $\beta_c=\beta$.
The constant is independent of both the cutoff set and the monomial degree;
we use $0^0=1$ at $\beta=0$.
\end{lemma}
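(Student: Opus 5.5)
The plan is to expand $f$ in its multivariate Taylor series, bound each coefficient by the Cauchy estimate on the polydisk, and then sum the truncated series along the diagonal as a product of geometric series. The two points to watch are that only $\rho$, and not the monomial degree or the truncation set, may appear in the constant, and that the factor $\vec\beta^{\,\vec\nu}$ must be pulled out before truncating.

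\emph{Step 1 (coefficient bound).} Write $f(\vec\beta)=\sum_{\vec k\geq0}f_{\vec k}\vec\beta^{\,\vec k}$; this converges absolutely on the open polydisk. Apply the scalar Cauchy coefficient bound \eqref{eq:taylor_coeff_bound} once in each variable, that is, iterate the one-variable Cauchy formula over the torus $|\beta_c|=r$ for any $r<\rho$. This gives $|f_{\vec k}|\leq M r^{-|\vec k|_1}$. Letting $r\to\rho$ yields $|f_{\vec k}|\leq M\rho^{-|\vec k|_1}$.

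\emph{Step 2 (absorb the monomial).} The coefficients of $\vec\beta^{\,\vec\nu}f$ are $f_{\vec k}$ placed at the shifted multi-index $\vec k+\vec\nu$. Truncation to $\mc R$ therefore keeps a subset of the index set $\{\vec k\geq0\}$, namely those $\vec k$ with $\vec k+\vec\nu\in\mc R$. No down-closedness of $\mc R$ is needed, because we only use that a subset of terms is retained.

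\emph{Step 3 (sum the series).} Evaluating at $\beta\vec1$ and bounding each retained term in absolute value gives
\[
\bigl|(\mathsf{Trunc}_{\mc R}(\vec\beta^{\,\vec\nu}f))(\beta\vec1)\bigr|
\leq |\beta|^{|\vec\nu|_1}\sum_{\vec k\geq0}M\Bigl(\frac{|\beta|}{\rho}\Bigr)^{|\vec k|_1}
=M|\beta|^{|\vec\nu|_1}\prod_{c\in\mc C}\frac{1}{1-|\beta|/\rho}.
\]
For $|\beta|\leq\rho/2$ each factor in the product is at most $2$, so the right-hand side is at most $2^{|\mc C|}M|\beta|^{|\vec\nu|_1}$. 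At $\beta=0$, the convention $0^0=1$ covers the case $\vec\nu=0$.

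The only step that needs care is Step 1. One must justify the multivariable Cauchy bound on an open polydisk with a bound $M$ that is only known on the open set. I would handle this as the corollary already does, by working on tori of radius $r<\rho$ and then letting $r$ increase to $\rho$. Everything else is a direct summation.
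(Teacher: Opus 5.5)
Your proposal is correct and is essentially the paper's proof. Both arguments use the iterated Cauchy coefficient bound $|f_{\vec k}|\leq M\rho^{-|\vec k|_1}$. Both then drop the constraint $\vec k+\vec\nu\in\mathcal{R}$ to pass to the full sum, and sum the resulting product of geometric series to get $(1-|\beta|/\rho)^{-|\mathcal{C}|}\leq 2^{|\mathcal{C}|}$. Neither needs down-closedness of $\mathcal{R}$.
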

\begin{proof}
Write $f(\vec\beta)=\sum_{\vec r\geq0}f_{\vec r}\vec\beta^{\,\vec r}$ and
$|\vec r|_1=\sum_cr_c$.
The coefficient bound in \Cref{cor:cauchy_estimate}, iterated over the variables,
gives $|f_{\vec r}|\leq M\rho^{-|\vec r|_1}$. Therefore
\begin{align*}
  \left|\bigl(\mathsf{Trunc}_{\mc R}
                     (\vec\beta^{\,\vec\nu}f)\bigr)(\beta\vec1)\right|
    &\leq \sum_{\substack{\vec r\geq0\\\vec\nu+\vec r\in\mc R}}
                  |f_{\vec r}|\,|\beta|^{|\vec\nu|_1+|\vec r|_1}\\
    &\leq M|\beta|^{|\vec\nu|_1}
              \sum_{\vec r\geq0}(|\beta|/\rho)^{|\vec r|_1}\\
    &=\frac{M|\beta|^{|\vec\nu|_1}}
             {(1-|\beta|/\rho)^{|\mc C|}}
      \leq2^{|\mc C|}M|\beta|^{|\vec\nu|_1}.
\end{align*}
The monomial shifts the coefficient set and contributes its value at the
physical temperature; no contour-radius factor exponential in its degree
is introduced.
\end{proof}

We now state the reduction used in the next section. Its radius is independent of the Taylor
cutoff; the restriction $K\leq A\log n$ will only enter when we count the observable diagrams.

\begin{lemma}[Reduction to observable diagrams]
\label{lem:observable_diagram_reduction}
There are absolute constants $b_\mathrm{background},C_0>0$, with
$1/(64C_\star)\leq b_\mathrm{background}/2$, with the following property. Let $n\geq8$ be even,
$K\geq1$ an integer, $I\in\mc I_4$, $a\in\{1,2\}$, and $\vec u\in\C^m$ with
$\|\vec u\|_1\leq1$.

For $\vec\ell\in[2]^a$, $\vec J\in\mc J_{a,I}$, and $h,j\geq0$, let
$\mathsf{Diag}^{\mathrm{obs}}_{h,j}(\vec J)$ consist of full diagrams on
$\mc C_{\vec\ell}$ with $h$ Wick pairs and $j$ perturbation slots, with
$\vec r\in\mc R_{\vec\ell,K}$, in which every connected component contains an observable slot
and every trace is nonzero. The colors $\vec\ell$ and the cutoff $K$ are understood in this
notation. For $b\geq0$, define the weighted diagram sum
\begin{equation}
\label{eq:observable_diagram_mass}
\begin{aligned}
 W_{a,I,\vec\ell,K}(b,\vec u)
  :=\sum_{h,j\geq0}\sum_{\vec J\in\mc J_{a,I}} |\omega_{a,I}(\vec J)|
    \sum_{\mc D\in\mathsf{Diag}^{\mathrm{obs}}_{h,j}(\vec J)}
    |\mathsf{Act}_{\vec J,\vec u}(\mc D;b\vec1)|e^{|\supp(\mc D)|}.
\end{aligned}
\end{equation}
Here $\supp(\mc D)$ is the colored support and the activity is given by
\Cref{eq:shift_diagram_weight}. For complex $0<|\beta|\leq b_\mathrm{background}/2$,
\begin{equation}
\label{eq:observable_diagram_reduction}
\begin{aligned}
 &\left|\E_{\vec\bsg}F^{\beta,K}_{a,I}
       (\vec\bsg+\sigma_n^{-1}\vec u)\right|\\
 &\qquad\leq \frac{16e^{C_0}}{(|\beta|\sigma_n)^a}
   \sum_{\vec\ell\in[2]^a}\left(\prod_{b=1}^a|a_{\ell_b}|\right)
   W_{a,I,\vec\ell,K}(|\beta|,\vec u),
\end{aligned}
\end{equation}
where $a_1=2$ and $a_2=-1$.
\end{lemma}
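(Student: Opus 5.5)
Starting from the exact expansion of \Cref{lem:shift_expansion}, the factor $(-\beta\sigma_n)^{-a}$ and the coefficients $\prod_b a_{\ell_b}$ already appear in the target bound \Cref{eq:observable_diagram_reduction}. It therefore suffices to bound, for each fixed $\vec\ell$, the quantity $\bigl|\sum_{\vec J}\omega_{a,I}(\vec J)(\mathsf{Trunc}_{\vec\ell,K}\mc N^{\vec u}_{\vec\ell,\vec J})(\beta\vec1)\bigr|$ by $16e^{C_0}W_{a,I,\vec\ell,K}(|\beta|,\vec u)$. We use the decomposition \Cref{eq:observable_extraction_shared}:
\[
\mc N^{\vec u}_{\vec\ell,\vec J}=\frac{\Xi}{\prod_cZ_\mathrm{ann}(\beta_c)}\sum_{\mc S}\mathsf{Act}_{\vec J,\vec u}(\mc S;\vec\beta)\,\frac{\Xi[\supp(\mc S)]}{\Xi}.
\]
The plan is to show that each of the two prefactor ratios is a holomorphic function of $\vec\beta$ on a polydisk $|\beta_c|<b_\mathrm{background}$, bounded by $e^{C_0/2}$, say, uniformly in $n$, $\vec u$ and $\mc S$. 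Then we push the monomial $\mathsf{Act}(\mc S)$ through the truncation using \Cref{lem:cauchy_cutoff}.

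The first step is to verify the Koteck\'y--Preiss condition of \Cref{thm:kp} for the observable-free gas $\Xi$ with perturbation slots. The majorant replaces each $\beta_c$ by $b$ and each $u_B$ by $|u_B|$. We count with \Cref{prop:polymer_counting}, pinning a colored site. Pair labels are counted by \Cref{lem:shared_counts}, giving the usual factor $(Cb^2)^h$. For perturbation slots, $\|\vec u\|_1\le1$ means that summing a pinned perturbation label costs at most $1$, and each further perturbation slot attached by rooted exposure (\Cref{lem:parity_exposure}) costs at most $\|\vec u\|_1\le1$ times a combinatorial factor. This yields a site sum $\sum_{h+j\ge1}(Cb)^{h+j}$, which is at most $\delta\le1/2$ for $b$ a small absolute constant; this fixes $b_\mathrm{background}$, and we arrange $1/(64C_\star)\le b_\mathrm{background}/2$ by the choice of $C_\star$. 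Then \Cref{cor:kp_comparison} gives $|\Xi[\supp(\mc S)]/\Xi|\le e^{\delta|\supp(\mc S)|}\le e^{|\supp(\mc S)|}$, which is exactly the factor $e^{|\supp(\mc D)|}$ in $W$. For $\Xi/\prod_cZ_\mathrm{ann}(\beta_c)$, note that $\prod_cZ_\mathrm{ann}(\beta_c)$ is the same gas with independent colors and $\vec u=0$. The two gases differ only by polymers that are mixed or contain perturbation slots. Mixed polymers have total mass $O(n^{-2})$ by \Cref{eq:global_mixed_mass}. We expect the main obstacle here: the global, unpinned mass of polymers containing a perturbation slot must be bounded by an absolute constant $C_0$. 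This works because the perturbation label is summed against $|u_B|$ with $\|\vec u\|_1\le1$, which replaces the $n$ from choosing a root site by $1$, while the remaining labels are counted by rooted exposure. So \Cref{eq:kp_comparison} bounds $|\log(\Xi/\prod Z_\mathrm{ann})|\le C_0$.

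Finally, write the summand as $\mathsf{Act}(\mc S;\vec\beta)\,f_{\mc S}(\vec\beta)$. Here $\mathsf{Act}(\mc S)$ is a finite sum of monomials $\vec\beta^{\vec r}$ (over the diagrams compressing to $\mc S$), and $f_{\mc S}$ is holomorphic with $|f_{\mc S}|\le e^{C_0}e^{|\supp(\mc S)|}$ on the polydisk. Applying \Cref{lem:cauchy_cutoff} with $\rho=b_\mathrm{background}$ and $|\mc C|\le4$ gives each monomial's contribution at most $2^4e^{C_0}e^{|\supp|}|\beta|^{|\vec r|_1}$ times its coefficient. The truncation only retains $\vec r\in\mc R_{\vec\ell,K}$, since $\mc R$ is down-closed; this is why $\mathsf{Diag}^{\mathrm{obs}}$ requires $\vec r\in\mc R_{\vec\ell,K}$. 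Summing absolutely over $\mc S$ and $\vec J$, and expanding $\mathsf{Act}(\mc S)$ back into full diagrams with every component carrying an observable slot (the placement identity \Cref{eq:observable_placement_weights}, applied with absolute values), produces $16e^{C_0}W_{a,I,\vec\ell,K}(|\beta|,\vec u)$. This holds for $|\beta|\le b_\mathrm{background}/2$, and the coefficientwise identities become genuine identities by the absolute convergence that KP provides.
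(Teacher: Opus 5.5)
Your proposal is correct and follows essentially the same route as the paper's proof. The shared steps are: Koteck\'y--Preiss for the observable-free gas, with $\|\vec u\|_1\le1$ paying for the perturbation labels; the exclusion bound $|\Xi[\supp(\mc S)]/\Xi|\le e^{\delta|\supp(\mc S)|}$; comparison with the monochromatic, unperturbed gas $\prod_c Z_\mathrm{ann}(\beta_c)$ through the unpinned mass of perturbation-carrying and multicolor polymers; and finally \Cref{lem:cauchy_cutoff} followed by component placement with absolute activities.

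Two counting details need tightening, both handled this way in the paper's proof. First, the $O(n^{-2})$ multicolor estimate must be redone for up to four colors: choose a color $c_0$ meeting a cross-color pair and apply \eqref{eq:shared_xor_count} to the pairs with exactly one endpoint in $c_0$. Second, when $j\geq1$ the pairs alone need not be connected, so the pinned count of \Cref{lem:shared_counts} does not apply; expose the pair labels from the perturbation slots as roots via \Cref{lem:parity_exposure} instead.
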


\begin{proof}
Fix $n,K,I,a,\vec u$ as in the statement. From \Cref{lem:shift_expansion}, we have the final form,

\finalformexpectation*
and the exact form for the normalized numerator,

\numeratorform*
Here $\mc S$ corresponds to configurations which include an observable slot in each polymer and
$\Xi[\supp(\mc S)]$ is the gas composed of polymers \emph{disjoint} from the observable-slot
configuration. We first prove
\begin{equation}
\label{eq:observable_free_want_to_bound}
 \Abs{\frac{\Xi}{\prod_c Z_\mathrm{ann}(\beta_c)}\cdot
       \frac{\Xi[\supp(\mc S)]}{\Xi}}
 \leq e^{C_0+|\supp(\mc S)|/2}
\end{equation}
for all observable-slot configurations $\mc S$ and $|\beta_c|<b_\mathrm{background}$. We call this
the \emph{observable-free background}. We then pass this bound through the Taylor cutoff.

\paragraph{Observable-free background.}
We establish two bounds on absolute activities: first we establish a pinned
bound to verify the KP condition; second, we establish an \emph{unpinned} bound to bound the
partition function ratio in \Cref{eq:annealed_ratio_bound}. Fix a common absolute $C\geq1$ large
enough for all three estimates
\Cref{eq:background_site_shared,eq:global_marked_bound,eq:global_active_bound} below, and set
\begin{equation}
	\label{eq:background_radius_choice}
	b_\mathrm{background} := \min\BRAC{\frac{\log(9/8)}{C},\frac1{4\sqrt C}}>0\,.
\end{equation}

First, we verify the KP condition. Fix $\vec\ell$ and $\vec u$ with $\|\vec u\|_1\leq1$, and
abbreviate $\mc C=\mc C_{\vec\ell}$, so $|\mc C|\leq4$. We verify the KP condition,
\[
	\sum_{\substack{\mc P:\,x \in \supp(\mc P)\\\mc P\text{ observable-free}}} w_b(\mc P)
	e^{|\supp(\mc P)|} \leq \delta
\]
We first work at fixed numbers $h$ of Wick pairs and $j$ of perturbation slots, then sum over
$h,j$.

From \Cref{eq:observable_activity_shared} we can define the scalar majorant,
\begin{align}
	w_b(\mc P) &:=\sup_{\substack{\vec\beta \in \C^{\mc C}\\ |\beta_c|\leq b\ \forall c}}
	|\mathsf{Act}_{\vec u}(\mc P;\vec\beta)|\\
	\label{eq:observable_free_majorant}
	&=\left(\prod_{c\in\mc C}\frac{b^{r_c}}{r_c!}\right)
	\sigma_n^{2h} \left(\prod_{v\in M}|u_{\mc A(v)}|\right)
	\prod_{c\in\mc C}|\Tr[\mathsf{Mon}_c(\mc P)]|\,.
\end{align}
Fix $h,j\geq0$ with $h+j>0$ and a colored site $x\in\mc C\times[n]$. We apply
\Cref{prop:polymer_counting} to the mass
\[
  \mathcal M^{\mathrm{bg}}_{h,j}(x)
    :=\sum_{\substack{\mc P\text{ observable-free},\ x\in\supp(\mc P)\\
                       |\pi(\mc P)|=h,\ |M(\mc P)|=j}}
        w_b(\mc P)e^{|\supp(\mc P)|}.
\]
with parameters
\[
	\mc B =\mc C, \qquad \mc R = \mathbb Z_{\geq0}^{\mc C}, \qquad b_c = b\ (c\in\mc C),\qquad t
	= 1, \qquad s_{h,j}=8h+4j, \qquad  v(\mc D,\mc A)=\prod_{m\in M}|u_{\mc A(m)}|\,.
\]
It remains to specify a uniform $L_{h,j}$. Suppose first that $j\geq1$ and the labelings of each
perturbation slot $v_i$ are fixed, with $\mc A(v_i) = \ell_i \in \mc I_4$. Then, we can apply the
second bound in \Cref{lem:parity_exposure}, taking the perturbation slots as the roots and no
observable slots or Wick pairs fixed. This count does not use the pinned site $x$. This gives at most
\[
	[64(h+j)]^h n^{3h}
\]
possible connected extensions. We now crucially use the norm condition on $\vec u$. When summing up
\Cref{eq:observable_free_majorant}, each assignment to the perturbation slots $(\ell_1, \dots, \ell_j)$
contributes $\prod_{i =1}^j |u_{\ell_i}|$. But then,
\begin{equation}
	\label{eq:perturbation_slot_sum}
	\sum_{\ell_1, \dots, \ell_j \in \mc I_4} \prod_{i=1}^j |u_{\ell_i}| =\Paren{\sum_{B \in \mc
	I_4} |u_B|}^j =\|\vec u\|_1^j\,.
\end{equation}
This means we pay only a factor $\|\vec u \|_1^j$, rather than $n^{4j}$ when summing over the
labelings $(\ell_1, \dots, \ell_j)$. After increasing the constant $C$, this gives the bound
\begin{equation}
	\label{eq:pert_slot_observable_free}
	\mathsf{LabSum}_x(\mc D; \vec u) \leq [C(h+j+2)]^h n^{3h} \|\vec u\|_1^j\,,
\end{equation}
when $j\geq1$.

On the other hand when $j = 0$ and $h \geq 1$, the pinned count in \Cref{lem:shared_counts} bounds
the label sum by $C^h h! n^{3h} \leq [C(h+2)]^h n^{3h}$, recovering the same bound as above. There
are no polymers when $h=j=0$. In either case, we get the bound
\[
	L_{h,j}=[C(h+j+2)]^h n^{3h}\|\vec u\|_1^j\,.
\]
With these choices, \Cref{eq:shared_counting_bound} gives the bound
\[
\begin{aligned}
  \mathcal M^{\mathrm{bg}}_{h,j}(x) &\leq \frac{(|\mc C|b)^{2h+j}}{j!\,2^hh!}\,
           \sigma_n^{2h}\,
           \paren{[C(h+j+2)]^h n^{3h}\|\vec u\|_1^j}\, e^{8h+4j} \leq (Cb^2)^h\frac{(Cb)^j}{j!}\,.
\end{aligned}
\]
The last inequality uses $|\mc C|\leq4$, $\|\vec u\|_1\leq1$, $\sigma_n^{2h}n^{3h}=6^h$, and
$(h+j+2)^h/h!\leq e^{h+j+2}$, with the common $C$ chosen above. Summing $\mathcal M^{\rm bg}_{h,j}(x)$
over $h + j > 0$ and taking the supremum over $x$ gives, for $Cb^2 < 1$,
\begin{equation}
	\label{eq:background_site_shared}
	\sup_x\sum_{\mc P \st x\in\supp(\mc P)} w_b(\mc P) e^{|\supp(\mc P)|} \leq
	\sum_{\substack{h, j \geq 0\\ h + j > 0}} (Cb^2)^h \frac{(C b)^j}{j!} = \frac{e^{C b}}{1-C
	b^2} - 1\,.
\end{equation}
For $0\leq b\leq b_\mathrm{background}$, this gives $e^{Cb}\leq9/8$ and $Cb^2\leq1/16$, hence
$e^{Cb}/(1-Cb^2)-1\leq1/5<1/2$. It therefore verifies the KP condition \Cref{eq:shared_kp} with
$\delta=1/2$ and $\Xi = \exp(\mathsf{Mayer})$. The majorant is uniform on the polydisk, so
the gas and its restrictions converge normally there and define holomorphic functions.

Note that the KP condition still holds after restricting any subset of activities to be $0$. We use
this to bound the ratio $|\Xi[\supp(\mc S)]/\Xi|$. The corresponding restricted activity is
\[
	\mathsf{Act}_{[S]}(\mc P; \vec \beta) :=\mathbf 1_{\{\supp(\mc P) \cap S=\emptyset\}}
	\mathsf{Act}_{\vec u}(\mc P; \vec \beta)\,,
\]
with gas $\Xi[S]$. The same inequality from
\Cref{eq:background_site_shared} plus Koteck\'y--Preiss gives $\Xi[S]=\exp(\mathsf{Mayer}[S])$. Apply
the comparison corollary \Cref{cor:kp_comparison} to $\mathsf{Act}_{[S]}$ and $\mathsf{Act}_{\vec
u}$ with the same majorant $w_b$. These activities differ exactly on polymers meeting $S$, so
\begin{align*}
	|\mathsf{Mayer}[S](\vec \beta) - \mathsf{Mayer}(\vec \beta)| &\leq \sum_{\substack{\mc P \in
	\mathsf P\\
	\supp(\mc P)\cap S \neq \emptyset}} |\mathsf{Act}_{\vec u}(\mc P; \vec \beta)| e^{|\supp(\mc
	P)|}\\
	&\leq \sum_{x \in S} \sum_{\mc P \st x \in \supp(\mc P)} w_b(\mc P) e^{|\supp(\mc P)|} \leq
	\frac{|S|}{2}\,.
\end{align*}
The last step uses the site bound \Cref{eq:background_site_shared}. We conclude that for all
$|\beta_c| \leq b_{\rm background}$ that
\[
	\Abs{\frac{\Xi[S](\vec\beta)}{\Xi(\vec\beta)}}
	=\exp\Paren{\Re\Paren{\mathsf{Mayer}[S](\vec\beta)-\mathsf{Mayer}(\vec\beta)}}
	\leq e^{|S|/2} \leq e^{|S|}\,.
\]

It remains to bound $|\Xi/\prod_c Z_{\rm ann}(\beta_c)|$. We again restrict the set of activities,
this time keeping only observable-free polymers that are both \emph{monochromatic} and
\emph{contain no perturbation slots}. Define the corresponding restricted activity as
$\mathsf{Act}_{\mathrm{mono}}$ (which agrees with $\mathsf{Act}_{\vec u}$ on this restriction). This
is useful as the corresponding gas in each color is exactly the annealed partition function
evaluated at $\beta_c$:
\[
	\Xi_{\mathrm{mono},c}(\beta_c) =\E_{\vec\bsg}\Tr[e^{-\beta_cH(\vec\bsg)}] =
	Z_\mathrm{ann}(\beta_c)\,.
\]
This quantity is non-zero as a consequence of Koteck\'y--Preiss. Since polymers in different
colors have disjoint colored supports, the comparison gas factors as
$\Xi_{\mathrm{mono}}=\prod_c\Xi_{\mathrm{mono},c}=\prod_c Z_\mathrm{ann}(\beta_c)$.
Call the remaining observable-free polymers \emph{active}; they either contain a perturbation
slot or span multiple colors. Thus,
\begin{align*}
	\Abs{\frac{\Xi(\vec\beta)} {\prod_c Z_\mathrm{ann}(\beta_c)}}
	&= \exp\Paren{\Re\Paren{\mathsf{Mayer}(\vec\beta)-\mathsf{Mayer}_{\mathrm{mono}}(\vec\beta)}}\\
	&\leq \exp\Paren{\sum_{\mc P\in\mathsf P}
	|\mathsf{Act}_{\vec u}(\mc P;\vec\beta)-\mathsf{Act}_{\mathrm{mono}}(\mc P;\vec\beta)|
	e^{|\supp(\mc P)|}}\\
	&= \exp\Paren{\sum_{\mc P\text{ active}}|\mathsf{Act}_{\vec u}(\mc P;\vec\beta)|e^{|\supp(\mc P)|}}\\
	\label{eq:annealed_ratio_bound}
	&\leq \exp\Paren{\sum_{\mc P\text{ active}}w_b(\mc P)e^{|\supp(\mc P)|}}\numberthis\,.
\end{align*}
To bound the exponent, we need an unpinned bound on the weighted activities. We separate the cases according to whether the
polymer contains a perturbation slot.

First consider polymers with at least one perturbation slot, so $j\geq1$, and fix $h\geq0$. The corresponding
absolute mass is,
\[
	\mathcal M^{\mathrm{mark}}_{h,j} := \sum_{\substack{\mc P\text{ observable-free}\\ |\pi(\mc P)|
	= h,\ |M(\mc P)| = j}} w_b(\mc P) e^{|\supp(\mc P)|}\,,
\]
which corresponds to taking \Cref{prop:polymer_counting} with 
\[
	\mc B=\mc C,\qquad \mc R = \mathbb Z_{\geq0}^{\mc C},\qquad b_c=b\ (c\in\mc C),\qquad t =
	1,\qquad s_{h,j} = 8h+4j\,.
\]
As before $v(\mc D,\mc A)=\prod_{m\in M}|u_{\mc A(m)}|$. Since the argument culminating in the count
\Cref{eq:pert_slot_observable_free} did not depend on a pinned site, we get the same bound $L_{h,j} =
[C(h+j+2)]^h n^{3h} \|\vec u\|_1^j$. Following the same derivation as
$\mathcal M^{\mathrm{bg}}_{h,j}(x)$, \Cref{eq:shared_counting_bound} yields
\[
\begin{aligned}
  \mathcal M^{\mathrm{mark}}_{h,j}
    &\leq \frac{(|\mc C|b)^{2h+j}}{j!\,2^hh!}\,
           \sigma_n^{2h}\,
           \bigl([C(h+j+2)]^h n^{3h}\|\vec u\|_1^j\bigr)\,
           e^{8h+4j}\\
    &\leq (Cb^2)^h\frac{(Cb)^j}{j!},
\end{aligned}
\]
Summing over $h \geq 0$ and $j \geq 1$ yields
\begin{equation}
  \label{eq:global_marked_bound}
  \sum_{\mc P:\,|M(\mc P)|\geq1} w_b(\mc P)e^{|\supp(\mc P)|}
    =\sum_{h\geq0,\,j\geq1}\mathcal M^{\mathrm{mark}}_{h,j}
    \leq \frac{e^{Cb}-1}{1-Cb^2}.
\end{equation}
Finally, consider the unmarked, multicolor contribution. Fix $h \geq 1$ and $j = 0$, and
define
\[
  \mathcal M^{\mathrm{multi}}_{h,0}
    :=\sum_{\substack{\mc P\text{ observable-free and multicolor}\\
                       |\pi(\mc P)|=h,\ |M(\mc P)|=0}}
        w_b(\mc P)e^{|\supp(\mc P)|}.
\]
which corresponds to \Cref{prop:polymer_counting} with
\[
	\mc B = \mc C, \qquad \mc R = \mathbb Z_{\geq 0}^{\mc C},\qquad b_c = b\ (c\in\mc C), \qquad
	t = 1, \qquad s_{h,0} = 8h\,.
\]
Choose a color $c_0$ belonging to a cross-color edge; such an edge must exist as otherwise the
polymer would factorize across the colors. Let $B$ be the nonempty set of Wick pairs with exactly
one endpoint in color $c_0$. The nonzero trace within that color forces the symmetric difference of
these mixed-edge labels to be empty, since each same-color pair label appears twice:
\[
	\mathop{\triangle}_{e\in B}\mc A(e) = \varnothing\,.
\]
Thus, we can apply \Cref{eq:shared_xor_count} from \Cref{lem:shared_counts} and $L_{h, 0} =
4C^hh^{h+4}n^{3h-2}$ for $h \geq 2$ and $0$ otherwise (and $\mc M^{\rm multi} = 0$). The factor of
$4$ accounts for the possible values of $c_0$. \Cref{eq:shared_counting_bound} gives
\begin{equation}
\label{eq:global_active_bound}
\begin{aligned}
  \mathcal M^{\mathrm{multi}}_{h,0}
    &\leq \frac{(|\mc C|b)^{2h}}{2^hh!}\,
           \sigma_n^{2h}\,
           \bigl(4C^hh^{h+4}n^{3h-2}\bigr)\,
           e^{8h}\\
    &\leq n^{-2}(Cb^2)^h h^4.
\end{aligned}
\end{equation}
The last step uses $|\mc C|\leq 4$, $\sigma_n^{2h} n^{3h-2} = 6^h n^{-2}$, and $h^h/h! \leq e^h$;
these factors are absorbed in the common $C$ fixed above.

Combining \Cref{eq:global_marked_bound,eq:global_active_bound} yields that for a new constant $C_0$,
\begin{equation}
\label{eq:background_global_shared}
	M_{\mathrm{active}} :=\sum_{\mc P\text{ active}} w_b(\mc P)e^{|\supp(\mc P)|} \leq
	\frac{e^{Cb}-1}{1-Cb^2} +n^{-2}\sum_{h\geq2}(Cb^2)^hh^4 \leq C_0\,.
\end{equation}
where the last inequality follows from our choice \Cref{eq:background_radius_choice}. Plugging this
bound into \Cref{eq:annealed_ratio_bound} yields, \[
    \sup_{|\beta_c| < b_\mathrm{background}} \Abs{\frac{\Xi(\vec\beta)} {\prod_c
    Z_\mathrm{ann}(\beta_c)}} \leq e^{C_0}\,.
\]

\paragraph{Passing to observable diagrams.}
Put $b=|\beta|$. For a fixed observable-slot configuration $\mc Q$, let $\deg_c(\mc
Q)$ be the total degree of $\beta_c$ across all polymers:
\[
  \deg_c(\mc Q)
    :=\sum_{\mc P\in\mc Q}r_c(\mc P),\qquad
  \vec{\deg}(\mc Q)
    :=(\deg_c(\mc Q))_{c\in\mc C_{\vec\ell}}\,.
\]
From \Cref{eq:observable_activity_shared} we can explicitly write $\mathsf{Act}_{\vec J, \vec u}$ as
$\mathsf{Act}_{\vec J, \vec u}(\mc Q; \vec \beta) = a_{\vec J,\mc Q}\, \vec\beta^{\,\vec{\deg}(\mc
Q)}$ where
\begin{equation}
	\label{eq:absolute_activity_observable}
    a_{\vec J,\mc Q} :=\prod_{\mc P\in\mc Q}\Brac{\Paren{
          \prod_{c\in\mc C_{\vec\ell}}
            \frac{(-1)^{r_c(\mc P)}}{r_c(\mc P)!}
	    }\sigma_n^{2|\pi_{\mc P}|}
        \left(\prod_{v\in M_{\mc P}}
          u_{\mc A_{\mc P}(v)}\right)
	  \prod_{c\in\mc C_{\vec\ell}}\Tr[\mathsf{Mon}_c(\mc P)]
	}\,,
\end{equation}
is independent of $\vec \beta$. Keeping the $\beta$-monomial separate allows
\Cref{lem:cauchy_cutoff} to preserve this physical-temperature weight while bounding the background
factor on its larger polydisk.
For $b\leq b_\mathrm{background}/2$, \Cref{lem:cauchy_cutoff} applied to each configuration and
\Cref{eq:observable_free_want_to_bound} give
\[
 \Abs{(\mathsf{Trunc}_{\vec\ell,K}\mc N^{\vec u}_{\vec\ell,\vec J})(\beta\vec1)}
 \leq16e^{C_0}
 \sum_{\substack{\mc S\text{ observable-slot configuration}\\
                  \vec{\deg}(\mc S)\in\mc R_{\vec\ell,K}}}
 |a_{\vec J,\mc S}|b^{|\vec{\deg}(\mc S)|_1}e^{|\supp(\mc S)|/2}.
\]
Here only configurations of degree in $\mc R_{\vec\ell,K}$ contribute, since this cutoff set is
down-closed and the background has nonnegative Taylor degrees.

The component-placement argument of \Cref{lem:observable_component_placement} also applies to
absolute activities. It preserves the support and Taylor degrees, and its placement count converts
the component factorials into the full-diagram factorials. Thus, writing the observable-slot
configurations as full diagrams gives
\begin{align}
	\label{eq:observable_configuration_bound}
	W_{a,I,\vec\ell,K}(b,\vec u) =\sum_{\vec J\in\mc J_{a,I}}
    \sum_{\substack{\mc S\text{ observable-slot configuration}\\
                      \vec{\deg}(\mc S)\in\mc R_{\vec\ell,K}}}
       |\omega_{a,I}(\vec J)|\,
       |a_{\vec J,\mc S}|\,b^{|\vec{\deg}(\mc S)|_1}e^{|\supp(\mc S)|}\,.
\end{align}
The diagrams in this sum are exactly those in \Cref{eq:observable_diagram_mass}. Replacing
$e^{|\supp(\mc S)|/2}$ by $e^{|\supp(\mc S)|}$ in the preceding bound, summing over $\vec J$, and
using \Cref{eq:polarized_bridge_shared} proves \Cref{eq:observable_diagram_reduction}.

Finally, since $C_\star\geq\max\{1,C\}$, the radius in \Cref{lem:degree_bound} satisfies
\[
  \frac{1}{64C_\star}
    \leq\min\left\{\frac{\log(9/8)}{2C},\frac{1}{8\sqrt C}\right\}
    =\frac{b_\mathrm{background}}{2}.
\]
\end{proof}

\subsection{Counting observable diagrams}
\label{sec:fixed_chaos_proof}

In this section we prove the perturbed bound \Cref{lem:physical_shift}. The only result needed
from \Cref{sec:observable_reduction} is \Cref{lem:observable_diagram_reduction}.

\begin{proof}[Proof of \Cref{lem:physical_shift}]
Fix $A>0$, even $n\geq8$, $1\leq K\leq A\log n$, $I\in\mc I_4$, $a\in\{1,2\}$, and $\vec u$
with $\|\vec u\|_1\leq1$. Put $b=|\beta|$. For each $\vec\ell\in[2]^a$, abbreviate
$\mc C=\mc C_{\vec\ell}$ and
\[
 M_\mathrm{obs}:=W_{a,I,\vec\ell,K}(b,\vec u).
\]
By \Cref{lem:observable_diagram_reduction}, it remains to show that $M_\mathrm{obs}$ is upper
bounded by $C_A\sigma_n^{a-1}b^a$ for an absolute radius $b_\mathrm{observable}>0$.

\paragraph{Observable-containing components.}
For $h,j\geq0$, define the fixed-count observable mass by
\[
	\mathcal M^{\mathrm{obs}}_{h,j} := b^{2h+j}\sigma_n^{2h} \sum_{\vec J\in\mathcal J_{a,I}}
	|\omega_{a,I}(\vec J)| \times \sum_{D \in \mathsf{Diag}^{\mathrm{obs}}_{h, j}(\vec J)}
	\Paren{\prod_{c \in \mathcal C} \frac{1}{r_c(D)!}} \Paren{\prod_{v\in M(D)} |u_{\mathcal
	A_D(v)}|} e^{|\supp(D)|}\,.
\]
Then $M_\mathrm{obs}=\sum_{h,j\geq0}\mathcal M^{\mathrm{obs}}_{h,j}$. The diagrams and cutoff
are those in \Cref{lem:observable_diagram_reduction}. We apply \Cref{prop:polymer_counting} with
\[
	\mc B = \mc C,\qquad \mc R=\mc R_{\vec\ell,K},\qquad b_c = b = |\beta|\ (c\in\mc C),\qquad t
	= 1, \qquad s_{h,j} = 8h + 4j + 4a\,.
\]
Labelings now correspond to a choice of labels $\vec J \in \mc J_{a, I}$ for the observable slots
together with an assignment to the remaining slots. The observable slots contribute the trace in each color, so let
$\mathsf{Lab}_{\vec J}(\mc D)$ denote labelings which yield a non-zero trace in each color
and an observable slot in every connected component. Then,
$\mathsf{Lab}(\mc D) = \{(\vec J,\mc A) \st \vec J \in \mc J_{a, I}, \mc A \in \mathsf{Lab}_{\vec
J}(\mc D)\}$. For each labeling $\lambda \in \mathsf{Lab}(\mc D)$, the corresponding label weight is
$v(\mc D, \lambda) = |\omega_{a,I}(\vec J)|\prod_{m\in M}|u_{\mc A(m)}|$.

For these parameters, \Cref{eq:shared_skeleton_decomposition} gives
\begin{equation}
	\label{eq:observable_skeleton_sum_shared}
	\mathcal M^{\mathrm{obs}}_{h,j} \leq \sigma_n^{2h} e^{8h+4j+4a} \sum_{\mc D \in
	\mathsf{Skel}_{h, j}(\mc R_{\vec \ell, K})} \Paren{\prod_{c \in \mc C}\frac{b^{r_c}}{r_c!}}
	\mathsf{LabSum}_{v = (a,I)}(\mc D;\vec u).
\end{equation}
Here, the original outer sum over $\vec J$ is now inside $\mathsf{LabSum}_{a,I}$. We first bound
$\mathsf{LabSum}$ for one fixed skeleton with $\vec r\in\mc R_{\vec\ell,K}$. Its counts satisfy
\[
	2h+j=\sum_c r_c \leq a(K+1) \leq 2(K+1).
\]
Additionally, fix a labeling of all perturbation slots; we will sum over labelings of these slots
later. Taking the symmetric difference over every color of a connected component cancels every pair
label, leaving only the labels of its observable and perturbation slots. Note this is different from a symmetric
difference over labels in a \emph{single} trace, which yields the empty set \emph{with} the labels of observable
and perturbation slots included. Thus, for each component $\mc P$,
\[
	\mathop{\bigsymdiff}_{v \in M(\mc P)} \mc A(v) = \mathop{\bigsymdiff}_{b
	\in S(\mc P)} J_b \,.
\]
Here $M(\mc P)$ and $S(\mc P)$ are its perturbation slots and observable-slot indices, respectively.
Any admissible labeling falls into one of three cases:
\begin{enumerate}
	\item A single observable slot in a single component,
	\item Two observable slots in distinct connected components and
	\item Two observable slots in a single component.
\end{enumerate}
We count labelings belonging to each of these cases. For one observable slot the observable-slot label is fixed to
$I$, and the rooted bound in \Cref{lem:parity_exposure} gives $[C(h + j + 2)]^h n^{3h}$ possible
pair labelings.

For two observable slots in distinct components, record which observable-carrying component contains each pair and mark,
at a cost of at most $2^{h+j}$. In each component the xor argument above implies its observable-slot label is
determined by the xor of its marks which are fixed. We either discard the record if the resulting
labels are not a permitted pair for $I$ or apply \Cref{lem:parity_exposure} separately in the two
components. Their product is at most $[C(h+j+2)]^hn^{3h}$. The one kernel $\omega_{2, I}$ satisfies
$|\omega_{2,I}|\leq\sigma_n$.

Finally, for a connected component with two observable slots, the trace-color graph is connected. Choose a spanning
tree of that graph and cut one edge on the path between its two colors containing observable slots $c_1$ and $c_2$.
Let $\mc C_1,\mc C_2$ be the two sides of this cut, with $c_i\in\mc C_i$,
and let $B$ be the set of all Wick pairs crossing between these sides, with $k=|B|\geq1$.
For each color, a nonzero trace forces the symmetric difference of all labels in its word to be
empty. Taking the symmetric difference of these identities over every color in $\mc C_i$
cancels each pair internal to that side, including pairs between different colors on the same side.
Each crossing pair contributes once, as does each perturbation slot on that side, and the unique
observable slot on that side has label $J_i$. Therefore,
\[
	\mathop{\bigsymdiff}_{e\in B}\mc A(e)=J_i\triangle D_i\,,\qquad
	D_i:=\mathop{\bigsymdiff}_{\substack{v=(c,j)\in M\\c\in\mc C_i}}\mc A(v),
	\qquad i=1,2\,.
\]
In particular, $D_1,D_2$ are fixed once the perturbation-slot labels and the cut are fixed, and
the two identities also give $D_1\triangle D_2=J_1\triangle J_2=I$.
Since $J_1 \triangle J_2 = I$, we can write $J_1 = P_1 \cup X$ and $J_2 = P_2 \cup X$, where $P_1
\sqcup P_2 = I$ and $X \in \binom{[n] \setminus I}{2}$. With this, we can rewrite the above as,
\[
	X\triangle\mathop{\bigsymdiff}_{e\in B}\mc A(e)=P_1\triangle D_1\,.
\]
But notice that $D_1$ is fixed, and there are only $3$ choices for $P_1$: the ordering
$J_1<J_2$ forces $P_1$ to contain the least element of $I$. Thus, once $P_1$ is fixed,
we can apply \Cref{eq:parity_count_shared} from \Cref{lem:parity_exposure} with set sizes
$2,4,\dots,4$ and bound the choices of $X$ and labels for edges in $B$ by
$3(4k+2)^{4k+2}n^{2k+1}$.

Since $1\leq k\leq h\leq K+1\leq(A+1)\log n$ for $n \geq 8$, there is a finite constant $C_A$ such
that $(4k+2)^{4k+2} \leq C_{A} n^{k-1}$ and the observable slots and edges in $B$ have at most $3C_{A} n^{3k}$
possible labelings. Then, treat each possible such labeling together with the perturbation slots, as roots.
The remaining Wick pair labels have at most $[C(h+j+2)]^{h-k}n^{3(h-k)}$ extensions. Again, the
factor $|\omega_{a, I}(\vec J)|$ contributes at most $\sigma_n$.

To get the final count of valid labelings, we sum over the labelings of perturbation slots. Since
\[
 \sum_{\ell_1,\ldots,\ell_j\in\mc I_4}\prod_{i=1}^j|u_{\ell_i}|
 =\|\vec u\|_1^j\leq1,
\]
this adds at most a factor of $1$. In all cases,
\begin{equation}
	\label{eq:observable_label_count_shared}
	\mathsf{LabSum}_{a,I}(\mc D;\vec u)
	\leq C_A \sigma_n^{a-1} C^{h+j} (h+j+2)^h n^{3h}\,.
\end{equation}
This bound is uniform over the skeletons with $\vec r\in\mc R_{\vec\ell,K}$, so take
\[
  L_{h,j}=C_A\sigma_n^{a-1}C^{h+j}(h+j+2)^h n^{3h}.
\]
All parameters are now specified. Equation \eqref{eq:shared_counting_bound}, applied to
$\mathcal M_{h,j}=\mathcal M^{\mathrm{obs}}_{h,j}$, gives
\[
\begin{aligned}
  \mathcal M^{\mathrm{obs}}_{h,j}
    &\leq \frac{(|\mc C|b)^{2h+j}}{j!\,2^hh!}\,
           \sigma_n^{2h}\,
           \bigl(C_A\sigma_n^{a-1}C^{h+j}(h+j+2)^h n^{3h}\bigr)\,
           e^{8h+4j+4a}\\
    &\leq C_A\sigma_n^{a-1}(Cb^2)^h\frac{(Cb)^j}{j!}.
\end{aligned}
\]
The second line uses $|\mc C|\leq4$, $a\leq2$, $\sigma_n^{2h}n^{3h}=6^h$, and
$(h+j+2)^h/h!\leq e^{h+j+2}$. A nonzero trace in each of the $a$ colors containing observable slots requires at least
one Taylor slot in each such color. Thus $\mathcal M^{\mathrm{obs}}_{h,j}=0$ unless
$a\leq2h+j\leq a(K+1)$. Summing the fixed-count bounds gives
  \begin{equation}
  \label{eq:observable_numerical_majorant}
  \begin{aligned}
    M_\mathrm{obs}
      &\leq
        C_A\sigma_n^{a-1}
        \sum_{\substack{h,j\geq0\\
                         a\leq2h+j\leq a(K+1)}}
          (Cb^2)^h\frac{(Cb)^j}{j!}\\
      &\leq
        C_A\sigma_n^{a-1}
        \sum_{\substack{h,j\geq0\\2h+j\geq a}}
          (Cb^2)^h\frac{(Cb)^j}{j!}.
  \end{aligned}
\end{equation}
Fix the absolute coefficient $C$ in \Cref{eq:observable_numerical_majorant}
and set the proof-local radius $b_\mathrm{observable}:=(2C)^{-1/2}$, so that
$Cb_\mathrm{observable}^2=1/2$. For
$0<b\leq b_\mathrm{observable}$, the last numerical sum is at most
\begin{equation}
	\label{eq:observable_containing_sum_bound}
  b^ab_\mathrm{observable}^{-a}
    \sum_{h,j\geq0}
      (Cb_\mathrm{observable}^2)^h
      \frac{(Cb_\mathrm{observable})^j}{j!}
    =
  b^ab_\mathrm{observable}^{-a}
    \frac{e^{Cb_\mathrm{observable}}}
         {1-Cb_\mathrm{observable}^2}
    \leq C' b^a,
\end{equation}
where $C'$ is absolute because $a\in\{1,2\}$. Increasing $C_A$ proves the final bound $C_A
\sigma_n^{a-1} b^a$.

\paragraph{Concluding the proof}
Combining \Cref{lem:observable_diagram_reduction} with the derivation ending in
\Cref{eq:observable_containing_sum_bound} yields, for $0<|\beta|\leq1/(64C_\star)$,
\begin{equation}
\label{eq:observable_containing_want_to_bound}
\begin{aligned}
 |\E_{\vec\bsg}F^{\beta,K}_{a,I}(\vec\bsg+\sigma_n^{-1}\vec u)|
 &\leq \frac{16e^{C_0}}{(|\beta|\sigma_n)^a}
 \sum_{\vec\ell\in[2]^a}\left(\prod_{b=1}^a|a_{\ell_b}|\right)
 C_A\sigma_n^{a-1}|\beta|^a\\
 &\leq144e^{C_0}C_A\sigma_n^{-1}.
\end{aligned}
\end{equation}
The second inequality uses $\sum_{\vec\ell}\prod_b|a_{\ell_b}|=3^a\leq9$. This yields
\Cref{eq:physical_shift_bound} after increasing $C_A$. Since $C_\star\geq\max\{1,C\}$,
$C/(64C_\star)^2\leq1/4096<1/2$, hence
$1/(64C_\star)\leq b_\mathrm{observable}$. The bound at $\beta=0$ follows by polynomial
continuity.
\end{proof}

\section{Deferred proofs}
\label{sec:deferred_proofs}
We prove \Cref{lem:hessian-bkm}.

\hessianbkm*

\begin{proof}
Let $K = K(\vec g) = -\beta H(\vec g)$. We abbreviate $Z = Z(\beta, \vec g)$ in this proof. First
compute the first derivative. By the Duhamel formula (\Cref{lem:duhamel}),
\begin{align*}
\partial_I Z
&=\Tr[\partial_I \mathrm{e}^{K}]
=\int_0^1\Tr\bigl[
\mathrm{e}^{(1-s)K}(\partial_I K)\mathrm{e}^{sK}
\bigr]\dd s=\int_0^1 \Tr[e^K \partial_I K] \dd s= \Tr[e^K \partial_I K]= -\beta\sigma_n \Tr[e^K \Gamma_I]\,.
\end{align*}
Therefore,
\begin{equation}
\partial_I\Phi
=\frac{\partial_IZ}{Z}
=-\beta\sigma_n\Tr[\rho_\beta(\vec g) \Gamma_I] = -\beta\sigma_n \langle \Gamma_I \rangle_\rho\,.
\label{eq:first-derivative}
\end{equation}
Now differentiate \Cref{eq:first-derivative} with respect to $g_J$. The quotient rule gives
\begin{equation}
\partial_J\langle \Gamma_I\rangle_\rho = \frac{\partial_J\Tr[\mathrm{e}^{K}\Gamma_I]}{Z}
-\frac{\Tr[\mathrm{e}^{K}\Gamma_I]}{Z}\frac{\partial_J Z}{Z}.
\label{eq:quotient}
\end{equation}
For the first term, apply Duhamel again:
\begin{align*}
\partial_J\Tr[\mathrm{e}^{K}\Gamma_I]
&=\int_0^1
\Tr\bigl[
\mathrm{e}^{(1-s)K}(\partial_JK)\mathrm{e}^{sK}\Gamma_I
\bigr]\dd s=-\beta\sigma_n\int_0^1
\Tr\bigl[
\mathrm{e}^{(1-s)K}\Gamma_J\mathrm{e}^{sK}\Gamma_I
\bigr]\dd s.
\end{align*}
Since \(\rho_\beta(\vec g)=e^K/Z\), we have
\(e^{aK}=Z^a\rho_\beta(\vec g)^{\,a}\).  Dividing by \(Z\) therefore gives
\begin{equation}
\frac{\partial_J\Tr[\mathrm{e}^{K}\Gamma_I]}{Z}
=-\beta\sigma_n\int_0^1
\Tr\bigl[\rho_\beta(\vec g)^{1-s}\Gamma_J\rho_\beta(\vec g)^s\Gamma_I\bigr]\dd s.
\label{eq:numerator-normalized}
\end{equation}
For the second term in \Cref{eq:quotient}, use
$\partial_JZ/Z=\partial_J\Phi=-\beta\sigma_n\langle \Gamma_J\rangle_{\rho}$, so
\begin{equation}
-\frac{\Tr[\mathrm{e}^{K}\Gamma_I]}{Z}\frac{\partial_JZ}{Z}
=\beta\sigma_n\langle \Gamma_I\rangle_{\rho}\langle \Gamma_J\rangle_{\rho}.
\label{eq:quotient-second}
\end{equation}
Inserting \Cref{eq:numerator-normalized} and \Cref{eq:quotient-second} into \Cref{eq:quotient},
\begin{equation}
\partial_J\langle \Gamma_I\rangle_{\rho}
=-\beta\sigma_n\int_0^1
\Tr\bigl[\rho_\beta(\vec g)^{1-s}\Gamma_J\rho_\beta(\vec g)^s\Gamma_I\bigr]\dd s
+\beta\sigma_n\langle \Gamma_I\rangle_{\rho}\langle \Gamma_J\rangle_{\rho}.
\label{eq:expectation-derivative}
\end{equation}
Since $\partial_I\Phi=-\beta\sigma_n\langle \Gamma_I\rangle_{\rho}$, we obtain
\begin{equation}
\partial_J\partial_I\Phi
=\beta^2\sigma_n^2\left[
\int_0^1
\Tr\bigl[\rho_\beta(\vec g)^{1-s}\Gamma_J\rho_\beta(\vec g)^s\Gamma_I\bigr]\dd s
-\langle \Gamma_I\rangle_{\rho}\langle \Gamma_J\rangle_{\rho}
\right].
\label{eq:hessian-one}
\end{equation}
Changing variables $s\mapsto1-s$, and using cyclicity of trace, this is the same as
\begin{equation}
\partial_I\partial_J\Phi
=\beta^2\sigma_n^2\left[
\int_0^1
\Tr\bigl[\rho_\beta(\vec g)^s\Gamma_I\rho_\beta(\vec g)^{1-s}\Gamma_J\bigr]\dd s
-\Tr[\rho_\beta(\vec g) \Gamma_I]\Tr[\rho_\beta(\vec g) \Gamma_J]
\right].
\label{eq:hessian-two}
\end{equation}
By the definition of the BKM covariance , the
quantity in brackets is
\(\Cov^{\mathrm{BKM}}_{\rho_\beta(\vec g)}(\Gamma_I,\Gamma_J)\),
which proves \Cref{eq:hessian-bkm}.  Summing against \(u_Iu_J\) then gives
\Cref{eq:hessian-bkm-direction}.
\end{proof}

\subsection{Sample-efficiency remaining proofs}
\label{alltemp:section}
\subsubsection{Tools}

\begin{lemma}[Derivatives of an unnormalized Gibbs trace]\label{alltemp:lem:derivatives}
Fix a deterministic matrix $W$ with $\|W\|_{\rm op}\le1$, and put $\mathcal Z_W(K)=\Tr(e^KW)$. For Hermitian $K,C_1,\ldots,C_r$ and every integer $r\ge0$,
\begin{equation}\label{alltemp:eq:trace-derivatives}
 \left|D^r\mathcal Z_W(K)[C_1,\ldots,C_r]\right|
 \le\Tr(e^K)\prod_{j=1}^r\|C_j\|_{\rm op}.
\end{equation}
In particular, for $f_W(\vec g)=\Tr(e^{-\beta H(\vec g)}W)$ and every ordered list $J_1,\ldots,J_r$,
\begin{equation}\label{alltemp:eq:derivatives}
 \left\lvert \partial_{g_{J_1}}\cdots\partial_{g_{J_r}}f_W(\vec g)\right\rvert
 \le (\beta\sigma_n)^r Z(\beta,\vec g).
\end{equation}
This holds pointwise for every real $\vec g$ and $\beta\ge0$.
\end{lemma}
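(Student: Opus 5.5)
We argue directly from an explicit integral formula for the $r$-th derivative of the matrix exponential. Iterating the Duhamel formula (\Cref{lem:duhamel}) gives
\[
 D^r e^{K}[C_1,\ldots,C_r]
 =\sum_{\tau\in S_r}\int_{\Delta_r}
 e^{s_0K}C_{\tau(1)}e^{s_1K}C_{\tau(2)}\cdots C_{\tau(r)}e^{s_rK}\,\dd s,
\]
where $\Delta_r=\{(s_0,\ldots,s_r):s_i\ge0,\ \sum_is_i=1\}$ is the simplex with its normalized Lebesgue measure, of total mass $1/r!$. Linearity of $X\mapsto\Tr(XW)$ then gives $D^r\mathcal Z_W(K)[C_1,\ldots,C_r]$ as the same sum and integral with each integrand replaced by its trace against $W$. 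Since there are $r!$ permutations, it suffices to prove, for every fixed ordering and every point of the simplex,
\[
 \bigl|\Tr\bigl(e^{s_0K}C_1e^{s_1K}\cdots C_re^{s_rK}W\bigr)\bigr|
 \le\Tr(e^K)\prod_j\|C_j\|_{\rm op}.
\]

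For this pointwise bound we use the Schatten H\"older inequality (\Cref{alltemp:lem:schatten-holder}). We give each factor $e^{s_iK}$ exponent $p_i=1/s_i$, with the convention $p_i=\infty$ when $s_i=0$, and give $C_j$ and $W$ exponent $\infty$. The reciprocals sum to $\sum_is_i=1$, as H\"older requires. Because $K$ is Hermitian, $e^{s_iK}$ is positive, so
\[
 \|e^{s_iK}\|_{1/s_i}=\bigl(\Tr e^{K}\bigr)^{s_i}.
\]
When $s_i=0$ the factor is the identity, whose operator norm is $1=(\Tr e^K)^0$. The product of the $e^{s_iK}$ norms is therefore $(\Tr e^K)^{\sum s_i}=\Tr e^K$. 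The operator norms of the $C_j$ contribute $\prod_j\|C_j\|_{\rm op}$, and $\|W\|_{\rm op}\le1$ contributes at most $1$. Integrating over $\Delta_r$ and summing the $r!$ terms gives \eqref{alltemp:eq:trace-derivatives}. Measurability and the exchange of trace with integral are routine, since everything is finite dimensional and continuous.

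For \eqref{alltemp:eq:derivatives}, set $K=-\beta H(\vec g)$, which is linear in $\vec g$ with $\partial_{g_J}K=-\beta\sigma_n\Gamma_J$. By the chain rule for a linear map,
\[
 \partial_{g_{J_1}}\cdots\partial_{g_{J_r}}f_W
 =D^r\mathcal Z_W(K)[-\beta\sigma_n\Gamma_{J_1},\ldots,-\beta\sigma_n\Gamma_{J_r}].
\]
Each $\Gamma_J$ is a Hermitian unitary, so $\|\Gamma_J\|_{\rm op}=1$, and $\Tr(e^K)=Z(\beta,\vec g)$; the first bound then yields $(\beta\sigma_n)^rZ(\beta,\vec g)$. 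Repeated labels are allowed, since the multilinear form is evaluated on the same $\Gamma_J$ in several slots, and $\beta\ge0$ makes $|{-\beta\sigma_n}|=\beta\sigma_n$.

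The main obstacle is correctly writing down the multiple Duhamel expansion, including the $r!$ orderings matched against the simplex volume $1/r!$. If that bookkeeping becomes awkward, a fallback is induction on $r$. Each Duhamel step inserts one $C$ between exponentials, so the inductive statement should be formulated for a general product $\prod_i e^{s_iK}$ interleaved with bounded operators, and the H\"older argument handles every such term uniformly.
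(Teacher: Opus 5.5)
Your proposal is correct and follows the paper's own proof essentially line for line. Both iterate Duhamel into the symmetrized simplex integral and apply Schatten H\"older with exponent $1/s_i$ on each $e^{s_iK}$ and $\infty$ on the insertions and $W$. The $r!$ orderings cancel the simplex volume, and the second bound comes from substituting $K=-\beta H(\vec g)$ and $C_j=-\beta\sigma_n\Gamma_{J_j}$. Your treatment of $s_i=0$ via the identity factor is an equally valid substitute for the paper's continuity remark.
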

\begin{proof}
For $r\ge1$, iterate the Duhamel formula of Lemma~\ref{lem:duhamel} to obtain
\[
 D^r\mathcal Z_W(K)[C_1,\ldots,C_r]
 =\sum_{\pi\in S_r}\int_{\Delta_r}
 \Tr\!\left(e^{t_0K}C_{\pi(1)}e^{t_1K}\cdots
 C_{\pi(r)}e^{t_rK}W\right)dt,
\]
where $\Delta_r=\{(t_1,\ldots,t_r):t_j\ge0,\ \sum_{j=1}^rt_j\le1\}$, $t_0=1-\sum_{j=1}^rt_j$, and $dt=dt_1\cdots dt_r$. Apply Lemma~\ref{alltemp:lem:schatten-holder} with exponent $1/t_j$ for $e^{t_jK}$ and exponent $\infty$ for the insertions and $W$. Since $\|e^{t_jK}\|_{1/t_j}=\Tr(e^K)^{t_j}$, each integrand is bounded in absolute value by $\Tr(e^K)\prod_j\|C_j\|_{\rm op}$; zero $t_j$ are handled by continuity. The $r!$ orderings cancel the simplex volume $1/r!$, proving \eqref{alltemp:eq:trace-derivatives}. The case $r=0$ follows directly from Lemma~\ref{alltemp:lem:schatten-holder}. Finally, substitute $K=-\beta H(\vec g)$ and $C_j=-\beta\sigma_n\Gamma_{J_j}$ to obtain \eqref{alltemp:eq:derivatives}.
\end{proof}

\begin{lemma}[Gaussian partial pairings]\label{alltemp:lem:wick}
Let $\vec\bsg=(\bsg_i)_{i=1}^d$ have independent standard Gaussian coordinates. Let $f:\mathbb R^d\to\mathbb C$ be smooth, with $f$ and its derivatives through order $N$ bounded at each $\vec g\in\mathbb R^d$ in absolute value by a polynomial times $e^{C\|\vec g\|_2}$ for some finite $C$. Write $\partial_i=\partial/\partial g_i$. For indices $i_1,\ldots,i_N$,
\begin{equation}\label{alltemp:eq:wick}
 \E_{\vec\bsg}\!\left[\prod_{a=1}^N \bsg_{i_a}f(\vec\bsg)\right]
 =\sum_{P}\left(\prod_{\{a,b\}\in P}\delta_{i_a,i_b}\right)
 \E_{\vec\bsg}\!\left[\left(\prod_{a\notin V(P)}\partial_{i_a}\right)f(\vec\bsg)\right].
\end{equation}
Here $P$ ranges over all partial matchings of $[N]$, including the empty matching, and $V(P)$ is the set of matched positions.
\end{lemma}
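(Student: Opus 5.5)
This is Gaussian integration by parts (Stein's lemma) iterated over the Gaussian factors. I would induct on $N$, with the statement quantified over all admissible $f$. The case $N=0$ is trivial: only the empty matching occurs and both sides equal $\E f$.

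For the inductive step, single out the last factor $\bsg_{i_N}$ and apply the one-dimensional Stein identity in coordinate $i_N$, conditionally on the other coordinates:
\[
 \E[\bsg_{i}\,G(\vec\bsg)]=\E[\partial_{i}G(\vec\bsg)],
 \qquad G=\prod_{a<N}\bsg_{i_a}\,f .
\]
This is integration by parts against $\varphi(x)=e^{-x^2/2}/\sqrt{2\pi}$, using $x\varphi(x)=-\varphi'(x)$. The boundary terms vanish because $G$ grows at most like a polynomial times $e^{C\|\vec g\|_2}$, which the Gaussian density dominates. By Fubini the integrability extends to all coordinates, since $\E[\text{poly}\cdot e^{C\|\vec\bsg\|_2}]<\infty$.

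By the product rule,
\[
 \partial_{i_N}G=\sum_{b<N}\delta_{i_b,i_N}\prod_{a<N,a\ne b}\bsg_{i_a}\,f+\prod_{a<N}\bsg_{i_a}\,\partial_{i_N}f .
\]
The first sum pairs position $N$ with some $b<N$. I would apply the inductive hypothesis to the remaining $N-2$ factors with the same $f$. The last term leaves position $N$ unmatched, and I would apply the inductive hypothesis to $N-1$ factors with $f$ replaced by $\partial_{i_N}f$. This is legitimate because $\partial_{i_N}f$ satisfies the growth hypothesis through order $N-1$, and the hypothesis of order $N$ on $f$ is exactly what is needed for this to hold.

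It remains to check that the resulting terms reassemble into the right-hand side of \eqref{alltemp:eq:wick}. The partial matchings of $[N]$ split into those matching $N$ with some $b<N$, which correspond to partial matchings of $[N-1]\setminus\{b\}$, and those leaving $N$ unmatched, which correspond to partial matchings of $[N-1]$. The unmatched derivatives commute, since $f$ is smooth, so their order is irrelevant. The correspondence is a bijection, which gives the identity.

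The only delicate point is justifying the integration by parts and the interchange of integrals under the stated growth condition. The moment bound $\E[\text{poly}\cdot e^{C\|\vec\bsg\|_2}]<\infty$ handles it. The combinatorics is routine.
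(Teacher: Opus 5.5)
Your proof is correct, but it takes a different route from the paper.

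\textbf{The paper's route.} The paper proves the identity in one stroke, with no induction. Completing the square gives the Gaussian shift identity
\[
\E[e^{a(t)\cdot\vec\bsg}f(\vec\bsg)]=e^{Q(t)}\E[f(\vec\bsg+a(t))],\qquad a(t)=\sum_r t_r e_{i_r},\qquad Q(t)=\tfrac12\|a(t)\|_2^2 .
\]
It then applies $\prod_{r=1}^N\partial_{t_r}$ at $t=0$ to both sides. On the right, the product rule produces a sum over subsets $S\subseteq[N]$ of
\[
\Bigl(\prod_{r\in S}\partial_{t_r}e^{Q(t)}\Bigr)\Big|_{t=0}\cdot\E\Bigl[\Bigl(\prod_{a\notin S}\partial_{i_a}\Bigr)f(\vec\bsg)\Bigr].
\]
The first factor is identified (via the shift identity with $f\equiv1$) as $\E[\prod_{r\in S}\bsg_{i_r}]$. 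The Wick--Isserlis theorem then expands it into perfect matchings of $S$. The analytic work there is a dominated-convergence bound, uniform over $\|t\|_\infty\le1$, justifying $N$ differentiations under the integral sign.

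\textbf{Comparison.} Your induction via one-dimensional Stein identities needs only a single integration by parts plus Fubini at each step. It also does not take Wick--Isserlis as an input; the case $f\equiv1$ recovers it. The cost is the bijective reassembly of partial matchings, which you carry out correctly. In the paper's generating-function approach that combinatorics is automatic: the sum over $S$ and over perfect matchings of $S$ falls straight out of the product rule. Labeling positions by separate variables $t_r$ also handles repeated indices without any bookkeeping.

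\textbf{A small point.} Your inductive step invokes the hypothesis for both $N-1$ and $N-2$ factors. You should phrase it as strong induction, or note that the $(N-2)$-factor case only arises for $N\ge2$.
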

\begin{proof}
Let $\varphi_d$ denote the density of $\vec\bsg$, and set
\[
 a(t)=\sum_{r=1}^N t_r e_{i_r},\qquad
 Q(t)=\frac12\|a(t)\|_2^2
     =\frac12\sum_{r,s=1}^N t_rt_s\delta_{i_r,i_s},
 \qquad t\in\mathbb R^N.
\]
Completing the square gives
\[
 e^{a(t)\cdot x}\varphi_d(x)
 =e^{Q(t)}\varphi_d(x-a(t)).
\]
Integrating against $f(x)$ and changing variables therefore yields the Gaussian shift identity
\[
 \E_{\vec\bsg}\!\left[e^{a(t)\cdot\vec\bsg}f(\vec\bsg)\right]
 =e^{Q(t)}\E_{\vec\bsg}[f(\vec\bsg+a(t))].
\]

We justify the mixed differentiations under these two expectations explicitly. Since there are finitely many coordinate derivatives through order $N$, the growth assumption gives constants $A>0$, $C\ge0$, and an integer $m\ge0$ such that
\[
 |\partial^\alpha f(x)|\le A(1+\|x\|_2)^m e^{C\|x\|_2}
 \qquad (|\alpha|\le N).
\]
For $S\subseteq[N]$, write $D_S=\prod_{r\in S}\partial_{t_r}$ and $\partial_S=\prod_{r\in S}\partial_{x_{i_r}}$, with empty products acting as the identity. Restrict to $\|t\|_\infty\le1$, so that $\|a(t)\|_2\le N$, and put $\rho=\|x\|_2$. The integrand in the left-hand expectation is $e^{a(t)\cdot x}f(x)$. Its mixed derivatives satisfy
\begin{align*}
 \left|D_S\!\left(e^{a(t)\cdot x}f(x)\right)\right|
 &=\left|\prod_{r\in S}x_{i_r}\right|e^{a(t)\cdot x}|f(x)|\\
 &\le A(1+\rho)^{m+N}e^{(C+N)\rho},
\end{align*}
using $|x_{i_r}|\le\rho$ and $a(t)\cdot x\le N\rho$. The integrand in the right-hand expectation is $f(x+a(t))$. By the chain rule and $\|x+a(t)\|_2\le\rho+N$, its mixed derivatives satisfy
\begin{align*}
 |D_S f(x+a(t))|
 &=|(\partial_S f)(x+a(t))|\\
 &\le A(1+\rho+N)^m e^{C(\rho+N)}\\
 &\le A(1+N)^m e^{CN}(1+\rho)^m e^{C\rho}.
\end{align*}
Thus both families of differentiated integrands are bounded, uniformly in $S$ and $\|t\|_\infty\le1$, by
\[
 B(x)=K(1+\rho)^{m+N}e^{(C+N)\rho},
 \qquad K=A(1+N)^m e^{CN}.
\]
This bound is integrable against the Gaussian density: the inequality $(C+N)\rho\le\rho^2/4+(C+N)^2$ gives
\[
 B(x)\varphi_d(x)
 \le K(2\pi)^{-d/2}e^{(C+N)^2}
       (1+\rho)^{m+N}e^{-\rho^2/4},
\]
whose integral over $\mathbb R^d$ is finite. Dominated convergence, applied successively to the mixed derivatives, therefore permits all the differentiations below. The same bounds with $f\equiv1$ justify differentiating the Gaussian moment-generating function.

In particular, applying the shift identity with $f\equiv1$ and then Theorem~\ref{thm:wick} gives
\[
 D_S e^{Q(t)}\big|_{t=0}
 =\E_{\vec\bsg}\!\left[\prod_{r\in S}\bsg_{i_r}\right]
 =\sum_{P\in\PM(S)}\prod_{\{a,b\}\in P}\delta_{i_a,i_b},
\]
where $\PM(S)$ is the set of perfect matchings of $S$, empty when $|S|$ is odd and containing the empty matching when $S=\emptyset$. Applying $D_{[N]}$ to the shift identity at $t=0$ and using the product rule now yields
\begin{align*}
 \E_{\vec\bsg}\!\left[\prod_{r=1}^N \bsg_{i_r}f(\vec\bsg)\right]
 &=\sum_{S\subseteq[N]}
   \left(D_S e^{Q(t)}\big|_{t=0}\right)
   \E_{\vec\bsg}[(\partial_{[N]\setminus S}f)(\vec\bsg)]\\
 &=\sum_{S\subseteq[N]}\sum_{P\in\PM(S)}
   \left(\prod_{\{a,b\}\in P}\delta_{i_a,i_b}\right)
   \E_{\vec\bsg}[(\partial_{[N]\setminus S}f)(\vec\bsg)].
\end{align*}
Each partial matching $P$ of $[N]$ occurs exactly once in this sum, with $S=V(P)$, proving \eqref{alltemp:eq:wick}. The auxiliary variables label positions separately, so repeated indices have the correct multiplicities. The empty matching, and the case $N=0$, are included by the empty-product convention. Only finitely many real derivatives were used; no analyticity of $f$ is required.

For the Gibbs-trace weights used below, Lemma~\ref{alltemp:lem:derivatives} bounds every required derivative by a constant times $Z(\beta,\vec g)$. Since $H$ is linear in finitely many Gaussian coordinates, $Z(\beta,\vec g)\le e^{\beta\|H(\vec g)\|_{\rm op}}\le e^{C\|\vec g\|_2}$ for some finite $C$. Thus these applications satisfy the stated growth assumption.
\end{proof}

We also give the deferred proof of the Lipschitz property of the Hessian.

\hessianlipschitz*

\begin{proof}
For a real direction $\vec u$, Lemma~\ref{alltemp:lem:elementary} gives
\[
 \left\|\sigma_n\sum_Iu_I\Gamma_I\right\|_{\rm op}
 \le\frac{\sigma_n\sqrt{B_n}}2\|\vec u\|_2
 \le\frac12\|\vec u\|_2.
\]
Write $\mathcal Z(\vec h)=Z(\beta,\vec h)$. Apply Lemma~\ref{alltemp:lem:derivatives} with $W=\Id$, $K=-\beta H(\vec h)$, and insertions $C_j=-\beta\sigma_n\sum_I(\vec u_j)_I\Gamma_I$. For $r=1,2,3$ this gives
\[
 \frac{|D^r\mathcal Z(\vec h)[\vec u_1,\ldots,\vec u_r]|}{\mathcal Z(\vec h)}
 \le(\beta/2)^r\prod_{j=1}^r\|\vec u_j\|_2.
\]
Let $\mathcal Z_u=D\mathcal Z(\vec h)[\vec u]$, and use analogous subscripts for higher derivatives. The chain rule gives
\[
 D^3\Phi(\vec h)[\vec u,\vec v,\vec w]
 =\frac{\mathcal Z_{uvw}}{\mathcal Z}
 -\frac{\mathcal Z_{uv}\mathcal Z_w+\mathcal Z_{uw}\mathcal Z_v+\mathcal Z_{vw}\mathcal Z_u}{\mathcal Z^2}
 +2\frac{\mathcal Z_u\mathcal Z_v\mathcal Z_w}{\mathcal Z^3}.
\]
The preceding bounds and the triangle inequality therefore imply
\begin{equation}\label{alltemp:eq:third-derivative}
 |D^3\Phi(\vec h)[\vec u,\vec v,\vec w]|
 \le6(\beta/2)^3\|\vec u\|_2\|\vec v\|_2\|\vec w\|_2
 =L_\beta\|\vec u\|_2\|\vec v\|_2\|\vec w\|_2.
\end{equation}
For $\vec a=\vec h'-\vec h$ and unit vectors $\vec u,\vec v$, the fundamental theorem of calculus now gives
\[
 \left|\vec u^\top\bigl(\nabla^2\Phi(\vec h')-\nabla^2\Phi(\vec h)\bigr)\vec v\right|
 =\left|\int_0^1D^3\Phi(\vec h+t\vec a)[\vec u,\vec v,\vec a]dt\right|
 \le L_\beta\|\vec a\|_2.
\]
Taking the supremum over $\vec u,\vec v$ proves \eqref{alltemp:eq:lipschitz}. Finally,
\[
 \nabla^2\Phi(\vec h)\succeq\nabla^2\Phi(\vec g)-L_\beta\|\vec h-\vec g\|_2\Id
 \succeq\frac\lambda2\Id
\]
on the stated ball, proving \eqref{alltemp:eq:local-curvature}.
\end{proof}

\subsubsection{Proof of the local-quench estimate}

For $U=U_A$ define $C_k(\vec g)=\operatorname{ad}_{H(\vec g)}^k(U)$, where $\operatorname{ad}_H(B)=[H,B]$.
\begin{lemma}[Moment bound]\label{alltemp:lem:moments}
For $|A|\le4$, $\beta\ge0$, and every integer $k\ge1$, with $H=H(\vec\bsg)$ and $C_k=C_k(\vec\bsg)$,
\begin{equation}\label{alltemp:eq:moments}
 M_{2k}:=\frac{\E_{\vec\bsg}\Tr(e^{-\beta H}C_kC_k^\dagger)}{Z_\mathrm{ann}(\beta)}
 \le [8k(1+\beta)]^{2k}.
\end{equation}
\end{lemma}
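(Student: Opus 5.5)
The plan is to expand $C_k$ multilinearly in the couplings, apply the partial-pairing Wick identity (Lemma~\ref{alltemp:lem:wick}) with the Gibbs weight as the test function, and bound each resulting term pointwise using Lemma~\ref{alltemp:lem:derivatives}. Since $H=\sigma_n\sum_J\bsg_J\Gamma_J$, we have
\[
 C_k=\sigma_n^k\sum_{J_1,\ldots,J_k}\bsg_{J_1}\cdots\bsg_{J_k}\,\operatorname{ad}_{\Gamma_{J_k}}\cdots\operatorname{ad}_{\Gamma_{J_1}}(U).
\]
The product $C_kC_k^\dagger$ is then a sum over $2k$ labels $J_1,\ldots,J_k,J'_1,\ldots,J'_k$, weighted by $\sigma_n^{2k}\prod\bsg_{J_a}\prod\bsg_{J'_b}$, times the operator $W=X_{\vec J}X_{\vec J'}^\dagger$, where $X_{\vec J}$ is the nested commutator word. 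The key structural fact is that $[\Gamma_J,Y]$ vanishes unless $\Gamma_J$ anticommutes with $Y$, in which case it equals $2\Gamma_JY$. Each $X_{\vec J}$ is therefore either $0$ or $2^k$ times a signed Majorana monomial, so $\|W\|_{\rm op}\le 4^k$. For $X_{\vec J}$ to be nonzero, each $J_t$ must intersect the support of the current word, and that support has size at most $4+2(t-1)$. Hence the number of admissible labels at step $t$ is at most $(4+2(t-1))\binom{n-1}{3}\le 4k\binom{n-1}{3}$ (more crudely, $8k$ sites, each with $\binom{n-1}{3}$ completions).

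Next I would apply Lemma~\ref{alltemp:lem:wick} to $f(\vec g)=\Tr(e^{-\beta H(\vec g)}W)$ with $N=2k$. This yields a sum over partial matchings $P$ of the $2k$ positions. Matched positions are forced to carry equal labels, and each unmatched position contributes a derivative $\partial_{g_{J}}$. By Lemma~\ref{alltemp:lem:derivatives}, each such derivative costs a factor $\beta\sigma_n$ times $Z(\beta,\vec g)\|W\|_{\rm op}$, after rescaling $W$ to norm $1$. Taking expectations, every term is bounded by
\[
 \sigma_n^{2k}\,4^k\,(\beta\sigma_n)^{2k-2|P|}\,Z_{\rm ann}(\beta).
\]
The overall $\sigma_n$ power is therefore $\sigma_n^{2k+(2k-2|P|)}$, and it must be paired against the number of free labels.

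The counting proceeds position by position, revealing labels in the order $J_1,\ldots,J_k,J'_1,\ldots,J'_k$. A label matched to an earlier position is determined and costs nothing. Every other label is constrained by the nonzero-commutator condition, so it costs at most $8k\binom{n-1}{3}$. Hence the number of labelings is at most $(8k\binom{n-1}{3})^{\#\text{free}}$ with $\#\text{free}=2k-|P|$, and the number of matchings $P$ with $|P|=p$ is at most $\binom{2k}{2p}(2p-1)!!$. Using $\sigma_n^2\binom{n-1}{3}\le 1$ from Lemma~\ref{alltemp:lem:elementary}, we pair each free label with one factor of $\sigma_n^2$. This takes $\sigma_n^{2(2k-|P|)}$ out of the available power $\sigma_n^{2k}\cdot\sigma_n^{2k-2|P|}$, which is exactly what is available. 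The terms then sum to at most
\[
 \sum_p\binom{2k}{2p}(2p-1)!!\,(8k)^{2k-p}\,4^k\,\beta^{2k-2p}\;\le\;[Ck(1+\beta)]^{2k}.
\]
Here I use $(2p-1)!!\le (2k)^p$ and absorb binomials into $2^{2k}$. To hit the exact constant $8$, I would run the counting more carefully, replacing $8k$ by $4k$ and using the factor $2$ per commutator, which gives $(2\cdot 4k)^{2k}$.

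The main obstacle is making the label count legitimate for the $J'$ positions and for derivatives. The commutator constraint for $J'_b$ refers to the support of $X_{\vec J'}$, which is built from $U$ and not from the $J$ word, so it holds independently of the order of revelation. Also, matched positions in $\vec J'$ whose partner appears later must be handled by revealing the earlier member of each pair first. Both are fine because every label, matched or not, satisfies its own growth constraint relative to its own chain. The delicate point is that a matched pair contributes $\sigma_n^2$ with one constrained label, while an unmatched label contributes $\beta\sigma_n\cdot\sigma_n$. Both balance against $\binom{n-1}{3}$ per label, which is why no $n$-dependence survives.
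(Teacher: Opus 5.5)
Your proposal is correct and follows the paper's proof essentially step for step. The shared steps are: multilinear expansion of the two commutator paths; the partial-pairing Wick lemma with the Gibbs trace as test function; Lemma~\ref{alltemp:lem:derivatives} for the unmatched derivatives; sequential label exposure with exactly one factor $\sigma_n^2$ per label block; and the matching count $\binom{2k}{2p}(2p-1)!!\le\binom{2k}{2p}(2k)^p$. To land exactly on the constant $8$, use your own per-step count $4k\binom{n-1}{3}$ rather than $8k\binom{n-1}{3}$, and apply the binomial theorem instead of absorbing the binomials into $2^{2k}$ (absorbing them would give constant $16$). Concretely, $4^k(4k)^{2k-p}(2k)^p=(8k)^{2k}2^{-p}$, and $\sum_p\binom{2k}{2p}\beta^{2k-2p}2^{-p}\le(\beta+2^{-1/2})^{2k}\le(1+\beta)^{2k}$.
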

\begin{proof}
Expand the $k$ nested commutators in their Gaussian coefficients. If $B$ is a Majorana support, then $[\Gamma_I,U_B]$ vanishes unless $|I\cap B|$ is odd. In the nonzero case it is twice a unitary Majorana monomial, up to phase, on support $I\triangle B$. Thus along a nonzero path the support increases by at most two per step. Before step $j\le k$ its size is at most
\[
 |A|+2(j-1)\le 2k+2\le4k.
\]
For any such support the number of possible next quartic labels is at most
\begin{equation}\label{alltemp:eq:Lk}
 L_k:=4k\binom{n-1}{3},\qquad \sigma_n^2 L_k\le4k.
\end{equation}
This counts all quartets meeting the current support and so is an upper bound on the nonzero choices.

The expansion of $C_kC_k^\dagger$ has two length-$k$ commutator paths. Each term is a phase times
\[
 4^k\sigma_n^{2k}\left(\prod_{a=1}^{2k}\bsg_{I_a}\right)W,
 \qquad \left\lVert W\right\rVert_{\rm op}=1,
\]
with both paths nonzero. The matrix $W$ depends on the labels but is independent of $\vec\bsg$.

Apply Lemma~\ref{alltemp:lem:wick} to each Gibbs trace. Fix a partial matching with $b$ pairs. There are $\ell=2k-2b$ unmatched positions and $r=2k-b$ label blocks: one per pair and one per unmatched position. Lemma~\ref{alltemp:lem:derivatives} bounds the remaining derivative by $(\beta\sigma_n)^\ell Z$.

There are at most $L_k^r$ label assignments compatible with this matching and two nonzero paths. To see this, write each path as $D_0=U$ and $D_j=[\Gamma_{I_j},D_{j-1}]$. Scan the labels $I_1,\ldots,I_k$ of the first path from the innermost commutator to the outermost, then do the same for the second path before taking its adjoint. At the first occurrence of a label block, the current support is already determined, and there are at most $L_k$ possible labels by \eqref{alltemp:eq:Lk}. A later occurrence of that block has no free choice. Extra coincidences between different blocks are permitted in this upper bound and cause no difficulty.

Consequently, after division by $Z_\mathrm{ann}(\beta)$, this matching contributes in absolute value at most
\begin{align*}
 4^k\sigma_n^{2k}(\beta\sigma_n)^\ell L_k^r
 &=4^k\beta^{2k-2b}(\sigma_n^2L_k)^{2k-b}\\
 &\le4^k\beta^{2k-2b}(4k)^{2k-b}.
\end{align*}
The equality uses $2k+\ell=2r$. This is the cancellation of every power of $n$: both a paired block and an unmatched block carry exactly one factor $\sigma_n^2$.

There are
\[
 N_{k,b}=\frac{(2k)!}{(2k-2b)!2^bb!}
 =\binom{2k}{2b}(2b-1)!!
 \le\binom{2k}{2b}(2k)^b
\]
such matchings, with the usual convention at $b=0$. Summing the preceding bounds gives
\begin{align*}
 M_{2k}
 &\le4^k\sum_{b=0}^k N_{k,b}\beta^{2k-2b}(4k)^{2k-b}\\
 &\le(8k)^{2k}\sum_{b=0}^k\binom{2k}{2b}\beta^{2k-2b}2^{-b}\\
 &\le[8k(\beta+2^{-1/2})]^{2k}
 \le[8k(1+\beta)]^{2k}.
\end{align*}
The penultimate inequality retains the even-indexed terms of the binomial expansion of $(\beta+2^{-1/2})^{2k}$. All inequalities hold also at $\beta=0$ by continuity.
\end{proof}

\begin{proof}[Proof of Theorem~\ref{alltemp:thm:quench}]
Fix $A$ and $\beta\ge0$, and write $U=U_A$. For each realization $\vec g$, choose an orthonormal eigenbasis $H(\vec g)|i\rangle=E_i|i\rangle$ and put $U_{ij}=\langle i|U|j\rangle$. Since $e^{sUHU^\dagger}=Ue^{sH}U^\dagger$, the normalized-trace convention gives
\begin{align*}
 Z_A(\beta,s,\vec g)
 &=\Tr\!\left(e^{-(\beta+s)H}Ue^{sH}U^\dagger\right)\\
 &=\frac1{d_n}\sum_i e^{-(\beta+s)E_i}
       \sum_j e^{sE_j}|U_{ij}|^2\\
 &=\frac1{d_n}\sum_{i,j}e^{-\beta E_i}|U_{ij}|^2
       e^{s(E_j-E_i)}.
\end{align*}
Set $C_0=U$ and recall $C_k=\operatorname{ad}_H^k(U)$. The identity
\[
 [H,X]_{ij}=(E_i-E_j)X_{ij}
\]
implies $(C_k)_{ij}=(E_i-E_j)^kU_{ij}$ for every integer $k\ge0$. Hence
\begin{align*}
 \Tr(e^{-\beta H}C_kC_k^\dagger)
 &=\frac1{d_n}\sum_i e^{-\beta E_i}\sum_j|(C_k)_{ij}|^2\\
 &=\frac1{d_n}\sum_{i,j}e^{-\beta E_i}|U_{ij}|^2(E_j-E_i)^{2k},
\end{align*}
where the even power removes the sign difference between $E_i-E_j$ and $E_j-E_i$. In particular, $C_0C_0^\dagger=\Id$, so the $k=0$ trace is $Z(\beta,\vec g)$.

For every real $s$, apply $e^x\le2\cosh x$ to the nonnegative spectral weights and expand the even series:
\begin{align*}
 Z_A(\beta,s,\vec g)
 &\le\frac2{d_n}\sum_{i,j}e^{-\beta E_i}|U_{ij}|^2
       \cosh\!\bigl(s(E_j-E_i)\bigr)\\
 &=\frac2{d_n}\sum_{i,j}e^{-\beta E_i}|U_{ij}|^2
       \sum_{k=0}^{\infty}\frac{s^{2k}(E_j-E_i)^{2k}}{(2k)!}\\
 &=2\sum_{k=0}^{\infty}\frac{s^{2k}}{(2k)!}
       \Tr(e^{-\beta H}C_kC_k^\dagger).
\end{align*}
Every term is nonnegative, so Tonelli's theorem permits exchanging the series with the expectation over $\vec\bsg$. Moreover, $0<Z_\mathrm{ann}(\beta)<\infty$, since
\[
 0<Z(\beta,\vec g)\le e^{\beta\|H(\vec g)\|_{\rm op}}
 \le e^{\beta\sigma_n\sqrt m\|\vec g\|_2},
\]
and the last expression is Gaussian-integrable. Dividing by $Z_\mathrm{ann}(\beta)$ and setting $M_0=1$, Lemma~\ref{alltemp:lem:moments} now gives
\begin{equation}\label{alltemp:eq:mgf}
 \begin{aligned}
 \frac{\E_{\vec\bsg}Z_A(\beta,s,\vec\bsg)}{Z_\mathrm{ann}(\beta)}
 &\le2\sum_{k=0}^{\infty}\frac{s^{2k}M_{2k}}{(2k)!}\\
 &\le2\sum_{k=0}^{\infty}[4e(1+\beta)|s|]^{2k}
 \le\frac83.
 \end{aligned}
\end{equation}
The second inequality uses $(2k)!\ge(2k/e)^{2k}$ for $k\ge1$. The last follows from \eqref{alltemp:eq:srange}, which makes the geometric-series ratio at most $1/4$.
\end{proof}

\section{Sample-efficient learning for bounded-degree Hamiltonians}
\label{app:bounded_degree}
We give a bounded-degree counterpart of the sample-efficient algorithm in \Cref{sampleall:thm:learning}. We follow
the same proof strategy: reduce learning to strong convexity, identify the Hessian with a BKM
variance, and lower bound the ordinary variance by detecting a large coefficient of the underlying
Hamiltonian. The first two steps use the same arguments as in the main text,
\begin{itemize}
	\item \Cref{lem:hessian-bkm,lem:bkm-variance} relate the Hessian and ordinary variance, and
	\item the proof of \Cref{sampleall:thm:error-from-convexity} obtains an error bound on the estimates given strong
		convexity.
\end{itemize}
Note that the second item above is slightly easier in this setting as strong convexity holds
throughout the parameter regime so an analogue of \Cref{alltemp:lem:hessian-lipschitz} is not necessary.

The largest difference is in the variance argument: bounded degree permits a one-site twirl in place
of the Majorana filter \Cref{sampleall:eq:filter}, and the local energy-transition bound of
\cite{arad2016connecting} in place of the SYK local-quench estimate in \Cref{alltemp:thm:quench}. We remark that this is highly
reminiscent of \cite{anshu2020sample}, with the exception that the Hessian can be directly related
to a local operator, rather than the quasi-local observable in their setting. This drastically
simplifies the proof.

The two variance proofs use the following ingredients:
\begin{itemize}
	\item In \Cref{sampleall:lem:variance}, we compare rotated second moments using
			Schatten H\"older and Araki--Lieb--Thirring. The required fractional moments are
			controlled using the local-quench estimate in \Cref{alltemp:thm:quench}.
	\item For this section, \cite{arad2016connecting} yields an energy-transition estimate in
			\emph{operator norm}; thus, we can directly bound the leakage under the true
			Gibbs state, as used in the variance argument below.
\end{itemize}

\paragraph{Setting.}
In this appendix, $n$ denotes the number of qubits, and $\mc P=\{P_1,\ldots,P_m\}$ is a known
collection of $m\geq1$ distinct, nonidentity Pauli strings. Each $P_a$ acts on at most $k$ qubits,
and each qubit participates non-trivially in at most $\Delta$ terms. In particular, $m\leq n\Delta$.
Write $d=2^n$ and retain the normalized-trace convention $\Tr=d^{-1}\operatorname{tr}_{\rm
ordinary}$. For $K=[-J,J]^m$, define
\[
  H(\vec g)=\sum_{a=1}^m g_aP_a,\qquad
  Z(\beta,\vec g)=\Tr[e^{-\beta H(\vec g)}],\qquad
  \rho_\beta(\vec g)=\frac{e^{-\beta H(\vec g)}}{Z(\beta,\vec g)},\qquad
  \Phi(\vec g)=\log Z(\beta,\vec g).
\]
The physical Gibbs state is $\rho_\beta(\vec g)/d$. Fix an inverse temperature $\beta > 0$ and a
per-term coefficient norm bound $J$; we assume each $|g_a| \leq J$.

\begin{theorem}[Bounded-degree strong convexity]
\label{thm:bd_strong_convexity}
There is a constant $c_{\beta,J,k,\Delta}>0$ such that uniformly for
$\vec g\in K = [-J, J]^m$,
\begin{equation}
  \label{eq:bd_curvature}
  \nabla^2 \Phi(\vec g) \succeq \lambda_m \Id, \qquad \lambda_m := \frac{c_{\beta, J, k,
  \Delta}}{m(1 + \log(2m))}\,.
\end{equation}
In particular, $\lambda_m \in \Omega_{\beta,J,k,\Delta}(1/(n\log(2n)))$. The bound holds at every
fixed inverse temperature $\beta>0$.
\end{theorem}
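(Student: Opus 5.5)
The plan follows the SYK argument with the three ingredients swapped as the appendix preamble indicates. Fix $\vec g\in K$ and a unit direction $\vec u\in\R^m$, and put $V=\sum_a u_aP_a$, $V_0=V-\Tr[\rho V]\Id$. By \Cref{lem:hessian-bkm} and \Cref{lem:bkm-variance},
\[
\vec u^\top\nabla^2\Phi(\vec g)\vec u\geq\frac{\beta^2}{1+\beta\Delta_{\vec g}}\Var_\rho(V).
\]
Since $\|H(\vec g)\|_{\rm op}\leq mJ$, the spectral diameter is $\Delta_{\vec g}\leq 2mJ$. This costs a factor $1/m$, which already matches the $1/m$ in $\lambda_m$. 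To avoid losing more than that, I would instead bound $\Var^{\BKM}$ through a \emph{local} comparison, so that only a local energy scale enters. It therefore remains to show $\Var_\rho(V)\gtrsim 1/\log(2m)$ up to constants depending on $\beta,J,k,\Delta$, or alternatively to work with $\Var^{\BKM}$ directly.

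The coefficient extraction uses a one-site twirl. Pick $a^\star$ with $|u_{a^\star}|^2\geq 1/m$, or better average over all $a$ weighted by $u_a^2$. Let $q$ be a qubit in the support of $P_{a^\star}$. For a fixed term $P_a$ and qubit $q\in\supp P_a$, average over the single-qubit Paulis $\sigma\in\{I,X,Y,Z\}$ at $q$ with the sign character that keeps only Paulis whose $q$-component equals that of $P_a$; this is the analogue of \Cref{sampleall:eq:filter}. Summing over qubits $q$ whose local factor matches leaves $\sum_{b:\,P_b|_q=P_a|_q}u_bP_b$, a sum of at most $\Delta$ terms. To isolate $u_a$ fully I would iterate the twirl over all $\leq k$ qubits of $\supp P_a$, costing $4^k$ unitaries $W$, each supported on $\supp P_a$. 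Then, exactly as in \Cref{sampleall:lem:variance},
\[
u_a^2\leq 4^{-k}\sum_W\Tr[W\rho W^\dagger V_0^2].
\]

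The main obstacle is comparing $\Tr[W\rho W^\dagger V_0^2]$ with $\Tr[\rho V_0^2]$ for a $k$-local unitary $W$. Here I would invoke \cite{arad2016connecting}. Write $\rho=e^{-\beta H}/Z$ and $W\rho W^\dagger=e^{-\beta WHW^\dagger}/Z$. The key point is that $WHW^\dagger-H$ is supported on the $\leq k\Delta$ terms touching $\supp W$, and has norm at most $2k\Delta J$. The energy-transition bound states that $\|\Pi_{>E+t}\,W\,\Pi_{\leq E}\|_{\rm op}\leq e^{-\Omega(t/(k\Delta J))}$. This gives the operator inequality
\[
W\rho W^\dagger\preceq c_1\rho+(\text{tail}),
\]
after a dyadic decomposition of the spectrum. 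Concretely, I would bound $\Tr[W\rho W^\dagger B]$ for positive $B=V_0^2$ by splitting $W=\sum_{i,j}\Pi_iW\Pi_j$ over energy windows of width $\asymp k\Delta J$. Cross terms would be handled by Cauchy--Schwarz, using the exponential decay against the Boltzmann ratio $e^{\beta t}$. Because the decay rate $1/(k\Delta J)$ may be smaller than $\beta$, the sum need not converge outright. I would therefore truncate at $t\approx\log(2m)\cdot O(k\Delta J)$ and pay the remaining tail with $\|V_0\|_{\rm op}^2\leq 4m^2J^{0}$, using $\|V\|_{\rm op}\leq\sqrt{m}$-type bounds. This truncation, or the use of $\|V\|_{\rm op}\leq\Delta k\cdot\max|u|$-style locality, is what produces the $\log(2m)$ loss. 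The resulting estimate is $\Tr[W\rho W^\dagger V_0^2]\leq C_{\beta,J,k,\Delta}(1+\log 2m)\Tr[\rho V_0^2]+m^{-2}$.

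To conclude, sum the extraction bound over $a$ weighted appropriately. Using bounded degree, each $W$ is reused for at most $4^k k\Delta$ terms, so we get
\[
1=\sum_au_a^2\leq C'(1+\log 2m)\Var_\rho(V)+o(1).
\]
Combining this with the first display yields $\lambda_m\geq c/(m(1+\log 2m))$. All constants are uniform over $\vec g\in K$, since only $|g_a|\leq J$ is used. I expect the Arad--Kuwahara--Landau leakage step, and keeping its loss at $\log m$ rather than polynomial, to be the delicate part.
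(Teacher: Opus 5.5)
The decisive gap is the final summation over $a$. In your extraction inequality the right-hand side $4^{-k}\sum_W\operatorname{Tr}[W\rho W^\dagger V_0^2]$ is an average of \emph{full} rotated second moments of $V_0$. It is not a local piece of $\operatorname{Var}_\rho(V)$. The $W=\mathrm{I}$ term alone contributes $4^{-k}\operatorname{Var}_\rho(V)$ for every $a$, so the summed right-hand side is at least $4^{-k}m\operatorname{Var}_\rho(V)$. No bookkeeping of how often each $W$ is reused can remove this factor $m$. What your argument actually controls is $\max_a u_a^2=\|\vec u\|_\infty^2$. For $\|\vec u\|_2=1$ this gives at best $\operatorname{Var}_\rho(V)\ge c/(m\log 2m)$. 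Combined with the crude factor $\beta^2/(1+2\beta mJ)$ from \Cref{lem:bkm-variance}, you end at $\lambda\ge c/(m^2\log 2m)$, a factor $m$ short. An $\ell_2$-normalized variance bound would need a local-to-global variance inequality that neither your argument nor the paper supplies.

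The paper allocates the losses differently:
\begin{itemize}
\item \Cref{lem:bd_variance} proves a \emph{constant} bound $\operatorname{Var}_\rho(A)\ge c_{\mathrm{var}}$ for $\|\vec a\|_\infty=1$.
\item The crude BKM comparison is replaced by the truncated one $\operatorname{Var}^{\mathrm{BKM}}_\rho(A)\ge(\operatorname{Var}_\rho(A)-T_L)/(1+\beta L)$. Here $T_L\le Cm^2e^{-cL}$ is the Gibbs-weighted mass of transitions with $|\lambda_i-\lambda_j|>L$, obtained by applying leakage to each $P_\ell$. Taking $L=O(1+\log 2m)$ costs only the logarithm.
\item The factor $1/m$ enters solely through $\|\vec u\|_\infty^2\ge\|\vec u\|_2^2/m$.
\end{itemize}
Your remark about a ``local comparison'' for the BKM variance points in this direction, but your accounting never uses it.

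Second, your comparison $\operatorname{Tr}[W\rho W^\dagger V_0^2]\le C(1+\log 2m)\operatorname{Tr}[\rho V_0^2]+m^{-2}$ does not follow from the leakage bound at a fixed $\beta>0$. For a projector $Q$ with $\delta=\operatorname{Tr}[\rho Q]$, the window decomposition gives, as in \Cref{eq:bd_mass_windows}, $\operatorname{Tr}[\rho U_X^\dagger QU_X]\le2(2L+1)e^{\beta(L+1)}\delta+Ce^{-cL}$, with $c$ depending only on $k,\Delta,J$. Making the tail negligible against $\|V_0\|_{\mathrm{op}}^2\le4m$ forces $L\ge c^{-1}\log m$. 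Then the Boltzmann factor is $e^{\beta L}\ge m^{\beta/c}$, a polynomial loss. The $O(\log m)$ number of windows is not the dominant cost.

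The paper never makes a linear comparison. It chooses $L$ depending on $\delta$ to obtain the fractional transport $\operatorname{Tr}[\rho U_X^\dagger QU_X]\le C\delta^\alpha$. A shell decomposition of $A_0$, using leakage for $A$ viewed as a Hamiltonian, upgrades this to $\operatorname{Tr}[\rho U_X^\dagger A_0^2U_X]\le\gamma^2+C\sigma_A^\alpha$, with $\sigma_A^2=\operatorname{Tr}[\rho A_0^2]$. It closes with $1\le4\gamma^2+4C\sigma_A^\alpha$, which is exactly where $\|\vec a\|_\infty=1$ is used.

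A smaller point: the character twirl over $\operatorname{supp}P_a$ retains every $P_b$ that agrees with $P_a$ on that set, including terms of strictly larger support. So the filtered operator need not square to $u_a^2$ times the identity. You can either twirl over the union of the supports of those terms, or, as the paper does, choose a unitary $U$ on $X_i$ with $\operatorname{Tr}[\rho U^\dagger C_i^2U]\ge\operatorname{Tr}[C_i^2]$, where $C_i=A-\mathcal T_i(A)$.
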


\subsection{Where bounded-degree enters}

The derivative identities and Hessian calculation in
\Cref{lem:first_derivative_identity,lem:hessian-bkm} only use the fact that $H$ is linear in the
individual Hermitian terms. Replacing the Majorana terms $\sigma_n\Gamma_I$ with the Pauli strings
$P_a$ gives,
\begin{equation}
  \nabla\Phi(\vec g)=-\beta\vec\mu_\beta,\qquad
  \vec u^\top\nabla^2\Phi(\vec g)\vec u
    =\beta^2\Var_{\rho_\beta(\vec g)}^{\BKM}(A_{\vec u}),
  \quad
  A_{\vec u}=\sum_a u_aP_a,
  \label{eq:bd_hessian}
\end{equation}
where $\mu_{\beta,a}=\Tr[\rho_\beta(\vec g)P_a]$.

The only additional many-body estimate we need is the following leakage bound.

\begin{theorem}[Operator norm leakage bound {\cite[Theorem~2.1]{arad2016connecting}}]
\label{thm:bd_akl}
Let $H=\sum_{\substack{X\subset\Lambda\\ |X| \leq k}}h_X$ be a $k$-local Hamiltonian, where $h_X$
are PSD terms acting non-trivially only on sites in $X$. Assume a bound $g$ on the interaction strength
of any physical site $i \in \Lambda$:
\[
	\sum_{X \ni i} \|h_X\|_\op \leq g\,.
\]
Given an operator $A$, let $\mc E_A$ be the set of supports such
that $[H,A]=\sum_{X\in\mc E_A}[h_X,A]$ and let $R:=\sum_{X\in\mc E_A}\|h_X\|_\op$. Then, for $E\geq0$ and $L\geq0$,
\[
  \big\|\Pi_{[E+L,\infty)}A\,\Pi_{[0,E]}\big\|_\op
  \leq\|A\|_\op e^{-\lambda(L-2R)}\,, \qquad \lambda = \frac 1 {2gk}\,.
\]
Here $\Pi_I$ denotes the projector onto eigenstates of $H$ with eigenvalues in $I$.
\end{theorem}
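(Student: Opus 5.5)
The plan is the standard imaginary-time conjugation argument of \cite{arad2016connecting}, with one normalization trick to remove a factor of $2$. First, use the spectral projectors to isolate an exponential factor. Fix $\lambda>0$ and insert $e^{-\lambda H}e^{\lambda H}=\Id$ on both sides of $A$:
\[
 \Pi_{[E+L,\infty)}A\,\Pi_{[0,E]}
 =\bigl(\Pi_{[E+L,\infty)}e^{-\lambda H}\bigr)\bigl(e^{\lambda H}Ae^{-\lambda H}\bigr)\bigl(e^{\lambda H}\Pi_{[0,E]}\bigr).
\]
Since $H\succeq0$ and the projectors are spectral, $\|\Pi_{[E+L,\infty)}e^{-\lambda H}\|_\op\le e^{-\lambda(E+L)}$ and $\|e^{\lambda H}\Pi_{[0,E]}\|_\op\le e^{\lambda E}$. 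Hence the whole statement reduces to the imaginary-time Heisenberg bound
\[
 \|e^{\lambda H}Ae^{-\lambda H}\|_\op\le\|A\|_\op e^{2\lambda R}\qquad\text{for }\lambda=\tfrac1{2gk}.
\]

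Second, expand this conjugation in the Hadamard series $e^{\lambda H}Ae^{-\lambda H}=\sum_{j\ge0}\frac{\lambda^j}{j!}\operatorname{ad}_H^j(A)$ and bound each nested commutator combinatorially. The key normalization is that $[h_X,B]=[h_X-\tfrac12\|h_X\|_\op\Id,B]$. Because $h_X\succeq0$, the shifted term has norm $\tfrac12\|h_X\|_\op$, which gives $\|[h_X,B]\|_\op\le\|h_X\|_\op\|B\|_\op$ with no factor $2$.

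Third, expand $\operatorname{ad}_H^j(A)$ as a sum over sequences $(X_1,\dots,X_j)$ in which each $X_i$ must meet the support accumulated so far. At the first step only $X_1\in\mc E_A$ contribute, and these carry total weight $R$. Each later factor can only enlarge the support by at most $k$ new sites. At step $i+1$ the total norm of admissible $X$ is then at most $R+gk\,i$: the $R$ comes from terms already touching $A$, and the $g$ per newly added site covers the rest, of which there are at most $ki$. Summing norms over sequences gives
\[
 \|\operatorname{ad}_H^j(A)\|_\op\le\|A\|_\op\prod_{i=0}^{j-1}(R+gki).
\]

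Fourth, sum the series. We get
\[
 \sum_j\frac{\lambda^j}{j!}\prod_{i<j}(R+gki)=(1-\lambda gk)^{-R/(gk)},
\]
which at $\lambda=1/(2gk)$ equals $2^{R/(gk)}=e^{2\lambda R\ln 2}\le e^{2\lambda R}$. Combining with the first step yields $\|A\|_\op e^{-\lambda L+2\lambda R}$, which is the claimed $\|A\|_\op e^{-\lambda(L-2R)}$.

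The main obstacle is the third step. The difficulty is bookkeeping the support growth so that the weight at step $i+1$ is genuinely $R+gki$ and not something larger. Terms acting on the original support of $A$ but not in $\mc E_A$ commute with $A$, so they must be excluded at the first step. Later terms must be charged to newly added sites without double counting across different sequences. The fallback if that bookkeeping is too lossy is to accept a weaker per-step bound and shrink $\lambda$ slightly, which changes only the constants in the exponent.
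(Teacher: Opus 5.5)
The paper does not prove this theorem; it imports it from \cite{arad2016connecting} and only uses the consequence $R\le g|X|$ for $A$ supported on $X$. Your reconstruction is the natural imaginary-time argument, and most of it checks out. Conjugating by $e^{\pm\lambda H}$ and expanding $e^{\lambda H}Ae^{-\lambda H}$ in nested commutators is correct. The positivity shift $h_X\mapsto h_X-\tfrac12\|h_X\|_{\mathrm{op}}$, which makes each commutator cost $\|h_X\|_{\mathrm{op}}$ instead of $2\|h_X\|_{\mathrm{op}}$, is also correct. So is your series: $\sum_j\frac{\lambda^j}{j!}\prod_{i<j}(R+gki)=(1-\lambda gk)^{-R/(gk)}$, which at $\lambda gk=\tfrac12$ equals $2^{R/(gk)}=e^{2\lambda R\ln 2}\le e^{2\lambda R}$. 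You even have slack.

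The one step not justified as written is the per-step weight $R+gki$ in Step 3. Suppose $\mathcal{E}_A$ is the set of terms that do not commute with $A$, which is how you use it. Pure support growth only says an admissible $X_{i+1}$ must meet $\mathrm{supp}(A)\cup X_1\cup\cdots\cup X_i$. That gives $R'+gki$ with $R'=\sum_{X\cap\mathrm{supp}(A)\neq\emptyset}\|h_X\|_{\mathrm{op}}$, and $R'$ can be much larger than $R$. Your own remark that terms on $\mathrm{supp}(A)$ outside $\mathcal{E}_A$ ``must be excluded at the first step'' does not settle later steps. Such a term need not commute with $[h_{X_1},A]$, so support considerations alone never exclude it again.

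The missing observation is algebraic. If $h_X$ commutes with $A$ and with $h_{X_1},\ldots,h_{X_i}$, then it commutes with the whole algebra they generate, in particular with $\operatorname{ad}_{h_{X_i}}\cdots\operatorname{ad}_{h_{X_1}}(A)$. Disjointly supported terms commute, so a nonzero term forces $X_{i+1}\in\mathcal{E}_A$ or $X_{i+1}\cap(X_1\cup\cdots\cup X_i)\neq\emptyset$. For each fixed prefix, the admissible weight is then at most $R+g|X_1\cup\cdots\cup X_i|\le R+gki$. Iterating this prefix-by-prefix bound gives $\sum\prod_l\|h_{X_l}\|_{\mathrm{op}}\le\prod_{i<j}(R+gki)$ directly, so there is no double-counting across sequences to worry about.

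Your fallback of shrinking $\lambda$ would not have repaired the $R$ versus $R'$ issue. The loss is in the additive constant $R$, not in the growth rate. If instead you read $\mathcal{E}_A$ as all terms meeting $\mathrm{supp}(A)$, which is all the paper needs when it invokes $R\le g|X|$, your support bookkeeping already works as stated.
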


For an operator $A_X$ supported on $X$, the bound gives $R\leq g|X|$. The result also
extends to Hermitian $h_X$ by replacing each term $h_X$ with $h_X +
\|h_X\|_\op \Id$, at the cost of increasing the local strength to $2g$. In particular, there are
constants $C,c,c_0$, depending only on $k,g$, such that
\begin{equation}
	\label{eq:bd_akl}
	\Norm{\Pi_{[E+L,\infty)} A_X \Pi_{(-\infty,E]}}_\op \leq C\|A_X\|_\op e^{-c(L - c_0|X|)}
	\qquad( E \in \R,\ L \geq 0)\,.
\end{equation}
This uniform control of the sum of local interaction norms is precisely the resource that is
unavailable in dense SYK.

\subsection{A constant ordinary variance}

\begin{lemma}[Ordinary variance in a normalized direction]
\label{lem:bd_variance}
Let $\rho_\beta(\vec g)$ be the Gibbs state for $H(\vec g)$. For every $A=\sum_\ell a_\ell P_\ell$ with
$\|\vec a\|_\infty=1$ and every $\vec g\in K$,
\begin{equation}
  \Var_{\rho_\beta(\vec g)}(A)\geq c_{\rm var}>0.
  \label{eq:bd_variance}
\end{equation}
\end{lemma}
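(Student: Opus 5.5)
Since $\|\vec a\|_\infty=1$, I pick an index $\ell^\star$ with $|a_{\ell^\star}|=1$, write $X=\supp(P_{\ell^\star})$ (so $|X|\le k$), and set $A_0=A-\Tr[\rho A]\Id$ with $\rho=\rho_\beta(\vec g)$. The single-site twirl will play the role of the Majorana filter \eqref{sampleall:eq:filter}. For the group $\mc G_X$ of Pauli strings supported on $X$ (size $4^{|X|}$), the signed average $\frac1{|\mc G_X|}\sum_{Q\in\mc G_X}\chi(Q)QA_0Q^\dagger$ is taken with $\chi(Q)=\pm1$ according to whether $Q$ commutes or anticommutes with $P_{\ell^\star}$. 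This projects onto the terms of $A$ whose restriction to $X$ equals $P_{\ell^\star}|_X$. That set may contain more than one term, but by bounded degree it has at most $\Delta$ elements, all touching $X$.

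To isolate $a_{\ell^\star}$ I therefore prefer a cleaner route. I average $QA_0Q^\dagger$ over all $Q$ supported on $X$, and compare $A_0$ with this twirl: the difference $A_0-\mc T_X(A_0)=\sum_{\ell:\,P_\ell|_X\neq \Id}a_\ell P_\ell$ contains $a_{\ell^\star}P_{\ell^\star}$. I then apply the operator inequality $\bigl(\sum_jX_j\bigr)^2\preceq N\sum_jX_j^2$, exactly as in \Cref{sampleall:lem:variance}. Concretely, I expect to extract a fixed Pauli component using a further twirl over the (at most $k\Delta$) qubits neighbouring $X$. Whatever the exact filter, I aim for a bound of the form $1=a_{\ell^\star}^2\le \frac{C_{k,\Delta}}{|\mc G|}\sum_{Q\in\mc G}\Tr[Q\rho Q^\dagger A_0^2]$ with $\mc G$ a Pauli group on $O(k\Delta)$ qubits. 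So the problem reduces, as in the SYK case, to comparing $\Tr[\rho_QA_0^2]$ with $\Tr[\rho A_0^2]$, where $\rho_Q=Q\rho Q^\dagger$ is a local perturbation.

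The key step is this comparison, and here I use \Cref{thm:bd_akl} in the form \eqref{eq:bd_akl} in place of \Cref{alltemp:thm:quench}. I would write $\Tr[Q\rho Q^\dagger B]=\Tr[\rho\, Q^\dagger BQ]$ with $B=A_0^2\succeq0$ and decompose $Q$ in energy windows of $H$. The pieces $\Pi_{[E+L,\infty)}Q\Pi_{(-\infty,E]}$ have norm at most $Ce^{-c(L-c_0|Q|)}$, while moving energy up by $L$ changes the Gibbs weight by a factor $e^{\beta L}$. Choosing $L=O_{\beta,k,\Delta}(1)$ large enough that $e^{\beta L}e^{-cL}$ is summable, I expect a bound $\Tr[\rho_QB]\le C'\Tr[\rho B]+(\text{tail})$. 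The tail is controlled by $\|B\|_\op e^{-cL}$, but since $\|B\|_\op$ can be as large as $m^2$, it must be handled differently.

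This is the main obstacle: a constant variance must not lose powers of $\|A\|_\op$. I will first try to avoid $\|A_0\|_\op$ entirely. The idea is to pass to the operator $Q^\dagger A_0 Q-A_0$, which is supported near $X$ up to the centering constant, because only the $O(k\Delta)$ terms overlapping $\supp Q$ change. This gives $Q^\dagger A_0Q=A_0+D_Q$ with $\|D_Q\|_\op\le 2k\Delta$. Then $\Tr[\rho_QA_0^2]=\Tr[\rho(A_0+D_Q)^2]\le 2\Var_\rho(A)+2(2k\Delta)^2$, which alone is too weak.

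I will combine the two arguments instead. The filter gives the exact identity $a_{\ell^\star}P_{\ell^\star}=\frac1{|\mc G|}\sum_Q\chi(Q)Q A_0Q^\dagger$. Taking the $\rho$-expectation of its square, and using that $D_Q$ is local, so that leakage applied to $\Tr[\rho_Q(\cdot)]$ only involves bounded-norm local operators, I aim to obtain $1\le C_{\beta,J,k,\Delta}\max\{\Var_\rho(A),\Var_\rho(A)^{\theta}\}$ for some $\theta>0$. This would yield $c_{\rm var}>0$ depending only on $\beta,J,k,\Delta$.
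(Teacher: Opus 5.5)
The reduction to rotated second moments is fine, but the estimate that carries the whole lemma is missing.

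A signed Pauli filter over the Pauli group $\mc G$ on the neighbourhood $Y=\bigcup\{\supp P_\ell:\supp P_\ell\cap\supp P_{\ell^\star}\neq\varnothing\}$ isolates $a_{\ell^\star}P_{\ell^\star}$ exactly. Then $(\sum_jX_j)^2\preceq N\sum_jX_j^2$ gives $1\leq|\mc G|^{-1}\sum_{Q\in\mc G}\Tr[\rho\,Q^\dagger A_0^2Q]$. The paper reaches the same point differently: it uses $C_i=A-\mc T_i(A)$ with $\Tr[C_i^2]\geq1$, a Haar average over $X_i$, and Jensen.

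Everything then rests on a bound $\Tr[\rho\,U^\dagger A_0^2U]\leq\gamma^2+f(\sigma_A^2)$ for unitaries $U$ on $O_{k,\Delta}(1)$ qubits, with $\gamma^2$ a small absolute constant and $f(x)\to0$ as $x\to0$. Your proposal never produces such a bound.

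\begin{itemize}
\item The $D_Q$ route is circular. $\Tr[\rho_QA_0^2]=\Tr[\rho(A_0+D_Q)^2]$ is an identity, so the unknown just moves into $\Tr[\rho D_Q^2]$. That quantity is not small: for $A=P_{\ell^\star}$ its $\mc G$-average is exactly $2$.
\item The filter identity carries no thermal information. $|\mc G|^{-1}\sum_Q\chi(Q)QA_0Q^\dagger=|\mc G|^{-1}\sum_Q\chi(Q)D_Q$ holds for every $\rho$.
\item Your fixed-$L$ comparison $\Tr[\rho_QB]\leq C'\Tr[\rho B]+C\|B\|_{\mathrm{op}}e^{-cL}$ is right in form. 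But with $B=A_0^2$ the tail carries $\|A_0\|_{\mathrm{op}}^2$, which can be of order $m^2$. Killing it needs $L\gtrsim\log m$, which makes $C'\sim Le^{\beta L}$ polynomial in $m$. That is the polynomial loss of the SYK fractional-moment argument, not a constant.
\end{itemize}

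Your final sentence states the target, not an argument.

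What is missing is the paper's two-Hamiltonian argument.

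\emph{First idea: apply the $H$-leakage bound only to norm-one projectors.} For a projector $P$ with $\delta=\Tr[\rho P]$, a near/far split in unit energy windows gives $\Tr[\rho U^\dagger PU]\leq2(2L+1)e^{\beta(L+1)}\delta+Ce^{-cL}$. Choosing $L\approx\log(1/\delta)/(\beta+c)$ yields $C\delta^\alpha$ with $0<\alpha\leq1$, at every $\beta$ and with no $\|A_0\|_{\mathrm{op}}$. Apply this to $P=\mathbf 1_{\{|A_0|>\gamma\}}$, for which Chebyshev gives $\delta\leq\sigma_A^2/\gamma^2$.

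Transport alone is not enough. Applied shell by shell to $R_s=\mathbf 1_{\{s\gamma<|A_0|\leq(s+1)\gamma\}}$, it bounds the rotated moment by $\sum_s(s+1)^2p_s^\alpha$ with $p_s=\Tr[\rho R_s]$. Since $\alpha<1$, this is not controlled by $\sum_ss^2p_s\leq\sigma_A^2/\gamma^2$.

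\emph{Second idea: apply \Cref{thm:bd_akl} again with $A$ itself as the Hamiltonian.} This is legitimate because $A$ is $k$-local with degree $\Delta$ and coefficients at most one, and centering does not change energy differences. It gives $\|R_sUR_t\|_{\mathrm{op}}\leq Ce^{-c|s-t|}$: a local unitary cannot move weight far in the spectrum of $A_0$.

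Now write $\Tr[\rho U^\dagger A_0^2U]\leq\gamma^2+\gamma^2\sum_{s\geq1}(s+1)^2\|R_sU\sqrt\rho\|_{2,\Tr}^2$ and split $\sqrt\rho=(\Id-P)\sqrt\rho+P\sqrt\rho$.
\begin{itemize}
\item The first piece is bounded by $C\delta^{\alpha/2}$, using transport together with shell decay.
\item The second piece is bounded by $C\sigma_A^2$, using shell decay together with Young's convolution inequality.
\end{itemize}
This yields $\Tr[\rho U^\dagger A_0^2U]\leq\gamma^2+C\sigma_A^\alpha$. With $\gamma=1/8$, your reduction then forces $\sigma_A^2\geq c_{\rm var}$.
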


\begin{proof}
Write $\rho=\rho_\beta(\vec g)$, and center $A$ as $A_0=A-\Tr[\rho A]\Id$ so that $\sigma_A^2 =\Tr[\rho
A_0^2]$. It suffices to consider $ \sigma_A^2 \leq1$. We first control the second moment after a unitary
supported on at most $s_0=k\Delta$ qubits. All Hilbert--Schmidt norms below use the normalized
trace: $\|M\|_{2,\Tr}^2=\Tr[M^\dagger M]$.

\paragraph{Transporting a small spectral tail.}
Let $U_X$ be such a unitary, let $Q$ be any projector, and put $\delta=\Tr[\rho Q]$. Throughout,
$\Pi_j = \Pi_{(j, j+1]}$ is the unit-width spectral projector for $H(\vec g)$. Let $\Pi_{B_L(j)} =
\sum_{|\ell - j| \leq L} \Pi_\ell$. We compare Gibbs
weights on energy windows. We use a
``near--far'' decomposition then apply the leakage estimate given by \Cref{eq:bd_akl}. Let
$r_j = Z^{-1}e^{-\beta j}$.
\begin{align}
	\Tr[\rho U_X^\dagger Q U_X] &\leq 2\sum_j\Paren{\|Q \Pi_{B_L(j)} U_X \Pi_j \sqrt\rho\|_{2,\Tr}^2
	+\|Q(\Id-\Pi_{B_L(j)})U_X\Pi_j\sqrt\rho\|_{2,\Tr}^2}\notag\\
	&\leq 2\sum_j\|Q \Pi_{B_L(j)} U_X \Pi_j \sqrt\rho\|_{2,\Tr}^2 + C e^{-cL}\tag{$\|Q\|_\op
	\leq 1$ and \Cref{eq:bd_akl}}\\
	&\leq 2\sum_j \Tr[Q \Pi_{B_L(j)} U_X (\Pi_j \rho \Pi_j) U_X^\dagger \Pi_{B_L(j)} Q] + C
	e^{-cL}\notag\\
	&\leq 2\sum_j r_j \Tr[Q \Pi_{B_L(j)} U_X \Pi_j U_X^\dagger \Pi_{B_L(j)} Q] + C
	e^{-cL}\tag{$\Pi_j \rho \Pi_j \preceq r_j\Pi_j$}\\
	&\leq 2\sum_j r_j \Tr[Q \Pi_{B_L(j)} Q] + C e^{-cL}\tag{$U_X \Pi_j U_X^\dagger
	\preceq \Id$}\\
	&\leq 2\sum_j e^{\beta (L + 1)} \Tr[Q \rho \Pi_{B_L(j)} Q] + C e^{-cL}\tag{$\tfrac 1 Z
	e^{-\beta(j + L + 1)} \Pi_{B_L(j)} \preceq \rho \Pi_{B_L(j)}$}\\
	&\leq 2(2L+1)e^{\beta(L+1)} \delta + Ce^{-cL}.
  \label{eq:bd_mass_windows}
\end{align}
Choosing $L=\lceil\log(1/\delta)/(\beta+c)\rceil+1$ and absorbing the logarithmic factor into a
smaller exponent yields
\begin{equation}
  \Tr[\rho U_X^\dagger Q U_X]\leq C\delta^\alpha
  \quad\text{for some }0<\alpha\leq1.
  \label{eq:bd_mass_transport}
\end{equation}
This works for $\delta > 0$. For $\delta=0$, this same conclusion follows as both sides are $0$
since $\Tr[\rho Q] = 0 \iff Q= 0$. 

\paragraph{The rotated second moment.}
Fix $\gamma=1/8$, and take
$Q=\1_{\{|A_0|>\gamma\}}$, so $\delta\leq\sigma_A^2/\gamma^2$. If $\delta=0$, then $Q=0$ and the rotated second moment is at most $\gamma^2$. Hence assume $\delta>0$. Decompose the
spectrum of $A_0$ into
\[
  R_0=\Id-Q,\qquad
  R_s=\1_{\{s\gamma<|A_0|\leq(s+1)\gamma\}},\quad s\geq1.
\]
so that
\[
	A_0^2 \preceq \gamma^2 R_0 + \gamma^2 \sum_{s \geq 1} (s+1)^2 R_s\,,
\]
and
\begin{align*}
	\Tr[\rho U_X^\dagger A_0^2 U_X] &\leq \gamma^2 \Tr[\rho U_X^\dagger R_0 U_X] + \gamma^2 \sum_{s \geq
	1} (s + 1)^2 \Tr[\rho U_X^\dagger R_s U_X]\\
	&\leq \gamma^2 + \gamma^2 \sum_{s \geq 1} (s+1)^2 \|R_s U_X \sqrt \rho \|_{2, \Tr}^2\\
	\label{eq:bd_desired_bound}
	&\leq \gamma^2 + 2\gamma^2 \sum_{s \geq 1} \Brac{(s+1)^2 \|R_s U_X R_0 \sqrt \rho \|_{2, \Tr}^2 +
	(s+1)^2\|R_s U_X Q \sqrt \rho \|_{2, \Tr}^2}\,.\numberthis
\end{align*}
To bound this expression, apply \Cref{eq:bd_akl} with $A$ as the Hamiltonian. Its coefficients have magnitude at most one, so its local interaction strength is at most $\Delta$. Centering by a scalar does not change energy differences. Splitting each $R_s$ into its positive and negative spectral intervals and using the fixed value of $\gamma$ gives $\|R_s U_X R_t\|_\op\leq Ce^{-c|s-t|}$. Define
\[
	x_s=\|R_sU_X R_0 \sqrt\rho\|_{2,\Tr},\qquad y_s = \|R_s U_X Q \sqrt\rho\|_{2,\Tr},\qquad
	p_t = \Tr[\rho R_t]\,.
\]
We first consider the first summand, $\sum_{s \geq 1} (s+1)^2 x_s^2$. \Cref{eq:bd_akl} bounds each
individual $x_s \leq C e^{-c s}$ and the sum of $x_s^2$ by \Cref{eq:bd_mass_transport} as
\begin{align*}
	\sum_{s\geq1}x_s^2 &=\|QU_XR_0\sqrt\rho\|_{2,\Tr}^2\\
	&\leq2\Tr[\rho U_X^\dagger Q U_X]+2\|Q U_X Q \sqrt \rho \|_{2, \Tr}^2\\
  	&\leq2\Tr[\rho U_X^\dagger Q U_X]+2\|Q \sqrt \rho \|_{2, \Tr}^2\\
  	&\leq2\Tr[\rho U_X^\dagger Q U_X]+2 \delta\\
  	&\leq C\delta^\alpha\,.
\end{align*}
Pick a cutoff $s^\star$ and write,
\[
	\sum_{1 \leq s \leq s^\star} (s+1)^2 x_s^2 \leq (s^\star + 1)^2 \sum_{s \geq 1} x_s^2 \leq
	(s^\star + 1)^2 C \delta^\alpha\,.
\]
To bound the tail sum, we bound
\[
	\sum_{s > s^\star} (s+1)^2 x_s^2 \leq C \sum_{s > s^\star} (s+1)^2 e^{-2 c s} \leq C
	(s^\star+1)^2 e^{-2 c s^\star}\,.
\]
The last inequality follows from differentiating the standard geometric series $\sum_s r^s$ twice
with respect to $r$. Putting the above two bounds together and setting $s^\star = 1 + \ceil{(\alpha / 2c)
\log (1/\delta)}$ yields,
\begin{equation}
	\label{eq:xs_bound}
	\sum_{s \geq 1} (s+1)^2 x_s^2 \leq C( 1 + \log (1/\delta))^2 \delta^\alpha \leq C'
	\delta^{\alpha /2}\,.
\end{equation}

Next, we bound $\sum_{s \geq 1}(s+1)^2 y_s^2$. Write,
\[
	y_s = \|R_s U_X Q \sqrt \rho\|_{2, \Tr} \leq \sum_{t \geq 1} \|R_s U_X R_t \sqrt \rho\|_{2,
	\Tr} \leq \sum_{t \geq 1} \|R_s U_X R_t\|_\op \|R_t \sqrt \rho \|_{2, \Tr}\,.
\]
By definition, $\|R_t \sqrt \rho \|_{2, \Tr} \leq \sqrt{p_t}$. \Cref{eq:bd_akl} yields $\|R_s U_X
R_t\|_\op \leq C e^{-c|s -t|}$. Together, we get the bound $y_s \leq \sum_{t \geq 1} C e^{-c|s- t|}
 \sqrt{p_t}$.

 To bound the weighted square sum $\sum_s (s+1)^2y_s^2$, we write $(s+1)y_s$ as the $s$-th index of a
 convolution, then apply Young's convolution inequality. First, we have that $s+1 \leq (|s - t| +
 1)(t+1)$ and thus,
 \[
	 (s+1) y_s \leq \sum_t C (s+1) e^{-c |s - t|} \sqrt {p_t} \leq \sum_t \Paren{C (|s-t|+1)
	 e^{-c|s - t|}} (t+1) \sqrt{p_t} = (k \star b)_s\,.
 \]
 where
 \[
	 k_r = C(|r| + 1) e^{-c|r|}\,\qquad b_t = (t+1) \sqrt{p_t}\,.
 \]
 Then, $\sum_{s \geq 1} (s+1)^2 y_s^2 \leq \|k \star b\|^2_{\ell_2}$ and by Young's inequality, this
 is at most $\|k\|^2_{\ell_1} \|b\|_{\ell_2}^2$. $\|k\|_{\ell_1} = C \sum_r (|r|+1) e^{-c |r|} \leq
 C_k$ for a finite constant $C_k$ and thus,
 \[
	 \sum_{s\geq 1} (s+1)^2 y_s^2 \leq  \|k \star b\|^2_{\ell_2} \leq \|k\|^2_{\ell_1}
	 \|b\|_{\ell_2}^2 \leq C' \sum_{t \geq 1} (t+1)^2 p_t \leq 4 C' \sum_{t \geq 1} t^2 p_t\,.
 \]
 Now, notice that $A_0^2 \succeq \gamma^2 \sum_{t \geq 1} t^2 R_t$. Taking the trace with respect to
 $\rho$ on both sides upper bounds the RHS above by $4 C'\gamma^{-2}\Tr[\rho A_0^2] = C\sigma_A^2$, where the fixed factor $\gamma^{-2}$ is absorbed into $C$.
 We conclude that
 \begin{equation}
	 \label{eq:ys_bound}
	 \sum_{s \geq 1}(s+1)^2 y_s^2 \leq C \sigma_A^2\,.
 \end{equation}

 Finally, we plug in the bounds \Cref{eq:xs_bound} and \Cref{eq:ys_bound} into
 \Cref{eq:bd_desired_bound} to get $\Tr[\rho U_X^\dagger A_0^2 U_X] \leq \gamma^2 + C \gamma^2
 (\delta^{\alpha/2} + \sigma_A^2)$, where we have again increased $C$. Since $\delta \leq \sigma_A^2 /
 \gamma^2$, $\gamma^2 \delta^{\alpha/2} \leq \gamma^{2 - \alpha} \sigma_A^{\alpha}$. $\gamma = 1/8$ is
 a constant and can be absorbed into $C$. Lastly, $\sigma_A^2 \leq \sigma_A^\alpha$ since $\alpha \leq
 1$ and $\sigma_A^2 \leq 1$. We conclude that,
 \begin{equation}
	  \label{eq:bd_rotated_moment}
	 \Tr[\rho U_X^\dagger A_0^2 U_X] \leq \gamma^2 + C \sigma_A^\alpha\,.
 \end{equation}

\paragraph{Detecting a coefficient by a one-site twirl.}
Since $\|\vec a\|_\infty = 1$, there exists an index $\ell^*$ such that $|a_{\ell^*}|=1$. Pick a physical
site $i\in\supp(P_{\ell^*})$. Let 
$\mc T_i(M)= \E_{U \sim \mathrm{Haar}(\mathrm U(2))} [U_i^\dagger M U_i]$ be a single-qubit
twirl on site $i$. Then,
\[
	C_i = A - \mc T_i(A)  = A_0 - \mc T_i(A_0) = \sum_{\ell \st i \in \supp(P_\ell)} a_\ell
	P_\ell\,.
\]
Let $X_i = \bigcup_{\ell \st i \in \supp(P_\ell)} \supp(P_\ell)$ be the total support of operators
touching $i$. Bounded degree gives $|X_i|\leq k\Delta$, while Pauli orthogonality gives
$\Tr[C_i^2]=\sum_{\ell:\,i\in\supp(P_\ell)}a_\ell^2\geq1$. Averaging $U^\dagger C_i^2U$ over all
unitaries on $X_i$ gives $\Tr[C_i^2]\Id$, so some such $U$ satisfies $\Tr[\rho U^\dagger
C_i^2U]\geq1$. The squared triangle inequality and Jensen's inequality now give
\begin{align*}
  1 &\leq\Tr[\rho U^\dagger C_i^2U]\\
  &\leq2\Tr[\rho U^\dagger A_0^2U] + 2 \E_{V \sim \mathrm{Haar}(\mathrm U(2))}\Tr[\rho (V_i
  U)^\dagger A_0^2 (V_i U)]\\
  &\leq 4 \gamma^2 + 4C\sigma_A^\alpha \,.
\end{align*}
To get the final line, we use that $U$ and $V_i U$ are both supported on $X_i$ so
\Cref{eq:bd_rotated_moment} applies. Since $4\gamma^2=1/16$, this forces
\[
	\sigma_A^2 \geq \Paren{\frac {15}{64 C}}^{2/\alpha} =: c_{\rm var} > 0\,.
\]
\end{proof}

\subsection{From variance to learning}

\begin{proof}[Proof of \Cref{thm:bd_strong_convexity}]
We sharpen the use of \Cref{lem:bkm-variance} by discarding large energy
transitions before comparing the two variances. Fix
$A=\sum_\ell a_\ell P_\ell$ with $\|\vec a\|_\infty=1$, and center it as above.
Let $H|i\rangle=\lambda_i|i\rangle$ and
$q_i=e^{-\beta \lambda_i}/Z(\beta,\vec g)$, so $d^{-1}\sum_iq_i=1$.
Applying \Cref{eq:bd_akl} to each $P_\ell$, then using the triangle
inequality for the sum of $m$ terms, gives, for all $L\geq L_0$,
\[
  \sum_{j:\,|\lambda_j-\lambda_i|>L}|(A_0)_{ji}|^2\leq Cm^2e^{-cL}.
\]
The same bound holds when averaging over $i$,
\begin{equation}
  T_L:=\frac1d\sum_iq_i
       \sum_{j:\,|\lambda_j-\lambda_i|>L}|(A_0)_{ji}|^2
       \leq Cm^2e^{-cL}.
  \label{eq:bd_transition_tail}
\end{equation}
The termwise logarithmic-mean comparison in the proof of
\Cref{lem:bkm-variance}, restricted to $|\lambda_i-\lambda_j|\leq L$, gives directly
\begin{equation}
  \Var_\rho^{\BKM}(A)
  \geq\frac{\Var_\rho(A)-T_L}{1+\beta L}.
  \label{eq:bd_truncated_bkm}
\end{equation}
Choose
\[
	L_* = L_0 + c^{-1} \log\!\Paren{2+\frac{2Cm^2}{c_{\rm var}}} = \O(1 + \log(2m))\,.
\]
Then $T_{L_*}\leq c_{\rm var}/2$, and \Cref{lem:bd_variance} yields $\Var_\rho^{\BKM}(A)\geq
c'/(1+\log(2m))$. For an arbitrary $\vec u\neq\vec0$, apply this to $A=A_{\vec u}/\|\vec u\|_\infty$
and use \Cref{eq:bd_hessian}:
\[
  \vec u^\top\nabla^2\Phi(\vec g)\vec u
  \geq\frac{\beta^2c'}{1+\log(2m)}\|\vec u\|_\infty^2
  \geq\frac{\beta^2c'}{m(1+\log(2m))}\|\vec u\|_2^2.
\]
\end{proof}

\begin{cor}[Sample-efficient learning at fixed temperature]
\label{cor:bd_learning}
For every $\vec g\in K$ and $\eps,\zeta\in(0,1)$, the empirical
maximum-entropy dual estimator recovers $\vec g_{\rm meas}$ with
$\|\vec g_{\rm meas}-\vec g\|_2\leq\eps$, with probability at least
$1-\zeta$, using
\begin{equation}
  N=O_{\beta,J,k,\Delta}\!\left(
      \frac{m^4(1+\log(2m))^2}{\eps^2}
      \log\frac{2m}{\zeta}\right)
  \label{eq:bd_samples}
\end{equation}
copies of the physical Gibbs state $\rho_\beta(\vec g)/d$.
\end{cor}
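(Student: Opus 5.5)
The corollary follows by combining \Cref{thm:bd_strong_convexity} with the stability argument of \Cref{sampleall:thm:error-from-convexity}, adapted to the bounded-degree normalization. Here the Hessian bound holds uniformly on $K=[-J,J]^m$, so no Hessian-Lipschitz step (the analogue of \Cref{alltemp:lem:hessian-lipschitz}) is needed. The objective is
\[
\mc L_\mathrm{meas}(\vec h)=\Phi(\vec h)+\beta\,\vec h\cdot\vec\mu_{\beta,\mathrm{meas}}
\]
over the compact convex set $K$, which contains the true $\vec g$. By \Cref{eq:bd_hessian}, $\nabla\Phi(\vec g)=-\beta\vec\mu_\beta$. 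Hence $\nabla\mc L_\mathrm{meas}(\vec g)=\vec\xi:=\beta(\vec\mu_{\beta,\mathrm{meas}}-\vec\mu_\beta)$.

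For any $\vec h\in K$, write $\vec v=\vec h-\vec g$. The whole segment from $\vec g$ to $\vec h$ lies in $K$, so Taylor's formula with integral remainder and $\nabla^2\Phi\succeq\lambda_m\Id$ on $K$ give
\[
\mc L_\mathrm{meas}(\vec h)-\mc L_\mathrm{meas}(\vec g)\geq-\beta\|\vec\mu_{\beta,\mathrm{meas}}-\vec\mu_\beta\|_2\|\vec v\|_2+\frac{\lambda_m}{2}\|\vec v\|_2^2 .
\]
A minimizer (or an $\eta$-approximate minimizer with $\eta\le\lambda_m\eps^2/16$) has nonpositive (respectively $\le\eta$) left-hand side. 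Therefore
\[
\|\vec g_\mathrm{meas}-\vec g\|_2\leq\eps\quad\text{as soon as}\quad \|\vec\mu_{\beta,\mathrm{meas}}-\vec\mu_\beta\|_2\leq\delta:=\frac{\lambda_m\eps}{4\beta}.
\]
This is a simpler version of the contradiction argument in \Cref{sampleall:thm:error-from-convexity}, since no boundary radius $R$ is required.

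Next comes the measurement step. Each $P_a$ is a $\pm1$-valued observable, so the Hoeffding-plus-union-bound argument of \Cref{lem:measurement_complexity} applies verbatim with $m$ Pauli strings. Estimating each coordinate to accuracy $\delta/\sqrt m$ requires
\[
N=m\left\lceil\frac{2m}{\delta^2}\log\frac{2m}{\zeta}\right\rceil
\]
copies, with success probability $1-\zeta$. Substituting $\delta=\lambda_m\eps/(4\beta)$ and $\lambda_m=c_{\beta,J,k,\Delta}/(m(1+\log(2m)))$ gives
\[
N=O_{\beta,J,k,\Delta}\!\left(m^2\cdot m^2(1+\log(2m))^2\eps^{-2}\log\frac{2m}{\zeta}\right),
\]
which is exactly \Cref{eq:bd_samples}.

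There is no real obstacle, since the heavy lifting is in \Cref{lem:bd_variance} and \Cref{thm:bd_strong_convexity}. The only points to check are, first, that the Taylor/convexity argument uses the Hessian bound only along segments inside $K$, which holds because $K$ is convex. Second, the parameter $\beta$ appears without the $\sigma_n$ factor, so the constants in \Cref{sampleall:thm:error-from-convexity} must be rederited with $\sigma_n$ replaced by $1$. Computability of an approximate minimizer follows as in the proof of \Cref{sampleall:thm:learning}, via a finite net and the Lipschitz bound on $\mc L_\mathrm{meas}$.
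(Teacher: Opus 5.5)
Your proposal is correct and follows essentially the same route as the paper. It combines uniform curvature on the convex set $K$ from \Cref{thm:bd_strong_convexity} with the integral-remainder stability bound, using $\beta$ in place of $\beta\sigma_n$ and no Hessian-Lipschitz or localization step, and then applies the Hoeffding-plus-union-bound count of \Cref{lem:measurement_complexity}. The only difference is cosmetic: you take $\delta=\lambda_m\eps/(4\beta)$ so that $\eta$-approximate minimizers are also covered, whereas the paper uses $\delta_\eps=\lambda_m\eps/(2\beta)$ with exact minimality, and this changes only the constant in $N$.
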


\begin{proof}
Use the same estimator as in \Cref{eq:estimator}, with the present
parameterization and feasible set:
\[
  \vec g_{\rm meas}\in\mathop{\rm argmin}_{\vec h\in K}
  \left\{\Phi(\vec h)+\beta\vec h\cdot\vec\mu_{\beta,\rm meas}\right\}.
\]
We adapt the perturbation argument in the proof of \Cref{sampleall:thm:error-from-convexity}
to the present parameterization, with $\beta$ in place of $\beta\sigma_n$.
Here \Cref{eq:bd_curvature} holds throughout $K$, so there is no need for
\Cref{alltemp:lem:hessian-lipschitz}, a localization radius, or the second
accuracy requirement in \Cref{sampleall:eq:expectation-accuracy}. The true objective
has zero gradient at $\vec g$, and the measured objective differs by a
linear term. Integrating its Hessian along the feasible segment,
as in \Cref{sampleall:eq:localobjective}, and using minimality gives
\[
  0\geq\frac{\lambda_m}{2}\|\vec g_{\rm meas}-\vec g\|_2^2
  -\beta\|\vec\mu_{\beta,\rm meas}-\vec\mu_\beta\|_2
        \|\vec g_{\rm meas}-\vec g\|_2.
\]
It therefore suffices to estimate the moment vector to Euclidean accuracy
$\delta_\eps=\lambda_m\eps/(2\beta)$. The independent-batch Pauli
measurement argument in the proof of \Cref{lem:measurement_complexity} applies unchanged:
$\lceil(2m/\delta_\eps^2)\log(2m/\zeta)\rceil$ copies per observable
suffice. Hence
\[
  N\leq m+\frac{8\beta^2m^2}{\lambda_m^2\eps^2}
                 \log\frac{2m}{\zeta},
\]
which proves \Cref{eq:bd_samples}.
\end{proof}

\end{document}